\def\dpfoot{Discussion Paper 663, Center for the Study of Rationality, Hebrew University of Jerusalem, Revised: February 11, 2015.}
\patchcmd{\@oddhead}{\@acmArticle:}{}{\message{Patched}}{\message{Not Patched}}
\patchcmd{\@evenhead}{\@acmArticle:}{}{\message{Patched}}{\message{Not Patched}}
\patchcmd{\maketitle}{Additional Key Words and Phrases}{Key Words and Phrases}{\message{Patched}}{\message{Not Patched}}
\patchcmd{\acks}{ACKNOWLEDGMENTS}{ACKNOWLEDGEMENTS}{\message{Patched}}{\message{Not Patched}}
\def\endbottomstuff{
\vskip-13pt
\strut
\end@float
}
\let\myproof\proof
\let\myendproof\endproof
\let\myqed\qed
\let\proof\undefined
\let\endproof\undefined
\let\proof\myproof
\let\endproof\myendproof
\let\qed\myqed
\renewcommand\floatc@ruled[2]{{\small{\@fs@cfont #1} #2\par}}
\algrenewcommand{\algorithmiccomment}[1]{\hfill {$\mathit{//}$} #1}
\newtheoremstyle{acmthm}{\topsep}{\topsep}{\itshape}{12pt}{\sc}{\normalfont{.}}{5pt}{}
\newtheoremstyle{acmdefn}{\topsep}{\topsep}{\normalfont}{12pt}{\itshape}{\normalfont{.}}{5pt}{}
\theoremstyle{acmthm}
\patchcmd{\thmhead}{(#3)}{\unskip\hskip5pt\relax$($#3$)$}{\message{Patched}}{\message{Not Patched}}
\patchcmd{\thmhead}{\the\thm@notefont}{}{\message{Patched}}{\message{Not Patched}}
\let\c@theorem\undefined
\newtheorem{theorem}{Theorem}[section]
\newtheorem{proposition}[theorem]{Proposition}
\newtheorem{corollary}[theorem]{Corollary}
\crefname{corollary}{Corollary}{Corollaries}
\newtheorem{lemma}[theorem]{Lemma}
\theoremstyle{acmdefn}
\newtheorem{example}[theorem]{Example}
\newtheorem{definition}[theorem]{Definition}
\newtheorem{remark}[theorem]{Remark}
\setlist[1]{leftmargin=*,labelindent=0pt,labelsep=5pt}
\setlist[itemize,1]{leftmargin=7.5pt,itemindent=0pt,labelsep=2.5pt}
\newlist{parts}{enumerate}{1}
\crefname{partsi}{Part}{Parts}
\setlist[parts,1]{label=(\arabic*),ref=\arabic*}
\newlist{properties}{enumerate}{1}
\crefname{propertiesi}{Property}{Properties}
\setlist[properties,1]{label=(\arabic*),ref=\arabic*}
\newlist{conditions}{enumerate}{1}
\crefname{conditionsi}{Condition}{Conditions}
\setlist[conditions,1]{label=(\arabic*),ref=\arabic*}
\newcommand{\defnitemtitle}[1]{$($\emph{#1$)$}.}
\newcommand{\thmitemtitle}[1]{$(${\sc#1$)$}.}
\newcommand{\crefpart}[2]{\cref{#1}(\labelcref{#1-#2})}
\newcommand{\refintitle}[1]{\texorpdfstring{\ref{#1}}{\ref*{#1}}}
\newcommand{\eqdef}{\triangleq}
\newcommand{\Rge}{\mathbb{R}_{\ge}}
\newcommand{\timeset}{\mathcal{T}}
\newcommand{\load}[1]{\ell^s_{#1}}
\newcommand{\loadt}[1]{\ell^{s'}_{#1}}
\newcommand{\loadtt}[1]{\ell^{s''}_{#1}}
\newcommand{\expect}[1]{E\bigl[#1\bigr]}
\newcommand{\prob}[1]{P\bigl[#1\bigr]}
\newcommand{\prods}{\mathbb{P}_n}
\newcommand{\prodsm}{\mathbb{P}_{n-1}}
\newcommand{\prodsk}{\mathbb{P}_k}
\newcommand{\prodsg}{\mathbb{P}_{n_g}}
\newcommand{\prodso}{\mathbb{P}_{n_1}}
\newcommand{\prodst}{\mathbb{P}_{n_2}}
\DeclareMathOperator{\supp}{supp}
\DeclareMathOperator*{\Max}{Max}
\newcommand{\coarsegame}{\mbox{$(n,\mu,\succeq_C)$}}
\newcommand{\finegame}{\mbox{$(n,\mu,\succeq_F)$}}
\newcommand{\noconsumption}{\lnot}
\newcommand{\tth}{\textsuperscript{th}}
\newcommand{\spaceship}[5]{\ifthenelse{#1 < #2}{#3}{\ifthenelse{#1 = #2}{#4}{#5}}}
\newcommand{\vessels}[1]{%
\begin{tikzpicture}[yscale=0.489,xscale=0.943,font=\small]
\pgfmathsetmacro{\containerwidth}{5.8}
\pgfmathsetmacro{\grayamount}{1/32}
\pgfmathsetmacro{\purpleamount}{0.25}
\pgfmathsetmacro{\yellowamount}{3/32}
\pgfmathsetmacro{\blueamount}{5/32}
\pgfmathsetmacro{\redamount}{1/16}
\pgfmathsetmacro{\greenamount}{13/32}
\pgfmathsetmacro{\littlebitleft}{1/16}
\spaceship{#1}{2}{\pgfmathsetmacro{\purpleincontainer}{1}}{\pgfmathsetmacro{\purpleincontainer}{\littlebitleft}}{\pgfmathsetmacro{\purpleincontainer}{0}}
\spaceship{#1}{3}{\pgfmathsetmacro{\yellowincontainer}{1}}{\pgfmathsetmacro{\yellowincontainer}{\littlebitleft}}{\pgfmathsetmacro{\yellowincontainer}{0}}
\spaceship{#1}{4}{\pgfmathsetmacro{\blueincontainer}{1}}{\pgfmathsetmacro{\blueincontainer}{\littlebitleft}}{\pgfmathsetmacro{\blueincontainer}{0}}
\spaceship{#1}{5}{\pgfmathsetmacro{\redincontainer}{1}}{\pgfmathsetmacro{\redincontainer}{\littlebitleft}}{\pgfmathsetmacro{\redincontainer}{0}}
\spaceship{#1}{6}{\pgfmathsetmacro{\greenincontainer}{1}}{\pgfmathsetmacro{\greenincontainer}{\littlebitleft}}{\pgfmathsetmacro{\greenincontainer}{0}}
\draw[fill=gray,draw=none,shift={(0,6)}] (0,0) -- (0,1) -- (\grayamount*\containerwidth,1) -- (\grayamount*\containerwidth,0) -- cycle;
\draw[fill=purple,draw=none,shift={(\grayamount*\containerwidth,6)}] (0,0) -- (0,\purpleincontainer) -- (\purpleamount*\containerwidth,\purpleincontainer) -- (\purpleamount*\containerwidth,0) -- cycle;
\draw[fill=yellow,draw=none,shift={({(\grayamount+\purpleamount)*\containerwidth},6)}] (0,0) -- (0,\yellowincontainer) -- (\yellowamount*\containerwidth,\yellowincontainer) -- (\yellowamount*\containerwidth,0) -- cycle;
\draw[fill=blue,draw=none,shift={({(\grayamount+\purpleamount+\yellowamount)*\containerwidth},6)}] (0,0) -- (0,\blueincontainer) -- (\blueamount*\containerwidth,\blueincontainer) -- (\blueamount*\containerwidth,0) -- cycle;
\draw[fill=red,draw=none,shift={({(\grayamount+\purpleamount+\yellowamount+\blueamount)*\containerwidth},6)}] (0,0) -- (0,\redincontainer) -- (\redamount*\containerwidth,\redincontainer) -- (\redamount*\containerwidth,0) -- cycle;
\draw[fill=green,draw=none,shift={({(\grayamount+\purpleamount+\yellowamount+\blueamount+\redamount)*\containerwidth},6)}] (0,0) -- (0,\greenincontainer) -- (\greenamount*\containerwidth,\greenincontainer) -- (\greenamount*\containerwidth,0) -- cycle;
\draw (0,7) -- (0,6);
\draw[shift={((\grayamount*\containerwidth,0)}] (0,7) -- (0,6) node[below] {$t_0$};
\draw[shift={({(\grayamount+\purpleamount)*\containerwidth},0)}] (0,7) -- (0,6) node[below] {$t_1$};
\draw[shift={({(\grayamount+\purpleamount+\yellowamount)*\containerwidth},0)}] (0,7) -- (0,6) node[below] {$t_2$};
\draw[shift={({(\grayamount+\purpleamount+\yellowamount+\blueamount)*\containerwidth},0)}] (0,7) -- (0,6) node[below] {$t_3$};
\draw[shift={({(\grayamount+\purpleamount+\yellowamount+\blueamount+\redamount)*\containerwidth},0)}] (0,7) -- (0,6) node[below] {$t_4$};
\draw (\containerwidth,7) -- (\containerwidth,6);
\draw (0,7) -- (\containerwidth,7);
\draw (0,6) -- (\containerwidth,6);
\foreach \x/\xtext in {0/0, 0.25/0.25, 0.5/0.5, 0.75/0.75, 1/1}
{
\draw[shift={(\containerwidth*\x,7)}] (0pt,-2pt) -- (0pt,0pt) node[above] {$\xtext$};
}
\ifthenelse{#1 < 2}{\pgfmathsetmacro{\purpleinfirst}{0}}{\pgfmathsetmacro{\purpleinfirst}{16*\purpleamount}}
\ifthenelse{#1 < 3}{\pgfmathsetmacro{\yellowinsecond}{0}}{\pgfmathsetmacro{\yellowinsecond}{16*\yellowamount}}
\ifthenelse{#1 < 4}{\pgfmathsetmacro{\blueinsecond}{0}\pgfmathsetmacro{\blueinthird}{0}}{\pgfmathsetmacro{\blueinsecond}{16*(\blueamount-\yellowamount)/2}\pgfmathsetmacro{\blueinthird}{\blueinsecond+16*\yellowamount}}
\ifthenelse{#1 < 5}{\pgfmathsetmacro{\redinfourth}{0}}{\pgfmathsetmacro{\redinfourth}{16*\redamount}}
\ifthenelse{#1 < 6}{\pgfmathsetmacro{\greeninsecond}{0}\pgfmathsetmacro{\greeninthird}{0}\pgfmathsetmacro{\greeninfourth}{0}\pgfmathsetmacro{\greeninfifth}{0}}{\pgfmathsetmacro{\greeninsecond}{16*(\greenamount-\yellowamount-\blueamount+\redamount)/4}\pgfmathsetmacro{\greeninthird}{\greeninsecond}\pgfmathsetmacro{\greeninfourth}{\greeninthird+16*((\yellowamount+\blueamount)/2-\redamount)}\pgfmathsetmacro{\greeninfifth}{\greeninfourth+16*\redamount}}
\draw[fill=purple,draw=none] (0,0) -- (0,\purpleinfirst) -- (1,\purpleinfirst) -- (1,0) -- cycle;
\ifthenelse{#1 = 2}{
\draw[purple,thick] (.5,6) -- (.5,0);
}{}
\draw[fill=green,draw=none,shift={(1.2,0)}] (0,0) -- (0,\greeninsecond) -- (1,\greeninsecond) -- (1,0) -- cycle;
\draw[fill=blue,draw=none,shift={(1.2,\greeninsecond)}] (0,0) -- (0,\blueinsecond) -- (1,\blueinsecond) -- (1,0) -- cycle;
\draw[fill=yellow,draw=none,shift={(1.2,\greeninsecond+\blueinsecond)}] (0,0) -- (0,\yellowinsecond) -- (1,\yellowinsecond) -- (1,0) -- cycle;
\ifthenelse{#1 = 3}{
\draw[yellow,thick,shift={({(\grayamount+\purpleamount+\yellowamount/2)*\containerwidth},0)}] (0,6) -- (0,0);
}{}
\draw[fill=green,draw=none,shift={(2*1.2,0)}] (0,0) -- (0,\greeninthird) -- (1,\greeninthird) -- (1,0) -- cycle;
\draw[fill=blue,draw=none,shift={(2*1.2,\greeninthird)}] (0,0) -- (0,\blueinthird) -- (1,\blueinthird) -- (1,0) -- cycle;
\ifthenelse{#1 = 4}{
\draw[blue,thick,shift={(2.4,0)}] (.5,6) -- (.5,0);
}{}
\draw[fill=green,draw=none,shift={(3*1.2,0)}] (0,0) -- (0,\greeninfourth) -- (1,\greeninfourth) -- (1,0) -- cycle;
\draw[fill=red,draw=none,shift={(3*1.2,\greeninfourth)}] (0,0) -- (0,\redinfourth) -- (1,\redinfourth) -- (1,0) -- cycle;
\ifthenelse{#1 = 5}{
\draw[red,thick,shift={(3.6,0)}] ({-3.6+(\grayamount+\purpleamount+\yellowamount+\blueamount+\redamount/2)*\containerwidth},6) -- ({-3.6+(\grayamount+\purpleamount+\yellowamount+\blueamount+\redamount/2)*\containerwidth},5.24) -- (.5,5) -- (.5,0);
}{}
\draw[fill=green,draw=none,shift={(4*1.2,0)}] (0,0) -- (0,\greeninfifth) -- (1,\greeninfifth) -- (1,0) -- cycle;
\ifthenelse{#1 = 6}{
\draw[green,thick,shift={(4.8,0)}] (.5,6) -- (.5,0);
}{}
\foreach \x in {0, 1, 2, 3, 4}
{
\ifthenelse{\x > 0}{
\draw[shift={(\x*1.2,0)}] (0,0) -- (0,0.1);
\draw[shift={(\x*1.2,0)}] (0,0.2) -- (0,5);
\foreach \y/\ytext in {0/0, 0.1/0.1, 0.2/0.2, 0.3/0.3}
\draw[shift={(\x*1.2,16*\y)}] (2pt,0pt) -- (0pt,0pt);
}{
\draw[shift={(\x*1.2,0)}] (0,0) -- (0,5);
\foreach \y/\ytext in {0/0, 0.1/0.1, 0.2/0.2, 0.3/0.3}
\draw[shift={(\x*1.2,16*\y)}] (2pt,0pt) -- (0pt,0pt) node[left] {$\ytext$};
}
\draw[shift={(\x*1.2,0)}] (0,0) -- (1,0);
\ifthenelse{\x < 4}{
\draw[shift={(\x*1.2,0)}] (1,0) -- (1,0.1);
\draw[shift={(\x*1.2,0)}] (1,0.1) -- (1.2,0.1);
\draw[shift={(\x*1.2,0)}] (1.2,0.1) -- (1,0.15);
\draw[shift={(\x*1.2,0)}] (1,0.15) -- (1.2,0.2);
\draw[shift={(\x*1.2,0)}] (1.2,0.2) -- (1,0.2);
\draw[shift={(\x*1.2,0)}] (1,0.2) -- (1,5);
}{
\draw[shift={(\x*1.2,0)}] (1,0) -- (1,5) node[right,color=white] {$0.0$};
}
}
\end{tikzpicture}%
}
\title{A Mirage of Market Allocation}
\author{YANNAI A. GONCZAROWSKI\affil{The Hebrew University of Jerusalem and Microsoft Research}
MOSHE TENNENHOLTZ\affil{Technion --- Israel Institute of Technology}
\vspace{1.55em}
\begin{mdframed}
\textbf{Market Allocation} --- a situation where competitors agree to not compete with each other in specific markets, by dividing up geographic areas, types of products, or types of customers.
\hfill {\footnotesize West's Encyclopedia of American Law, 2nd Edition, Volume 9, Page 168}\end{mdframed}\vspace{-1.2em}
}
\begin{abstract}
Can noncooperative behaviour of merchants lead to a market split that \emph{prima facie} seems anticompetitive?
We introduce a model in which service providers, with internet service providers (ISPs) being the main example,
aim at optimizing the number of customers who use their services, while customers aim at choosing service providers with low customer load (which translates to high effective bandwidth per subscriber, in the case of ISPs).
Each service provider chooses between a variety of levels of service (latencies, in the case of ISPs),
and as long as it does not lose customers, aims at minimizing its level of service; the minimum level of service required to satisfy a customer varies across customers. We consider a two-stage competition, in the first stage of which the service providers select their levels of service, and in the second stage --- customers choose between the service providers. (We show via a novel construction that for any choice of strategies for the service providers, a unique distribution of the customers' mass between them emerges from all Nash equilibria among the customers, showing the incentives of service providers in this two-stage game to be well defined.) In the two-stage game, we show that the competition among the service providers
possesses a unique Nash equilibrium, which is moreover super-strong; we also show that all sequential better-response dynamics of service providers reach this equilibrium, with best-response dynamics doing so surprisingly fast.
If service providers choose their levels of service according to this equilibrium, then the unique Nash equilibrium among customers in the second phase is essentially a split of the market between the service providers, based on the customers' minimum acceptable quality of service; moreover, each service provider's chosen level of service
is the lowest acceptable by the entirety of the slice of the market that chooses it, seemingly making no attempt to attract any other customers.
Our results show that this \emph{prima facie} market allocation (collusive split of the market) arises as the unique and highly robust outcome of noncooperative (i.e.\ free from any form of collusion), even myopic, service-provider behaviour. The results of this paper are applicable to a wide variety of scenarios,
from explaining phenomena observable in some food markets, to shedding a surprising light on aspects of location theory, such as the formation and structure of a city's central business district.
\end{abstract}
\keywords{Game Theory, Congestion Games, Location Theory, Two-Stage Competition}
\begin{document}%
\begin{bottomstuff}%
Authors' addresses: Y.\ A.\ Gonczarowski, Einstein Institute of Mathematics, Rachel \& Selim Benin School of Computer Science \& Engineering, and Federmann Center for the Study of Rationality,
The Hebrew University of Jerusalem, Israel; and Microsoft Research, \mbox{\emph{Email}: \href{mailto:yannai@gonch.name}{yannai@gonch.name}};
M.\ Tennenholtz, William Davidson Faculty of Industrial Engineering and Management, Technion --- Israel Institute of Technology (work carried out while at Microsoft Research), \mbox{\emph{Email}: \href{mailto:moshet@ie.technion.ac.il}{moshet@ie.technion.ac.il}}.%
\end{bottomstuff}%
\maketitle

\section{Introduction}

\subsection{Setting}

\subsubsection{Shopping for an Internet Connection}
In today's world, an internet connection has become a necessity in many households. Of the many parameters characterizing an internet connection, two have emerged as most important to the home user:\footnote{Indeed, these are precisely the two parameters measured by the popular internet speed-testing website \texttt{\url{www.speedtest.net}}.} the ever-popular \emph{bandwidth}, and the \emph{latency}\footnote{Twice the latency is sometimes referred to as the \emph{ping time}.}. While the bandwidth measures the amount of data transmitted (equivalently, received) per second, the latency measures the time it takes a single packet of data to reach its destination.\footnote{We emphasize that high latency corresponds to bad quality of service.} In the metaphorical highway of the internet, the bandwidth may be thought of as corresponding to the number of lanes, while the latency corresponds the the length of the highway.
While some users may not be sensitive to latency (indeed, when streaming a 45-minute TV show from, say, Netflix or Hulu, most users would not mind waiting an extra second before the show begins; this is also the case when downloading content for future offline consumption), some other users may have very harsh latency limitations (indeed, when playing a multiplayer video game online, it is extremely important for each player that whenever she presses a button on her controller, the associated action happens as soon as possible; a delay longer that of the adversary by as little as one tenth of a second may be unacceptable).

We consider a stylized model, in which each \emph{customer} is interested in precisely one internet connection, and is willing to tolerate a latency of at most $d$ milliseconds (a customer-dependent real value). As long as this customer's latency demand is met, her sole consideration is that of maximizing her effective bandwidth (we think of subscription costs as low and similar, as is the case in real life). The effective bandwidth of each customer subscribed to a given \emph{internet service provider}, or \emph{ISP}, is the total bandwidth available to this ISP (a fixed known ISP-dependent value) divided by the number of customers subscribed to this ISP.\footnote{In \cref{heterogeneous}, we deal with a generalized model, which accommodates also for, e.g.\ some ISPs purchasing more total bandwidth as their subscriber pool grows. Our main results surveyed in the introduction continue to hold even under such generalizations.}
A \emph{Nash equilibrium} among the customers is therefore an assignment of ISPs to customers, s.t.\ for each customer with latency limit $d$, no ISP with latency no greater than $d$ has a subscriber pool smaller than that of the ISP assigned to this customer.

We consider a scenario with finitely many ISPs and continuously many customers, the distribution of $d$ among whom is given by an arbitrary finite measure. Preparing the ground for the main results of this paper, which follow below, in \cref{consumers} we use a novel construction to show the following result (similar in spirit to other results regarding congestion games and crowding games). 

\begin{theorem}[Informal version of \cref{consumer-symmetric-nash-exists,indifference-producers,indifference-consumers,compute-ell}]\label{intro-consumer-nash} Fix the characteristics (latency and total bandwidth) of $n$ ISPs.
\begin{parts}
\item
A Nash equilibrium among the customers exists. Furthermore, there exists such a Nash equilibrium for which the strategies can be computed efficiently.
\item
The effective bandwidth of each customer, as well as the number of subscribers to each ISP, are the same across all Nash equilibria.
\end{parts}
\end{theorem}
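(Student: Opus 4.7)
I would first sort the $n$ ISPs by latency, $\lambda_1 \leq \cdots \leq \lambda_n$, and let $C_k$ denote the sub-population of customers with tolerance $d\in[\lambda_k,\lambda_{k+1})$ (setting $\lambda_{n+1}=\infty$); $C_k$ is exactly those customers whose feasible ISP-set is $\{1,\ldots,k\}$. Process the batches in the order $C_1,\ldots,C_n$: when $C_k$ arrives, distribute its mass over $\{1,\ldots,k\}$ by water-filling against the current loads $\ell_i=n_i/B_i$, i.e., first fill the ISP of minimum current load until its load reaches the second minimum among $\{1,\ldots,k\}$, then raise both simultaneously until they reach the third, and so on until the batch is exhausted. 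The final water-level in each step is the solution of a piecewise-linear equation in a single variable, so the entire procedure runs in time polynomial in $n$ and in the representation of $\mu$, delivering the efficient computability claim.

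\textbf{Equilibrium verification and uniqueness.} To verify the Nash property, I would argue inductively that after step $k$ every already-placed customer is on an ISP of minimum load within her feasible prefix. Water-filling in step $k{+}1$ either raises a collection of loads to a common new level or leaves them untouched, and so preserves the status of any previously-minimum ISP as still minimum in its prefix. For the invariance of $(n_1,\ldots,n_n)$ across all Nash equilibria, I would introduce the strictly convex potential
\[
\Phi(n_1,\ldots,n_n) \;=\; \sum_{i=1}^{n} \frac{n_i^2}{2 B_i},
\]
whose gradient satisfies $\partial\Phi/\partial n_i = \ell_i$, exactly the load a deviating customer would face at ISP $i$. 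The Nash condition that each customer sits on a minimum-load ISP in her feasible set is precisely the KKT condition for minimising $\Phi$ over the convex polytope of realisable mass-vectors. Strict convexity in $n$ forces the minimiser to be unique, which yields invariance of every $\ell_i$, and hence of every customer's effective bandwidth $1/\min\{\ell_i:\lambda_i\le d\}$, across all Nash equilibria.

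\textbf{Main obstacle.} The principal subtlety is measure-theoretic: customers form a continuum, so a Nash assignment is a measurable splitting of $\mu$ among the ISPs rather than a finite allocation, and the ``polytope of realisable mass-vectors'' is itself a claim requiring proof. I would pin it down by showing that the nested structure of feasibility collapses all capacity constraints to the cascading lower bounds $\sum_{i\le k} n_i \ge \mu([\lambda_1,\lambda_{k+1}))$ for each prefix $k$, together with the equality $\sum_i n_i = \mu([\lambda_1,\infty))$; then check that water-filling outputs a measurable allocation attaining the unique minimiser of $\Phi$ on this polytope, and verify that any Nash equilibrium, whose assignment may differ pointwise from the water-filling one, nonetheless induces the same mass-vector via the KKT characterisation.
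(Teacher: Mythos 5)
Your constructive existence argument — sort the ISPs by latency, process the customer batches $C_1,\ldots,C_n$ in order, and distribute each batch over its feasible prefix by water-filling that raises the minimum loads to a common new level — is essentially the same as the paper's proof of \cref{consumer-symmetric-nash-exists}: the paper's ``one-way communicating vessels'' iteration (illustrated in \cref{vessels}, formalized in \cref{add-water}) pours each batch into the new rightmost vessel and lets it cascade left, which, because loads are maintained non-increasing from lowest latency to highest (\cref{decreasing-load}), is precisely water-filling; both yield the $O(n^2)$ bound. For uniqueness, however, you take a genuinely different route: you invoke the Beckmann potential $\Phi(n_1,\ldots,n_n)=\sum_i n_i^2/(2B_i)$ and argue that the Wardrop/Nash condition is the KKT condition for minimising $\Phi$ over the prefix-constrained polytope, so strict convexity pins down the mass vector. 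This is the classical approach, which the paper explicitly acknowledges in \cref{related-work} (attributing it to Beckmann and Schmeidler) and deliberately \emph{does not} use; instead \cref{indifference-producers} is proved by a direct combinatorial argument (take a minimal coordinate where two equilibrium load vectors disagree and derive an overflow contradiction using \cref{decreasing-load}). The paper's choice buys two things your route does not immediately give: it produces auxiliary monotonicity/Lipschitz lemmas on $\ell_j$ (\cref{ell-decreasing,ell-single-valley,ell-lipschitz}) that are re-used heavily in the producer-game dynamics, and it extends cleanly to the generality of \cref{heterogeneous} where one of the per-ISP functions $f_j$ may be merely non-decreasing rather than strictly increasing — in which case the corresponding potential $\sum_j\int_0^{n_j}f_j$ is no longer strictly convex and your uniqueness argument would need an extra step; conversely, your route is more self-contained modulo the standard nonatomic-congestion-game potential machinery, and you correctly flag that the identification of the realisable polytope via the cascading prefix inequalities is the remaining measure-theoretic work to be done.
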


\subsubsection{An ISP Game} Obviously, each ISP would like to offer a latency that maximizes its number of subscribers. (By \cref{intro-consumer-nash}, the number of subscribers is well defined given the latencies of all ISPs, assuming a Nash equilibrium among the customers.) That being said, as low-latency infrastructure is costlier to erect, each ISP would like to offer the highest latency possible, as long as this does not reduce the size of its subscriber pool. As we think of the number of subscribers as indicative of monthly income, and of the investment in infrastructure as a one-time expense (with infrastructure upkeep cost being independent of latency), we have that
each ISP would like to offer a latency that first and foremost maximizes its number of subscribers, and only then (as a tie-breaking rule among latency values that yield the same
number of subscribers) is as high as possible. In \cref{producers-fine}, and more generally in \cref{heterogeneous}, we show the following --- the first of our main results regarding this two-stage competition.

\pagebreak
\begin{theorem}[Informal version of \cref{producer-fine-nash-char,heterogeneous-fine}]\label{intro-producer-nash}\leavevmode
\begin{parts}
\item
For every ordering $\pi$ of the $n$ ISPs, there exists a \emph{unique} Nash equilibrium among them s.t.\ their latency levels are ordered according to $\pi$.
\item
This Nash equilibrium is super-strong.\footnote{See \cref{producer-coarse-super-strong} in \cref{producers-coarse} and the preceding discussion for the definition of super-strong equilibrium.}
\item
Each ISP has the same number of subscribers in all Nash equilibria (regardless of the chosen ordering of ISPs $\pi$).
\end{parts}
\end{theorem}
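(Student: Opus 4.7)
The plan is to exploit \cref{intro-consumer-nash} to collapse the two-stage competition into a first-stage normal-form game among the ISPs with lexicographic payoffs (maximize subscribers first, then maximize own latency). The structural backbone of the argument is a water-filling characterization of the consumer equilibrium: sorting ISPs in increasing order of latency, the $k$-th ISP is assigned a contiguous slice of customer mass indexed by a tolerance interval, and within each maximal block of ISPs that water-filling can link, the loads $s_j/B_j$ are equal, with strict monotonicity of loads between adjacent blocks.

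For a fixed ordering $\pi = (i_1, \ldots, i_n)$, the candidate equilibrium profile sets each ISP $i_k$'s latency equal to the supremum of tolerances $d$ in its water-filling slice --- the maximum latency at which the consumer equilibrium still delivers $i_k$ that slice. To verify that this is a Nash equilibrium I would walk through the three qualitatively distinct deviations of $i_k$: (a)~raising the latency past the supremum expels the marginal customer at the slice boundary and strictly shifts mass out of $i_k$, reducing $s_{i_k}$; (b)~lowering the latency while preserving the ordering $\pi$ leaves $s_{i_k}$ unchanged but strictly worsens the lex tiebreaker; (c)~a latency jump that crosses another ISP permutes the ordering, and a direct recomputation of the water-filling under the new ordering bounds the deviator's new subscriber count above by $s_{i_k}$.

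Uniqueness given $\pi$ then follows from the same analysis: the water-filling pins down the slices (hence each $s_{i_k}$), and the lex tiebreaker forces each ISP's latency to the supremum of its slice. For the super-strong property I would extend the single-player analysis to arbitrary coalitions via the block structure of water-filling: a coalitional deviation that preserves every member's subscriber count must also preserve each block's total mass, which in turn forces each coalition member's latency to lie at or below its original value (otherwise the block would shed mass outside the coalition), leaving no scope to strictly improve on the lex tiebreaker; a deviation strictly increasing some member's subscriber count forces, by mass conservation within the affected blocks, a strict decrease in another coalition member's count, violating the coalitional-improvement requirement. The invariance of subscriber counts across all Nash equilibria (part~(3)) is then a direct consequence of water-filling: the collection of slice-sizes depends only on the bandwidths and the distribution of $d$ (and not on the choice of $\pi$ beyond its role in pairing ISPs with slices), so each ISP always ends up with the same count.

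The main obstacle I foresee is the super-strong claim, since a coalition may simultaneously permute the latency ranking of its members and perturb their latencies; tracking the water-filling slices through such combined permutation-and-perturbation deviations will require a careful exchange argument across coalition blocks, combined with an invocation of the lex tiebreaker at slice boundaries to rule out deviations that preserve all subscriber counts but would otherwise strictly improve some members' latencies.
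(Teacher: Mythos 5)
The overall architecture you propose --- collapse the two-stage game using the consumer-equilibrium load function, describe the consumer equilibrium via water-filling (the paper's communicating vessels), put each ISP's latency at the supremum of its slice, and argue by case analysis of deviations --- is in the same spirit as the paper, and your parts~(1) and~(3) would go through along these lines. The paper is more modular: it first fully analyzes the \emph{coarse}-preferences game (where only the subscriber count matters), establishing that a profile is a coarse Nash equilibrium iff the sorted strategies satisfy $\mu\bigl([0,t_j)\bigr)\le\frac{j}{n}\cdot\mu(\timeset)$ and that all such equilibria give every producer load exactly $\frac{\mu(\timeset)}{n}$, that all such equilibria are super-strong, and crucially (\cref{coarse-sequential}) that from a coarse equilibrium no unilateral deviation is strictly load-improving. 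It then shows that the lexicographic tiebreaker selects the \emph{largest} coarse-equilibrium strategy for each rank. Your direct attack would need to reprove most of this in-line.

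The genuine gap is in the super-strong claim, which you yourself flag as ``the main obstacle.'' Your sketch --- preserving every member's subscriber count forces each member's latency at or below its original value, ``otherwise the block would shed mass outside the coalition'' --- does not survive coalitional deviations that \emph{reassign slices} among coalition members or that interleave a deviating member among nondeviating ones. For example, two coalition members can swap ranks (one latency rises, one falls) while the sorted profile remains a coarse equilibrium and every load stays at $\frac{\mu(\timeset)}{n}$; ruling out a gain here is not a mass-conservation observation but requires tracking which specific member improved. The paper's proof handles precisely this: it first invokes the coarse super-strong result to conclude all loads must equal $\frac{\mu(\timeset)}{n}$ after the deviation, hence $t'_j\ge t_j$ for every $j\in P$ with strict inequality somewhere; it then picks the coalition member $j$ whose deviated strategy $t'_j$ is numerically largest, normalizes $\bar{t}$ to be in ``$j$-canonical form'' (\cref{permute-to-canonical-form}), and uses the weak quasiconvexity of $\ell_j$ in the others' strategies (\cref{ell-single-valley}) together with \cref{coarse-sequential} and \crefpart{canonical-form-deviation}{nash-char} to show $\ell_j(\bar{t}_{-P},\bar{t}')\le\ell_j(\bar{t}_{-j},t'_j)<\ell_j(\bar{t})$, a contradiction. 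Without the quasiconvexity lemma and the canonical-form machinery (or an equivalent), the ``careful exchange argument across coalition blocks'' you anticipate does not have a concrete engine, and the super-strong part of the theorem remains unproven in your write-up.
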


We further demonstrate the robustness of the Nash equilibrium defined in \cref{intro-producer-nash} by considering dynamics among ISPs.
A \emph{sequential} best-response dynamic is a process starting with arbitrary latency levels, and in which at each turn an arbitrary ISP changes its latency level to one that,
\emph{ceteris paribus}, maximizes its preferences (we show that such a latency level always exists for every possible measure on customers latency limits);
we assume that each ISP is allowed to change its latency level infinitely often. A \emph{round} in a best-response dynamic is a sequence of consecutive steps in which each ISP is allowed to change its latency level at least once. Finally, a sequential \emph{$\delta$-better-response} dynamic is a sequential dynamic in which each change in latency need not necessarily maximize the ISP's preferences, as long as it increases the size of its subscriber pool by at least $\delta$ of the entire market size.\footnote{Since we consider continuously many customers, we demand a $\delta$-improvement in order to avoid improvements \emph{\`{a} la} Zeno's ``Race Course'' paradox.}
 In \cref{producers-fine}, we show the following main result.

\begin{theorem}[Informal version of \cref{fine-sequential-cor,fine-best-response-fast-cor}]\label{intro-producer-dynamics}\leavevmode
\begin{parts}
\item
For every $\delta>0$, every sequential $\delta$-better-response dynamic reaches a Nash equilibrium in finitely many steps, and remains constant from that point onward.
\item
Sequential best-response dynamics reach a Nash equilibrium in a small number of rounds.
\end{parts}
\end{theorem}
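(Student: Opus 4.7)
Both parts hinge on a monotonicity property of the inner (customer) equilibrium of \cref{intro-consumer-nash}: fixing the other ISPs' strategies, the subscriber count $n_i$ is a weakly nonincreasing function of $\ell_i$. The intuition is that lowering $\ell_i$ only enlarges the set of customers for whom ISP $i$ is acceptable while leaving every other ISP's acceptability unchanged, so by \cref{intro-consumer-nash} the essentially unique customer equilibrium cannot shrink $n_i$. A formal proof tracks the rearrangement of customer mass through the explicit construction underlying \cref{intro-consumer-nash}. An immediate consequence is that along any $\delta$-better-response dynamic the responding ISP must strictly lower its own latency (a strict increase in $n_i$ can only come from a decrease in $\ell_i$), so the latency profile evolves componentwise nonincreasingly.

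\paragraph{Part (1).}
Since latencies are bounded below by $0$ and are componentwise nonincreasing, the latency profile converges to some limit $(\ell_1^{\infty},\ldots,\ell_n^{\infty})$. Via the explicit construction of \cref{intro-consumer-nash}, the subscriber counts depend continuously on the latency profile off a lower-dimensional set of ``tie'' configurations; along our monotone convergent trajectory, the scalar sequences $n_i(t)$ must themselves converge. Combined with the strict $\delta$-gap required at each step, this forces finitely many $\delta$-better-responses --- otherwise one would obtain a Cauchy sequence in $n_i$ whose consecutive differences exceed $\delta$, a contradiction. When the dynamic halts no ISP has a $\delta$-better-response available, and together with the equilibrium characterization in \cref{intro-producer-nash} this pins the terminal profile as a Nash equilibrium, from which it then remains constant by definition of the dynamic.

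\paragraph{Part (2) and main obstacle.}
For fast convergence of best-response dynamics, the plan is to exploit the structure of the unique equilibrium from \cref{intro-producer-nash} via an induction on ISPs ordered by equilibrium latency. The ISP destined to have the lowest equilibrium latency has a best response that captures the lowest-tolerance slice of the market more or less regardless of the other ISPs' current choices; after one round its latency is pinned at its equilibrium value. The residual game on the remaining ISPs and customers then decouples, and induction yields convergence in $O(n)$ rounds (and possibly fewer, if several ISPs stabilize in parallel within a round). The most delicate step is the continuity/convergence argument in Part~(1): one must rule out trajectories that live on the ``tie'' set where the subscriber-count map is discontinuous, and show that the $\delta$-gap persists rather than being achieved via vanishing discontinuous jumps. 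Monotonicity of latencies helps, as ties can only be broken by subsequent strict decreases. For Part~(2), the subtle point is verifying that pinning one ISP's strategy genuinely decouples the residual game, and that the claimed parallelism across ISPs within a round really materialises in the inductive step.
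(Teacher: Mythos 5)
Your Part~(1) hinges on the claim that ``the latency profile evolves componentwise nonincreasingly,'' but this is false. A $\delta$-better-response step (\cref{dynamics}) may be a \emph{best} response (which need not improve the load by $\delta$ at all), and in \finegame\ a best response tie-breaks in favour of the \emph{largest} latency achieving the maximum load. Indeed, the paper's proof of \cref{fine-sequential} explicitly shows that once a coarse-preferences equilibrium is reached, the sequence of latencies along a sequential $\delta$-better-response dynamic is monotone \emph{non\-decreasing}, not nonincreasing: the trajectory first comes down to equalize loads and then pushes each $t_j$ up to the largest load-preserving value. With the monotonicity premise gone, your convergence-of-latencies argument collapses, and so does the Cauchy-type contradiction built on it. Even the weaker observation that underlies it --- that there can be only finitely many strict $\delta$-improvements --- is not enough for termination, because the dynamic can take infinitely many load-preserving best-response steps without ever entering an equilibrium (and \cref{coarse-bad-response}/\cref{coarse-nonsequential} show that this actually happens for nonsequential schedules). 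Your argument never uses sequentiality, yet the statement is false without it; that is the tell-tale sign of the gap.

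The paper's route is structurally different. It introduces a potential $M(\bar{t})\in\{0,\ldots,n\}$ indexing (roughly) the highest $n$-tile of $\mu$ containing some strategy, shows (\cref{coarse-road-to-nash}, using \cref{coarse-jump-out-of-q}) that $M$ is nonincreasing along any $\delta$-better-response dynamic in \coarsegame\ and that all strategies must leave the current top $n$-tile within $\lceil\mu(\timeset)/(\delta n)\rceil$ rounds, yielding the bound in \cref{coarse-response-equilibrium}. It then proves --- this is where sequentiality is essential --- that a sequential weakly-better-response dynamic never leaves a coarse Nash equilibrium once one is reached (\cref{coarse-sequential}), and finally shows (\cref{fine-sequential}) that once at a coarse equilibrium the fine dynamic reaches the unique fine equilibrium in one more round, with latencies increasing. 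Your Part~(2) sketch (pin the lowest-latency ISP first, decouple the residual game, induct) is not the paper's argument either and is not obviously salvageable: there is no guarantee that the ISP destined for the lowest equilibrium latency stabilizes in one round --- best responses to a non-equilibrium profile can send \emph{all} producers to $0$ first (cf.\ \cref{must-jump-to-zero-fine}) and then the sorted strategies creep up together, so no single coordinate is frozen early. The rate bound in the paper comes from the global potential $M$, not from peeling off producers one at a time.
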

We also analyse dynamics in which several ISPs change their latency levels simultaneously. (See \cref{coarse-response-equilibrium,coarse-best-response-fast,fine-response-equilibrium}.) We emphasize that \cref{intro-producer-dynamics} does not stem from any ``hidden'' introduction of any exogenous costs on restructuring infrastructure;
i.e.\ this \lcnamecref{intro-producer-dynamics} holds in very general settings, even when the utility of an ISP from a given number of subscribers and a given latency does not decrease with the number of latency changes in previous steps of the studied dynamics.

\subsubsection{Prima Facie Market Allocation}

Our study culminates with the analysis of the structure of the unique equilibrium from \cref{intro-producer-nash} and of the underlying equilibrium among customers, which turns out to be unique as well

\begin{theorem}[Informal version of \cref{producer-fine-nash-char,producer-fine-market-allocation,heterogeneous-fine}]\label{intro-allocation}
Fix a Nash equilibrium among the ISPs. Denote the number of subscribers to the first ISP (the one with lowest latency) by $\ell_1$, the number of subscribers to the second ISP (the one with second-lowest latency) by $\ell_2$, and so forth.
\begin{parts}
\item
The $\ell_1$ customers with smallest latency limits all subscribe to the first ISP, whose latency is the highest that still accommodates all of these customers.
\item
The next $\ell_2$ customers all subscribe to the second ISP, whose latency is the highest that still accommodates all of these customers.
\item
etc.
\end{parts}
\end{theorem}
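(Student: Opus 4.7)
Fix a Nash equilibrium among the ISPs and sort their latencies as $L_1\le L_2\le\cdots\le L_n$; let $(\ell_1,\ldots,\ell_n)$ denote the corresponding vector of subscriber masses (uniquely determined by \cref{intro-consumer-nash}), and set $t_i:=\inf\{d:\mu((-\infty,d])\ge \ell_1+\cdots+\ell_i\}$, the $(\ell_1+\cdots+\ell_i)$-th quantile of $\mu$, so that the theorem asserts $L_i=t_{i-1}$ and that ISP $i$ serves precisely the customers with $d\in[t_{i-1},t_i)$. My plan is to first describe how customer Nash equilibria arise, for any sorted latency profile, from a ``water-filling'' sweep, and then to pin down $L_i=t_{i-1}$ at the ISP NE via a downward induction on $i$ that exploits the ISPs' tie-breaking preference for higher latency at equal subscriber mass.

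For the water-filling stage, I would sweep the customer parameter $d$ upward from $L_1$. A customer with limit $d$ can subscribe only to the accessible set $A(d)=\{j:L_j\le d\}$, which grows by $\{i\}$ exactly at $d=L_i$, so within each phase $d\in[L_i,L_{i+1})$ the customer-NE condition forces incoming mass into the currently least-loaded active ISP until its load catches up to the others, after which mass splits evenly among the tied least-loaded ISPs. This yields the load vector $(\ell_1,\ldots,\ell_n)$ explicitly in terms of $(L_1,\ldots,L_n)$ and $\mu$, together with the key invariance I will exploit: perturbing a single $L_i$ while preserving the latency ordering and while keeping $L_i\le t_{i-1}$ leaves every load (and hence every $t_j$) unchanged, because all catch-up points of the sweep remain unaffected.

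For the induction, I claim $L_i=t_{i-1}$ at the ISP NE by downward induction on $i$. In the base case $i=n$, suppose $L_n<t_{n-1}$ and consider ISP $n$'s deviation to $L_n':=t_{n-1}$: the sorting is preserved since $L_{n-1}\le t_{n-2}\le t_{n-1}=L_n'$, and by the invariance above the load vector is unchanged, so ISP $n$ has strictly raised its latency without losing any subscribers --- contradicting its tie-breaking optimality. In the inductive step, having established $L_j=t_{j-1}$ for all $j>i$, the deviation $L_i\mapsto t_{i-1}$ satisfies $L_i'=t_{i-1}\le t_i=L_{i+1}$ by the inductive hypothesis and so preserves sorting, yielding the same contradiction. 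Hence $L_i=t_{i-1}$ for every $i$, whence ISP $i$'s subscriber pool is exactly the $\ell_i$ customers with $d\in[t_{i-1},t_i)$ --- the $\ell_i$ ``next'' customers in the $d$-order --- and $L_i=t_{i-1}$ is the smallest such $d$, i.e.\ the highest latency accommodating them all. This is the claimed market split.

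The main technical obstacle is the water-filling stage: making the sweep fully rigorous in the continuous-mass setting (essentially the construction behind \cref{intro-consumer-nash}) and, critically, isolating the invariance of the load vector under single-$L_i$ perturbations that preserve both the latency ordering and the bound $L_i\le t_{i-1}$. Once this invariance is in hand, the downward-inductive tie-breaking argument collapses the general customer-NE sweep to the claimed market-split form in short order.
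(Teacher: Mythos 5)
Your central lever (tie-breaking forces each ISP to push its latency as high as possible without losing load) is the right one, and the downward-induction deviation is the right shape, but the formalization you pin it to is wrong and the argument as written is one-sided.

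The formula $t_i := \inf\{d : \mu((-\infty,d]) \ge \ell_1+\cdots+\ell_i\}$ (a quantile) does not match the paper's result. The fine-preference NE latency is $t_j = \Max\{t \in \timeset : \mu([0,t)) \le \frac{j}{n}\mu(\timeset)\}$ (\cref{producer-fine-nash-char}), which strictly exceeds the corresponding quantile whenever $\mu$ has a gap in its support. Concretely, take $\mu$ to be Lebesgue measure restricted to $[0,0.4]\cup[0.6,1]$ and $n=2$: your quantile gives $t_1 = 0.4$, but the unique fine NE has the second ISP's latency at $0.6$, since it can move up across the $\mu$-null interval $(0.4,0.6)$ without losing a single subscriber and tie-breaking strictly favors doing so. This reveals the structural gap in your induction: the deviation argument ($L_i < t_{i-1}$ admits a profitable upward move) only rules out $L_i$ being too small, i.e.\ it shows $L_i \ge t_{i-1}$; nothing in your sketch shows $L_i$ cannot be strictly larger. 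Closing that side requires showing that $\mu([0,L_i)) > \ell_1 + \cdots + \ell_{i-1}$ cannot hold in any NE, which in the paper is exactly the content of the coarse-game analysis (\cref{producer-coarse-nash-loads,producer-coarse-nash-char}): any such profile gives some producer load strictly below $\mu(\timeset)/n$, and that producer profitably deviates to latency $0$. Moreover, even after correcting $t_i$ to the $\Max$ formula, the crucial ``invariance under perturbation'' step you flag as the main technical obstacle really is most of the work --- it amounts precisely to the characterization in \cref{producer-coarse-nash-char}, and proving it requires Theorems \ref{producer-coarse-nash-loads} and \ref{producer-coarse-nash-char} in full, not just the water-filling heuristic. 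Finally, the concluding sentence (``ISP $i$'s subscriber pool is exactly the customers with $d\in[t_{i-1},t_i)$'') is asserted but not proved; the paper handles this via a separate induction (\cref{producer-fine-market-allocation}) showing $s_j(d)=1$ for $\mu$-almost every $d$ in the $j$-th slice, and it does not follow for free from the sweep.

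Two minor points: in your base case you justify that sorting is preserved by citing $L_{n-1}\le t_{n-2}$, which you have not established at that stage; the intended justification is the simpler $L_{n-1}\le L_n < t_{n-1}=L_n'$. And your argument nowhere establishes $\ell_i = \mu(\timeset)/n$, which is implicit in several of your steps and is itself a theorem (\cref{producer-coarse-nash-loads}) rather than a definition.
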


\pagebreak
\cref{intro-allocation} shows that \textbf{the market is split among the various ISPs based on the willingness of a customer to accept a high latency level, and each ISP chooses the highest latency level acceptable by the entirety of its slice of the market, seemingly making no attempt to attract any other customers}. \cref{intro-producer-nash,intro-producer-dynamics} show that this \emph{prima facie} market allocation (collusive split of the market) among the various ISPs arises as the unique possible outcome, not as a result of anticompetitive practices, but rather as
a result of noncooperative dynamics, each ISP only looking to myopically maximize its preferences at every step; no signalling (via e.g.\ choice of latency level) or any other collusive 
or cooperative ``trick'' whatsoever is used in order to reach and maintain this market split.

\subsection{Alternative Interpretations/Applications}

It is worthwhile to point out that our framework captures far more than merely the ISP-competition scenario introduced above,
by thinking of a latency level more generally as a quality of service (QoS) of sorts of an ISP: the lower the latency of an ISP, the better the quality of service that it provides.
In \cref{main-street,discussion}, we give two examples of other possible applications stemming from this insight.
In each of these examples, QoS is given different meanings, which, in turn, result in different meanings of market split based on acceptable QoS. These examples provide insights into the breadth of meanings that can be captured by the idea of QoS and consequently by our model, and into the meaning of market split based
on acceptable QoS. The example given in \cref{main-street} gives an application to location theory, and derives results for an extended model with multiple types of goods. The example given in \cref{discussion} offers real-world evidence supporting the applicability of our model to certain food markets.

For generality, we henceforth use the more generic term \emph{producers} to refer to e.g.\ ISPs, \emph{consumers} to refer to e.g.\ customers, and \emph{QoS} to refer to e.g.\ latency.

\subsection{Related Work}\label{related-work}

Our consumer games are a form of congestion games, and more specifically, of resource-selection games.  Congestion games with finitely many players have been introduced by \citeN{Rosenthal73}; in fact, the term has been coined in a paper by \citeN{MondererShapley96}, titled ``Potential Games'', where it is shown that a game has a potential iff it is a congestion game. While the discussion there refers to atomic games with finitely many players, work in computer science and game theory also deals with nonatomic games, in which there may be a continuum of players as in our model (see e.g.\ \cite{RoughgardenTardos} for work in CS that uses such games). While substantial parts of our introductory \cref{intro-consumer-nash} can also be deduced from results by by \citeN{Schmeidler73} and by \citeN{Beckmann56}, we emphasize that the novel constructive machinery that we introduce in order to prove it is simpler, allows for efficient calculation, provides for more general results, and provides for auxiliary results useful in the analysis of our producer game and in obtaining our main results. \citeN{HolzmanCong1,HolzmanRoute} look at restrictions on strategy sets of atomic congestion games; one way to view our consumer games is as a special form of restricted nonatomic congestion games defined for general measure functions on agents' types, capturing their possible strategy sets. As it turns out, this set of games possess
many desired game-theoretic properties.

The actual games that we study are in fact two-stage games, where the second stage is a congestion game as discussed above; the first stage can be viewed as a form of facility-location game among producers, with QoS playing the role of location (see \cref{main-street}), where the main aim of producers is to select a QoS to be selected by as many consumers as possible. This resembles the literature on location theory initiated by \citeN{Hotelling}, although the utility function of the producers in our setting is different, and allows for fine preferences based on distance from a location most preferred by consumers. Given the above, our model can be viewed as a novel combination of facility-location games among producers with congestion games among consumers.

Another type of related literature deals with scheduling and queuing with multiple machines, where the jobs choose among available services and the level of service they receive depends on the selections by other jobs. Recently, two-stage games in these contexts have been studied, consisting of a strategic selection by machines between queuing policies \cite{AshlagiLT13} or scheduling policies \cite{AshlagiTZ10}, followed by a strategic selection by jobs between the various selected policies. Our work introduces a novel type of a two-stage scenario, which
may be considered as somewhat related. More remotely is the literature on competing mechanisms in the context of auctions, which employs such two-phase setting, but in a very different context of mechanism design with money (see e.g.\ \cite{McAfee}). 

The proof of \cref{intro-consumer-nash} draws its intuition from an analogy to a hydraulic system of communicating vessels (see \cref{vessels}). \citeN{Kaminsky} (see also \cite{AumannThreeWives}) uses an analogy to quite a different system of communicating vessels to solve rationing problems; his motivation is quite different, and involves extending bilateral rationing rules. While \citeauthor{Kaminsky} uses a set of two-way communicating vessels, we use a set of one-way communicating vessels. In this context, the problem of finding a Nash equilibrium among consumers may be regarded as a rationing problem with certain ``reserves'' for producers with high quality of service. Our treatment, especially in light of the discussion in \cref{heterogeneous} (see in particular \cref{vessels-odd-shapes}), also sheds new light on rationing problems, as congestion games of sorts among a continuum of good-fragments.

\section{Notation}

\begin{definition}[Notation]
\leavevmode
\begin{itemize}
\item
\defnitemtitle{Naturals}
We denote the natural numbers by $\mathbb{N}\eqdef\{0,1,2,\ldots\}$.
\item
\defnitemtitle{Nonnegative Reals}
We denote the nonnegative reals by $\Rge\eqdef\{r \in \mathbb{R} \mid r \ge 0\}$.
\item
\defnitemtitle{Maximizing Arguments}
Given a set $S$ and a function $f:S\rightarrow\mathbb{R}$ that  attains a maximum value on $S$, we denote the \emph{set} of arguments in $S$ maximizing $f$ by $\arg\Max_{s\in S}f(s)\eqdef\{s \in S \mid f(s)=m\}$, where $m\eqdef\Max_{s\in S}f(s)$.
\item
\defnitemtitle{Simplex}
For a finite set $S$ and a nonempty subset $S' \subseteq S$, we define \[\Delta^{S'}=\Bigl\{s \in [0,1]^S \:\Big|\: \sum_{j \in S'}s_j=1 \And \forall j \in S\setminus S':s_j=0\Bigr\}.\]
(The set $S$ will be clear from context.)
\item
For every $n \in \mathbb{N}$, we define $\prods\eqdef\{0,1,\ldots,n-1\}$.
\item
Given a tuple $\bar{t}=(t_0,\ldots,t_{n-1})\in S^{\prods}$ for some set $S$ and some $n\in\mathbb{N}$, and given $j\in\prods$ and $t'\in S$, we define $(\bar{t}_{-j},t')\eqdef(t_0,\ldots,t_{j-2},t_{j-1},t',t_{j+1},t_{j+2},\ldots,t_{n-1})\in S^{\prods}$.
\item
For every $n \in \mathbb{N}$, we denote the set of permutations on $\prods$ by $\prods!$.
\end{itemize}
\end{definition}

\section{Prelude: The Consumer (Customer) Game}\label{consumers}

Preparing the ground for the main results of this paper, in this \lcnamecref{consumers} we define the congestion game among consumers, and use a novel construction\footnote{See \citeN{hydraulic-selection} for a significant, highly nontrivial, generalization of our treatment of only the consumer game (without the producer game) to arbitrary resource-selection games (in which the resources available to a player may be any subset of $\prods$ and not merely a ``QoS-prefix'' of $\prods$ to which the construction of this \lcnamecref{consumers} is inherently tailored) and beyond.} to prove the existence of
Nash equilibrium and the uniqueness of equilibrium loads, and to efficiently calculate these loads.\footnote{As mentioned in \cref{related-work}, while the existence of Nash equilibrium and the uniqueness of equilibrium loads can also be derived from theorems by \citeN{Schmeidler73} and by \citeN{Beckmann56}, respectively, the novel constructive machinery that we introduce here is simpler, provides for more general uniqueness results, provides for auxiliary results useful in analysing dynamics of the producer game in \cref{producers}, and allows for efficient calculation.} 
Full proofs and auxiliary results are provided in \cref{consumers-proofs,ell-analysis}.

In this \lcnamecref{consumers} and in \cref{producers}, for ease of presentation, we present a model in which each consumer would like to consume from a least-loaded producer (i.e.\ in which all ISPs have the same total bandwidth); we remove this requirement in \cref{heterogeneous}.

\begin{definition}[Quality-of-Service Space]
For ease of presentation, we use $\timeset\eqdef[0,1]$ as the \emph{type space} in the consumer game (and later as the \emph{strategy space} in the producer game). We consider lower values as indicating better qualities of service.
\end{definition}

For the duration of this \lcnamecref{consumers}, fix a finite measure $\mu$ on $\timeset$, a natural $n \in \mathbb{N}$ and producer QoS levels (e.g.\ ISP latencies)
$\bar{t}=(t_0,\ldots, t_{n-1}) \in \timeset^{\prods}$. We consider the $n$-producers consumer game $(\mu;\bar{t})=(\mu;t_0,\ldots,t_{n-1})$, which we now define.

\begin{definition}[Strategies]
For every $d \in \timeset$, we define the set of \emph{strategies} available to a player with type (i.e.\ worst acceptable QoS) $d$
as $S_d\eqdef\{j \mid t_j \le d\}\cup\{\noconsumption\}$, where $\noconsumption$ denotes not consuming from any producer. We define $S\eqdef\cup_{d \in \timeset}S_d=\prods\cup\{\noconsumption\}$ --- the set of pure strategies available to any player, and consider $S$ as a measurable space with the $\sigma$-algebra $2^S$ of all of its subsets.
\end{definition}

\begin{definition}[Pure-Consumption Profile/Nash Equilibrium]
\leavevmode
\begin{enumerate} 
\item
A \emph{pure-consumption (strategy) profile} in the $n$-producers consumer game $(\mu;\bar{t})$ is a measurable function $s:\timeset\rightarrow S$ s.t.\ $s(d) \in S_d$ for every $d \in \timeset$.
\item
Given a pure-consumption profile $s$ in the $n$-producers consumer game $(\mu;\bar{t})$, we define $\load{j}\eqdef\mu\bigl(s^{-1}(j)\bigr)$ for every $j \in S$ ---
the load on producer $j$. ($\load{\noconsumption}$ is the measure of consumers not consuming from any producer.)
\item
A \emph{pure-consumption Nash equilibrium} in the $n$-producers consumer game $(\mu;\bar{t})$ is a pure-consumption profile $s$ s.t.\ for every $d\in\timeset$,
both the following hold.
\begin{enumerate}
\item
$s(d)=\noconsumption$ only if $S_d=\{\noconsumption\}$.
\item
$\load{s(d)} \le \load{j}$ for every $j \in S_d\setminus\{\noconsumption\}$.\footnote{\label{heterogeneous-eq}As mentioned above, our results generalize also for a more general definition of the consumers' preferences (capturing, e.g. scenarios in which different ISPs have different bandwidths), in which each consumer consumes from a producer $j$ with minimal $f_j\bigl(\load{j}\bigr)$ (as opposed
to minimal~$\load{j}$), where $(f_j)_{j\in\prods}$ is a specification of an increasing continuous function for each producer. See \cref{heterogeneous} for more details.}
\end{enumerate}
\end{enumerate}
\end{definition}

We now turn to define mixed-consumption strategies. We think of such a strategy not as a probabilistic one, but rather as meaning ``a certain fraction of the continuum of players with type $d$ have one strategy, while others have other strategies''.

\begin{definition}[Mixed-Consumption Profile/Nash Equilibrium]
\leavevmode
\begin{enumerate} 
\item
A \emph{mixed-consumption (strategy) profile} in the $n$-producers consumer game $(\mu;\bar{t})$ is a measurable function $s:\timeset\rightarrow [0,1]^S$ s.t.\ $s(d) \in \Delta^{S_d}$ for every $d \in \timeset$.
\item
Given a mixed-consumption profile $s$ in the $n$-producers consumer game $(\mu;\bar{t})$, we define $\load{j}\eqdef\int_{\timeset}s_j\,d\mu$ for every $j \in S$ ---
the load on producer $j$. ($\load{\noconsumption}$ is the measure of consumers not consuming from any producer in this case as well.)
\item
A \emph{mixed-consumption Nash equilibrium} in the $n$-producers consumer game $(\mu;\bar{t})$ is a mixed-consumption profile $s$ s.t.\ for every $d \in \timeset$,
both of the following hold.
\begin{enumerate}
\item
$\noconsumption\in\supp\bigl(s(d)\bigr)$ only if $S_d=\{\noconsumption\}$.
\item
$\load{k} \le \load{j}$ for every $k \in \supp\bigl(s(d)\bigr)$ and $j \in S_d\setminus\{\noconsumption\}$.\textsuperscript{\ref{heterogeneous-eq}}
\end{enumerate}
\end{enumerate}
\end{definition}

\begin{theorem}[$\exists$ Pure-Consumption Nash Equilibrium]\label{consumer-pure-nash-exists}
If $\mu$ is atomless, then a pure-consumption Nash equilibrium exists in the $n$-producers consumer game $(\mu;\bar{t})$.
\end{theorem}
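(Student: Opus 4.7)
The plan is to construct a pure-consumption Nash equilibrium explicitly, via a water-filling procedure that crucially leverages the atomlessness of $\mu$ to realize any desired splitting of a consumer subgroup's mass as an actual measurable partition. Without loss of generality reorder the producers so that $t_0 \le t_1 \le \cdots \le t_{n-1}$ (any equilibrium in the reordered game pulls back to one in the original). Partition $\timeset$ into measurable sets $T_{-1} \eqdef [0,t_0)$, $T_k \eqdef [t_k,t_{k+1})$ for $k = 0,\ldots,n-2$, and $T_{n-1} \eqdef [t_{n-1},1]$; consumers in $T_k$ (for $k \ge 0$) then have accessible producer set $\{0,\ldots,k\}\cup\{\noconsumption\}$, while consumers in $T_{-1}$ have only $\{\noconsumption\}$.

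The first step is to compute candidate equilibrium loads $(\ell_0^\ast,\ldots,\ell_{n-1}^\ast)$ by induction on $k=0,1,\ldots,n-1$, maintaining two invariants: (a) after stage $k$ the loads $\ell_0^{(k)}\ge\ell_1^{(k)}\ge\cdots\ge\ell_k^{(k)}$ are weakly decreasing, and (b) for every $j\le k$, the $T_j$-mass assigned so far sits only on producers $i$ with $\ell_i^{(k)}=\min_{i'\le j}\ell_{i'}^{(k)}$. At stage $k+1$, the mass $\mu(T_{k+1})$ is poured into the initially-empty producer $k+1$; as its load climbs, the flow starts to be shared with producer $k$ once the common level reaches $\ell_k^{(k)}$, then with producer $k-1$ once the level reaches $\ell_{k-1}^{(k)}$, and so on --- a finite, standard water-filling computation in the staircase formed by the existing loads. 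Invariant~(a) is preserved automatically. Invariant~(b) for $j\le k$ is preserved because, at any threshold $j^\ast$ strictly inside the staircase where filling stops, the strict drop $\ell_{j^\ast-1}^{(k)}>\ell_{j^\ast}^{(k)}$ (by definition of $j^\ast$) together with invariant~(a) at stage $k$ rule out the possibility that a previously-placed $T_j$-consumer (for $j\ge j^\ast$) was seated on a producer of index $<j^\ast$; so no existing placement needs revision.

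The second step is to realize these loads as a genuine pure-consumption profile. The procedure produces, for each pair $(k,j)$ with $j\le k$, a nonnegative real $m_{k,j}$ (the mass of $T_k$ ultimately sent to producer $j$) satisfying $\sum_{j\le k}m_{k,j}=\mu(T_k)$. By atomlessness of $\mu$ --- specifically the fact that an atomless finite measure on a measurable space attains every value in $[0,\mu(A)]$ on some measurable subset of $A$ --- one can measurably partition each $T_k$ into pieces of the prescribed measures $m_{k,0},\ldots,m_{k,k}$. Define $s$ by sending each such piece to the corresponding producer and sending $T_{-1}$ to $\noconsumption$; measurability of $s$ is immediate since $s$ is constant on finitely many measurable sets.

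Verification of the Nash conditions is then direct: for $d\in T_{-1}$ the only available strategy is $\noconsumption$; for $d\in T_k$ with $k\ge0$, the construction places the consumer on a producer whose load is exactly $\min_{j\le k}\ell_j^\ast$, so no accessible producer offers a strictly smaller load. The principal obstacle of the argument is the verification of invariant~(b) in the inductive step --- checking that earlier $T_j$-consumers seated on low-index producers are not pushed out of equilibrium when fresh mass flows elsewhere --- and the key to surmounting it is precisely the strict-decrease property at each boundary of the water-filled region, which forces the old assignments to remain best responses under the new loads.
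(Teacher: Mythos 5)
Your construction is essentially the direct variant of the paper's argument that the authors themselves allude to: both rely on a vessel-filling scheme (encapsulated in the paper's \cref{add-water}) to compute equilibrium loads interval by interval, and on atomlessness of $\mu$ to turn the resulting fractional mass splits into a genuine measurable partition (the content of the paper's \cref{mixed-to-pure}). The paper organizes these two ingredients slightly differently --- it first proves the existence of a symmetric \emph{mixed}-consumption equilibrium (\cref{consumer-symmetric-nash-exists}), which does not require atomlessness and is needed elsewhere, and only then invokes atomlessness to convert mixed to pure --- whereas you fuse the two steps directly, which the paper explicitly notes is also a valid route.

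There is, however, a gap in your final verification. The atomlessness fact you invoke yields a measurable partition of $T_k$ into pieces of the prescribed $\mu$-measures $m_{k,0},\ldots,m_{k,k}$, but it does \emph{not} guarantee that a piece with $m_{k,j}=0$ is the empty set; such a piece can be a nonempty $\mu$-null set. Worse, if $\mu(T_k)=0$ while $T_k\ne\emptyset$, every prescribed mass is zero and the partition says nothing at all about where the consumers in $T_k$ go. In either situation the producer $j$ your rule assigns those consumers to need not have minimal load among $\{0,\ldots,k\}$, so the assertion ``the construction places the consumer on a producer whose load is exactly $\min_{j\le k}\ell_j^\ast$'' is not established for them --- and the Nash condition in the definition of a pure-consumption equilibrium is a for-\emph{every}-$d$ condition, not an almost-every-$d$ condition, so these stray consumers genuinely break the argument. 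The fix is short but must be made explicit: require the piece assigned to producer $j$ to be empty whenever $m_{k,j}=0$, and when $\mu(T_k)=0$ but $T_k\ne\emptyset$ send all of $T_k$ to one producer attaining $\min_{j\le k}\ell_j^\ast$. The paper's \cref{mixed-to-pure} is careful to do precisely this in its separate treatment of the $\mu(C^i)=0$, $C^i\ne\emptyset$ case.
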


\begin{example}[Necessity of Atomlessness Condition]
Consider a nonzero measure $\mu$ concentrated entirely on the atom $d=1 \in \timeset$. For $n>1$, no pure-consumption Nash equilibrium exists in any induced $n$-producers consumer game.
Indeed, in any pure-consumption profile, all consumers with type $d=1$ would consume from the same producer, leaving another producer with a strictly lower load of $0$; as
this producer is acceptable by all consumers with type $d=1$, they would all rather deviate to it.
\end{example}

\begin{definition}[Effective Type]
We say that two types $d_1,d_2 \in \timeset$ are of the same \emph{effective type} if $S_{d_1}=S_{d_2}$.
\end{definition}

The Nash equilibrium constructed in the proof of \cref{consumer-pure-nash-exists} is asymmetric in
the sense that players with the same effective type may behave differently. As we momentarily below, this asymmetry cannot be avoided. Nonetheless,
a reader who finds this asymmetry aesthetically unpleasing may instead consider a more-symmetric, yet mixed-consumption, Nash equilibrium, which in fact exists even when $\mu$ is not atomless.

\begin{definition}[Symmetric Strategy Profile]
A strategy profile $s$ is said to be \emph{symmetric} if $S_{d_1}=S_{d_2}\Longrightarrow s(d_1)=s(d_2)$ for every $d_1,d_2 \in \timeset$, i.e.\ each player's strategy depends only on the player's effective type.
\end{definition}

\begin{example}[Nonexistence of a Symmetric Pure-Strategy Nash Equilibrium]
Consider any nonzero measure $\mu$. For $n>1$, if $t_j=0$ for every $j\in\prods$, then all consumers are of the same effective type. Thus, no symmetric pure-consumption equilibrium exists in the induced $n$-producers consumer game.
Indeed, in any symmetric pure-consumption profile, since all consumers are of the same effective type, all would consume from the same producer, leaving another producer (acceptable to
all) with a strictly lower load of $0$; therefore, all consumers would rather deviate to this producer.
\end{example}

\begin{theorem}[$\exists$ Symmetric Mixed-Consumption Nash Equilibrium]\label{consumer-symmetric-nash-exists}
A symmetric mixed-consumption Nash equilibrium exists in the $n$-producers consumer game $(\mu;\bar{t})$.
Furthermore, there exists such an equilibrium for which the strategies can be computed in $O(n^2)$ time.
\end{theorem}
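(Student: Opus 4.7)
My plan is to construct a symmetric mixed-consumption Nash equilibrium explicitly via a water-filling procedure, inspired by the one-way communicating-vessels analogy mentioned in \cref{related-work}; this construction simultaneously yields existence and an $O(n^2)$ algorithm.

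By relabelling producer indices I may assume $t_0 \le t_1 \le \cdots \le t_{n-1}$. A symmetric profile is determined by its values on the (at most $n+1$) effective types: for $k \in \{1, \ldots, n\}$, the $k$-th effective type is $S_d = \{0, \ldots, k-1, \noconsumption\}$, carrying mass $m_k \eqdef \mu\bigl(\{d : t_{k-1} \le d\}\bigr) - \mu\bigl(\{d : t_k \le d\}\bigr)$ (taking $t_n$ to exceed $1$); consumers with $d < t_0$ are forced to play $\noconsumption$. My task is therefore to produce a distribution $\sigma^k \in \Delta^{\{0, \ldots, k-1\}}$ for each $k \ge 1$ with $m_k > 0$, inducing loads $\load{j} = \sum_{k > j} m_k \sigma^k_j$.

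The construction is iterative over $k = 1, \ldots, n$, starting from $\load{j} = 0$ for every $j$. In iteration $k$ I pour the mass $m_k$ into producers $\{0, \ldots, k-1\}$ as follows: maintain an active set $A \subseteq \{0, \ldots, k-1\}$, initialised to $\{k-1\}$; repeatedly raise $\load{j}$ uniformly over $j \in A$ until either the mass is exhausted or $\load{A}$ reaches $\load{\min A - 1}$, in which case $\min A - 1$ joins $A$ (provided $\min A > 0$) and I continue; should $A$ grow to all of $\{0, \ldots, k-1\}$ with mass remaining, I distribute the rest uniformly across $A$. Set $\sigma^k_j$ to the mass routed to $j$ during iteration $k$, normalised by $m_k$ (chosen arbitrarily when $m_k = 0$). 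By construction, after each iteration $k$ the invariant $\load{0} \ge \load{1} \ge \cdots \ge \load{k-1}$ (with $\load{j} = 0$ for $j \ge k$) holds, and $A$ is at all times a contiguous suffix of $\{0, \ldots, k-1\}$.

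The crux is the Nash equilibrium condition, which demands that $\supp \sigma^k$ lie within the set of min-load producers in $\{0, \ldots, k-1\}$ in the \emph{final} configuration. At the end of iteration $k$ itself, $\supp \sigma^k \subseteq A$ is precisely the current min-load set in $\{0, \ldots, k-1\}$, by construction. I would then argue by induction on $k' \ge k$ that this containment is preserved under later iterations: in any iteration $k' > k$, the weak-monotonicity invariant together with the contiguous-suffix structure of $A$ forces $A$, whenever it first extends into $\{0, \ldots, k-1\}$, to already contain the entire current min-load set of $\{0, \ldots, k-1\}$ (and hence $\supp \sigma^k$); from then on, $\supp \sigma^k$ is raised uniformly together with the rest of $A$ and so remains tied at the current minimum. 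This preservation step is the one delicate point and is where the one-way nature of the vessels really does the work for us. Measurability of the induced profile $s$ follows from piecewise-constancy of the effective-type map in $d$. Finally, each iteration $k$ performs at most $k$ expansions of $A$ (each $O(1)$) plus an $O(k)$ final uniform raise, yielding total runtime $O(n^2)$.
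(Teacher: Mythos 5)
Your construction is precisely the paper's: the paper encapsulates each pour into a standalone lemma (\cref{add-water}) and builds the equilibrium by induction on $n$, extending the $(n-1)$-producer solution by pouring $\mu\bigl([t_{n-1},1]\bigr)$ into vessel $n-1$, whereas you unroll that induction into a forward loop over $k$, but the one-way water-filling, the nonincreasing-loads invariant, the contiguous-suffix structure of the active set, the preservation argument for $\supp\sigma^k$ under later pours, and the $O(n)$-per-iteration complexity count are all the same. The delicate preservation step you flag is exactly what \crefpart{add-water}{min-height} together with the nonincreasing ordering delivers in the paper, and your sketch of it is correct.
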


See \cref{vessels} for an illustration of the constructive proof of \cref{consumer-symmetric-nash-exists}; as illustrated, the intuition underlying this novel construction
builds upon hydraulic systems of communicating vessels (nonetheless, the proofs given in \cref{consumers-proofs} are completely formal, of course).\begin{figure}[p]
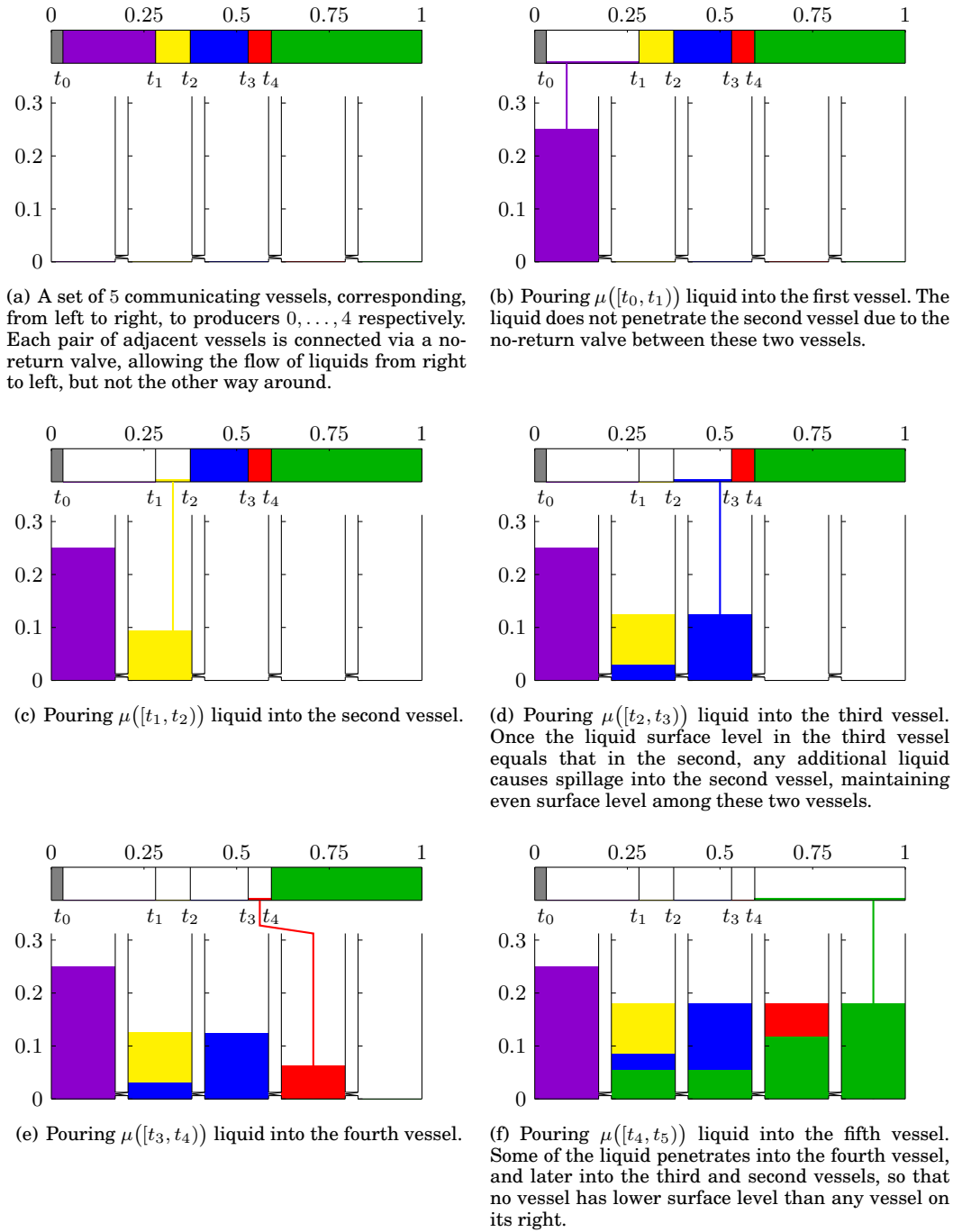
%
\centering%
\subfigure[A set of $5$ communicating vessels, corresponding, from left to right, to producers $0,\ldots,4$ respectively. Each pair of adjacent vessels is connected via a no-return valve, allowing the flow of liquids from right to
left, but not the other way around.]{%
\vessels{1}%
}\quad
\subfigure[Pouring $\mu\bigl([t_0,t_1)\bigr)$ liquid into the first vessel. The liquid does not penetrate the second vessel due to the no-return valve between these two vessels.]{%
\vessels{2}%
}
\subfigure[Pouring $\mu\bigl([t_1,t_2)\bigr)$ liquid into the second vessel.]{%
\vessels{3}%
}\quad
\subfigure[Pouring $\mu\bigl([t_2,t_3)\bigr)$ liquid into the third vessel. Once the liquid surface level in the third vessel equals that in the second, any additional liquid causes spillage into the second vessel, maintaining even surface level among these two vessels.]{%
\vessels{4}%
}
\subfigure[Pouring $\mu\bigl([t_3,t_4)\bigr)$ liquid into the fourth vessel.]{%
\vessels{5}%
}\quad
\subfigure[Pouring $\mu\bigl([t_4,t_5)\bigr)$ liquid into the fifth vessel. Some of the liquid penetrates into the fourth vessel, and later into the third and second vessels, so that no vessel has lower surface level than any vessel on its right.]{%
\label{vessels:green}%
\vessels{6}%
}%
\caption{%
Illustration of the construction in the proof of \cref{consumer-symmetric-nash-exists} for $n=5$. E.g.\ as exactly~$80\%$ of the blue (i.e.\ darkest when viewed in b/w) liquid in \cref{vessels:green} is in the third vessel and the remaining $20\%$ is in the second one, the strategy for all consumer types $d \in [t_2,t_3)$ in the symmetric mixed-consumption Nash equilibrium that we construct is $0.8$ consumption from producer~$2$ and $0.2$ consumption from producer~$1$.%
}%
\label{vessels}%
\end{figure}
We now show that while in general many Nash equilibria may exist in the consumer game, they result in the same load for both consumers and producers.

\begin{theorem}[Producers are Indifferent between Nash Equilibria]\label{indifference-producers}
$\load{j}=\loadt{j}$ for every $j\in\prods$ and every mixed-consumption Nash equilibria $s,s'$ in $(\mu;\bar{t})$.
\end{theorem}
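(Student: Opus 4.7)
The strategy is to show that the load vector induced by \emph{any} mixed-consumption Nash equilibrium must coincide with the hydraulic-equilibrium vector built in \cref{consumer-symmetric-nash-exists}, by checking that both are pinned down by the same conditions. Throughout I relabel so that $t_0 \le t_1 \le \cdots \le t_{n-1}$.

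First I would argue monotonicity: $\load{0}\ge\load{1}\ge\cdots\ge\load{n-1}$ for any Nash equilibrium $s$. If $\load{j+1}>0$, then a positive-measure set of consumers is routed to producer $j+1$; each such consumer has type $d\ge t_{j+1}\ge t_j$ and so $j\in S_d$, whence the Nash condition gives $\load{j+1}\le\load{j}$ (the case $\load{j+1}=0$ is vacuous). Next I would partition $\prods$ into maximal runs of equal load, obtaining blocks $B_k=[a_k,a_{k+1}-1]$ of levels $L_1>L_2>\cdots>L_m$ (the strict inequalities following from maximality combined with monotonicity). I would then show that for every block,
\[\sum_{j\in B_k}\load{j}=\mu\bigl([t_{a_k},t_{a_{k+1}})\bigr),\]
where $t_n$ is read as a formal right endpoint of $\timeset$. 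Indeed, consumers with $d\in[t_{a_k},t_{a_{k+1}})$ have access to at least one producer in $B_k$ (namely $a_k$) but to none past index $a_{k+1}-1$; every earlier producer in their access set carries strictly higher load, so the Nash condition routes all such mass into $B_k$. Conversely, consumers with $d<t_{a_k}$ cannot reach $B_k$, while consumers with $d\ge t_{a_{k+1}}$ have access to $B_{k+1}$ (or a later block) of strictly lower level and so avoid $B_k$. Hence $L_k=\mu([t_{a_k},t_{a_{k+1}}))/|B_k|$ is determined entirely by $\mu$, $\bar t$, and the block endpoints $(a_k)$.

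The main obstacle is then to show that the block endpoints $(a_k)$ themselves are forced to agree across all Nash equilibria. This is an isotonic-regression / pool-adjacent-violators uniqueness statement for the cumulative mass function $F(k)=\mu([t_0,t_{k+1}))$. The extra ingredient not yet exploited is feasibility: applying the Nash condition to consumers with $d\in[t_{a_k},t_{i+1})$ for each $i\in B_k$ yields $\mu([t_{a_k},t_{i+1}))\le(i-a_k+1)\,L_k$, i.e.\ every running average inside the block lies at or below the block average. Combined with strict decrease across blocks, this pins the partition down by an exchange argument on two candidate partitions $(a_k)$ and $(a'_k)$: at the smallest index of disagreement, with $a_k=a'_k$ and, say, $a_{k+1}<a'_{k+1}$, the level $L'_k$ of the longer block decomposes as a weighted mean of $L_k$ with strictly-lower subsequent levels (or feasibility-controlled prefixes thereof) of the other partition, forcing $L'_k<L_k$; reversing the roles of the two partitions yields the contradictory $L_k<L'_k$. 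Once block-structure uniqueness is established, each individual $\load{j}=L_k$ for $j\in B_k$ is uniquely determined, and the theorem follows.
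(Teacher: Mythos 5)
Your plan is correct, but it takes a noticeably longer route than the paper's. Both start from the monotonicity $\load{0}\ge\cdots\ge\load{n-1}$ (\cref{decreasing-load} in the paper). From there the paper argues directly by contradiction: take the minimal $j$ where the two equilibria disagree, say $\loadt{j}>\load{j}$; let $k\ge j$ be the last index with $\loadt{k}=\loadt{j}$; then mass-counting on $\{0,\ldots,k\}$ gives $\sum_{i\le k}\loadt{i}>\sum_{i\le k}\load{i}\ge\mu\bigl([t_0,t_{k+1})\bigr)$, which forces a positive measure of consumers of type $d\ge t_{k+1}$ to consume under $s'$ from some producer $i\le k$ with $\loadt{i}\ge\loadt{k}>\loadt{k+1}$ --- a Nash violation. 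You instead re-derive the full block structure of equilibrium loads and then prove uniqueness of the block partition by an exchange argument, essentially reproving for an arbitrary Nash equilibrium the characterization that the paper's hydraulic construction and \cref{compute-ell} produce. What this buys you is a standalone isotonic-regression-style description of the equilibrium load vector; what it costs is that the exchange step needs more care than one sentence: when $a'_{k+1}$ overshoots past several $(a_k)$-blocks you must iterate the decomposition over all intervening whole blocks plus a final prefix, each controlled by feasibility, and reversing the roles gives only the nonstrict $L_k\le L'_k$ (a prefix-average feasibility bound for the $(a')$-block), not $L_k<L'_k$ as you state --- the strict $L'_k<L_k$ from the decomposition side still closes the contradiction, so the plan survives, but the symmetry you invoke is not exact. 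The paper's argument is shorter and more local, and sidesteps block-partition uniqueness entirely.
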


\begin{corollary}[Consumers are Indifferent between Nash Equilibria]\label{indifference-consumers}
$\load{k}=\loadt{k'}$ for every $k\in\supp\bigl(s(d)\bigr)$ and $k'\in\supp\bigl(s'(d)\bigr)$, for every $d\in\timeset$ and every mixed-consumption Nash equilibria $s,s'$ in $(\mu;\bar{t})$.
\end{corollary}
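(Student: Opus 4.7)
The plan is to deduce the corollary directly from \cref{indifference-producers} together with the defining properties of a mixed-consumption Nash equilibrium. The essential observation is that, in any equilibrium, a consumer of type $d$ (with $S_d\ne\{\noconsumption\}$) places positive weight only on producers that achieve the minimum load within $S_d\setminus\{\noconsumption\}$; since the per-producer loads coincide across equilibria by \cref{indifference-producers}, the common minimum load on $S_d\setminus\{\noconsumption\}$ is an invariant of the equilibrium, and every strategy in the support of any equilibrium strategy of type $d$ attains exactly this value.

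More concretely, I would fix $d\in\timeset$ and mixed-consumption Nash equilibria $s,s'$, and split into two cases. In the principal case $S_d\ne\{\noconsumption\}$, the first Nash condition forces $\supp\bigl(s(d)\bigr),\supp\bigl(s'(d)\bigr)\subseteq S_d\setminus\{\noconsumption\}\subseteq\prods$. The second Nash condition then yields, for any $k\in\supp\bigl(s(d)\bigr)$ and any $j\in S_d\setminus\{\noconsumption\}$, the inequality $\load{k}\le\load{j}$, and analogously $\loadt{k'}\le\loadt{j}$ for $k'\in\supp\bigl(s'(d)\bigr)$. Applying \cref{indifference-producers} to replace $\loadt{j}$ by $\load{j}$ (and similarly on the other side), both $k$ and $k'$ are seen to minimize the same function $j\mapsto\load{j}$ over the finite set $S_d\setminus\{\noconsumption\}$, so $\load{k}=\load{k'}=\loadt{k'}$, as desired.

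In the degenerate case $S_d=\{\noconsumption\}$, the only possible strategy is $\noconsumption$, so $k=k'=\noconsumption$ and we must show $\load{\noconsumption}=\loadt{\noconsumption}$. This follows from conservation of measure: summing the loads over $\prods\cup\{\noconsumption\}$ yields $\mu(\timeset)$ in either equilibrium, and \cref{indifference-producers} guarantees that the contributions from $\prods$ agree, forcing the remainders $\load{\noconsumption}$ and $\loadt{\noconsumption}$ to agree as well.

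I do not anticipate any genuine obstacle here: the corollary is essentially a one-line consequence of \cref{indifference-producers}, once one notes that ``being in the support'' translates, via the equilibrium condition, into ``achieving the minimum load on $S_d\setminus\{\noconsumption\}$''. The only point requiring a hint of care is the bookkeeping for the non-consumption strategy $\noconsumption$, which is handled by the conservation-of-measure argument above.
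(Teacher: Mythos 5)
Your proof is correct and takes essentially the same route as the paper: both reduce the claim to \cref{indifference-producers} via the observation that membership in the support forces attainment of the minimum load over $S_d\setminus\{\noconsumption\}$. The only cosmetic difference is in the non-consumption case, where the paper (after sorting the $t_j$) notes directly that $\load{\noconsumption}=\mu\bigl([0,t_0)\bigr)$ in any equilibrium, whereas you derive it from conservation of total mass together with \cref{indifference-producers}; both are equally valid.
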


By \cref{consumer-symmetric-nash-exists,indifference-producers}, the following is well defined.

\begin{definition}[Producer Load]\label{ell}
For every $j\in\prods$, we define $\ell_j(\bar{t})$ to equal $\load{j}$ in any mixed-consumption Nash equilibrium $s$ in $(\mu;\bar{t})$.
\end{definition}

By the proof of \cref{consumer-symmetric-nash-exists}, we obtain
\cref{compute-ell} --- a
simple algorithm for directly calculating $\ell_j(\bar{t})$ for all $j$, without the need to first calculate consumer's strategies. While this algorithm runs in $O(n^2)$ time, i.e.\
has same worst-case asymptotic behaviour as explicitly computing a Nash equilibrium via \cref{consumer-symmetric-nash-exists} and then
deducing all loads, it is considerably simpler, and also computes the loads sequentially, and so may be stopped mid-way,
allowing to calculate the loads on the $j$ producers with lowest latency levels in $O(j\cdot n)$ time for any $j$.
\begin{algorithm}[ht]%
\floatname{algorithm}{ALGORITHM}%
\caption{Direct computation of $\ell_j(\bar{t})$ for all $j\in\prods$}\label{compute-ell}%
\small%
\begin{algorithmic}[1]%
\Procedure{Compute-$\ell$}{$\mu;t_0,\ldots,t_{n-1}$}\Comment{Assumes $t_0\le t_1\le\cdots\le t_{n-1}$.}
\State $t_n \gets 2$\Comment{Any value $>1$ will do here; assumes $\mu$ is defined on $[0,t_n]$, but has support $\timeset$.}
\State $\ell_{\noconsumption} \gets \mu\bigl([0,t_0)\bigr)$
\State $k \gets 0$
\While{$k<n$}
\State $k' \gets \Max\;\arg\Max_{k<k'\le n}\frac{\mu([t_k,t_{k'}))}{k'-k}$
\State $\ell \gets \frac{\mu([t_k,t_{k'}))}{k'-k}$
\ForAll{$k\le j<k'$}
\State $\ell_j \gets \ell$
\EndFor
\State $k \gets k'$
\EndWhile
\State \Return{$(\ell_{\noconsumption},\ell_0,\ldots,\ell_{n-1})$}
\EndProcedure
\end{algorithmic}%
\end{algorithm}

See \cref{ell-analysis} for an analytic study of $\ell_j$, formalizing some main properties thereof, which we utilize in our proofs in the following \lcnamecrefs{producers}. In particular,
we show there that for every $j$, $\ell_j(\bar{t})$ is nonincreasing in $t_j$, weakly quasiconvex in $t_k$ for $k\ne j$, and Lipschitz (w.r.t.\ $\mu$) in each coordinate with Lipschitz constant $1$.

\section{The Producer (ISP) Game}\label{producers}

We now turn to the producer game, and to the main results of this paper. In this two-stage game, each producer chooses a strategy (i.e.\ QoS) in $\timeset$, and the utilities are determined according to the loads on producers in Nash equilibria in the induced consumer game.
For the duration of this \lcnamecref{producers}, fix a natural $n \in \mathbb{N}$ and a finite measure $\mu$ on $\timeset$.
Full proofs, as well as auxiliary results, are provided in \cref{producers-coarse-proofs,producers-fine-proofs}.

In \cref{producers-coarse}, we define a simplified version of the producer game; the definition of the (more intricate) producer game that is surveyed in the introduction is given in \cref{producers-fine}. While the simpler game defined in \cref{producers-coarse} has some trivialities that we point out, its analysis is nonetheless interesting,
and the obtained results are useful when analysing the more-involved version in \cref{producers-fine}.

Recall that as in \cref{consumers}, for ease of presentation we present a model in which each consumer would like to consume from a least-loaded producer (i.e.\ in which all ISPs have the same total bandwidth); as noted above,
we remove this requirement in \cref{heterogeneous}.

\subsection{Coarse Preferences (A Simplified Producer Game)}\label{producers-coarse}

\begin{definition}[Producer Game with Coarse Preferences]
We define the \emph{producer game with coarse preferences} \coarsegame\ as the $n$-player game, with set of players (called \emph{producers}) $\prods$, in which the pure-strategy space available to each producer
is $\timeset$, and in which for each pure-strategy profile $\bar{t}\in\timeset^{\prods}$, the utility for each producer $j\in\prods$ is strictly increasing in $\ell_j(\bar{t})$ (as defined in \cref{ell}).
\end{definition}

\subsubsection{Static Analysis}\label{producers-coarse-statics}

We begin with an analysis of domination in the producer game with coarse preferences, pointing out the trivialities in this simplified game, which will disappear in the more-involved
version thereof that we analyse in \cref{producers-fine}.

\begin{definition}[Safe Alternative; Dominant Strategy]\label{dominant}
Let $t$ be a strategy in the game \coarsegame.
\begin{itemize}
\item
We say that $t$ is a \emph{safe alternative to} some strategy $t'$ if for every strategy profile for all but one of the
producers, playing $t$ gives the remaining producer utility at least as high a utility as playing $t'$.
\item
We say that $t$ is a \emph{dominant strategy} if it is a safe alternative to all strategies.
\end{itemize}
\end{definition}

\begin{theorem}[Dominant Strategies]\label{producer-coarse-dominant}
$t \in \timeset$ is a dominant strategy in \coarsegame\ iff $\mu\bigl([0,t)\bigr)=0$.
Furthermore, each such dominant strategy guarantees a load of at least $\frac{\mu(\timeset)}{n}$ on each producer playing it.
\end{theorem}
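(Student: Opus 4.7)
The plan is to prove the equivalence in both directions separately and then derive the load bound as a postscript argument.

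For sufficiency ($\mu([0,t))=0 \Rightarrow t$ dominant), I fix an arbitrary opponent profile $\bar{t}_{-j}$ and alternative strategy $t'\in\timeset$; the goal is to show $\ell_j(\bar{t}_{-j},t)\ge\ell_j(\bar{t}_{-j},t')$. If $t'\ge t$, this follows from the monotonicity of $\ell_j$ in its own coordinate (a property of $\ell_j$ catalogued in \cref{ell-analysis}). If $t'<t$, I invoke the Lipschitz-in-$\mu$ property of $\ell_j$ (same reference) to get $\lvert\ell_j(\bar{t}_{-j},t)-\ell_j(\bar{t}_{-j},t')\rvert\le\mu([t',t))\le\mu([0,t))=0$, forcing equality. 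A more self-contained alternative for this subcase is to observe that the induced consumer games at $t$ and $t'$ have strategy sets $S_d$ that differ only on the $\mu$-null interval $[t',t)$, so passing any NE of one game to the other via a modification on a null set preserves all loads, and uniqueness of equilibrium loads (\cref{indifference-producers}) concludes.

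For necessity I prove the contrapositive: if $\mu([0,t))>0$, I exhibit the strategy $0$ as strictly preferable to $t$ against the profile placing all $n-1$ opponents at QoS $t$. When our producer also plays $t$, symmetry together with the fact that consumers with $d<t$ abstain yields equilibrium load $\mu([t,1])/n$ on every producer. When our producer instead plays $0$, consumers with $d<t$ become captive to us (contributing mass $\mu([0,t))>0$), while consumers with $d\ge t$ can use any producer and redistribute to minimize their load; a short case analysis---balance succeeds if $\mu([0,t))\le\mu(\timeset)/n$, otherwise the opponents split $\mu([t,1])/(n-1)$ each and we retain $\mu([0,t))$---shows our load equals $\max\{\mu([0,t)),\mu(\timeset)/n\}>\mu([t,1])/n$, since $\mu([0,t))>0$ implies $\mu(\timeset)>\mu([t,1])$.

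For the ``furthermore,'' I assume $\mu([0,t))=0$ and fix any Nash equilibrium $s$ of the consumer game with our producer $j$ at $t$. Because $\mu$-a.e.\ consumer has $d\ge t$, producer $j$ is acceptable to a.e.\ consumer, so $\load{\noconsumption}=0$ and $\sum_{k\in\prods}\load{k}=\mu(\timeset)$. If $\load{j}<\mu(\timeset)/n$, pigeonhole gives some $k\ne j$ with $\load{k}>\mu(\timeset)/n>\load{j}$. But the NE condition says that whenever $k\in\supp(s(d))$ and $j\in S_d$, then $\load{k}\le\load{j}$; since $j\in S_d$ holds on the full-measure set $\{d\ge t\}$, this forces $k\notin\supp(s(d))$ $\mu$-a.e., hence $\load{k}=\int s_k\,d\mu=0$, a contradiction.

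The delicate step is the $t'<t$ subcase of sufficiency: the Lipschitz route is the cleanest but relies on a property established in \cref{ell-analysis}, while the self-contained alternative demands care in verifying the NE condition pointwise after modifying a consumer strategy on the $\mu$-null set $[t',t)$, in particular choosing the modification consistent with the target game's (enlarged) $S_d$ on that set.
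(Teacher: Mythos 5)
Correct, and essentially the same route as the paper's: the paper derives this statement as a direct corollary of a domination lemma (\cref{producer-coarse-domination}) and a load-averaging corollary (\cref{zero-strategy-load}), and your write-up inlines precisely the ingredients of those --- monotonicity in the own coordinate for $t'\ge t$, the null-set equivalence of strategies for $t'<t$ (your ``self-contained alternative'' is exactly the paper's argument, and you rightly flag the care needed to repair the pointwise Nash condition on the null interval), a strict-improvement witness profile for necessity, and a pigeonhole over loads for the ``furthermore.'' The only incidental divergence is that your necessity witness computes against $(t,\ldots,t)$ via the case split yielding $\max\bigl\{\mu\bigl([0,t)\bigr),\mu(\timeset)/n\bigr\}$, whereas the paper's \cref{producer-coarse-domination} uses $(1,\ldots,1)$; both choices are valid.
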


In particular, we have that every producer playing $0\in\timeset$ constitutes a Nash equilibrium. (We emphasize that this is by far not the only Nash equilibrium --- see \cref{producer-coarse-nash-char} below.)
This and other trivialities that result from domination (as well as the domination itself) disappear in \cref{producers-fine}, when we refine the order of preferences of the various producers. Before that, though, we
continue to explore the consumer game with coarse preferences, obtaining results that aid our analysis of the consumer game with refined preferences in \cref{producers-fine} below.
Our next step is to not only characterize the Nash equilibrium loads (an immediate corollary of \cref{producer-coarse-dominant}), but furthermore, show that every strategy profile inducing these loads is a Nash equilibrium.

\begin{theorem}[Nash Equilibrium Loads]\label{producer-coarse-nash-loads}
A pure-strategy profile $\bar{t} \in \timeset^{\prods}$ constitutes a Nash equilibrium in \coarsegame\ iff $\ell_j(\bar{t})=\frac{\mu(\timeset)}{n}$ for every $j\in\prods$.
\end{theorem}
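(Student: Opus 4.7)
The plan is to prove both directions, with the $(\Rightarrow)$ direction following from the dominance of the strategy $0$ combined with the sum constraint, and the $(\Leftarrow)$ direction collapsed via monotonicity of $\ell_j$ in its own coordinate to a single explicit calculation via \cref{compute-ell}.

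For the $(\Rightarrow)$ direction, suppose $\bar{t}$ is a Nash equilibrium. By \cref{producer-coarse-dominant}, the strategy $0 \in \timeset$ is dominant and guarantees each producer playing it a load of at least $\mu(\timeset)/n$. Since each producer's utility is strictly increasing in its load, Nash-ness forces $\ell_j(\bar{t}) \ge \mu(\timeset)/n$ for every $j$ (else $j$ would strictly benefit by deviating to $0$). Combined with the universal upper bound $\sum_j \ell_j(\bar{t}) \le \mu(\timeset)$, this forces $\ell_j(\bar{t}) = \mu(\timeset)/n$ for every $j$.

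For the $(\Leftarrow)$ direction, suppose all loads equal $\mu(\timeset)/n$. Fix any $j$ and any potential deviation $t' \in \timeset$. By the monotonicity of $\ell_j$ in $t_j$ (announced after \cref{compute-ell} and proved in \cref{ell-analysis}), $\ell_j(\bar{t}_{-j}, t') \le \ell_j(\bar{t}_{-j}, 0)$, so it suffices to show $\ell_j(\bar{t}_{-j}, 0) \le \mu(\timeset)/n$; combined with the lower bound from dominance, this gives equality and thus rules out any strictly profitable deviation. To compute $\ell_j(\bar{t}_{-j}, 0)$, sort $\bar{t}$ as $t_0 \le \cdots \le t_{n-1}$. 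The hypothesis that every load equals $\mu(\timeset)/n$ forces $\ell_{\noconsumption} = 0$ and hence $\mu\bigl([0, t_0)\bigr) = 0$; inspecting \cref{compute-ell} further forces $\mu\bigl([0, t_k)\bigr) \le k \cdot \mu(\timeset)/n$ for every $k = 1, \ldots, n-1$, with equality at $k = n$ (otherwise the first block would not encompass all $n$ producers, contradicting the $\Max\arg\Max$ rule). The sorted new profile $(\bar{t}_{-j}, 0)$ gains an extra $0$ in front and shifts the indices of $t_1, \ldots, t_{j-1}$ up by one position; a straightforward case analysis on these index shifts shows that $\mu\bigl([0, s_k)\bigr) \le k \cdot \mu(\timeset)/n$ persists for every $k < n$, still with equality at $k = n$. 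Running \cref{compute-ell} on $(\bar{t}_{-j}, 0)$, the $\Max\arg\Max$ rule therefore selects $k' = n$ in its first iteration, assigning every producer---including $j$---the load $\mu(\timeset)/n$, as needed.

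The main obstacle is the algorithmic bookkeeping in the index-shift case analysis for the new sorted profile; the key conceptual simplification is the observation that monotonicity of $\ell_j$ in $t_j$ reduces the uncountable space of unilateral deviations to the single test case $t' = 0$, at which point the load is readily computed via the explicit algorithm.
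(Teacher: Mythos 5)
Your proof is correct. The $(\Rightarrow)$ direction is identical to the paper's: dominance of the strategy $0$ forces $\ell_j(\bar t)\ge\mu(\timeset)/n$ at any Nash equilibrium, and the constraint $\sum_j\ell_j(\bar t)\le\mu(\timeset)$ upgrades these inequalities to equalities.

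For $(\Leftarrow)$ you take a genuinely different route. The paper splits into two cases: for a deviation $t'_k>t_k$ it invokes safety of $t_k$ (\cref{producer-coarse-domination}), and for a deviation $t'_k<t_k$ it constructs, directly from a consumer Nash equilibrium $s$ for $(\mu;\bar t)$, a consumer Nash equilibrium $s'$ for $(\mu;\bar t_{-k},t'_k)$ with identical loads, using the observation that $\mu\bigl([t'_k,t_0)\bigr)\le\mu\bigl([0,t_0)\bigr)=\ell_{\noconsumption}=0$. You instead collapse all deviations to the single test point $t'=0$ via \cref{ell-decreasing}, reducing the claim to $\ell_j(\bar t_{-j},0)\le\mu(\timeset)/n$, and then compute this quantity explicitly by running \cref{compute-ell} on the re-sorted profile $(\bar t_{-j},0)$. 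Your index-shift argument (inserting a $0$ at position $0$, shifting the former $t_0,\ldots,t_{j-1}$ to positions $1,\ldots,j$, leaving $t_{j+1},\ldots,t_{n-1}$ fixed) correctly preserves the inequalities $\mu\bigl([0,s_k)\bigr)\le \tfrac{k}{n}\mu(\timeset)$, which together with $\mu\bigl([0,s_0)\bigr)=0$ forces the algorithm's first iteration to select $k'=n$ and assign all producers load $\mu(\timeset)/n$. One tiny slip: you write that the indices of $t_1,\ldots,t_{j-1}$ shift up; it should be $t_0,\ldots,t_{j-1}$ (including $t_0$), though this does not affect the argument. Notably, the paper itself remarks after the proof that ``an alternative proof of the second direction $(\Leftarrow)$ may also be given via \cref{compute-ell} / \cref{compute-ell-no-algo}'' without supplying the details --- your proof is essentially a realization of that alternative. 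The trade-off: the paper's construction is self-contained at the level of consumer equilibria and does not require sorting bookkeeping; your version leans on the algorithmic machinery and achieves a cleaner reduction (one test point instead of two cases), at the cost of the index-shift case analysis.
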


We proceed to directly characterize the strategies played in Nash equilibria, in a way that does not necessitate solving the induced consumer game.

\begin{theorem}[Nash Equilibrium Characterization]\label{producer-coarse-nash-char}
Let $t_0\le\cdots\le t_{n-1} \in \timeset$. The pure-strategy profile $\bar{t}\eqdef(t_1,\ldots,t_{n-1})$ constitutes a Nash equilibrium in \coarsegame\ iff $\mu\bigl([0,t_j)\bigr)\le\frac{j}{n}\cdot\mu(\timeset)$ for
every $j\in\prods$.
\end{theorem}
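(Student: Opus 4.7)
My plan is to reduce the characterization to the load condition provided by \cref{producer-coarse-nash-loads}, so that I only need to show the two-sided equivalence
\[
\ell_j(\bar{t})=\tfrac{\mu(\timeset)}{n}\text{ for every }j\in\prods
\quad\Longleftrightarrow\quad
\mu\bigl([0,t_j)\bigr)\le\tfrac{j}{n}\mu(\timeset)\text{ for every }j\in\prods.
\]
Both directions are then rather short; the backward direction is the one that carries the real content, and I expect it to be the main (mild) obstacle since it is where the algorithmic characterization of $\ell_j$ must be invoked.

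For the forward direction, I would fix a symmetric mixed-consumption Nash equilibrium $s$ realizing the loads $\bigl(\ell_j(\bar t)\bigr)_j$, and exploit the structural fact that for $d<t_j$ one has $S_d\subseteq\{0,1,\ldots,j-1\}\cup\{\noconsumption\}$, because no producer $k\ge j$ is acceptable. Integrating the identity $\sum_{k\in S_d}s_k(d)=1$ over $d\in[0,t_j)$ then gives
\[
\mu\bigl([0,t_j)\bigr)\;\le\;\sum_{k=0}^{j-1}\ell_k(\bar t)+\ell_{\noconsumption}.
\]
Summing all loads to $\mu(\timeset)$ and using the hypothesis $\ell_k=\mu(\timeset)/n$ forces $\ell_{\noconsumption}=0$, so the right-hand side equals $\tfrac{j}{n}\mu(\timeset)$, which is what we want. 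This also correctly yields the boundary inequality $\mu\bigl([0,t_0)\bigr)\le 0$, i.e.\ $\mu\bigl([0,t_0)\bigr)=0$.

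For the backward direction, I would run \cref{compute-ell} on $(\mu;t_0,\ldots,t_{n-1})$. The $j=0$ hypothesis gives $\mu\bigl([0,t_0)\bigr)=0$, so $\mu\bigl([t_0,t_n)\bigr)=\mu(\timeset)$ (using that $\mu$ is supported on $\timeset$). Thus, in the first iteration of the outer loop (with $k=0$), the candidate $k'=n$ yields ratio $\mu(\timeset)/n$, whereas for every $k'\in\{1,\ldots,n-1\}$ the hypothesis gives
\[
\frac{\mu\bigl([t_0,t_{k'})\bigr)}{k'}\;=\;\frac{\mu\bigl([0,t_{k'})\bigr)}{k'}\;\le\;\frac{\mu(\timeset)}{n}.
\]
Therefore $n\in\arg\Max_{0<k'\le n}\frac{\mu([t_0,t_{k'}))}{k'}$, and by the $\Max$ in the algorithm's choice of $k'$ this maximal index is selected, so the loop terminates in one pass and sets $\ell_j(\bar t)=\mu(\timeset)/n$ for every $j\in\prods$, as required.

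The only subtlety is that one must use the tie-breaking "$\Max\arg\Max$" in the algorithm to ensure $k'=n$ is chosen even if some earlier $k'$ also attains the ratio $\mu(\timeset)/n$; this is what makes the whole profile flat in a single iteration rather than stopping early with a correct value for only some producers. Everything else is a direct unpacking of the definitions.
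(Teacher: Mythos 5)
Your proposal is correct, and both halves match routes the paper itself endorses. The forward direction is essentially the paper's argument, merely phrased in terms of the complementary quantity: the paper bounds $\mu\bigl([t_j,1]\bigr)\ge\sum_{k=j}^{n-1}\ell_k$, using that producers $j,\ldots,n-1$ are only accessible to types $d\ge t_j$, whereas you bound $\mu\bigl([0,t_j)\bigr)\le\sum_{k<j}\ell_k+\ell_{\noconsumption}$, using that types $d<t_j$ can only draw on producers $0,\ldots,j-1$ and $\noconsumption$; since the loads sum to $\mu(\timeset)$, these are the same inequality. For the backward direction the paper's main text gives a contradiction argument: if some $\ell_k>\mu(\timeset)/n$ with $k$ maximal, then $\sum_{j\le k}\ell_j>\frac{k+1}{n}\mu(\timeset)\ge\mu\bigl([0,t_{k+1})\bigr)$ (by the hypothesis at $j=k+1$, with $k=n-1$ handled by the trivial bound $\mu(\timeset)$), forcing some type $d\ge t_{k+1}$ to consume from a strictly overloaded producer, contradicting the consumer equilibrium; it then explicitly remarks that an alternative proof is available via \cref{compute-ell}. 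Your backward direction is precisely that alternative: running one pass of the algorithm and checking that $k'=n$ is a maximizer (with the $\Max\arg\Max$ tie-break, as you correctly note, guaranteeing it is chosen) so the loop terminates in a single iteration with a flat profile. Your route is more computational and leans on the already-established correctness of the algorithm, while the paper's is more self-contained and local to the consumer-game equilibrium conditions; both are sound.
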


It should be emphasized that \cref{producer-coarse-nash-char} does \emph{not} imply that Nash equilibria are interchangeable (i.e.\ that the set of Nash equilibria is a Cartesian product
of sets of strategies for the various producers). Consider,
for example, $\mu=U(\timeset)$ --- the uniform measure on $\timeset$.
In this case, by \cref{producer-coarse-nash-char}, $(0,\frac{1}{n},\frac{2}{n},\ldots,\frac{n-1}{n})$
is a Nash equilibrium in \coarsegame, and so is any permutation thereof.
Nonetheless, every player playing $\frac{n-1}{n}\in\timeset$ does not constitute a Nash equilibrium. We now move on to examine the stability of
the Nash equilibria in \coarsegame\ against group deviations.

The study of stability against group deviations was initiated by \citeN{Aumann59}, who considers
deviations from which all deviators gain. Recently, the CS literature considers a considerably stronger solution concept, according to which a deviation is considered beneficial even if
only some of the participants in the deviating coalition gain, as long as none of the participants lose (see e.g.\ \cite{Rozenfeld06}). While stability against the classical all-gaining coalitional deviation is termed \emph{strong equilibrium}, this more-demanding concept is referred to as \emph{super-strong equilibrium}; there are very few results showing its existence in nontrivial settings.

\begin{theorem}[All Nash Equilibria are Super-Strong]\label{producer-coarse-super-strong}
Let $\bar{t} \in \timeset^{\prods}$ be a pure-strategy Nash equilibrium in \coarsegame.
There exist no coalition $P\subseteq\prods$ and strategies $\bar{t}'=(t'_j)_{j\in P}\in\timeset^P$ s.t.\ $\ell_j(\bar{t}_{-P},\bar{t}')\ge\ell_j(\bar{t})$ for every $j\in P$, with a strict
inequality for at least one producer $j\in P$.
\end{theorem}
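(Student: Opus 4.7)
The plan is to argue by contradiction. Assume a coalition $P$ and joint deviation $\bar{t}'=(t'_j)_{j\in P}$ satisfying the improvement property, and write $\bar{t}''\eqdef(\bar{t}_{-P},\bar{t}')$. By \cref{producer-coarse-nash-loads}, $\ell_j(\bar{t})=\mu(\timeset)/n$ for every $j$, so the assumption becomes $\ell_j(\bar{t}'')\geq\mu(\timeset)/n$ for every $j\in P$ with strict inequality at some $j^*\in P$. I will derive a contradiction against the NE characterization in \cref{producer-coarse-nash-char}, applied to the original $\bar{t}$.

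The first step is to extract structure from $\bar{t}''$ via \cref{compute-ell}: sorting producers by their $\bar{t}''$-strategies yields blocks $B_1,\ldots,B_m$ with strictly decreasing loads (strictness follows from the $\Max\arg\Max$ tie-breaking in the algorithm). Let $r$ be the largest index for which $\ell^{(r)}\geq\mu(\timeset)/n$; then every coalition member lies in $B_1\cup\cdots\cup B_r$, and $j^*$ lies in a block whose load is strictly $>\mu(\timeset)/n$. Setting $R\eqdef|B_1|+\cdots+|B_r|$, the total load on the first $R$ sorted producers equals $\mu\bigl([t''_{(0)},t''_{(R)})\bigr)$ and strictly exceeds $R\mu(\timeset)/n$. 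A quick sanity check gives $R<n$: otherwise the total-mass bound $\sum_j\ell''_j\leq\mu(\timeset)$ would force all loads equal to $\mu(\timeset)/n$, contradicting $j^*$'s strict gain.

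The heart of the argument is bounding $\mu\bigl([0,t''_{(R)})\bigr)$ from above by $R\mu(\timeset)/n$. Because all $|P|$ coalition members occupy the first $R$ sorted positions of $\bar{t}''$, the producer at sorted position $R$ must be some non-coalition member $k^0$, so $t''_{(R)}=t_{k^0}$. A reformulation of \cref{producer-coarse-nash-char} (obtained by applying the NE bound at the first occurrence of a given value, where the bound is tightest) asserts $\mu([0,t))\leq\nu(t)\mu(\timeset)/n$ for every $t$, where $\nu(t)$ counts $\bar{t}$-strategies strictly below $t$. The main task is to show $\nu(t''_{(R)})\leq R$: the non-coalition contribution equals the number of non-coalition strategies at $\bar{t}''$-positions with $t''<t''_{(R)}$, the coalition-original contribution is at most $|P|$, and any shortfall in the non-coalition count stemming from ties at the $B_r$/$B_{r+1}$ boundary is exactly compensated by coalition ``new'' strategies equal (not strictly less) to $t''_{(R)}$. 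Combining then yields $\mu\bigl([0,t''_{(R)})\bigr)\leq R\mu(\timeset)/n$, contradicting the strict lower bound established in the previous paragraph.

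I expect the main obstacle to be precisely the tie-aware bookkeeping in the counting step: confirming that sorted position $R$ is non-coalition (forced because each $j\in P$ has $\ell''_j\geq\mu(\timeset)/n$ and therefore lies in a high block), and verifying $\nu(t''_{(R)})\leq R$ both when $t''_{(R-1)}<t''_{(R)}$ (where the accounting is immediate) and when $t''_{(R-1)}=t''_{(R)}$ (where tied coalition ``new'' strategies must be matched against tied non-coalition shortfall). Everything else is a direct combination of the total-mass constraint, the NE load characterization, and the block structure produced by \cref{compute-ell}.
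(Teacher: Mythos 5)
Your proof is correct and takes a genuinely different route from the paper's. The paper also starts from \cref{producer-coarse-nash-loads} (all equilibrium loads equal $\mu(\timeset)/n$) and uses \cref{producer-coarse-nash-char}, but it then argues on the consumer side: fixing a consumer Nash equilibrium under $(\bar{t}_{-P},\bar{t}')$, it identifies the nonempty set $K\subseteq\prods\setminus P$ of producers whose load there drops strictly below $\mu(\timeset)/n$, observes via \cref{producer-coarse-nash-char} that the total $K$-load is strictly less than $\mu\bigl([\hat t,1]\bigr)$ for $\hat t\eqdef\min_{k\in K}t_k$, and concludes that some consumer of type $d\geq\hat t$ must be consuming from a producer outside $K$ even though a strictly less-loaded producer in $K$ is accessible to $d$---contradicting the consumer Nash equilibrium. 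Your argument is the producer-side dual of this: you never exhibit a discontented consumer, but package the same load information through the block structure of \cref{compute-ell}, locate the boundary index $R$ (which equals $n-|K|$), and derive the contradiction by mass accounting alone, pitting the block identity $\mu\bigl([t''_{(0)},t''_{(R)})\bigr)>R\mu(\timeset)/n$ against the Nash-characterization bound $\mu\bigl([0,t''_{(R)})\bigr)\leq R\mu(\timeset)/n$. The numerical core is the same in both; yours dispenses with the consumer game once $\ell$ and its block structure are in hand, while the paper's is closer to the raw equilibrium intuition of ``someone would switch.''

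One simplification worth noting: the tie case you earmark as the main obstacle cannot occur. Consecutive blocks produced by \cref{compute-ell} have strictly decreasing loads (two consecutive blocks with equal load would have been merged by the $\Max\arg\Max$ choice, and a larger second load would contradict the first block's maximality), and two producers with equal $(\bar{t}_{-P},\bar{t}')$-strategy necessarily carry equal load in any consumer Nash equilibrium (a consumer type contributing to the heavier one could deviate to the lighter, which is equally accessible). Hence $t''_{(R-1)}<t''_{(R)}$ always holds, position $R$ is automatically a non-coalition member $k^0$, and $\nu(t''_{(R)})\leq R$ drops out with no tie bookkeeping: positions $0,\ldots,R-1$ of the sorted deviated profile contain all $|P|$ coalition members plus exactly $R-|P|$ non-coalition members $j$ with $t_j<t_{k^0}$, and coalition members contribute at most $|P|$ further indices $j$ with $t_j<t_{k^0}$.
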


We conclude the static analysis of \coarsegame\ by deducing generalizations of \cref{producer-coarse-dominant,producer-coarse-nash-loads,producer-coarse-nash-char,producer-coarse-super-strong}
for mixed-strategy profiles, as well as showing that no mixed-strategy Nash equilibrium exhibits any ex-post regret.

\begin{theorem}[Mixed Strategies]\label{producer-coarse-mixed}
In \coarsegame,
\begin{parts}
\item\label{producer-coarse-mixed-dominant}\thmitemtitle{Dominant Strategies}
Let $p$ be a mixed strategy.\footnote{We consider a \emph{mixed strategy} to be a random variable taking values in $\timeset$.} $p$ is a dominant strategy iff $\mu\bigl([0,\Max\supp(p))\bigr)=0$.
Furthermore, each such dominant strategy guarantees a load of at least $\frac{\mu(\timeset)}{n}$ with probability $1$ on each producer playing it.
\item\label{producer-coarse-mixed-nash-loads}\thmitemtitle{Nash Equilibrium Loads}
A mixed-strategy profile $\bar{p}=(p_0,\ldots,p_{n-1})$\footnote{We emphasize that mixed-strategies of distinct producers are independent random variables.} constitutes a Nash equilibrium iff $\ell_j(\bar{p})=\frac{\mu(\timeset)}{n}$ for every $j\in\prods$ with probability $1$.
\item\label{producer-coarse-mixed-nash-char}\thmitemtitle{Nash Equilibrium Characterization}
A mixed-strategy profile $\bar{p}$ constitutes a Nash equilibrium iff there exists a permutation on the producers $\pi\in\prods!$ s.t.\ $\mu\bigl([0,\Max\supp(p_{\pi(j)}))\bigr)\le\frac{j}{n}\cdot\mu(\timeset)$ for every $j\in\prods$.
\item\label{producer-coarse-mixed-super-strong}\thmitemtitle{All Nash Equilibria are Super-Strong}
Let $\bar{p}$ be a mixed-strategy Nash equilibrium.
There exist no coalition $P\subseteq\prods$ and mixed strategies $\bar{p}'=(p'_j)_{j\in P}$ s.t.\ $\expect{\ell_j(\bar{p}_{-P},\bar{p}')}\ge\expect{\ell_j(\bar{p})}$ for every $j\in P$, with a strict inequality for at least one producer $j\in P$.
\end{parts}
\end{theorem}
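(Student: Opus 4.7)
The plan is to lift each part of \cref{producer-coarse-mixed} from its pure-strategy counterpart in \cref{producer-coarse-dominant,producer-coarse-nash-loads,producer-coarse-nash-char,producer-coarse-super-strong} via standard mixed-strategy extension techniques. Two pathwise bounds will recur throughout: \emph{(a)}~pure strategy $0$ is dominant, so $\ell_j(\bar{t}_{-j},0)\ge\mu(\timeset)/n$ regardless of opponents; and \emph{(b)}~$\sum_j\ell_j(\bar{t})\le\mu(\timeset)$, since $\ell_{\noconsumption}\ge 0$. The set of pure dominant strategies $\{t:\mu([0,t))=0\}$ is a closed interval $[0,t^*]$, with $t^*\eqdef\sup\{t:\mu([0,t))=0\}$, because $t\mapsto\mu([0,t))$ is nondecreasing and left-continuous.

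For Part~(\ref{producer-coarse-mixed-dominant}), I note that $\mu([0,\Max\supp(p)))=0\iff\supp(p)\subseteq[0,t^*]\iff p$ realizes a pure dominant strategy with probability $1$; the load guarantee then follows immediately from the pathwise pure-strategy guarantee in \cref{producer-coarse-dominant}. For the converse, when $\mu([0,\Max\supp(p)))>0$, I would exhibit an opponent profile (e.g.\ all other producers playing pure $\Max\supp(p)$) against which pure $0$ strictly outperforms $p$ in expectation, witnessing that $p$ is not a safe alternative to $0$.

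Parts~(\ref{producer-coarse-mixed-nash-loads}) and~(\ref{producer-coarse-mixed-nash-char}) I would prove together. A sandwich argument combining~\emph{(a)},~\emph{(b)}, and the Nash condition pins down $\mu(\timeset)/n=\expect{\ell_j(\bar{p}_{-j},0)}=\expect{\ell_j(\bar{p})}$ for every $j$; pathwise, this upgrades to $\ell_j(\bar{p}_{-j}(\omega),0)=\mu(\timeset)/n$ almost surely, which by \cref{producer-coarse-nash-loads} says that $(\bar{p}_{-j}(\omega),0)$ is a pure Nash equilibrium almost surely, for every $j$. To boost this ``ghost'' a.s.\ pure-Nash property into the statement that $\bar{p}(\omega)$ itself is a pure Nash equilibrium almost surely (which closes~(\ref{producer-coarse-mixed-nash-loads})) and simultaneously to characterize the admissible $\Max\supp$ profile (which closes~(\ref{producer-coarse-mixed-nash-char})), I would invoke the elementary order-statistics inequality that if $a_k\le b_k$ for every $k$ then the $l$-th smallest of $(a_k)$ is at most the $l$-th smallest of $(b_k)$. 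Applied pathwise with $b_k=\Max\supp(p_{\pi(k)})$, for $\pi$ sorting the $\Max\supp$'s nondecreasingly, this immediately yields the backward direction of~(\ref{producer-coarse-mixed-nash-char}) via \cref{producer-coarse-nash-char}; the forward direction uses an approximation based on the definition of support --- take realizations, of positive probability by independence, in which $p_{\pi(k)}(\omega)$ is arbitrarily close to $\Max\supp(p_{\pi(k)})$ for $k\le j$ and strictly exceeds $\Max\supp(p_{\pi(j)})$ for $k>j$, so that the $(j+1)$-th order statistic of the realization approaches $\Max\supp(p_{\pi(j)})$ --- and then passes to the limit via left-continuity of $\mu([0,\cdot))$. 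The converse direction of~(\ref{producer-coarse-mixed-nash-loads}) is routine: in a pure Nash equilibrium no pure unilateral deviation strictly gains, so by averaging no mixed deviation can either.

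Part~(\ref{producer-coarse-mixed-super-strong}) then follows along the lines of \cref{producer-coarse-super-strong}: any coalition $P$ weakly improving every $j\in P$ (strictly for some) would force $\sum_{j\in P}\expect{\ell_j(\bar{p}_{-P},\bar{p}')}>|P|\,\mu(\timeset)/n$, which is to be contradicted by combining~\emph{(b)} with expectation-lower bounds on the non-deviators $j\notin P$ (again via the pure-$0$ deviation). I expect the main obstacle to be the interlocked proof of~(\ref{producer-coarse-mixed-nash-loads}) and~(\ref{producer-coarse-mixed-nash-char}): neither the a.s.\ equality of loads nor the $\Max\supp$ characterization is easy in isolation, and the forward direction of~(\ref{producer-coarse-mixed-nash-char}) in particular hinges on a careful approximation using positive-probability events constructed from the definition of support together with left-continuity of $\mu([0,\cdot))$.
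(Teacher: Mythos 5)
Your overall plan of lifting the pure-strategy results, and your treatment of Part~(\labelcref{producer-coarse-mixed-dominant}), matches the paper. For Parts~(\labelcref{producer-coarse-mixed-nash-loads}) and~(\labelcref{producer-coarse-mixed-nash-char}), however, you have taken a detour that, as written, does not close. Having established $\expect{\ell_j(\bar{p}_{-j},0)}=\expect{\ell_j(\bar{p})}=\mu(\timeset)/n$ and upgraded the former to the almost-sure statement $\ell_j(\bar{p}_{-j},0)=\mu(\timeset)/n$, the one observation you are missing is the \emph{pathwise} inequality $\ell_j(\bar{p}_{-j},0)\ge\ell_j(\bar{p})$ (\crefpart{producer-coarse-domination}{safe}); together with equality of expectations this forces $\ell_j(\bar{p}_{-j},0)=\ell_j(\bar{p})$ a.s., and hence $\ell_j(\bar{p})=\mu(\timeset)/n$ a.s.\ directly, closing Part~(\labelcref{producer-coarse-mixed-nash-loads}) and \cref{producer-coarse-mixed-nash-no-regret} with no order-statistics machinery at all. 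The ``ghost'' property you derive --- that $(\bar{p}_{-j}(\omega),0)$ is a.s.\ a pure Nash (which, incidentally, needs \cref{producer-coarse-nash-least-most-loads} rather than \cref{producer-coarse-nash-loads}, since you only pin down the load on the minimal-strategy producer) --- is genuinely not enough on its own: with $n=2$ and $\mu=U(\timeset)$, the profile $(\tfrac14,\tfrac14)$ has both ghosts $(0,\tfrac14)$ a pure Nash yet is itself not a Nash, so the boosting step cannot be purely combinatorial on sorted strategies. Once Part~(\labelcref{producer-coarse-mixed-nash-loads}) is in hand, your construction for the forward direction of Part~(\labelcref{producer-coarse-mixed-nash-char}) is viable in spirit, but the event ``$p_{\pi(k)}(\omega)$ strictly exceeds $\Max\supp(p_{\pi(j)})$ for $k>j$'' is null whenever $\Max\supp(p_{\pi(k)})=\Max\supp(p_{\pi(j)})$; the fix (arguing toward a contradiction) is to choose any $t<\Max\supp(p_{\pi(j)})$ with $\mu\bigl([0,t)\bigr)>\frac{j}{n}\mu(\timeset)$ and condition on the positive-probability (by independence) event $\{p_{\pi(k)}>t\ \forall k\ge j\}$, on which at most $j$ coordinates are $\le t$ and so the $j$\tth\ order statistic exceeds $t$, contradicting that the realization is a.s.\ a pure Nash satisfying \cref{producer-coarse-nash-char}. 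The paper's route is different and cleaner: it builds $\pi$ iteratively, using at each stage $k$ the a.s.\ identity $\sum_{j\in U}\ell_j(\bar{p})=\frac{n-k}{n}\mu(\timeset)$ over the unassigned set $U$ together with independence to rule out $\prob{\mu([0,p_j))>\frac{k}{n}\mu(\timeset)}>0$ holding for every $j\in U$ simultaneously.

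For Part~(\labelcref{producer-coarse-mixed-super-strong}), the argument you sketch does not work: the non-deviators $j\notin P$ get no expected-load lower bound of $\mu(\timeset)/n$, since they do \emph{not} play $0$ and their loads under the deviated profile $(\bar{p}_{-P},\bar{p}')$ can fall strictly below $\mu(\timeset)/n$ (indeed in the paper's own proof of \cref{producer-coarse-super-strong} it is exactly the non-deviators whose loads drop). The paper instead derives Part~(\labelcref{producer-coarse-mixed-super-strong}) immediately from \cref{producer-coarse-mixed-nash-no-regret} and the pure-strategy \cref{producer-coarse-super-strong}: since a realization of $\bar{p}$ is a.s.\ a pure super-strong equilibrium, no coalition can do better.
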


\begin{theorem}[No Ex-Post Regret in Mixed-Strategy Nash Equilibria]\label{producer-coarse-mixed-nash-no-regret}
In any mixed-strategy Nash equilibrium in \coarsegame, with probability $1$ there exists no ex-post regret for any producer. In other words, a realization of a mixed-strategy Nash equilibrium
is with probability $1$ a pure-strategy Nash equilibrium.
\end{theorem}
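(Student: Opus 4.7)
The plan is to combine the two characterization theorems already established for the coarse game: \cref{producer-coarse-nash-loads}, which says that a \emph{pure} profile $\bar{t}$ is a Nash equilibrium iff it produces loads $\ell_j(\bar{t})=\mu(\timeset)/n$ for every $j\in\prods$; and \crefpart{producer-coarse-mixed}{producer-coarse-mixed-nash-loads}, which says that a mixed-strategy profile $\bar{p}$ is a Nash equilibrium iff the (random) loads $\ell_j(\bar{p})$ are almost surely equal to $\mu(\timeset)/n$ for every $j\in\prods$. Once these are placed side by side, the theorem should follow by evaluating the mixed characterization pointwise on the realization.

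First I would fix a mixed-strategy Nash equilibrium $\bar{p}=(p_0,\ldots,p_{n-1})$ in \coarsegame\ and invoke \crefpart{producer-coarse-mixed}{producer-coarse-mixed-nash-loads} to conclude that the event $E\eqdef\bigl\{\omega : \ell_j(\bar{p}(\omega))=\mu(\timeset)/n \text{ for all } j\in\prods\bigr\}$ has probability~$1$. Here I should be a little careful to note that $\ell_j$ is a measurable function of $\bar{t}$ (this is an easy byproduct of \cref{compute-ell}, or of the Lipschitz property recorded after \cref{compute-ell}), so $E$ is indeed a measurable event and the above statement is meaningful. Then, for every realization $\bar{t}\in E$, I would apply \cref{producer-coarse-nash-loads} to $\bar{t}$: the equal-load condition holds by construction, and \cref{producer-coarse-nash-loads} gives, in turn, that $\bar{t}$ is a pure-strategy Nash equilibrium in \coarsegame. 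Since $E$ has probability~$1$, this establishes the second sentence of the theorem.

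For the ``no ex-post regret'' phrasing of the first sentence, I would simply point out that this is the definitional content of being a pure-strategy Nash equilibrium: with $\bar{t}\in E$ fixed, no producer $j$ has any unilateral deviation to a pure strategy $t'\in\timeset$ that strictly increases $\ell_j$, and by the strictly increasing dependence of utility on $\ell_j$ in \coarsegame, no such deviation strictly increases $j$'s utility either. Hence, ex-post, and on an event of probability~$1$, each producer is content with the pure strategy realized in its own mixture given the realizations of the others.

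The only mild obstacle I foresee is the bookkeeping around the ``with probability $1$'' statement — specifically, verifying that $\ell_j$ is a measurable function of the pure-strategy profile so that the event $E$ is well defined, and being explicit that the independence of the producers' mixed strategies (as noted in the footnote accompanying \crefpart{producer-coarse-mixed}{producer-coarse-mixed-nash-loads}) is what lets us treat a realization of $\bar{p}$ as an ordinary element of $\timeset^{\prods}$ to which \cref{producer-coarse-nash-loads} applies directly. Beyond that, the proof is essentially a one-line reduction to the two load characterizations and requires no further machinery.
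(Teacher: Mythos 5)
Your proposal is correct and follows essentially the same route as the paper: the paper also derives this result by first establishing (in the $\Rightarrow$ direction of \crefpart{producer-coarse-mixed}{producer-coarse-mixed-nash-loads}) that $\ell_j(\bar{p})=\mu(\timeset)/n$ holds with probability~$1$ for every $j\in\prods$, and then applying \cref{producer-coarse-nash-loads} pointwise on that probability-one event to conclude that the realization is a pure-strategy Nash equilibrium (the paper simply packages the proofs of \cref{producer-coarse-mixed} and \cref{producer-coarse-mixed-nash-no-regret} into a single block). One small caution worth keeping in mind, though it does not affect correctness: the paper actually uses the resulting ``a.s.\ pure Nash equilibrium'' conclusion to establish the $\Leftarrow$ direction of \crefpart{producer-coarse-mixed}{producer-coarse-mixed-nash-loads}, so when you cite that part as ``already established'' you should be relying only on its $\Rightarrow$ direction (as you implicitly do), to avoid the appearance of circularity.
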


\subsubsection{Dynamics}\label{producers-coarse-dynamics}
When analysing dynamics henceforth, we assume that $\mu(\timeset)>0$. (Otherwise, by \cref{producer-coarse-dominant}, all strategies are equivalent and so the analysis is trivial.)

\begin{definition}[Schedule; Sequential/Simultaneous Schedule; Round]\leavevmode
\begin{enumerate}
\item
A \emph{schedule} is a sequence $(P_i)_{i=0}^{\infty}$ of nonempty subsets of $\prods$, s.t.\ $j\in P_i$ for infinitely many values of $i\in\mathbb{N}$, for every $j\in\prods$.
\item
A schedule $(P_i)_{i=0}^{\infty}$ is said to be \emph{sequential} if $|P_i|=1$ for every $i\in\mathbb{N}$.
\item
A schedule $(P_i)_{i=0}^{\infty}$ is said to be \emph{simultaneous} if $P_i=\prods$ for every $i\in\mathbb{N}$.
\item
Let $i_1\le i_2\in\mathbb{N}$. We say that $\{i\in\mathbb{N} \mid i_1 \le i \le i_2\}$ constitutes a \emph{round} (in the schedule $(P_i)_{i=0}^{\infty}$) if
$\cup_{i=i_1}^{i_2}P_i=\prods$. (We emphasize that this union need not be a disjoint union.)
\item
Let $i_1\le i_2\in\mathbb{N}$ and let $r\in\mathbb{N}$. We say that \emph{$i_2$ is reached from $i_1$ in $r$ rounds} if $r-1$ is the largest number of pairwise-disjoint rounds into which $\{i_1,i_1+1,\ldots,i_2-2\}$ can be partitioned. (Therefore, $\{i_1,i_1+1,\ldots,i_2-1\}$ cannot be partitioned into $r$ pairwise-disjoint rounds with a nonzero amount of ``spare'' trailing steps.)
\end{enumerate}
\end{definition}

\begin{remark}
In a simultaneous schedule, each \emph{step} $\{i\}$ constitutes a round.
\end{remark}

\begin{definition}[Weakly-/$\delta$-Better-/Best-Response Dynamics; Lazy Dynamics]\label{dynamics}\leavevmode
\begin{enumerate}
\item
A \emph{weakly-better-response dynamic} in \coarsegame\ is a sequence $(\bar{t}_i,P_i)_{i=0}^{\infty}$, where $(P_i)_{i=0}^{\infty}$ is a schedule
and $(\bar{t}^i)_{i=0}^{\infty}$ is a sequence of strategy profiles  s.t.\ both of the following hold
for every $i\in\mathbb{N}$.
\begin{itemize}
\item
For every $j \in P_i$,\ \ $t^{i+1}_j$ is a weakly better response than $t^i_j$ to $\bar{t}^i_{-j}$ (by $j$), i.e.\ $\ell_j(\bar{t}^i_{-j},t^{i+1}_j)\ge\ell_j(\bar{t}^i)$.
\item
For every $j \notin P_i$,\ \ $t^{i+1}_j=t^i_j$.
\end{itemize}
By slight abuse of notation, we sometimes write $(\bar{t}_i)_{i=0}^{\infty}$ to refer to $(\bar{t}_i,P_i)_{i=0}^{\infty}$, when the schedule is either inconsequential or clear from context.
\item
A weakly-better-response dynamic is said to be a \emph{best-response dynamic} if for every $i\in\mathbb{N}$ and $j \in P_i$,\ \ $t^{i+1}_j$ is a best response to $\bar{t}^i_{-j}$,
i.e.\ $t^{i+1}_j\in\arg\Max_{t\in\timeset}\ell_j(\bar{t}^i_{-j},t)$.
\item
Let $\delta>0$. A weakly-better-response dynamic is said to be a \emph{$\delta$-better-response dynamic} if for every $i\in\mathbb{N}$ and $j \in P_i$,\ \ $t^{i+1}_j$ is either a best response to $\bar{t}^i_{-j}$, or a better response increasing $j$'s load by at least $\delta$ compared to $t^i_j$, i.e.\
$\ell_j(\bar{t}^i_{-j},t^{i+1}_j)\ge\ell_j(\bar{t}^i)+\delta$.\footnote{Due to the continuous nature of strategies and loads, we require an improvement by at least
$\delta$, and not just any positive improvement, in order to avoid improvements \emph{\`{a} la} Zeno's ``Race Course'' paradox.}
\item
A weakly-better-response dynamic is said to be \emph{lazy} if for every $i\in\mathbb{N}$ and $j \in P_i$,\ \ $t^{i+1}_j=t^i_j$ whenever $t^i_j$ is a best response to $\bar{t}^i_{-j}$.
\end{enumerate}
\end{definition}

\begin{remark}\label{coarse-better-vs-best}
In \coarsegame,
\begin{itemize}
\item
Every best-response dynamic is a $\delta$-better-response dynamic, for every $\delta>0$.
\item
Every $\delta$-better-response dynamic is also a $\delta'$-better-response one, for every $0<\delta'<\delta$.
\item
A weakly-better-response dynamic is a best-response dynamic iff it is a $\delta$-better-response dynamic for $\delta=\mu(\timeset)$.
\end{itemize}
\end{remark}

\begin{remark}[A Best Response Always Exists]\label{coarse-best-response}
Let $j\in\prods$ and let $\bar{t}_{-j}\in\timeset^{\prods\setminus\{j\}}$. By \cref{producer-coarse-dominant}, a best response (by $j$) to $\bar{t}_{-j}$ exists in \coarsegame.
\end{remark}

We commence with a negative result, showing that even best-response dynamics can go out of equilibrium.

\begin{example}[Nonsequential Nonlazy Best-Response Dynamics may Go Out of Equilibrium]\label{coarse-bad-response}
Let $\mu=U(\timeset)$. By \cref{producer-coarse-nash-char}, the (cyclically repeating) strategy-profile sequence
$(0,0,\ldots,0)$, $(\frac{n-1}{n},\frac{n-1}{n},\ldots,\frac{n-1}{n})$, $(0,0,\ldots,0)$, $(\frac{n-1}{n},\frac{n-1}{n},\ldots,\frac{n-1}{n})$, \ldots\
constitutes a (nonlazy) simultaneous best-response dynamic in \coarsegame\ that visits nonequilibria infinitely often.
\end{example}

We continue by showing that the dynamic in \cref{coarse-bad-response} visiting Nash equilibria infinitely often is no coincidence.

\begin{theorem}[$\delta$-Better-Response Dynamics Visit Nash Equilibria Infinitely Often]\label{coarse-response-equilibrium}
Let $\delta>0$ and let $(\bar{t}^i)_{i=0}^{\infty}$ be a $\delta$-better-response dynamic in \coarsegame.
$\bar{t}^i$ is a Nash equilibrium for infinitely many values of $i$. Moreover, the first Nash equilibrium is reached (from $0$) in at most $n\cdot\bigl\lceil\frac{\mu(\timeset)}{\delta n}\bigr\rceil$ rounds,
and from any later nonequilibrium, the next Nash equilibrium is reached in at most $(n-1)\cdot\bigl\lceil\frac{\mu(\timeset)}{\delta n}\bigr\rceil$ rounds.
\end{theorem}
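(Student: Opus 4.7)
The plan is to combine three ingredients from earlier in the paper: (i) by \cref{producer-coarse-dominant}, any best response by producer $j$ to the partial profile $\bar{t}^i_{-j}$ guarantees $\ell_j(\bar{t}^i_{-j},t^{i+1}_j)\ge\mu(\timeset)/n$, since the dominant strategy $t=0$ is always available; (ii) by the definition of a proper $\delta$-improvement, $\ell_j(\bar{t}^i_{-j},t^{i+1}_j)\ge\ell_j(\bar{t}^i)+\delta$; and (iii) by \cref{producer-coarse-nash-loads} together with the mass-conservation inequality $\sum_j\ell_j(\bar{t})\le\mu(\timeset)$, a profile is a Nash equilibrium iff every $\ell_j(\bar{t})=\mu(\timeset)/n$, and in particular every non-equilibrium profile contains at least one \emph{deficit} producer with load strictly below $\mu(\timeset)/n$.

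The monovariant I would work with is the total deficit $D(\bar{t})\eqdef\sum_{j\in\prods}\max\bigl(\mu(\timeset)/n-\ell_j(\bar{t}),0\bigr)$; by (iii) it satisfies $0\le D(\bar{t})\le\mu(\timeset)$ and vanishes precisely at Nash equilibria. First I would argue that in any round $[i_1,i_2]$ during which Nash is not reached, $D$ drops by at least $\delta$: pick any deficit producer $j$ at the start of the round, and consider its last action $\tau_j^*\in[i_1,i_2]$; by (i) or (ii), $j$ closes at least $\min(\delta,\,\mu(\timeset)/n-\ell_j(\bar{t}^{i_1}))$ of its own deficit in the counterfactual one-player-deviation view at step $\tau_j^*$. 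A charging argument using the monotonicity and Lipschitzness properties of $\ell_j$ from \cref{ell-analysis} then shows that deficits cannot be simultaneously created at other producers faster than the mover's gain. Since $D\le\mu(\timeset)$, at most $\lceil\mu(\timeset)/\delta\rceil$ such productive rounds can occur before $D=0$; distributing this across the $n$ producers (each contributing at most $\mu(\timeset)/n$ to the initial deficit) yields the claimed $n\cdot\lceil\mu(\timeset)/(\delta n)\rceil$ bound for reaching the first Nash.

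For the infinitely-often conclusion and the tighter $(n-1)\cdot\lceil\mu(\timeset)/(\delta n)\rceil$ bound starting from any later non-equilibrium, I would restart the same argument from the step immediately after a Nash visit: since the just-past state had every $\ell_j=\mu(\timeset)/n$, the new non-equilibrium state can accumulate deficit from at most $n-1$ of the $n$ producers' share of $\mu(\timeset)/n$ each --- the $n$-th producer's mass is conserved via $\sum_j\ell_j\le\mu(\timeset)$ and the best-response floor --- so the maximum deficit to close is at most $\mu(\timeset)(n-1)/n$, trimming one block of $\lceil\mu(\timeset)/(\delta n)\rceil$ rounds off the bound. Infinite recurrence then follows by trivially iterating this finite rounds-to-Nash bound.

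The hard part will be rigorously establishing the single-round $\delta$-decrease of $D$ for arbitrary (non-sequential) schedules: when several producers act simultaneously within $P_{\tau_j^*}$, each best-responds or $\delta$-improves relative to $\bar{t}^{\tau_j^*}_{-j}$ rather than to the actual next state $\bar{t}^{\tau_j^*+1}_{-j}$, so one mover's mass-grab can a priori cause another producer's load --- and hence its contribution to $D$ --- to shift adversely. Closing this gap will require the full strength of the structural lemmas on $\ell_j$ catalogued in \cref{ell-analysis} (nonincreasingness in $t_j$, weak quasi-convexity in $t_k$ for $k\ne j$, and Lipschitzness w.r.t.\ $\mu$ in each coordinate) in order to bound adverse cross-effects by the gains of the acting producers and to conclude that the net round-wise change in $D$ is indeed a decrease of at least $\delta$.
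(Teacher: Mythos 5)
Your proposal identifies a natural-looking monovariant, but the central claim on which it rests --- that the total deficit $D(\bar{t})=\sum_{j}\max\bigl(\tfrac{\mu(\timeset)}{n}-\ell_j(\bar{t}),0\bigr)$ drops by at least $\delta$ in every round that does not reach a Nash equilibrium --- is false, and the charging argument you sketch as ``the hard part'' cannot close the gap.

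Take $\mu=U(\timeset)$, $n=3$, and the simultaneous best-response dynamic starting at $\bar{t}^0=(1,0,0)$ (the $n=3$ instance of \cref{coarse-best-response-fast-tight}). The loads are $(0,\tfrac{1}{2},\tfrac{1}{2})$, so $D(\bar{t}^0)=\tfrac{1}{3}$. Counterfactually, producer~$0$'s best response to $(0,0)$ (namely $\tfrac{2}{3}$) yields a load of $\tfrac{1}{3}$, so in the one-deviation view its entire deficit closes and your ingredients (i)--(iii) would predict $D\to0$ after this single round. But the actual next profile is $\bar{t}^1=(\tfrac{2}{3},\tfrac{1}{2},0)$ with loads $(\tfrac{1}{4},\tfrac{1}{4},\tfrac{1}{2})$, so $D(\bar{t}^1)=2\cdot(\tfrac{1}{3}-\tfrac{1}{4})=\tfrac{1}{6}$: the realized decrease is only $\tfrac{1}{6}$. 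Since a best-response dynamic is a $\delta$-better-response dynamic with $\delta=\mu(\timeset)=1$ (\cref{coarse-better-vs-best}), the postulated per-round drop of $\delta$ misses by a factor of $6$; it even misses the weaker target $\tfrac{\delta}{n}=\tfrac{1}{3}$. The obstruction is structural: conservation of load means the mass one mover grabs is exactly the mass the others collectively lose, so the Lipschitz bound of \cref{ell-lipschitz} --- which controls the cross-effect of \emph{one} unilateral move --- gives you no net slack when several producers move simultaneously, and simultaneous best responders can systematically collide. Worse still, $D$ is not even weakly monotone along the dynamic: in \cref{coarse-bad-response}, $D$ jumps from $0$ to $\tfrac{n-1}{n}$ in a single step.

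The paper's proof instead uses a monovariant living in \emph{strategy} space rather than load space. It partitions $\prods$ into buckets $Q_q(\bar{t})$ according to the $n$-tile of $\mu(\timeset)$ in which each $\mu\bigl([0,t_j)\bigr)$ falls, and tracks $M(\bar{t})$, the index of the highest nonempty bucket. \cref{coarse-road-to-nash} shows that at every non-Nash step $M$ cannot increase, the top bucket never gains members, and any acting producer that remains in the top bucket after its move has slid down by at least $\delta$ in $\mu$-measure. Since the top bucket has $\mu$-width at most $\tfrac{\mu(\timeset)}{n}$, it empties --- and $M$ strictly drops --- within $\bigl\lceil\tfrac{\mu(\timeset)}{\delta n}\bigr\rceil$ rounds; descending from $M\le n$ (or from $M\le n-1$ immediately after a Nash, by \cref{producer-coarse-nash-char}) gives exactly the stated bounds. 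The structural lemmas of \cref{ell-analysis} you anticipated are indeed the tools used, but they are marshalled to control where strategies can move (via \cref{producer-coarse-domination} and \cref{coarse-jump-out-of-q}), not to estimate how loads shift: strategy positions behave monotonically under simultaneous weakly-better responses, whereas loads --- as the example above shows --- do not.
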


\begin{remark}\label{coarse-response-equilibrium-slightly-faster}In \cref{coarse-response-equilibrium},
\begin{itemize}
\item
if $(\bar{t}^i)_{i=0}^{\infty}$ is simultaneous, then ``rounds'' may be replaced with ``steps''.
\item
Finer analysis of similar nature may be used to show both that $n\cdot\bigl\lceil\frac{\mu(\timeset)}{\delta n}\bigr\rceil$ may be replaced with $\sum_{h=1}^{n}\bigl\lceil\frac{\mu(\timeset)}{h\delta n}\bigr\rceil\approx\Max\bigl\{\ln n\cdot\bigl\lceil\frac{\mu(\timeset)}{\delta n}\bigr\rceil,n\bigr\}$, and that $(n-1)\cdot\bigl\lceil\frac{\mu(\timeset)}{\delta n}\bigr\rceil$ may be replaced with
$\sum_{h=2}^{n}\bigl\lceil\frac{\mu(\timeset)}{h\delta n}\bigr\rceil\approx\Max\bigl\{(\ln n-1)\cdot\bigl\lceil\frac{\mu(\timeset)}{\delta n}\bigr\rceil,n-1\bigr\}$. We conjecture that considerably tighter bounds (esp.\ for small $\delta$) can be attained as well.
\end{itemize}
\end{remark}

We now show that in a sense, \cref{coarse-bad-response} describes all the ``issues'' that might prevent best- and even $\delta$-better-response dynamics from remaining in Nash equilibria.

\begin{remark}[Lazy Better-Response Dynamics Remain in Nash Equilibrium]\label{coarse-lazy}
Once a lazy weakly-better-response dynamic reaches a Nash equilibrium, it remains constant. (Directly by definition of Nash equilibrium and laziness.)
\end{remark}

\begin{theorem}[Sequential Better-Response Dynamics Remain in Nash Equilibria]\label{coarse-sequential}
Let $(P_i)_{i=0}^{\infty}$ be a schedule. If $(P_i)_{i=0}^{\infty}$ is sequential from some point, then once a $(P_i)_{i=0}^{\infty}$-scheduled weakly-better-response dynamic in \coarsegame\ reaches a Nash equilibrium after that point, it never visits a nonequilibrium afterward.
\end{theorem}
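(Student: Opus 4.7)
The plan is to deduce the theorem from the following single-step lemma: \emph{if $\bar{t}$ is a Nash equilibrium in \coarsegame\ and $t'_j \in \timeset$ is a weakly-better response by some producer $j$ against $\bar{t}_{-j}$, then $\bar{t}' \eqdef (\bar{t}_{-j}, t'_j)$ is again a Nash equilibrium.} The theorem then follows by induction on $i \ge i^*$, where $i^*$ is the first index past the sequentiality threshold at which the dynamic is at Nash: sequentiality means each subsequent step changes only one coordinate, so the lemma applied to each step keeps the dynamic at Nash forever.

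For the lemma, first upgrade the weak improvement to an equality. By \cref{producer-coarse-nash-loads}, $\ell_j(\bar{t}) = \mu(\timeset)/n$; since $\bar{t}$ is Nash, no strategy against $\bar{t}_{-j}$ gives $j$ strictly higher load. Combined with the weak-improvement inequality $\ell_j(\bar{t}') \ge \ell_j(\bar{t})$, this forces $\ell_j(\bar{t}') = \mu(\timeset)/n$. By the converse direction of \cref{producer-coarse-nash-loads}, it now suffices to prove $\ell_k(\bar{t}') = \mu(\timeset)/n$ for every $k \in \prods$.

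For this I turn to the block construction of \cref{compute-ell} applied to $\bar{t}'$. Sort $\bar{t}'$ as $v_0 \le \cdots \le v_{n-1}$ and let $m_j$ denote $j$'s new position; the algorithm partitions $\prods$ into consecutive blocks whose loads are strictly decreasing left-to-right, and $j$'s block $[a, b)$ has load $\mu(\timeset)/n$. First I rule out any block strictly to the right of $j$'s: the rightmost block $[c, n)$ would then have load strictly below $\mu(\timeset)/n$, giving $\mu([0, v_c)) > c\mu(\timeset)/n$; but for $c > m_j$, $v_c$ coincides with either $t_{\pi(c)}$ or $t_{\pi(c-1)}$ depending on whether $t'_j \ge t_j$ or not (where $\pi$ sorts the original $\bar{t}$), and either way this contradicts $\bar{t}$'s Nash characterization \cref{producer-coarse-nash-char}. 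Hence $b = n$, and $j$'s block-load identity collapses to the mass balance $\mu([0, v_a)) = a\mu(\timeset)/n$. If any block preceded $j$'s (i.e.\ $a > 0$), each such block would have load strictly above $\mu(\timeset)/n$, so $\mu([v_0, v_a)) > a\mu(\timeset)/n$ and hence $\mu([0, v_a)) > a\mu(\timeset)/n$, contradicting the identity just derived. Therefore $a = 0$, $j$'s block is all of $[0, n)$ with load $\mu(\timeset)/n$, and $\mu([0, v_0)) = 0$; every producer thus has load $\mu(\timeset)/n$, and $\bar{t}'$ is Nash. The main bookkeeping obstacle is the identification of $v_c$ in terms of the original sorted $\bar{t}$, tracking how removing $t_j$ and inserting $t'_j$ shifts sort positions; the two sub-cases $t'_j \ge t_j$ and $t'_j \le t_j$ are symmetric, so only one need be spelled out.
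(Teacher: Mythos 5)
Your proof is correct, and it takes a genuinely different route from the paper's own. You reduce to a one-step closure lemma (Nash plus a weakly-better single-coordinate move remains Nash, then induct along the sequential tail of the schedule) and establish the lemma purely via the block decomposition produced by \cref{compute-ell}: since $\bar{t}$ being Nash caps $\ell_j$ at $\mu(\timeset)/n$ while the dynamic demands at least $\mu(\timeset)/n$, the block containing $j$'s new sort position has exactly that load, and the strictly-decreasing block loads together with the Nash characterization \cref{producer-coarse-nash-char} applied to the unchanged coordinates collapse the whole sorted profile into a single block. The paper instead puts $\bar{t}$ into $k$-canonical form via \cref{permute-to-canonical-form} and \cref{canonical-form-deviation}, reduces to showing that any $t'_k>t_k$ with $\mu([0,t'_k))>\frac{k}{n}\mu(\timeset)$ is a strictly worse response, and proves this by explicitly assembling a mixed-consumption Nash equilibrium for $(\mu;\bar{t}_{-k},t'_k)$ --- splitting $\timeset$ at $t'_k$ and invoking \cref{producer-coarse-nash-char} and \cref{producer-coarse-nash-loads} on each half to read off the loads. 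Your version never touches consumer strategies and works entirely at the level of load computations; notably, the paper closes its own proof by remarking that ``an alternative proof may also be given via \cref{compute-ell} / \cref{compute-ell-no-algo},'' which is essentially your argument. One small imprecision worth fixing: for $c>m_j$, whether $v_c=t_{\pi(c)}$ or $v_c=t_{\pi(c-1)}$ depends on whether $c$ exceeds $j$'s \emph{old} sort position, not solely on the sign of $t'_j-t_j$ (when $t'_j<t_j$ both cases occur, for $c$ below or above $j$'s old position respectively). This is harmless, since \cref{producer-coarse-nash-char} gives $\mu([0,t_{\pi(i)}))\le\frac{i}{n}\mu(\timeset)$ for every $i$, so either identification yields $\mu([0,v_c))\le\frac{c}{n}\mu(\timeset)$, contradicting $\mu([0,v_c))>\frac{c}{n}\mu(\timeset)$ as needed.
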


\begin{corollary}[Sequential/Lazy $\delta$-Better-Response Dynamics Reach Nash Equilibria and Remain]\label{coarse-sequential-lazy-cor}
For every $\delta>0$, every sequential or lazy $\delta$-better-response dynamic in \coarsegame\ reaches a Nash equilibrium in a finite number of steps,
and never visits a nonequilibrium after that point.
\end{corollary}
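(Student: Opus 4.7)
The plan is to obtain this corollary as an essentially immediate combination of the three preceding results \cref{coarse-response-equilibrium,coarse-lazy,coarse-sequential}. Observing that every $\delta$-better-response dynamic is in particular a weakly-better-response dynamic, I would note that the hypotheses (``sequential'' on the schedule, ``lazy'' on the response rule) each fall squarely under a setting already handled by one of the earlier results. The overall structure is therefore: (i) use \cref{coarse-response-equilibrium} for reachability, then (ii) branch on the two cases to obtain the invariance claim.

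For reachability, I would invoke \cref{coarse-response-equilibrium}, which gives that $\bar{t}^i$ is a Nash equilibrium for infinitely many $i$, and in fact that the first equilibrium is reached within at most $n\cdot\bigl\lceil\mu(\timeset)/(\delta n)\bigr\rceil$ rounds from step $0$. A round is by definition a \emph{finite} index range $i_1 \le i \le i_2$, so finitely many rounds translate to finitely many steps; hence the dynamic reaches a Nash equilibrium after a finite prefix.

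For the invariance claim (never visiting a nonequilibrium from that point onward), I would split along the disjunction in the hypothesis. If the dynamic is lazy, then once at a Nash equilibrium $\bar{t}^i$, every $j\in P_i$ already best-responds to $\bar{t}^i_{-j}$ by the equilibrium condition, so laziness forces $t^{i+1}_j=t^i_j$ for every $j\in P_i$ (and of course for $j\notin P_i$ as well); thus the profile stays constant, which is precisely \cref{coarse-lazy}. If instead the dynamic is sequential, then it is in particular ``sequential from some point'' (namely from step $0$), so \cref{coarse-sequential} applies verbatim and yields the desired invariance.

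There is no real obstacle here; the only mild subtlety I would flag is ensuring the reader sees that a $\delta$-better-response dynamic qualifies as a weakly-better-response dynamic (so that \cref{coarse-lazy,coarse-sequential} are applicable), and that the ``rounds'' bound in \cref{coarse-response-equilibrium} entails a finite-step bound because each round is itself a finite index range — no uniform per-round length is required since we claim only finiteness, not an explicit step count.
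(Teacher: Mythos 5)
Your proposal is correct and follows exactly the paper's own approach: the paper's proof is the one-liner ``A direct corollary of \cref{coarse-response-equilibrium,coarse-lazy,coarse-sequential}'', and you have simply spelled out that combination (reachability from \cref{coarse-response-equilibrium}, then branching on lazy vs.\ sequential to invoke \cref{coarse-lazy} or \cref{coarse-sequential} respectively). Nothing is missing.
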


\begin{proposition}[Every Nonsequential Schedule and Initial Profile have Nonlazy Best-Response Dynamics that Go Out of Equilibrium]\label{coarse-nonsequential}
If $\mu$ has no atom measuring $\frac{n-1}{n}\cdot\mu(\timeset)$ or more and no tail of $(P_i)_{i=0}^{\infty}$ is sequential, then for every pure-strategy profile $\bar{t}^0$ there exists a nonlazy best-response dynamic in \coarsegame\ that is scheduled by $(P_i)_{i=0}^{\infty}$, starts at $\bar{t}^0$ and visits nonequilibria infinitely often.
\end{proposition}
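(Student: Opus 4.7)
The plan is to (i) identify a Nash equilibrium $\bar{t}^*$ of \coarsegame\ together with two producers $j_1,j_2$ and alternative best responses $\tilde t_1,\tilde t_2$ (for $j_1$ and $j_2$ respectively, to $\bar{t}^*_{-j_1}$ and $\bar{t}^*_{-j_2}$) such that the simultaneous-deviation profile $(\bar{t}^*_{-\{j_1,j_2\}},\tilde t_1,\tilde t_2)$ is \emph{not} a Nash equilibrium, and then (ii) weave this ``break move'' infinitely often into a best-response dynamic that starts at the given $\bar{t}^0$ and is scheduled by $(P_i)_{i=0}^{\infty}$.

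For step~(i), I consider the natural candidate $\bar{t}^* \eqdef (0,\ldots,0,s^*)$ for an appropriate anchor $s^* > 0$ with $\mu\bigl([0,s^*)\bigr) \le (n-1)\mu(\timeset)/n$ (so $\bar{t}^*$ is Nash by \cref{producer-coarse-nash-char}), take $j_1$ to be a producer at coordinate $0$ and $j_2$ to be the producer at $s^*$, and select $\tilde t_1 < \tilde t_2$ so that each $\tilde t_k$ is individually a best response (a condition that, via the algorithm of \cref{compute-ell}, reduces to a handful of inequalities for the left-continuous function $F(t) \eqdef \mu\bigl([0,t)\bigr)$) while nonetheless $\mu\bigl([0,\tilde t_1)\bigr) > (n-2)\mu(\timeset)/n$, so that the sorted simultaneous-deviation profile violates \cref{producer-coarse-nash-char} at position $n-2$. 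The atom hypothesis enters precisely here: because no atom has mass $\ge (n-1)\mu(\timeset)/n$, $F$ cannot jump over the window $\bigl((n-2)\mu(\timeset)/n,(n-1)\mu(\timeset)/n\bigr]$ in a way that rules out every choice of $(s^*,\tilde t_1,\tilde t_2)$; a short case analysis on where the ``large'' atoms of $\mu$ sit relative to the thresholds $j\mu(\timeset)/n$ confirms at least one construction goes through. The symmetric (essentially atomless) subcase is already covered by two producers simultaneously moving from $(0,\ldots,0)$ to a common $t'$ with $F(t') \in \bigl((n-2)\mu(\timeset)/n,(n-1)\mu(\timeset)/n\bigr]$, in the spirit of \cref{coarse-bad-response}.

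For step~(ii), I exploit that $0$ is dominant (\cref{producer-coarse-dominant}), so ``play $0$'' is always a best response: in a finite initial segment of the schedule, having each scheduled producer play $0$ drives the profile to $(0,\ldots,0)$, and then, the next time the designated producer $j_2$ is scheduled, having it play $s^*$ — again a best response, as at $(0,\ldots,0)$ every single-producer deviation attains load at most $\mu(\timeset)/n$ — yields $\bar{t}^*$. By pigeonhole on the infinite set $\{i : |P_i|\ge 2\}$ guaranteed by the non-sequentiality hypothesis, some unordered pair $\{a,b\}\subseteq\prods$ satisfies $\{a,b\}\subseteq P_i$ for infinitely many $i$; the permutation-symmetry of producers in the construction of $\bar{t}^*$ lets me take $\{j_1,j_2\}=\{a,b\}$ throughout. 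At each such $i$ occurring after $\bar{t}^*$ has been reached, $j_1,j_2$ play $\tilde t_1,\tilde t_2$ while every other member of $P_i$ repeats its current (best-response) strategy, yielding a nonequilibrium; a subsequent ``reset to $(0,\ldots,0)$, then restore $s^*$'' segment (each step of which is again a best response by dominance of $0$) returns the dynamic to $\bar{t}^*$, and the cycle repeats. Nonlaziness is automatic at every break step, and at least one nonequilibrium is visited per cycle, hence infinitely many in total.

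The principal obstacle is step~(i): without the atom hypothesis, $\mu$ can rigidify every Nash against pairwise simultaneous-best-response deviations (the extreme instance being all mass at a single point, where every Nash has all producers pinned to $0$ and no pair of deviations can break anything), so the delicate work is the case analysis showing that under the hypothesis the flexibility of $F$ near the thresholds $(n-2)\mu(\timeset)/n$ and $(n-1)\mu(\timeset)/n$ always suffices to place $s^*,\tilde t_1,\tilde t_2$ correctly. The remainder is pure scheduling bookkeeping, underpinned by the dominance of $0$ and by \cref{producer-coarse-nash-char,producer-coarse-nash-loads}.
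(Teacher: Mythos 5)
Your high-level plan — reach a coarse Nash equilibrium, have two scheduled producers simultaneously best-respond out of it, reset to $(0,\ldots,0)$ via dominance of $0$, repeat — is exactly the paper's strategy, and your step~(ii) is fine (the pigeonhole fixing a pair $\{a,b\}$ appearing in infinitely many $P_i$ is harmless overkill; the paper simply reconstructs the intermediate equilibrium afresh for each $i$ with $|P_i|\ge2$). The gap is in step~(i), and it is genuine: your candidate $\bar{t}^*=(0,\ldots,0,s^*)$, together with the fallback of two producers leaving $(0,\ldots,0)$ for a common $t'$ with $F(t')\in\bigl(\tfrac{n-2}{n}\mu(\timeset),\tfrac{n-1}{n}\mu(\timeset)\bigr]$, does not cover all measures satisfying the hypothesis. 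Both of your cases hard-wire the simultaneous deviation to overshoot the threshold $\tfrac{n-2}{n}\mu(\timeset)$, which forces $F(t)\eqdef\mu\bigl([0,t)\bigr)$ to take a value in $\bigl(\tfrac{n-2}{n}\mu(\timeset),\tfrac{n-1}{n}\mu(\timeset)\bigr]$; the atom hypothesis does not deliver that.

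Concretely, take $n=4$ and $\mu$ with $\mu\bigl([0,\tfrac12)\bigr)=\tfrac14\mu(\timeset)$ spread uniformly, $\mu\bigl(\{\tfrac12\}\bigr)=\tfrac12\mu(\timeset)$, and $\mu\bigl((\tfrac12,1]\bigr)=\tfrac14\mu(\timeset)$ spread uniformly. The only atom has mass $\tfrac12\mu(\timeset)<\tfrac34\mu(\timeset)$, so the hypothesis holds, yet $F$ takes values only in $\bigl[0,\tfrac14\mu(\timeset)\bigr]\cup\bigl(\tfrac34\mu(\timeset),\mu(\timeset)\bigr]$. No $t'$ has $F(t')\in\bigl(\tfrac12\mu(\timeset),\tfrac34\mu(\timeset)\bigr]$, so your fallback fails. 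For your main case, any $s^*>0$ with $\mu\bigl([0,s^*)\bigr)\le\tfrac34\mu(\timeset)$ forces $s^*\le\tfrac12$; the best responses of a zero-producer $j_1$ to $\bar{t}^*_{-j_1}$ are then exactly $\bigl\{t:\mu\bigl([0,t)\bigr)\le\tfrac34\mu(\timeset)\bigr\}=\bigl\{t:t\le\tfrac12\bigr\}$, over which $F\le\tfrac14\mu(\timeset)<\tfrac24\mu(\timeset)$, so no $\tilde t_1$ lands in your target window.

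The missing flexibility is in \emph{where} the simultaneous deviation overshoots. The paper picks any $t$ with $0<\mu\bigl([0,t)\bigr)\le\tfrac{n-1}{n}\mu(\timeset)$ (such $t$ exists by the atom hypothesis and $\mu(\timeset)>0$), lets $j\in\{1,\ldots,n-1\}$ be minimal with $\mu\bigl([0,t)\bigr)\le\tfrac{j}{n}\mu(\timeset)$, and uses the intermediate Nash equilibrium in which $j+1$ producers — including the two designated deviators — play $0$ and the remaining $n-j-1$ play $t$. A single producer switching from $0$ to $t$ leaves $j$ at $0$, still a Nash equilibrium by \cref{producer-coarse-nash-char}, so the switch is a best response; two such switches leave at most $j-1$ at $0$, which violates \cref{producer-coarse-nash-char} at position $j-1$ by minimality of $j$. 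In the counterexample above take $t=\tfrac12$ (so $F(t)=\tfrac14\mu(\timeset)$ and $j=1$): the intermediate profile $(0,0,\tfrac12,\tfrac12)$ breaks at position $0$ when two zero-producers jump to $\tfrac12$ — precisely the profile your $\bar{t}^*=(0,0,0,s^*)$ cannot produce. Your construction pins the break at position $n-2$, which an atom parked just below $\tfrac{n-1}{n}\mu(\timeset)$ with no slack beneath it can block; the paper's lets the break position $j-1$ float to wherever $\mu$ permits.
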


\begin{remark}
Analogues of \cref{coarse-lazy,coarse-sequential,coarse-nonsequential} hold for mixed-strategy dynamics as well.
\end{remark}

As every best-response dynamic is a $\delta$-better-response one for $\delta=\mu(\timeset)$, we conclude from \cref{coarse-response-equilibrium} that such a dynamic reaches a Nash equilibrium in at most \mbox{$n\cdot\bigl\lceil\frac{\mu(\timeset)}{\mu(\timeset) n}\bigr\rceil=n$} rounds and afterward always ``re-reaches'' a Nash equilibrium in at most $(n-1)\cdot\bigl\lceil\frac{\mu(\timeset)}{\mu(\timeset) n}\bigr\rceil=n-1$ rounds. By applying some finer analysis, we can slightly improve this bound, and show that the  new bound is tight.

\begin{theorem}[Best-Response Time-to-Equilibrium and Time between Equilibria]\label{coarse-best-response-fast}
Every best-response dynamic in \coarsegame\ reaches a Nash equilibrium in at most $n-1$ rounds. Furthermore, if $n>2$, then from any later nonequilibrium,
the next Nash equilibrium is reached in at most $n-2$ rounds.
\end{theorem}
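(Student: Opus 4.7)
The plan is to refine the analysis behind \cref{coarse-response-equilibrium}, applied to best-response dynamics. Since by \cref{coarse-better-vs-best} a best-response dynamic coincides with a $\mu(\timeset)$-better-response dynamic, \cref{coarse-response-equilibrium} directly yields $n\cdot\bigl\lceil\frac{\mu(\timeset)}{\mu(\timeset)\,n}\bigr\rceil=n$ rounds to the first Nash equilibrium and $n-1$ rounds for subsequent re-convergences---each exactly one round more than what the present theorem claims. The task therefore reduces to saving precisely one round in each bound, by exploiting a stronger ``all-or-nothing'' property of best responses unavailable to generic $\delta$-better responses.

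I would first recast the problem in terms of the \emph{underload set} $U_i=\{j\in\prods\mid\ell_j(\bar{t}^i)<\mu(\timeset)/n\}$. By \cref{producer-coarse-dominant}, every best response places its producer at load at least $\mu(\timeset)/n$, i.e., outside $U$; and combined with $\sum_j\ell_j(\bar{t})\le\mu(\timeset)$ and \cref{producer-coarse-nash-loads}, a Nash equilibrium is reached precisely when $U_i=\emptyset$. I would then show that each round of best responses strictly reduces $|U|$ (the subsequent-rounds invariant), and crucially, that after the first full round, $|U_i|\le n-2$ (rather than merely $n-1$). The one-round savings stem from the fact that a best response jumps its producer's load all the way to the maximum attainable value, not merely to a $\delta$-increment above the previous value: via the grouping structure of \cref{compute-ell}, this aggressive improvement pulls the responder out of $U$ and, as a side effect of the algorithm's grouping, leaves behind at least one additional producer with load $\ge\mu(\timeset)/n$ at round's end. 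Combined with the subsequent-rounds invariant, this yields $n-1$ rounds to the first equilibrium.

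For the $(n-2)$-round bound between equilibria: once the dynamic has reached a Nash equilibrium $\bar{t}^{i_0}$, at the deviation step $i_0+1$ the deviator has load $\ge\mu(\timeset)/n$ (since the deviation is by a best response), so $|U_{i_0+1}|\le n-1$. More generally, I would show that at \emph{any} later nonequilibrium $\bar{t}^i$ visited by the dynamic, $|U_i|\le n-1$---an invariant maintained across best-response steps, because the most recently responding producer always retains load $\ge\mu(\timeset)/n$. Applying the subsequent-rounds invariant from this stronger starting condition yields the improved $(n-2)$-round bound.

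The main obstacle will be establishing the claim that after the first full round, $|U_i|\le n-2$, together with the analogous $|U_i|\le n-1$ invariant at every post-equilibrium nonequilibrium. Both require a careful structural analysis of how best responses within a round interact: a best response by producer $j$ not only changes $\ell_j$ but, through the grouping algorithm of \cref{compute-ell}, reshuffles the loads of \emph{all} other producers, potentially pushing some into $U$ as a side effect. I would handle this by tracking the evolution of the sorted strategy multiset and leveraging the characterization in \cref{producer-coarse-nash-char}, which translates the NE condition into monotone constraints on sorted strategies where cumulative effects of best responses in a round become more transparent, and arguing that such effects cannot simultaneously push $n-1$ producers into $U$ by round's end.
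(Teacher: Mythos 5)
The core of your plan hinges on the claim that each round of best responses strictly reduces $|U|$, where $U$ is the underload set. This claim is false, and the paper's own tightness example (\cref{coarse-best-response-fast-tight}) refutes it. Take $\mu = U(\timeset)$ and $n = 4$, and trace the simultaneous best-response dynamic $(1,0,0,0)$, $(\tfrac{3}{4},\tfrac{2}{3},0,0)$, $(\tfrac{1}{2},\tfrac{1}{2},\tfrac{1}{3},0)$, $(\tfrac{1}{4},\tfrac{1}{4},\tfrac{1}{4},0)$. Applying \cref{compute-ell}, the corresponding load vectors are $(0,\tfrac{1}{3},\tfrac{1}{3},\tfrac{1}{3})$, $(\tfrac{1}{6},\tfrac{1}{6},\tfrac{1}{3},\tfrac{1}{3})$, $(\tfrac{2}{9},\tfrac{2}{9},\tfrac{2}{9},\tfrac{1}{3})$, $(\tfrac{1}{4},\tfrac{1}{4},\tfrac{1}{4},\tfrac{1}{4})$, and since $\mu(\timeset)/n=\tfrac{1}{4}$, we get $|U_0|=1$, $|U_1|=2$, $|U_2|=3$, $|U_3|=0$: the underload set grows monotonically before collapsing all at once at the equilibrium. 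Your companion claim that $|U_i|\le n-2$ after the first round also fails, at $\bar{t}^2$. The flaw is that $U$ is a function of loads, and a single best response step can simultaneously depress the loads of several other producers below $\mu(\timeset)/n$, so no round-monotonicity of the underload count holds.

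The paper instead uses a strategy-level potential, $M(\bar{t})$ (defined in \cref{producers-coarse-proofs}), which tracks the index of the highest $n$-tile of $\mu$ occupied by a strategy. By \cref{coarse-road-to-nash}, $M$ is nonincreasing at each step and strictly decreases per round outside equilibrium, and $M=0$ trivially forces a Nash equilibrium; this already gives the $n$-round bound from \cref{coarse-response-equilibrium}. The one-round saving comes from showing that a profile with $M=1$ reached via a best response from a nonequilibrium with $M=2$ is already a Nash equilibrium: by \cref{producer-coarse-nash-char}, such a profile is an equilibrium iff some producer plays $t$ with $\mu([0,t))=0$, and \cref{must-jump-to-zero-coarse} forces this to happen, because at most one producer had such a strategy beforehand, and either it keeps it or any triggered producer is pushed to such a strategy. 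Your intuition that the ``all-or-nothing'' jump of a best response is what saves the round is correct, but the quantity on which to leverage it is $M$, not $|U|$.
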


\begin{remark}
In \cref{coarse-best-response-fast}, as in \cref{coarse-response-equilibrium}, if the dynamic in question is simultaneous, then ``rounds'' may be replaced with ``steps''.
\end{remark}

\begin{example}[Tightness of \cref{coarse-best-response-fast}]\label{coarse-best-response-fast-tight}
Let $\mu=U(\timeset)$.
The following is a (nonlazy) simultaneous best-response dynamic in \coarsegame, in which\ \ i)~no two consecutive strategy profiles are both Nash equilibria,\ \ ii)~the first Nash equilibrium is reached in precisely $n-1$ rounds (steps), and\ \ iii)~from any
nonequilibrium that follows a Nash equilibrium, the next Nash equilibrium is reached in precisely $n-2$ rounds (steps):
$(1,0,0,\ldots,0)$, $(\frac{n-1}{n},\frac{n-2}{n-1},0,0,\ldots,0)$, $(\frac{n-2}{n},\frac{n-2}{n},\frac{n-3}{n-1},0,0,\ldots,0)$,
\ldots, $(\frac{3}{n},\frac{3}{n},\ldots,\frac{3}{n},\frac{2}{n-1},0,0)$, $(\frac{2}{n},\frac{2}{n},\ldots,\frac{2}{n},\frac{1}{n-1},0)$,
$(\frac{1}{n},\frac{1}{n},\ldots,\frac{1}{n},0)$ (first Nash equilibrium),
$(\frac{n-1}{n},\frac{n-2}{n-1},0,0,\ldots,0)$, $(\frac{n-2}{n},\frac{n-2}{n},\frac{n-3}{n-1},0,0,\ldots,0)$, \ldots\
(cyclically repeating).
\end{example}

To summarize \cref{coarse-best-response-fast,coarse-best-response-fast-tight,coarse-lazy,coarse-sequential}:

\begin{corollary}[Sequential/Lazy Best-Response Dynamics Reach Nash Equilibria Fast and Remain]\label{coarse-best-response-fast-cor}
Every sequential or lazy best-response dynamic in \coarsegame\ reaches a Nash equilibrium in at most a tight bound of $n-1$ rounds,
and never visits a nonequilibrium after that point.
\end{corollary}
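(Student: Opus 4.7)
The plan is to read this corollary off directly from the preceding results. First, any sequential or lazy best-response dynamic is \emph{a fortiori} a best-response dynamic, so the upper bound of $n-1$ rounds to reach the first Nash equilibrium is immediate from \cref{coarse-best-response-fast}.

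Second, for the claim that no nonequilibrium is visited after the first Nash equilibrium has been reached, I would split on the type of dynamic. In the lazy case, \cref{coarse-lazy} says that any lazy weakly-better-response dynamic (and hence any lazy best-response one) leaves a Nash equilibrium unchanged once reached, since every producer who is already best-responding is required to keep its strategy. In the sequential case, the entire schedule is sequential, so in particular every tail of it is sequential, and \cref{coarse-sequential} applies to guarantee that after reaching a Nash equilibrium no nonequilibrium is ever visited again.

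Third, for tightness of $n-1$: the example in \cref{coarse-best-response-fast-tight} is a simultaneous best-response dynamic reaching its first Nash equilibrium only after $n-1$ steps, which, the dynamic being simultaneous, are $n-1$ rounds. To match tightness within the sequential subclass, I would unfold this example into a sequential best-response dynamic in which, within each round, the $n$ producers best-respond one-by-one in a fixed order, reproducing the same overall changes between consecutive rounds. For the lazy subclass, the same example can be reinterpreted (or slightly adapted so that any producer who happens to already be best-responding does stay put), maintaining the $n-1$-round lower bound.

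The main obstacle, such as it is, lies in the verification accompanying the sequential simulation in the tightness argument: one must check that each single-producer step in the unfolded example remains a valid best response to the currently updated profile, not merely to the profile at the start of the round. Given the explicit structure of \cref{coarse-best-response-fast-tight} — at each step, only the producer playing the current highest QoS has a strictly best response different from staying put — this reduces to a short case check, so no serious difficulty is expected; the remainder of the proof is pure bookkeeping.
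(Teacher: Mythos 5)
Your decomposition into three parts --- upper bound via \cref{coarse-best-response-fast}, permanence via \cref{coarse-lazy} and \cref{coarse-sequential}, and tightness via \cref{coarse-best-response-fast-tight} --- is exactly what the paper's one-line proof cites, and the first two parts are fine as you present them.

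The tightness part, however, does not go through as easily as you suggest, and the ``short case check'' you defer is where the plan breaks. Consider the sequential unfolding of \cref{coarse-best-response-fast-tight} with schedule $(0,1,\ldots,n-1)$ starting from $\bar{t}^0=(1,0,\ldots,0)$ and $\mu=U(\timeset)$. Producer $0$ best-responds to $(0,\ldots,0)$; any $t\le\frac{n-1}{n}$ maximizes its load, so say it plays $\frac{n-1}{n}$. But the resulting profile $\bigl(\frac{n-1}{n},0,\ldots,0\bigr)$ already satisfies \cref{producer-coarse-nash-char} and is therefore a Nash equilibrium --- the sequential dynamic terminates in one round, not $n-1$. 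And your intermediate claim, that ``at each step only the producer playing the current highest QoS has a strictly best response different from staying put,'' is in fact an obstacle rather than a help here, since a best response computed against the \emph{updated} profile $\bigl(\frac{n-1}{n},-,0,\ldots,0\bigr)$ is at most $\frac{n-2}{n}<\frac{n-2}{n-1}$, so the second move of the simultaneous example cannot be reproduced sequentially. The lazy reinterpretation fails for the same reason: at $\bar{t}^0=(1,0,\ldots,0)$, producers $1,\ldots,n-1$ are already best-responding at $0$ and must stay put, so only producer $0$ moves and the dynamic again lands on a Nash equilibrium after a single round. In short, neither the sequential nor the lazy adaptation of \cref{coarse-best-response-fast-tight} preserves the $n-1$-round lower bound, so a genuinely different construction would be needed to establish tightness within the sequential/lazy subclass. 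To be fair, the paper's own proof cites \cref{coarse-best-response-fast-tight} without addressing this, so the gap is latent there as well; but your proposal commits to closing it and then asserts, incorrectly, that the closure is routine.
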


\subsection{Fine Preferences}\label{producers-fine}

\begin{definition}[Producer Game with Fine Preferences]
We define the \emph{producer game with fine preferences} \finegame\ as the $n$-player game, with set of players (called \emph{producers}) $\prods$, in which the pure-strategy space available to each producer
is $\timeset$, and in which for each pure-strategy profile $\bar{t}\in\timeset^{\prods}$, the utility for each producer $j\in\prods$ is strictly increasing in $\ell_j(\bar{t})$ (as defined in \cref{ell}), with tie breaking (i.e.\ infinitesimal improvement) in favour of larger values of $t_j$ over smaller ones.
\end{definition}

\subsubsection{Static Analysis}\label{producers-fine-statics}

We define safe alternatives and dominant strategies in \finegame\ as in \cref{dominant}, only w.r.t.\ fine preferences.
The following proposition shows that the tie-breaking refinement of the producers' preferences into ``fine preferences'' indeed
successfully removes the triviality captured by \cref{producer-coarse-dominant}, in a strong sense.

\begin{proposition}[(No) Dominant and (Few) Dominated Strategies]\label{producer-fine-dominant}
If~$\mu$ has no atom measuring $\frac{n-1}{n}\cdot\mu(\timeset)$ or more,\footnote{The triviality in case of a very large atom should
be compared to the triviality captured under the exact condition in \cref{coarse-nonsequential}. Indeed, both trivialities are possible exactly iff there exists $t\in\timeset$  s.t.\ $ \mu\bigl([0,t)\bigr)=0$ and
$\mu\bigl([t',1]\bigr)<\frac{\mu(\timeset)}{n}$ for every $t'>t$.}
then no strategies are dominant in \finegame.
Moreover, at least $\frac{n-1}{n}$ of the strategies in $\timeset$ (as measured by $\mu$) have no safe alternatives (other than themselves).
\end{proposition}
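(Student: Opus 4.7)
The plan is to derive two tight necessary conditions for a strategy $t' \neq t$ to be a safe alternative to $t$ in \finegame\ by analysing a single well-chosen ``all others play the same value'' strategy profile, and then to deploy these conditions both to bound the $\mu$-measure of strategies admitting a nontrivial safe alternative (Part~2) and to rule out dominant strategies via a targeted profile (Part~1).

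For the necessary conditions, I would place all other $n-1$ producers at the common value $t'$ (irrespective of the ordering of $t$ and $t'$). By symmetry $\ell_j(t') = \mu([t',1])/n$, while $\ell_j(t)$ splits into two regimes according to whether the interior load-balancing consumer equilibrium is feasible, which I would compute via \cref{compute-ell} (or directly via \cref{consumer-symmetric-nash-exists}). Combining the regimes, one obtains \textbf{(L1)}: if $t' > t$ and $\mu([t,t')) > 0$, then $\ell_j(t) > \ell_j(t')$ strictly, so since $t < t'$ loses the tie-break, $t'$ fails to match $t$'s utility in this profile; hence $t'$ being a safe alternative to $t$ forces $\mu([t,t'))=0$. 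And \textbf{(L2)}: if $t' < t$ and $\mu([t',t)) \leq (n-1)\mu([t,1])$ (i.e.\ the interior regime for $j$ at $t$ is active), then $\ell_j(t) = \mu([t',1])/n = \ell_j(t')$, and $t > t'$ wins the tie-break, so $t'$ fails to strictly beat $t$; hence $t'$ being a safe alternative to $t$ forces $\mu([t',t)) > (n-1)\mu([t,1])$, which (using $\mu([t',t)) \leq \mu([0,t))$) yields $\mu([t,1]) < \mu(\timeset)/n$.

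For Part~2, let $T_0$ denote the set of $t \in \timeset$ admitting a safe alternative other than itself. By (L1) and (L2), $T_0 \subseteq T_1 \cup T_2$ where $T_1 \coloneqq \{t : \mu([t,t'))=0 \text{ for some } t'>t\}$ and $T_2 \coloneqq \{t : \mu([t,1]) < \mu(\timeset)/n\}$. For $T_1$: setting $r(t) \coloneqq \sup\{s>t : \mu([t,s))=0\}$, $\sigma$-additivity gives $\mu([t,r(t)))=0$, and since any two overlapping such intervals nest and share a common right endpoint, $T_1$ is covered by countably many pairwise-disjoint intervals of $\mu$-measure zero, so $\mu(T_1) = 0$. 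For $T_2$: since $t \mapsto \mu([t,1])$ is nonincreasing and left-continuous, $T_2$ is a right-anchored interval $(t_*,1]$ or $[t_*,1]$, and left-continuity at $t_*$ yields $\mu(T_2) \leq \mu(\timeset)/n$. Hence $\mu(T_0) \leq \mu(\timeset)/n$, so $\mu(\timeset \setminus T_0) \geq \frac{n-1}{n}\mu(\timeset)$ as claimed.

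For Part~1, suppose for contradiction that $t$ is dominant in \finegame. Since fine preferences refine coarse ones, $t$ is also dominant in \coarsegame, so \cref{producer-coarse-dominant} forces $\mu([0,t))=0$. The hypothesis bounds $\mu(\{t\}) < \frac{n-1}{n}\mu(\timeset)$, and it also forces $t<1$, since otherwise $\mu([0,t))=0$ combined with $t=1$ would give $\mu(\{1\}) = \mu(\timeset) \geq \frac{n-1}{n}\mu(\timeset)$ (assuming $n \geq 2$, $\mu(\timeset)>0$), violating the hypothesis. Now $\mu([0,t')) \to \mu(\{t\})$ as $t' \downarrow t$, so we can pick some $t'>t$ with $\mu([0,t')) < \frac{n-1}{n}\mu(\timeset)$. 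In the profile where every other producer plays $0 \in \timeset$, a short consumer-equilibrium computation yields $\ell_j(t) = \mu(\timeset)/n$ (since $\mu([0,t))=0$, all mass balances evenly across the $n$ producers) and $\ell_j(t') = \mu(\timeset)/n$ (the interior feasibility condition for $j$ at $t'$ is exactly $\mu([0,t')) \leq \frac{n-1}{n}\mu(\timeset)$, which we arranged). Equal loads with $t'>t$ then give $t'$ a strict fine-utility advantage over $t$, contradicting dominance. The main obstacle I expect is the careful case-splitting underlying (L1) and (L2)---identifying both regimes, verifying consumer-NE feasibility constraints in each, and establishing the claimed load comparisons---while the downstream measure-theoretic bookkeeping is routine.
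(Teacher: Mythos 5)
Your overall route mirrors the paper's: decompose the set of strategies admitting a nontrivial safe alternative into a ``gap on the right'' piece and a ``small right tail'' piece, show the first has measure zero and the second has measure at most $\mu(\timeset)/n$, and for Part~1 reduce to $\mu\bigl([0,t)\bigr)=0$ via coarse dominance and then exploit what happens just to the right of $t$. The difference lies in how the ingredients are obtained. The paper first isolates a clean \emph{iff} characterization of safe alternatives in \finegame\ (\cref{producer-fine-domination}, itself reduced to \cref{producer-coarse-domination}), bounds both pieces via \cref{union-up}, and closes Part~1 by considering \emph{all} $t'>t$ and invoking continuity from below to force $\mu(\{t\})\ge\frac{n-1}{n}\mu(\timeset)$, contradicting the atom hypothesis. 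You instead extract just the necessary directions of those two conditions by evaluating the deviator's load in the single symmetric-plus-deviator profile where every other producer plays the competing strategy (a two-regime computation from \cref{compute-ell}); you bound the gap set by a disjoint-gap-intervals argument keyed on $r(t)$ and the tail set by monotonicity plus one-sided continuity of $t\mapsto\mu\bigl([t,1]\bigr)$; and for Part~1 you hand-pick a single witness $t'>t$ directly from the no-large-atom hypothesis rather than taking a limit. All of this is sound. Your version is more self-contained---it sidesteps \cref{producer-fine-domination}, \cref{producer-coarse-domination} and \cref{union-up} entirely---at the cost of re-deriving some of their content inline; the paper's factoring yields reusable characterizations that it also exploits elsewhere.
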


We now formally conclude the results captured informally in \cref{intro-producer-nash}:

\begin{theorem}[$\exists!$ Nash Equilibrium, and it is Super-Strong\footnote{See \cref{producer-coarse-super-strong} in
\cref{producers-coarse} above, as well as the preceding discussion, for the definition of super-strong equilibrium, as well as a discussion regarding various group-deviation concepts.}]\label{producer-fine-nash-char}
A unique (up to permutations) pure-strategy Nash equilibrium exists in \finegame.
The sorted Nash-equilibrium strategies $t_0\le\cdots\le t_{n-1} \in \timeset$ are
$t_j\eqdef\Max\bigl\{t \in \timeset \mid \mu\bigl([0,t)\bigr)\le\frac{j}{n}\cdot\mu(\timeset)\bigr\}$ for every $j\in\prods$. The load on each producer in this equilibrium is $\frac{\mu(\timeset)}{n}$. Furthermore, this equilibrium is super-strong.
\end{theorem}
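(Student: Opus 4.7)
The plan is to prove the three assertions in order: existence of the stated profile, uniqueness (up to permutation), and the super-strong property. For existence, define $\bar{t}\eqdef(\hat{t}_0,\ldots,\hat{t}_{n-1})$ with $\hat{t}_j\eqdef\Max\{t\in\timeset\mid\mu([0,t))\le\frac{j}{n}\mu(\timeset)\}$. Since $\hat{t}_0\le\cdots\le\hat{t}_{n-1}$ and $\mu([0,\hat{t}_j))\le\frac{j}{n}\mu(\timeset)$ by construction, \cref{producer-coarse-nash-char} gives that $\bar{t}$ is a coarse Nash equilibrium and \cref{producer-coarse-nash-loads} gives each producer a load of $\frac{\mu(\timeset)}{n}$. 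To establish the fine equilibrium property it suffices to rule out a strictly improving single-player deviation from $\hat{t}_j$: the case $t'<\hat{t}_j$ is immediate (the fine tie-break already favours $\hat{t}_j$, and no strict load improvement can occur since $\bar{t}$ is already a coarse equilibrium), whereas for $t'>\hat{t}_j$ the maximality of $\hat{t}_j$ forces $\mu([0,t'))>\frac{j}{n}\mu(\timeset)$, and running \cref{compute-ell} on the post-deviation profile shows that this excess mass ``spills'' into producers at lower sorted positions, leaving less than $\frac{\mu(\timeset)}{n}$ to be distributed among the producers from the deviator's new sorted position onward --- a strict coarse loss.

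Uniqueness proceeds by downward induction on the sorted position. Since any fine Nash equilibrium is also a coarse one, \cref{producer-coarse-nash-char} yields $\mu([0,t_j))\le\frac{j}{n}\mu(\timeset)$ for sorted strategies $t_0\le\cdots\le t_{n-1}$. Starting at $j=n-1$ and descending, assume $t_k=\hat{t}_k$ for every $k>j$ and suppose toward contradiction that $t_j<\hat{t}_j$. The sequence $(\hat{t}_k)_{k\in\prods}$ is nondecreasing (larger index relaxes the defining constraint), so for $j<n-1$ we have $\hat{t}_j\le\hat{t}_{j+1}=t_{j+1}$, and for $j=n-1$ there is no upper constraint at all. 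Consequently the producer at sorted position~$j$ can deviate from $t_j$ to $\hat{t}_j$ without disturbing the sorted order; \cref{producer-coarse-nash-char} then shows the new profile is again a coarse Nash equilibrium, so the deviator's load stays at $\frac{\mu(\timeset)}{n}$ while its strategy strictly increases, yielding a strict fine improvement and contradicting the fine equilibrium assumption. Hence $t_j=\hat{t}_j$ for every $j$, giving uniqueness up to permutation.

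For super-strong, suppose a coalition $P\subseteq\prods$ with deviations $(t'_j)_{j\in P}$ is such that every $j\in P$ weakly improves in fine preferences and some $j^*\in P$ strictly does. A weakly-better fine response is also a weakly-better coarse response, so \cref{producer-coarse-super-strong} rules out any strict coarse improvement inside~$P$, forcing $\ell_j(\bar{t}_{-P},\bar{t}')=\frac{\mu(\timeset)}{n}$ for every $j\in P$; the strict fine improvement at $j^*$ must then be realized via the tie-break, so $t'_j\ge\hat{t}_j$ for every $j\in P$ and $t'_{j^*}>\hat{t}_{j^*}$. A counting argument (for each threshold $x\in\timeset$, the number of new-profile strategies strictly below $x$ is at most the number of $\hat{t}_j$'s strictly below $x$) yields that the sorted new strategies $s_0\le\cdots\le s_{n-1}$ satisfy $s_k\ge\hat{t}_k$ for every $k$. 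I would then show that the new profile is itself a coarse Nash equilibrium; combined with the maximality definition of $\hat{t}_k$ this forces $s_k=\hat{t}_k$ for every $k$, so $\{t'_j\}_{j\in P}$ and $\{\hat{t}_j\}_{j\in P}$ coincide as multisets; the inequalities $t'_j\ge\hat{t}_j$ together with this multiset equality force $t'_j=\hat{t}_j$ for every $j\in P$, contradicting $t'_{j^*}>\hat{t}_{j^*}$.

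The main obstacle is this last step --- establishing that the post-coalition-deviation profile is itself a coarse Nash equilibrium. I expect this to require a careful block-level analysis of \cref{compute-ell}: if the first block had load strictly above $\frac{\mu(\timeset)}{n}$ then, by the strictly-decreasing block loads of the algorithm, it would consist entirely of $P'$-members, and tracking cumulative mass through the $s_k\ge\hat{t}_k$ inequality (and through the characterisation of $\hat{t}_k$) would contradict the later block containing all of $P$ having load exactly $\frac{\mu(\timeset)}{n}$.
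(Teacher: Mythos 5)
Your existence and uniqueness arguments are essentially the paper's: the paper likewise derives $\mu\bigl([0,t'_j)\bigr)\le\frac{j}{n}\mu(\timeset)$ from the coarse characterisation (\cref{producer-coarse-nash-char}), concludes $t'_j\le t_j$, and then observes that replacing $t'_j$ with $t_j$ preserves the coarse-Nash property so $t'_j\ge t_j$ by fine optimality. Your downward induction is unnecessary overhead (the paper handles the reordering in one stroke because every strategy that shifts down a sorted position is bounded above by $t_j$, hence already satisfies the weaker constraint at its new, larger index), and your ``running the algorithm shows spillage'' argument for $t'>\hat{t}_j$ is hand-waving where the paper invokes \crefpart{canonical-form-deviation}{nash-char} together with \cref{coarse-sequential}, but neither issue is fatal.

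The super-strong argument, however, has a genuine gap, and you yourself flag it. Having established $t'_j\ge\hat{t}_j$ for all $j\in P$ with strict inequality for $j^*$, and hence $s_k\ge\hat{t}_k$ for the sorted deviated strategies, you propose to show that $(\bar{t}_{-P},\bar{t}')$ is a coarse Nash equilibrium and then conclude $s_k=\hat{t}_k$. But this intermediate claim is not a weaker step: given the pointwise bound $s_k\ge\hat{t}_k$ you already have, and the maximality defining $\hat{t}_k$, the condition $\mu\bigl([0,s_k)\bigr)\le\frac{k}{n}\mu(\timeset)$ for all $k$ (which is \cref{producer-coarse-nash-char}'s criterion) is \emph{equivalent} to $s_k=\hat{t}_k$ for all $k$ --- if $s_k>\hat{t}_k$ then $\mu\bigl([0,s_k)\bigr)>\frac{k}{n}\mu(\timeset)$ by maximality, violating the criterion. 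So your proposed decomposition is circular: the hard step is exactly what you are trying to prove, restated. The block-level sketch about \cref{compute-ell} does not resolve this; it only repackages the question. The paper takes a genuinely different route that sidesteps the circularity: it singles out a coalition member $j$ for which $t'_j>\hat{t}_j$ with $t'_j$ \emph{largest}, puts $\bar{t}$ in $j$-canonical form, and invokes \cref{coarse-sequential} together with \crefpart{canonical-form-deviation}{nash-char} to conclude that the \emph{unilateral} deviation of $j$ alone already yields $\ell_j(\bar{t}_{-j},t'_j)<\ell_j(\bar{t})$; it then applies \cref{ell-single-valley} --- weak quasiconvexity of $\ell_j$ in the other coordinates --- observing that every other coalition member $i$ moves from $\hat{t}_i$ to $t'_i$ with $\hat{t}_i<t'_i\le t'_j$, i.e.\ toward $j$'s strategy, which can only further depress $j$'s load, so $\ell_j(\bar{t}_{-P},\bar{t}')\le\ell_j(\bar{t}_{-j},t'_j)<\ell_j(\bar{t})$, contradicting the assumed weak improvement. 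You would need an argument of this kind (or some other non-circular one) to close the gap.
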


\begin{corollary}[Nash Equilibrium Characterization --- Special Case]\label{producer-fine-nash-char-special}
If the CDF of $\mu$ is continuous (i.e.\ $\mu$ is atomless) and strictly increasing, then for every $j\in\prods$, the $j$\tth\ sorted Nash-equilibrium strategy, $t_j$, is the unique strategy
satisfying $\mu\bigl([0,t_j)\bigr)=\frac{j}{n}\cdot\mu(\timeset)$.
\end{corollary}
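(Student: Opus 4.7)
The plan is to derive this corollary directly from \cref{producer-fine-nash-char}, which already identifies
\[
t_j = \Max\bigl\{t \in \timeset \;\big|\; \mu\bigl([0,t)\bigr)\le \tfrac{j}{n}\cdot\mu(\timeset)\bigr\}.
\]
So the entire task is to show that under the additional regularity hypotheses on $\mu$, this maximum coincides with the (unique) $t_j \in \timeset$ for which $\mu\bigl([0,t_j)\bigr) = \tfrac{j}{n}\cdot\mu(\timeset)$.

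First I would set $F(t)\eqdef\mu\bigl([0,t)\bigr)$ for $t\in\timeset$ and record its properties. Since $\mu$ is atomless, $F$ is continuous on $[0,1]$ (the left-continuity is automatic, and right-continuity at $t$ follows from $\mu(\{t\})=0$). By hypothesis $F$ is strictly increasing. Moreover $F(0)=0$ and, using atomlessness of $\{1\}$, $F(1)=\mu(\timeset)$. Hence $F$ is a continuous, strictly monotone bijection $[0,1]\to[0,\mu(\timeset)]$.

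Next, for each $j\in\prods$ the value $c_j\eqdef\tfrac{j}{n}\cdot\mu(\timeset)$ lies in $[0,\mu(\timeset))$, so by the intermediate value theorem applied to $F$ there exists a $t\in\timeset$ with $F(t)=c_j$, and by strict monotonicity this $t$ is unique. I would then argue that $\{t\in\timeset\mid F(t)\le c_j\}$ is exactly $[0,F^{-1}(c_j)]$: containment $\supseteq$ is immediate by monotonicity, while $\subseteq$ follows because any $t>F^{-1}(c_j)$ satisfies $F(t)>c_j$ by strict monotonicity. Consequently the maximum in \cref{producer-fine-nash-char} is attained precisely at $F^{-1}(c_j)$, i.e.\ at the unique solution of $F(t)=\tfrac{j}{n}\cdot\mu(\timeset)$.

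There is no real obstacle here; the only minor care required is in handling the two boundary behaviours of $F$ (continuity at $t=1$ uses atomlessness of the singleton $\{1\}$, and the strict inequality $c_j<\mu(\timeset)$ for $j\le n-1$ guarantees that the preimage lies in $\timeset$ rather than outside). Together these observations deliver the claim from the already-proved characterization in \cref{producer-fine-nash-char}, so the corollary is essentially a one-paragraph reduction to standard properties of continuous strictly increasing CDFs.
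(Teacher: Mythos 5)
Your proposal is correct and follows the paper's own argument: both reduce the claim to \cref{producer-fine-nash-char} by observing that under continuity and strict monotonicity of the CDF, the set $\{t\in\timeset\mid\mu([0,t))\le\frac{j}{n}\mu(\timeset)\}$ is a closed interval whose right endpoint is the unique solution of $\mu([0,t))=\frac{j}{n}\mu(\timeset)$. The paper compresses this into a single sentence; you spell out the intermediate-value and boundary details explicitly, but the route is the same.
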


\begin{proposition}[Nash Equilibria are in Pure Strategies]\label{producer-fine-nash-pure}
Every mixed-strategy Nash equilibria in \finegame\ is in fact in pure strategies (and is thus given by \cref{producer-fine-nash-char}/\cref{producer-fine-nash-char-special}).
\end{proposition}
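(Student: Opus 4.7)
The plan is to show that any mixed-strategy Nash equilibrium $\bar{p}$ of \finegame\ has each marginal $p_j$ concentrated on a single point; \cref{producer-fine-nash-char} then identifies the unique Nash equilibrium up to permutations. The starting observation is that fine preferences strictly refine coarse ones: any deviation strictly improving $\expect{\ell_j}$ is also a strict fine improvement. Hence $\bar{p}$ is also a mixed-strategy Nash equilibrium of \coarsegame, and \crefpart{producer-coarse-mixed}{producer-coarse-mixed-nash-loads} yields $\ell_j(\bar{p})=\frac{\mu(\timeset)}{n}$ with probability $1$ for every $j$.

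Fix $j\in\prods$ and set $g_j(t)\eqdef\expect{\ell_j(\bar{p}_{-j},t)}$, the expected load on $j$ when $j$ unilaterally plays the pure strategy $t$ against the realization of $\bar{p}_{-j}$. By the previous paragraph (and Fubini), $g_j(t)=\frac{\mu(\timeset)}{n}$ for $p_j$-almost every $t$; moreover, if any $t\in\timeset$ had $g_j(t)>\frac{\mu(\timeset)}{n}$ then the pure deviation to $t$ would strictly increase $j$'s expected load, contradicting the (coarse) NE, so $g_j\le\frac{\mu(\timeset)}{n}$ everywhere. Because $\ell_j(\bar{t})$ is nonincreasing in $t_j$ (as noted at the end of \cref{consumers}), so is $g_j$; thus the set $A\eqdef\bigl\{t\in\timeset \mid g_j(t)=\frac{\mu(\timeset)}{n}\bigr\}$ is a left-interval of the form $[0,T_j]$ or $[0,T_j)$ containing a set of full $p_j$-measure. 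A quick inspection of the construction in \cref{consumers} shows that $\ell_j(\bar{t}_{-j},\cdot)$ is also \emph{left}-continuous --- jumps occur only as $t_j$ crosses an atom of $\mu$ from above --- so $g_j$ is left-continuous too, forcing $A=[0,T_j]$, in particular a closed set.

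Let $t^*_j\eqdef\Max\supp(p_j)$, which exists because $\supp(p_j)$ is a closed subset of the compact $\timeset$. Since $A$ is closed with full $p_j$-measure, $\supp(p_j)\subseteq A$, hence $g_j(t^*_j)=\frac{\mu(\timeset)}{n}$. Suppose for contradiction that $p_j$ is not the point mass at $t^*_j$; then $p_j([0,t^*_j))>0$, and integrating gives $\expect{t_j}<t^*_j$ under $t_j\sim p_j$. Now compare the current play $p_j$ against the pure deviation $t^*_j$ (vs.\ $\bar{p}_{-j}$): the expected loads agree at $g_j(t^*_j)=\frac{\mu(\timeset)}{n}$, while $j$'s own coordinate strictly increases from $\expect{t_j}$ to $t^*_j$. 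Under fine preferences this is a strict improvement, contradicting NE. Therefore $p_j$ is a point mass, and performing this argument for each $j$ shows that $\bar{p}$ is pure, after which \cref{producer-fine-nash-char} pins it down.

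The main subtlety I expect is making the lexicographic structure of ``fine preferences'' precise over \emph{random} outcomes; the cleanest interpretation is the utility $\ell_j+\varepsilon\cdot t_j$ with $\varepsilon$ infinitesimal, equivalently a lexicographic order on the pair $(\expect{\ell_j},\expect{t_j})$. Under either reading the argument works because at the critical deviation the expected loads tie exactly and the tie-break in favour of larger $t_j$ survives taking expectations. A secondary technicality is the left-continuity of $\ell_j(\bar{t}_{-j},\cdot)$, which I have flagged above and which is immediate from the algorithmic description of the consumer-game loads in \cref{consumers}.
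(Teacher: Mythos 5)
Your proof is correct. The paper's own proof is a single sentence --- essentially the observation that no producer is ever indifferent between two pure strategies under fine preferences, against any belief about opponents, and hence cannot profitably randomize. Your argument makes that intuition fully rigorous, and it is worth contrasting the two. The terse ``no indifference'' appeal hides a real subtlety under lexicographic preferences: to conclude that nondegenerate mixing is strictly suboptimal one must actually exhibit a pure deviation that weakly matches the expected-load coordinate and strictly improves the tie-break coordinate, and this is not automatic. You supply exactly that missing piece: after passing to the coarse NE and invoking \crefpart{producer-coarse-mixed}{nash-loads} to pin the expected loads at $\frac{\mu(\timeset)}{n}$, you use monotonicity of $g_j$ (via \cref{ell-decreasing}) and left-continuity of $\ell_j(\bar{t}_{-j},\cdot)$ (which, as you flag, follows directly from \cref{ell-lipschitz}) to show that the level set $A$ of load-maximizing pure strategies against $\bar{p}_{-j}$ is a \emph{closed} left-interval containing $\supp(p_j)$, so that the pure deviation to $t^*_j=\Max\supp(p_j)$ matches the first coordinate and strictly improves the second. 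One small cosmetic note: the left-continuity step can be avoided --- since $\expect{t_j}<t^*_j\le\sup A$ whenever $p_j$ is nondegenerate, any $t'\in\bigl(\expect{t_j},\sup A\bigr)$ already lies in $A$ (because $A$ is a left-interval) and yields the same contradiction, without needing $\sup A\in A$. Either way, your proof is a more self-contained and careful treatment than the one-liner the paper records for this proposition.
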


If $\mu$ is atomless, then
in the Nash equilibrium defined in \cref{producer-fine-nash-char,producer-fine-nash-char-special}, almost all (i.e.\ except for maybe an amount of measure zero) of the $\nicefrac{1}{n}$ of consumers (as measured by $\mu$) with numerically smallest types consume from producer~$0$, whose chosen strategy is the numerically largest one that accommodates almost all of this $\nicefrac{1}{n}$; almost all of the next~$\nicefrac{1}{n}$ of consumers consume from producer $1$, whose chosen strategy is the numerically largest one that accommodates almost all of this $\nicefrac{1}{n}$, and so forth. Essentially, the market is split between the various producers based on consumer types, and each producer chooses the numerically largest strategy that accommodates almost all of its slice of the market, seemingly making no attempt to attract any other consumers. We conclude the static analysis of \finegame\ by formalizing these results.

\begin{theorem}[Market Split]\label{producer-fine-market-allocation}
Let $t_0\le\cdots\le t_{n-1} \in \timeset$ s.t.\ $\bar{t}$
constitutes a Nash equilibrium in \finegame,
and let $s$ be a mixed-consumption Nash equilibrium in the induced consumer game $(\mu;\bar{t})$.
If $\mu$ is atomless, then for every $j\in\prods$,\ \ $s_j(d)=1$ for almost all (w.r.t.\ $\mu$) consumer types
$d\in\timeset$ s.t.\ $\mu\bigl([0,d)\bigr)\in\bigl(\frac{j}{n}\cdot\mu(\timeset),\frac{j+1}{n}\cdot\mu(\timeset)\bigr)$.
\end{theorem}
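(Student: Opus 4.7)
The plan is to combine \cref{producer-fine-nash-char} with the indifference result \cref{indifference-producers} to reduce the consumer-NE condition to pure feasibility, then propagate information inductively from the low-type slice upward. Write $F(d)\eqdef\mu\bigl([0,d)\bigr)$; atomlessness of $\mu$ makes $F$ continuous on $\timeset$ with $F(0)=0$ and $F(1)=\mu(\timeset)$. By \cref{producer-fine-nash-char} the sorted equilibrium strategies satisfy $t_j=\Max\bigl\{t\in\timeset\mid F(t)\le\frac{j}{n}\mu(\timeset)\bigr\}$ and every producer has load $\frac{\mu(\timeset)}{n}$; by \cref{indifference-producers} the same loads hold for $s$, i.e.\ $\int_\timeset s_j\,d\mu=\frac{\mu(\timeset)}{n}$ for every $j$. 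Because all producer loads coincide, the load-comparison requirement in the definition of a mixed-consumption NE is automatically satisfied, leaving only $s(d)\in\Delta^{S_d}$ together with $\noconsumption\notin\supp\bigl(s(d)\bigr)$ whenever $S_d\ne\{\noconsumption\}$.

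Next I would record two geometric ingredients. First, continuity of $F$ yields $\mu\bigl(\{d\mid F(d)\in(a,b)\}\bigr)=b-a$ for every $0\le a\le b\le\mu(\timeset)$, so the slice $A_j\eqdef\{d\mid F(d)\in(\frac{j}{n}\mu(\timeset),\frac{j+1}{n}\mu(\timeset))\}$ has $\mu$-measure $\frac{\mu(\timeset)}{n}$. Second, whenever $F(d)<\frac{j+1}{n}\mu(\timeset)$ (which forces $d<1$, since $F(1)=\mu(\timeset)$), continuity of $F$ at $d$ supplies some $\epsilon>0$ with $F(d+\epsilon)<\frac{j+1}{n}\mu(\timeset)$, so $d+\epsilon$ witnesses $t_{j+1}\ge d+\epsilon>d$; hence for every $d\in A_j$ one has $S_d\cap\prods\subseteq\{0,\ldots,j\}$.

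I would now induct on $j$. In the base $j=0$: every $d\in A_0$ has $S_d\cap\prods\subseteq\{0\}$ by the above, while $t_0\le d$ (since $F(t_0)=0<F(d)$) gives $S_d\ne\{\noconsumption\}$, so $s(d)\in\Delta^{S_d}$ together with $\noconsumption\notin\supp\bigl(s(d)\bigr)$ forces $s_0(d)=1$. For the step, assume the claim for all $k<j$; each such $s_k$ then equals $1$ $\mu$-a.e.\ on $A_k$, whence $\int_{A_k}s_k\,d\mu=\mu(A_k)=\frac{\mu(\timeset)}{n}=\int_\timeset s_k\,d\mu$ and nonnegativity of $s_k$ forces $s_k=0$ $\mu$-a.e.\ off $A_k$; in particular $s_k(d)=0$ for $\mu$-a.a.\ $d\in A_j$. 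Combining this with $s_{k'}(d)=0$ for every $k'>j$ (since then $k'\notin S_d$) and $s_{\noconsumption}(d)=0$ (from the NE condition, as $S_d\ne\{\noconsumption\}$), the distribution $s(d)$ is concentrated on $\{j\}$ for $\mu$-a.a.\ $d\in A_j$, giving $s_j(d)=1$.

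The main obstacle, and the one point demanding real care, is the bookkeeping of the ``almost all'' quantifiers through the induction: the inductive-hypothesis pointwise-a.e.\ claim on $A_k$ must be upgraded, via the integral load equality, to a global $\mu$-a.e.\ vanishing of $s_k$ off $A_k$, which is what then enables restricting to $A_j$ in the next step. Beyond this, atomlessness of $\mu$ enters only through continuity of $F$ and the attendant level-set measure formula, both essentially standard.
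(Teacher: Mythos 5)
Your proof is correct and follows essentially the same route as the paper's: both proofs invoke \cref{producer-fine-nash-char} to pin the loads at $\frac{\mu(\timeset)}{n}$, both argue via the index bound on $S_d$ that mass cannot flow from a higher slice to a higher-indexed producer, and both run the same induction in which the load equality is used to ``upgrade'' the inductive pointwise conclusion on slice $k$ to global $\mu$-a.e.\ vanishing of $s_k$ off that slice. The only cosmetic difference is that you work directly with the level sets $A_j=F^{-1}\bigl((\frac{j}{n}\mu(\timeset),\frac{j+1}{n}\mu(\timeset))\bigr)$, whereas the paper first observes $A_j\subseteq[t_j,t_{j+1})$ and then inducts over the intervals $[t_j,t_{j+1})$; in the atomless case these sets have the same measure, so the two bookkeepings are equivalent.
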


\begin{remark}[Prima Facie Market Allocation]\label{producer-fine-strategy-by-allocation}
By \cref{producer-fine-nash-char}, if $\mu$ is atomless, then the $j$\tth\ sorted Nash-equilibrium strategy, $t_j$, is the numerically largest strategy (i.e.\ worst QoS)  acceptable by almost all consumer types $d\in\timeset$ s.t.\ $\mu\bigl([0,d)\bigr)\in\bigl(\frac{j}{n}\cdot\mu(\timeset),\frac{j+1}{n}\cdot\mu(\timeset)\bigr)$.
\end{remark}

\subsubsection{Dynamics}\label{producers-fine-dynamics}

We define weakly-/$\delta$-better/best-response dynamics in \finegame\ as in \cref{dynamics}, only with best responses defined w.r.t.\ fine preferences. In particular, the definition of improvement by at least $\delta$ remains
unchanged (i.e.\ it is defined solely w.r.t.\ the load). The analogue of the last part of \cref{coarse-better-vs-best} is therefore:

\begin{remark}\label{finebetter-vs-best}
In \finegame, a weakly-better-response dynamic is a best-response dynamic iff it is a $\delta$-better-response dynamic for some (equivalently, for all) $\delta>\mu(\timeset)$.
\end{remark}

We start by noting that best responses always exist --- an observation that for general $\mu$ is considerably less trivial w.r.t.\ fine preferences than w.r.t.\ coarse ones.
\begin{proposition}[A Unique Best Response Always Exists]\label{fine-unique-best-response}
Let $j\in\prods$ and let $\bar{t}_{-j}\in\timeset^{\prods\setminus\{j\}}$. A unique best response (by $j$) to $\bar{t}_{-j}$ exists in \finegame.
\end{proposition}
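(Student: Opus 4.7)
The plan is to describe the unique best response constructively, building on the structural properties of $\ell_j(\bar{t}_{-j},\cdot)$ established in \cref{ell-analysis}. By definition of fine preferences, a best response by $j$ to $\bar{t}_{-j}$ is any $t_j \in \timeset$ that lexicographically maximizes the pair $\bigl(\ell_j(\bar{t}_{-j},t_j),\,t_j\bigr)$ --- that is, first maximizes the load $\ell_j(\bar{t}_{-j},\cdot)$ on $\timeset$, and among the resulting load-maximizers is numerically largest. It thus suffices to show that the load-maximizer set $A \eqdef \arg\Max_{t\in\timeset}\ell_j(\bar{t}_{-j},t)$ has a (necessarily unique) maximum element $t^*$, which will then be the unique best response.

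First I would invoke the properties of $\ell_j$ proved in \cref{ell-analysis}: that $\ell_j(\bar{t}_{-j},\cdot)$ is nonincreasing in $t_j$, and Lipschitz (with constant $1$ w.r.t.\ $\mu$) in $t_j$, and hence in particular left-continuous in $t_j$. (Note that right-discontinuity can genuinely occur at atoms of $\mu$, where raising $t_j$ past an atom abruptly drops $j$'s potential subscriber pool by the mass of that atom; left-continuity is all that the argument below will need.) These immediately give that $M \eqdef \Max_{t\in\timeset}\ell_j(\bar{t}_{-j},t) = \ell_j(\bar{t}_{-j},0)$ is attained, so $A$ is nonempty.

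The key step is then to conclude $A = [0,t^*]$ for $t^* \eqdef \sup A \in \timeset$. Monotonicity makes $A$ downward-closed: for any $t\in A$ and $t' \in [0,t]$, the inequalities $M \ge \ell_j(\bar{t}_{-j},t') \ge \ell_j(\bar{t}_{-j},t) = M$ force $t' \in A$. Left-continuity then lands the supremum itself in $A$: picking any sequence $(t_i)\subseteq A$ increasing to $t^*$, one has $\ell_j(\bar{t}_{-j},t^*) = \lim_i \ell_j(\bar{t}_{-j},t_i) = M$, so $t^*\in A$. Therefore $A = [0,t^*]$, and by the tie-breaking rule $t^*$ is the unique best response under fine preferences.

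Modulo the appeal to \cref{ell-analysis}, this amounts to the elementary observation that the argmax of a nonincreasing left-continuous function on $[0,1]$ is a closed interval anchored at $0$. The contrast with the coarse case is instructive: in \coarsegame, every $t_j = 0$ is trivially a best response (\cref{coarse-best-response}), whereas the fine tie-breaking pins down a single $t^*$, whose existence genuinely hinges on the left-continuity of $\ell_j(\bar{t}_{-j},\cdot)$ --- the only subtle ingredient of the argument, and the one place where generality of $\mu$ (in particular the possible presence of atoms) has to be handled with care.
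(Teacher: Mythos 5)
Your proof is correct and follows essentially the same route as the paper's first (nonconstructive) proof of \cref{fine-unique-best-response}: both arguments start from the observation that $0$ maximizes $\ell_j(\bar{t}_{-j},\cdot)$, use \cref{ell-decreasing} to conclude the argmax set is downward-closed, and use \cref{ell-lipschitz} to push the supremum of that set into it. The only stylistic difference is that you work with $t^* = \sup A$ directly via left-continuity, whereas the paper parametrizes by $\mu\bigl([0,t)\bigr)$, sets $m=\sup S$, and invokes the separately-proved \cref{max-bounded-measure-strategy-attained} (itself a continuity-from-below argument) to attain the top endpoint; the paper also offers a second, longer, constructive proof exhibiting $m$ explicitly, which your argument does not attempt.
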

We give two proofs for \cref{fine-unique-best-response}: the first ---
quite-concise, and the second, while requiring more involved arguments, is constructive in the sense that in
contrast to the first, it presents the
best response in the form $\Max\{t\in\timeset\mid\mu\bigl([0,t)\bigr)\le m\}$, for $m$ that can be \emph{explicitly} calculated.

As in \finegame\ no producer is ever indifferent between two strategies, all weakly-better-response dynamics in this game are lazy;
therefore, if such a dynamic reaches a Nash equilibrium, it remains constant from that point on.
We also note that \cref{coarse-bad-response} is also an \lcnamecref{coarse-bad-response} of a lazy best-response dynamic in \finegame\ that never reaches a Nash equilibrium. Moreover,
as best responses in \finegame\ are unique, and as the strategies in each Nash equilibrium are distinct if $\mu$ is atomless, we have that no simultaneous best-response dynamic starting from a strategy profile with two or more identical strategies ever reaches a Nash equilibrium. It should be noted that this is not a boundary phenomenon; for example, if $\mu\bigl([0,t^0_j)\bigr)>0$
for all $j\in\prods$, then the best responses of all producers are identical (see \cref{must-jump-to-zero-fine} in \cref{producers-fine-proofs}). Many more such examples may be constructed. We now show that these
phenomena are all avoided by sequential dynamics.

\begin{corollary}\label{fine-response-equilibrium}
\cref{coarse-response-equilibrium,coarse-best-response-fast} hold also regarding reaching a Nash equilibrium w.r.t.\ \coarsegame\
by dynamics in the game \finegame.
\end{corollary}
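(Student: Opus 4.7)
The plan is to reduce the statement to the coarse-game theorems by observing that every best-response (resp., $\delta$-better-response) dynamic in \finegame\ is simultaneously a dynamic of the same type in \coarsegame, and then invoking \cref{coarse-response-equilibrium,coarse-best-response-fast} verbatim. Since those theorems deliver bounds on reaching a Nash equilibrium of \coarsegame, which is exactly the notion of equilibrium the corollary targets, the quantitative round-count bounds carry over unchanged.

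The key step, and the only nontrivial ingredient, is the claim that every fine best response to $\bar{t}_{-j}$ lies in $\arg\Max_{t\in\timeset}\ell_j(\bar{t}_{-j},t)$, i.e.\ is also a coarse best response. This follows directly from the definition of fine preferences: they are strictly increasing in $\ell_j$, with tie-breaking occurring only among strategies that yield equal load. Hence any fine-optimal strategy must in particular maximize $\ell_j$: otherwise some strategy with strictly greater load would be strictly preferred under both the coarse and the fine ordering, contradicting the assumed fine-optimality. The load-improvement clause in the definition of a $\delta$-better-response dynamic refers purely to $\ell_j$ and is identical in the two games; combined with the inclusion of fine best responses in coarse best responses, this yields that every fine $\delta$-better-response dynamic is also a coarse one, and likewise for best-response dynamics.

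Given these inclusions, I would finish by applying \cref{coarse-response-equilibrium} to a fine $\delta$-better-response dynamic (viewed as a coarse $\delta$-better-response dynamic) to obtain the asserted bound of $n\cdot\bigl\lceil\tfrac{\mu(\timeset)}{\delta n}\bigr\rceil$ rounds to the first coarse Nash equilibrium and $(n-1)\cdot\bigl\lceil\tfrac{\mu(\timeset)}{\delta n}\bigr\rceil$ rounds between subsequent ones, and applying \cref{coarse-best-response-fast} to a fine best-response dynamic to obtain the $n-1$ and $n-2$ round bounds. I do not anticipate any real obstacle; the whole argument is a definitional inclusion followed by a direct appeal to the already-established coarse-game theorems, and the only thing to keep straight is that all equilibria appearing in the conclusion are those of \coarsegame, as the corollary explicitly states.
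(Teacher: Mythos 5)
Your proposal is correct and follows the paper's own route: the paper's proof is exactly the one-line observation that every weakly-/$\delta$-better-/best-response dynamic w.r.t.\ fine preferences is also such a dynamic w.r.t.\ coarse preferences, after which \cref{coarse-response-equilibrium,coarse-best-response-fast} apply directly. You simply spell out the definitional inclusion in more detail, which is a reasonable amount of explicitness but adds no new content.
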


\begin{theorem}[Sequential $\delta$-Better-Response Dynamics Converge from Coarse-Preferences Nash Equilibrium]\label{fine-sequential}
If $(P_i)_{i=0}^{\infty}$ is sequential from some point, then for every $\delta>0$,
at most one round after a $\delta$-better-response dynamic in \finegame\ reaches a Nash equilibrium w.r.t.\ \coarsegame\ after that point, it reaches a Nash equilibrium w.r.t.\ \finegame,
and remains constant from that point onward.
\end{theorem}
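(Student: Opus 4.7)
My plan is to show two facts: (a) starting from any coarse NE, each sequential fine-preferences best-response step preserves the coarse-NE property while moving the mover onto one of the fine-NE slots $t_j^{\star}\eqdef\Max\bigl\{t\in\timeset\mid\mu\bigl([0,t)\bigr)\le\frac{j}{n}\cdot\mu(\timeset)\bigr\}$ of \cref{producer-fine-nash-char}; (b) the fine NE is a fixed point of fine best-response. A counting argument on the number of sorted positions already ``saturated'' at $t_j^{\star}$ will then force the dynamic to reach the fine NE within one round and to remain there.

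I first reduce the $\delta$-better-response step to a best-response step at a coarse NE: by \cref{producer-coarse-nash-loads}, every producer's load at a coarse NE equals $\mu(\timeset)/n$, and since utilities are strictly increasing in load, no unilateral deviation can strictly improve any producer's load (else the putative coarse NE would not be one). Hence for any $\delta>0$ a $\delta$-better-response step cannot achieve a $\delta$-load-improvement and so must pick the (unique, by \cref{fine-unique-best-response}) fine best response --- the strategy maximising $t$ among those achieving load exactly $\mu(\timeset)/n$.

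The key structural lemma I then prove is: at a coarse NE with sorted strategies $\tau_0\le\cdots\le\tau_{n-1}$ (which, by \cref{producer-coarse-nash-char}, satisfy $\tau_j\le t_j^{\star}$), the producer at sorted position $k_p$ best-responds to $t_k^{\star}$, where $k$ is the largest index $\ge k_p$ for which $\tau_{j+1}\le t_j^{\star}$ holds for all $k_p\le j<k$; moreover, after the move, the new sorted profile $\tau'$ is itself a coarse NE, satisfies $\tau'_j\ge\tau_j$ for every $j$ (sorted values are weakly monotone along the dynamic), and equals $t_k^{\star}$ exactly at position $k$. The proof is a case analysis, via \cref{producer-coarse-nash-char} or equivalently \cref{compute-ell}, of which $t$ keep the profile coarse NE when $t$ replaces the mover's removed strategy: inserting $t$ at sorted position $k$ is feasible exactly when the chain $\tau_{j+1}\le t_j^{\star}$ holds for $k_p\le j<k$ and $t\le t_k^{\star}$, so the maximum of $t$ is attained at the maximum eligible $k$ with $t=t_k^{\star}$ (the feasible range at smaller $k$ is bounded by some $\tau_{j'}\le t_{j'}^{\star}\le t_k^{\star}$).

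Armed with the lemma, I finish via a counting argument on the set $F\eqdef\{j\mid\tau_j=t_j^{\star}\}$: every sequential step either leaves the profile unchanged --- precisely when $k=k_p$ and $\tau_{k_p}=t_{k_p}^{\star}$ already --- or strictly enlarges $F$ by at least one index (using the chain of inequalities defining the maximum $k$). Since the round visits each producer at least once and any producer at an unsaturated sorted position is forced into a state-changing move, after at most one round $F=\prods$, i.e., the sorted profile is $(t_0^{\star},\ldots,t_{n-1}^{\star})$ --- the fine NE, by \cref{producer-fine-nash-char}. Constancy from that point onwards is immediate from \cref{fine-unique-best-response}, since at the fine NE each producer's unique fine best response is its current strategy. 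The main obstacle will be the structural lemma and, inside the counting step, the accounting when a move with $k>k_p$ simultaneously shifts the values at several sorted positions; the chain $\tau_{j+1}\le t_j^{\star}$ is the central algebraic tool for both.
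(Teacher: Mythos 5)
Your overall plan has the right pieces at the boundary---the reduction from $\delta$-better-response to fine best-response at a coarse NE is exactly right (constant loads make any $\delta$-improvement impossible, so by \cref{fine-unique-best-response} every move is the unique fine best response), and the observation that the fine NE is a fixed point is correct---but the middle of your argument is both heavier than necessary and, at one point, genuinely incomplete.

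The paper's own proof avoids all the sorted-position bookkeeping. It first invokes \cref{coarse-sequential} to conclude that every $\bar t^i$ remains a coarse NE, then \cref{producer-coarse-nash-loads} to conclude loads are frozen at $\mu(\timeset)/n$; together these give (i) the dynamic is a fine best-response dynamic and (ii) each sequence $(t^i_j)_i$ is monotone nondecreasing (since the current strategy is already load-maximizing and the fine best response is the \emph{largest} load-maximizer). It then fixes a producer $k$ with first move at $\tilde\imath$ and any later move at $i>\tilde\imath$, and compares \emph{coordinatewise}: $\bar t^{i+1}\ge(\bar t^{\tilde\imath+1}_{-k},t^{i+1}_k)$, and since coordinatewise-smaller profiles inherit the sorted CDF bound of \cref{producer-coarse-nash-char}, the latter is also a coarse NE, so $t^{i+1}_k$ is load-maximizing against $\bar t^{\tilde\imath+1}_{-k}$; but $t^{\tilde\imath+1}_k$ was the fine best response there, hence $t^{i+1}_k\le t^{\tilde\imath+1}_k$. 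With monotonicity, $t^i_k$ is constant after $\tilde\imath$. So every producer freezes at its \emph{first} move, and after one round the profile is a fixed point of the (lazy) fine dynamic, i.e.\ a fine NE. This gives ``one round'' directly with no counting.

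Your structural lemma is essentially correct (it is a sorted-index reformulation of the paper's \cref{fine-best-response-to-coarse-nash}, which the paper derives more cleanly from \cref{permute-to-canonical-form,canonical-form-deviation,coarse-sequential}), but you have not nailed down the tie-breaking conventions needed to make ``the producer at sorted position $k_p$'' well defined when strategies coincide, nor the feasibility constraint $\tau_k\le t\le\tau_{k+1}$ that pins $t$ to position $k$---you tacitly use only $t\le t_k^\star$. The more serious gap is the final counting step: ``each non-trivial step strictly enlarges $F$'' plus ``each producer moves once per round'' does \emph{not} by itself give $F=\prods$ after one round, because some steps may be trivial. Your sentence ``any producer at an unsaturated sorted position is forced into a state-changing move'' is true but is not enough: the schedule is adversarial, and by the time a given producer moves, the profile (and hence the sorted-position assignment) may have changed. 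What closes the gap is an argument that a producer's sorted position is \emph{nonincreasing} once it has made its move in the dynamic (because every other producer only raises its strategy, so producers can only be passed from below), so a producer that lands at a saturated position $k(p)$ stays at a position $\le k(p)$ with value exactly $t_{k(p)}^\star$, which forces that position to be saturated too. But this extra observation is in substance the paper's monotonicity-plus-comparison argument restated through the sorting lens; once you add it, you have essentially reproduced the paper's proof in a clumsier coordinate system. As written, the counting step is a genuine gap.
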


We hence formally conclude the results captured informally in \cref{intro-producer-dynamics}:

\begin{corollary}[Sequential $\delta$-Better-Response Dynamics Converge]\label{fine-sequential-cor}
For every $\delta>0$, every sequential $\delta$-better-response dynamic in \finegame\ reaches a Nash equilibrium in a finite number of steps,
and remains constant from that point onward.
\end{corollary}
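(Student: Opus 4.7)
The plan is to assemble \cref{fine-response-equilibrium} and \cref{fine-sequential} into a three-stage chain, with \cref{fine-sequential} doing essentially all of the substantive work; the present corollary is then a finite-time repackaging, whose only delicate point is to check that ``reaches a Nash equilibrium'' can be upgraded from ``in finitely many rounds'' to ``in finitely many steps and remains there''.

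First I would invoke \cref{fine-response-equilibrium}, which lifts the conclusion of \cref{coarse-response-equilibrium} from \coarsegame\ to \finegame: every $\delta$-better-response dynamic in \finegame\ reaches some strategy profile that is a Nash equilibrium w.r.t.\ \coarsegame\ within at most $n\cdot\bigl\lceil\mu(\timeset)/(\delta n)\bigr\rceil$ rounds. Since the dynamic is sequential by hypothesis, every round comprises only finitely many single-player steps, so this first phase terminates in a finite number of steps.

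Next, because the schedule is (in particular) sequential from the start, \cref{fine-sequential} applies at this moment: within at most one additional round the dynamic reaches a Nash equilibrium w.r.t.\ \finegame, and \cref{fine-sequential} further guarantees that it remains constant from that point onward. This already delivers both claims of the corollary, so I would stop here modulo a sanity check.

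For completeness the constancy can also be re-derived directly from \cref{fine-unique-best-response}: at a fine Nash equilibrium $\bar{t}$, the current strategy $t_j$ of each producer $j$ is the unique best response (in \finegame) to $\bar{t}_{-j}$, so any move by $j$ in a $\delta$-better-response dynamic must be either (i) a best response --- which by uniqueness equals $t_j$ itself --- or (ii) a strategy strictly improving $j$'s load by at least $\delta$; option (ii) is ruled out because fine preferences refine coarse ones first by load, and a fine best response in particular maximizes $j$'s load, so $t_j$ already attains the maximum possible load against $\bar{t}_{-j}$. Hence no producer ever moves after $\bar{t}$ is reached. There is no real obstacle to overcome here; the entire content sits in \cref{fine-sequential}, and the hard part was already absorbed into it.
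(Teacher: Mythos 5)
Your proposal is correct and follows exactly the paper's route: combine \cref{fine-response-equilibrium} (lifting \cref{coarse-response-equilibrium}) to reach a Nash equilibrium of \coarsegame\ in finitely many rounds (hence finitely many steps, since the schedule is sequential), and then apply \cref{fine-sequential} to land in a Nash equilibrium of \finegame\ and remain constant. The supplementary re-derivation of constancy from \cref{fine-unique-best-response} is a fine sanity check but not needed.
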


\begin{corollary}[Sequential Best-Response Dynamics Converge Fast]\label{fine-best-response-fast-cor}
Every sequential or lazy best-response dynamic in \finegame\ reaches a Nash equilibrium in at most $n$ rounds,
and never visits a nonequilibrium after that point.
\end{corollary}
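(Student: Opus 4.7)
The plan is to combine two previously established ingredients: the convergence of best-response dynamics in \finegame\ to a \coarsegame-equilibrium (\cref{fine-response-equilibrium}, which lifts \cref{coarse-best-response-fast}), and the ``one-round promotion'' from a \coarsegame-equilibrium to a \finegame-equilibrium provided by \cref{fine-sequential}. A preliminary observation greatly simplifies the setting: as noted in the paragraph preceding \cref{fine-response-equilibrium}, no producer in \finegame\ is ever indifferent between two distinct strategies, so every weakly-better-response dynamic there is automatically lazy. Hence the ``or lazy'' qualifier in the \lcnamecref{fine-best-response-fast-cor} is subsumed by the sequential case, and laziness together with uniqueness of best responses (\cref{fine-unique-best-response}) immediately implies that once a \finegame-equilibrium is reached the dynamic remains constant, so the ``never visits a nonequilibrium after that point'' clause follows for free.

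For the $n$-round bound I would first observe that every best response in \finegame\ also maximizes the load (the fine preferences merely tie-break among load-maximizers in favour of larger $t_j$), so a best-response dynamic in \finegame\ is \emph{a fortiori} a best-response dynamic in \coarsegame. By \cref{fine-response-equilibrium}---which is precisely the statement that \cref{coarse-best-response-fast} applies to dynamics in \finegame\ for the purpose of reaching a \coarsegame-equilibrium---the dynamic reaches a Nash equilibrium w.r.t.\ \coarsegame\ in at most $n-1$ rounds. I would then apply \cref{fine-sequential} starting from that point: by \cref{finebetter-vs-best} a best-response dynamic qualifies as a $\delta$-better-response dynamic for $\delta=\mu(\timeset)$, and the sequentiality of the schedule throughout trivially satisfies the ``sequential from some point'' hypothesis from the \coarsegame-equilibrium onward. \cref{fine-sequential} then supplies one additional round after which a \finegame-equilibrium is reached and the play stabilizes. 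Combining the two bounds gives at most $(n-1)+1=n$ rounds in total.

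I expect no substantive obstacle: the argument is essentially a bookkeeping composition of \cref{fine-response-equilibrium} with \cref{fine-sequential}, packaged with the automatic-laziness remark. The only subtleties are the straightforward identifications that (i) a best-response dynamic in \finegame\ is also a best-response dynamic in \coarsegame, and (ii) it is a $\delta$-better-response dynamic for $\delta=\mu(\timeset)$; both are immediate from the definitions and from \cref{finebetter-vs-best}.
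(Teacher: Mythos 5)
Your proof is correct and follows the paper's own approach exactly: compose \cref{fine-response-equilibrium} (i.e.\ \cref{coarse-best-response-fast} applied within \finegame) to reach a \coarsegame-Nash equilibrium in at most $n-1$ rounds, then invoke \cref{fine-sequential} (noting that a best-response dynamic is trivially a $\delta$-better-response dynamic, so the hypotheses are met) to reach a \finegame-Nash equilibrium in one additional round and remain there. One logical quibble with your preliminary remark: since every best-response dynamic in \finegame\ is lazy, the clause ``sequential or lazy'' actually describes \emph{all} best-response dynamics, not merely the sequential ones --- so it is the sequential case that is contained in the lazy case, not the other way around. As the paper itself notes, \cref{coarse-bad-response} is a lazy (nonsequential) best-response dynamic in \finegame\ that never reaches a \finegame-Nash equilibrium; hence the convergence bound genuinely requires sequentiality, and both your argument and the paper's one-line proof establish it only under that hypothesis, with laziness contributing only the ``remains at equilibrium'' clause.
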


We conjecture than an even-tighter bound on convergence time than in \cref{fine-response-equilibrium,fine-best-response-fast-cor}
is attainable.
\cref{fine-sequential-cor,fine-best-response-fast-cor} show that the \emph{prima facie} market allocation (collusive split of the market) among the various producers shown in \cref{producer-fine-market-allocation,producer-fine-strategy-by-allocation} to be exhibited in every Nash equilibrium in \finegame\ arises as the unique possible outcome,
not as a result of anticompetitive practices, but rather as a result of noncooperative dynamics, each producer only looking to myopically maximize its preferences at every step;
as the best response to any strategy profile is unique, no signalling or any other collusive or cooperative ``trick'' whatsoever is used in order to reach and maintain this market split.

\section{Heterogeneous Products}\label{heterogeneous}

We have so far (in \cref{consumers,producers}) assumed that each consumer wishes to consume from a producer with least load. More generally, however, as
in the introduction, we may imagine that
some ISPs have different total bandwidth than others, while some other ISPs may purchase more total bandwidth as their subscriber pool grows. In such a scenario,
in order to surf with greatest speed, each consumer would no longer like to consume from a producer with least $\ell_j$ (i.e.\ with as few subscribers as possible),
but would rather consume from a producer with least~$f_j(\ell_j)$, where $f_j$ is an increasing continuous function for every $j\in\prods$, possibly differing between producers (e.g.\ $\nicefrac{\ell_j}{b_j}$, where $b_j$ is the total bandwidth of ISP $j$).
The results of \cref{consumers,producers} lend to generalization also to such a scenario via similar methods, with only quantitative rather than qualitative changes (the results regarding $\delta$-better-response dynamics require also that the functions $f_j$ be Lipschitz);
notably, the unique market-share division in both fine- and coarse-preferences Nash equilibria among producers is generally no longer of $\nicefrac{1}{n}$ of the market to each of the producers. E.g.\ \cref{producer-coarse-dominant,producer-coarse-nash-loads,producer-coarse-nash-char} thus become:

\begin{theorem}[Heterogeneous Products --- Coarse Preferences]\label{heterogeneous-coarse}
There exist amounts $\tilde{\ell}_0,\tilde{\ell}_1\ldots,\tilde{\ell}_{n-1}\in \bigl[0,\mu(\timeset)\bigr]$ (for homogeneous products, $\tilde{\ell}_j=\frac{\mu(\timeset)}{n}$ for
every $j\in\prods$), s.t.\ all of the following hold.
\begin{parts}
\item\thmitemtitle{Dominant Strategies}
Each dominant strategy in \coarsegame\ (the characterization of such strategies is unchanged from that given in the first part of \cref{producer-coarse-dominant}), when played by a producer $j\in\prods$,
guarantees a load of at least $\tilde{\ell}_j$ on this producer.
\item\thmitemtitle{Nash Equilibrium Loads}
A pure-strategy profile $\bar{t} \in \timeset^{\prods}$ constitutes a Nash equilibrium in \coarsegame\ iff $\ell_j(\bar{t})=\tilde{\ell}_j$ for every $j\in\prods$.
\item\thmitemtitle{Nash Equilibrium Characterization}
Let $\bar{t}$ be a pure-strategy profile and let $\pi\in\prods!$ be a permutation s.t.\ $t_{\pi(0)}\le t_{\pi(1)} \le\cdots\le t_{\pi(n-1)}$.
$\bar{t}$ constitutes a Nash equilibrium in \coarsegame\ iff $\mu\bigl([0,t_{\pi(j)})\bigr)\le\sum_{k=0}^{j-1}\tilde{\ell}_{\pi(k)}$ for every $j\in\prods$.
\end{parts}
\end{theorem}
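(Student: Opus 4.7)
The plan is to mirror the proofs of \cref{producer-coarse-dominant,producer-coarse-nash-loads,producer-coarse-nash-char}, with $\mu(\timeset)/n$ replaced by the heterogeneous amounts $\tilde{\ell}_j$, which I \emph{define} as the equilibrium load on producer $j$ in the consumer game $(\mu;0,\ldots,0)$ where every producer plays $0\in\timeset$. The heterogeneous analogues of \cref{consumer-symmetric-nash-exists,indifference-producers} (whose proofs carry over with $\ell_k$ replaced by $f_k(\ell_k)$ in the equilibrium condition) render the $\tilde{\ell}_j$ well-defined. Since $j\in S_d$ for every $d$ when every producer plays $0$, every consumer consumes, so $\sum_j\tilde{\ell}_j=\mu(\timeset)$; writing $c$ for the common equilibrium ``level'', we have $f_j(\tilde{\ell}_j)=c$ for each used producer and $f_k(0)\ge c$ for each unused one.

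The first key step is the load guarantee in part~(1); the characterization of dominant strategies as those $t$ with $\mu([0,t))=0$ carries over from \cref{producer-coarse-dominant} unchanged, since its proof relies only on ``not wastefully skipping consumers''. For the guarantee, fix such a $t$, any profile $\bar{t}_{-j}$ for the other producers, and let $c'\eqdef f_j(\ell_j)$ in the induced consumer equilibrium. Since $\mu([0,t))=0$, producer $j$ lies in $S_d$ for $\mu$-almost every $d$, so $\mu$-almost all consumers consume (giving $\sum_k\ell_k=\mu(\timeset)$) and the equilibrium condition forces $f_k(\ell_k)\le c'$ for every $k$ used by some consumer. Suppose toward contradiction that $c'<c$: for each used $k$, $f_k(0)\le c'$ necessarily (else $f_k(\ell_k)\ge f_k(0)>c'$ would contradict $f_k(\ell_k)\le c'$), so $\ell_k=f_k^{-1}(c')<f_k^{-1}(c)=\tilde{\ell}_k$; for each unused $k$, $\ell_k=0\le\tilde{\ell}_k$. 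Summing yields $\sum_k\ell_k<\sum_k\tilde{\ell}_k=\mu(\timeset)$, a contradiction. Hence $c'\ge c$ and $\ell_j\ge\tilde{\ell}_j$.

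For the forward direction of part~(2): in any Nash equilibrium $\bar{t}$, each $j$ can secure $\tilde{\ell}_j$ by deviating to $0$, so $\ell_j(\bar{t})\ge\tilde{\ell}_j$; combined with $\sum_j\ell_j(\bar{t})\le\mu(\timeset)=\sum_j\tilde{\ell}_j$, this gives $\ell_j(\bar{t})=\tilde{\ell}_j$ for every $j$. For part~(3), I would sort the producers so that $t_{\pi(0)}\le\cdots\le t_{\pi(n-1)}$ and reprise the hydraulic construction of \cref{vessels}, now with vessel $k$ shaped so that volume $\ell$ therein occupies height $f_k(\ell)$: by induction on $j$, the condition $\mu([0,t_{\pi(j)}))\le\sum_{h<j}\tilde{\ell}_{\pi(h)}$ for all $j$ is exactly what is needed for the hydraulic equilibrium to settle with every used vessel at level $c$, equivalently for the loads to be $\tilde{\ell}_{\pi(h)}$ for all $h$. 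The reverse direction of part~(2) then follows by combining this hydraulic equivalence with the monotonicity properties of $\ell_j$ (see \cref{ell-analysis}): at a profile with loads $\tilde{\ell}$, no single-producer deviation can yield load exceeding $\tilde{\ell}_j$.

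The main obstacle is the heterogeneous hydraulic analysis underlying part~(3): whereas in the homogeneous case ``equal level'' translates directly to ``equal load'' across vessels, here each vessel has a distinct shape determined by its $f_k$, and one must verify carefully that pouring the block $\mu([t_{\pi(j)},t_{\pi(j+1)}))$ into vessel $\pi(j)$ and allowing it to spill leftward yields \emph{exactly} level $c$ across vessels $\pi(0),\ldots,\pi(j)$, precisely under the stated capacity inequality.
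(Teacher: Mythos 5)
The paper provides no detailed proof of this theorem; it asserts it as the generalization of \cref{producer-coarse-dominant,producer-coarse-nash-loads,producer-coarse-nash-char} obtainable by ``similar methods,'' with the definition of the amounts $\tilde{\ell}_j$ (the equilibrium loads when every producer plays $0\in\timeset$) given only in the caption of \cref{vessels-odd-shapes}. Your proposal recovers exactly this definition and mirrors the homogeneous arguments with $\tilde{\ell}_j$ in place of $\mu(\timeset)/n$ and $f_j(\ell_j)$ in place of $\ell_j$ as the equilibrium ``level,'' correctly flagging the heterogeneous hydraulic analysis underlying part~(3) as the place needing care --- so it takes essentially the same route as the paper sketches (the only minor slip is that $\ell_k=f_k^{-1}(c')$ in part~(1) should be $\ell_k\le f_k^{-1}(c')$, which suffices for the contradiction).
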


Consequently, \cref{producer-fine-nash-char,producer-fine-nash-char-special} become:
\begin{theorem}[Heterogeneous Products --- Fine Preferences]\label{heterogeneous-fine}\leavevmode
\begin{parts}
\item\thmitemtitle{$\exists!$ Nash Equilibrium, and it is Super-Strong}
Let $\pi\in\prods!$ be a permutation s.t.\ there do not exist $j<k\in\prods$ s.t.\ $\tilde{\ell}_{\pi(j)}=0$ while $\tilde{\ell}_{\pi(k)}\ne0$.
A unique pure-strategy Nash equilibrium s.t.\ $t_{\pi(0)}\le\cdots\le t_{\pi(n-1)}$ exists in \finegame. The strategies of this equilibrium
are given by $t_{\pi(j)}\eqdef\Max\bigl\{t \in \timeset \mid \mu\bigl([0,t)\bigr)\le\sum_{k=0}^{j-1}\tilde{\ell}_{\pi(k)}\bigr\}$ for every $j\in\prods$. The load on each producer $j\in\prods$ in this Nash equilibrium is $\tilde{\ell}_j$. Furthermore, this equilibrium is super-strong. No other Nash equilibria exist in \finegame. 
\item\thmitemtitle{Nash Equilibrium Characterization --- Special Case}
If the CDF of $\mu$ is continuous (i.e.\ $\mu$ is atomless) and strictly increasing, then for every $j\in\prods$, in the Nash equilibrium corresponding to a permutation $\pi\in\prods!$
with the above properties,
$t_{\pi(j)}$ is the unique strategy satisfying $\mu\bigl([0,t_{\pi(j)})\bigr)=\sum_{k=0}^{j-1}\tilde{\ell}_{\pi(k)}$.
\end{parts}
\end{theorem}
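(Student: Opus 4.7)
The plan is to derive \cref{heterogeneous-fine} as a fine-preferences refinement of \cref{heterogeneous-coarse}, completely in parallel with the way \cref{producer-fine-nash-char,producer-fine-nash-char-special} refine the homogeneous coarse-preferences results of \cref{producer-coarse-nash-loads,producer-coarse-nash-char}. The amounts $\tilde{\ell}_j$ are precisely those delivered by \cref{heterogeneous-coarse}, so I take these as given throughout; the task is to identify, for each admissible permutation $\pi$, the unique strategy profile that is sorted by $\pi$ and satisfies the fine-preferences equilibrium condition, and then to rule out all other candidates (mixed or otherwise).

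First I would observe that any pure Nash equilibrium $\bar{t}$ in \finegame\ is in particular a Nash equilibrium in \coarsegame, because every strict load-improvement deviation in \coarsegame\ is also a strict improvement under fine preferences. Thus by \cref{heterogeneous-coarse}, $\ell_j(\bar{t})=\tilde{\ell}_j$ and, letting $\pi$ be a sorting permutation, $\mu\bigl([0,t_{\pi(j)})\bigr)\le\sum_{k<j}\tilde{\ell}_{\pi(k)}$ for every $j$. Under fine preferences, given the strategies of all other producers, producer $\pi(j)$ strictly prefers larger $t$ among strategies yielding the same load. Using the monotonicity properties of $\ell_j$ catalogued in \cref{ell-analysis} (generalized to the heterogeneous setting, where the same quasiconvexity/Lipschitz structure carries over because each $f_j$ is continuous and increasing), $\pi(j)$ can continuously increase $t_{\pi(j)}$ without changing its load up to the value $\Max\bigl\{t\in\timeset\mid\mu\bigl([0,t)\bigr)\le\sum_{k<j}\tilde{\ell}_{\pi(k)}\bigr\}$, beyond which its load strictly drops. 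This forces the formula in the theorem and simultaneously proves existence for the admissible permutations: the candidate $\bar{t}$ defined by this formula satisfies the coarse-equilibrium condition by construction, yields the loads $\tilde{\ell}_j$, and is individually optimal under fine preferences because no larger-$t$ deviation preserves the load and no smaller-$t$ deviation strictly raises it.

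Next I would address the permutation constraint. If there exist $j<k$ with $\tilde{\ell}_{\pi(j)}=0$ but $\tilde{\ell}_{\pi(k)}\ne 0$, then producer $\pi(j)$, playing the candidate strategy for position $j$, would obtain load $0$; deviating to (just below) $t_{\pi(k)}$ would, by the continuity/quasiconvexity machinery invoked above together with the fact that $f_j$ is strictly increasing, strictly improve its load. Hence no such sorted equilibrium exists, matching the hypothesis on $\pi$. Conversely, under the hypothesis on $\pi$, the strict monotonicity of each $f_j$ ensures that producers with $\tilde{\ell}_{\pi(j)}=0$ occupy the ``tail'' of the sorted profile where no profitable deviation exists. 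Uniqueness within each admissible $\pi$ is immediate from the formula. To see that there are no further (unsorted or mixed) equilibria, I would appeal to \cref{producer-fine-nash-pure} and its proof, which relies only on the one-shot better-response structure and the uniqueness of best responses (\cref{fine-unique-best-response}); both extend to the heterogeneous setting by the same quasiconvexity/continuity argument, because the only change is that each $\ell_j$ is composed with a continuous strictly increasing $f_j$, which does not affect the arg-max structure.

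For the super-strong property, I would lift the heterogeneous analogue of \cref{producer-coarse-super-strong} (which is embedded in \cref{heterogeneous-coarse}) to fine preferences: any coalitional deviation weakly improving every coalition member's load must, by that \lcnamecref{heterogeneous-coarse}, leave all loads of participating producers unchanged at $\tilde{\ell}_j$; fine preferences then demand that each participating producer's $t_j$ be at least as large as in the original equilibrium, but the maximality built into the formula forbids any strict increase, contradicting the required strict inequality for some member. The special-case refinement follows because for atomless, strictly increasing $\mu$, the CDF $t\mapsto\mu\bigl([0,t)\bigr)$ is a continuous bijection, so the $\Max$ in the formula is attained at the unique $t$ with $\mu\bigl([0,t)\bigr)=\sum_{k<j}\tilde{\ell}_{\pi(k)}$. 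I expect the main obstacle to be the careful verification that the structural lemmas about $\ell_j$ from \cref{ell-analysis} — especially quasiconvexity in other coordinates and the ``continuous re-tightening'' of $t_j$ without load loss — carry over to the composition with arbitrary increasing continuous $f_j$, since the homogeneous proofs use atomic features of the raw load function that have to be re-examined through the lens of the $f_j$'s.
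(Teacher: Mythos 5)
Your overall plan---lifting the coarse results of \cref{heterogeneous-coarse} to fine preferences in exact parallel with how \cref{producer-fine-nash-char,producer-fine-nash-char-special} are derived from \cref{producer-coarse-nash-char,producer-coarse-nash-loads}, re-examining the \cref{ell-analysis} lemmas through the $f_j$'s, and using the analogues of \cref{fine-unique-best-response,producer-fine-nash-pure} to rule out mixed and unsorted equilibria---is the approach the paper implicitly endorses; indeed the paper gives no separate proof of \cref{heterogeneous-fine}, stating only that the results of \cref{consumers,producers} ``readily generalize.'' Most of your outline is sound and at the right level of detail.

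However, your explanation of the permutation constraint contains a genuine error. You claim that if $j<k$ with $\tilde{\ell}_{\pi(j)}=0$ and $\tilde{\ell}_{\pi(k)}\ne0$, then producer $\pi(j)$, at the formula candidate, could \emph{strictly improve its load} by deviating towards $t_{\pi(k)}$. That cannot happen: the formula candidate is still a coarse-preferences Nash equilibrium by \crefpart{heterogeneous-coarse}{3} whether or not $\pi$ satisfies the admissibility hypothesis (the partial sums $\sum_{k'<j}\tilde{\ell}_{\pi(k')}$ are nondecreasing, so the formula strategies are $\pi$-sorted and satisfy $\mu\bigl([0,t_{\pi(j)})\bigr)\le\sum_{k'<j}\tilde{\ell}_{\pi(k')}$), hence no unilateral deviation can raise \emph{any} producer's load. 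In particular producer $\pi(j)$'s load is $\tilde{\ell}_{\pi(j)}=0$, which is already maximal for it. What rules out the inadmissible $\pi$ is precisely the fine-preferences tie-break you invoke elsewhere: a producer whose maximal attainable load is $0$ is indifferent (coarsely) between all strategies, so its unique fine best response to \emph{any} $\bar{t}_{-j}$ is $t=1$; since $\sum_j\tilde{\ell}_j=\mu(\timeset)$, the formula assigns $t=1$ to position $j$ exactly when all positive-load producers precede it, which is the admissibility hypothesis. Placing such a producer earlier gives $t_{\pi(j)}<1$, which fails the fine best-response requirement not because the load drops, but because a larger $t$ is available at the same (zero) load. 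The same correction applies to the converse paragraph: it is not ``strict monotonicity of $f_j$'' that forces zero-load producers to the tail, but the tie-break combined with the fact that the tail is the only place the formula produces $t=1$.
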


The remainder of the results of \cref{consumers,producers}, including those regarding dynamics, readily generalize to this scenario as well.
So, we once again have that in a Nash equilibrium in \finegame, the market is split between producers based on consumer types; if $\mu$ is atomless and
$t_{\pi(0)}\le \cdots\le t_{\pi(n-1)}$, then almost all of the $\tilde{\ell}_{\pi(0)}$ consumers with numerically smallest types consume from producer $\pi(0)$ (who chooses the largest strategy
acceptable by almost all of them, seemingly making no attempt to attract any other consumers), almost all of the next $\tilde{\ell}_{\pi(1)}$ consumers consume from producer $\pi(1)$ (who chooses the largest strategy acceptable
by almost all of them, seemingly making no attempt to attract any others),
and so forth.
See \cref{vessels-odd-shapes} for an illustration regarding the adaptation of the results from \cref{consumers} to this generalized model, and the calculation of
$\tilde{\ell}_0,\ldots,\tilde{\ell}_{n-1}$.\footnote{See \citeN{hydraulic-selection} for a formalization, as part of (as mentioned above) a significant, highly nontrivial, generalization of our treatment of only the consumer game (without the producer game) to arbitrary resource-selection games (in which the resources available to a player may be any subset of $\prods$ and not merely a ``QoS-prefix'' of $\prods$ to which the above construction is inherently tailored) and beyond.}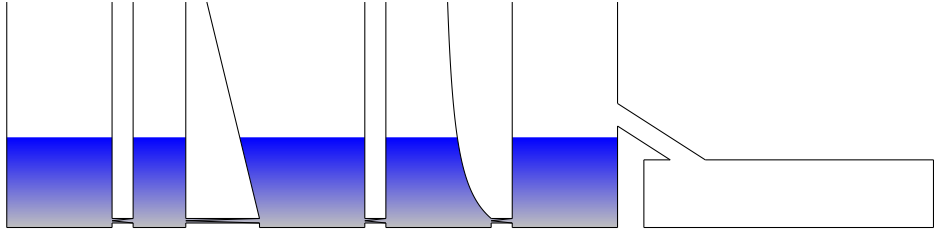
\begin{figure}%
\centering%
\begin{tikzpicture}[yscale=0.6,xscale=1.4]
\shade[top color=blue,bottom color=gray!50]
(0,0) -- (0,2) -- (1,2) -- (1,0.2) --
(1.2,0.2) -- (1.2,2) -- ({1.2+.5},2) -- ({1.2+.5},0.2) --
(2.4,0.2) -- ({2.4-.5/4.8*1.8},2) -- ({2.4+1},2) -- ({2.4+1},0.2) --
(3.6,0.2) -- (3.6,2) -- plot[variable=\t,samples=100,domain=2.8:1] ({3.6+.5+.5/\t,0.2+\t-1}) -- ({3.6+1},0.2) --
(4.8,0.2) -- (4.8,2) -- ({4.8+1},2) -- ({4.8+1},0) --
(4.8,0) -- (4.8,0.1) --
({3.6+1},0.1) -- ({3.6+1},0) -- (3.6,0) -- (3.6,0.1) --
({2.4+1},0.1) -- ({2.4+1},0) -- (2.4,0) -- (2.4,0.1) --
({1.2+.5},0.1) -- ({1.2+.5},0) -- (1.2,0) -- (1.2,0.1) --
(1,0.1) -- (1,0) -- cycle;
\draw[shift={(0*1.2,0)}] (0,0) -- (0,5);
\draw[shift={(0*1.2,0)}] (0,0) -- (1,0);
\draw[shift={(0*1.2,0)}] (1,0) -- (1,0.1);
\draw[shift={(0*1.2,0)}] (1,0.2) -- (1,5);
\draw[shift={(0*1.2,0)}] (1,0.1) -- (1.2,0.1);
\draw[shift={(0*1.2,0)}] (1.2,0.1) -- (1,0.15);
\draw[shift={(0*1.2,0)}] (1,0.15) -- (1.2,0.2);
\draw[shift={(0*1.2,0)}] (1.2,0.2) -- (1,0.2);
\draw[shift={(1*1.2,0)}] (0,0) -- (0,0.1);
\draw[shift={(1*1.2,0)}] (0,0.2) -- (0,5);
\draw[shift={(1*1.2,0)}] (0,0) -- (.5,0);
\draw[shift={(1*1.2,0)}] (.5,0) -- (.5,0.1);
\draw[shift={(1*1.2,0)}] (.5,0.2) -- (.5,5);
\draw[shift={(1*1.2,0)}] (.5,0.1) -- (1.2,0.1);
\draw[shift={(1*1.2,0)}] (1.2,0.1) -- (.5,0.15);
\draw[shift={(1*1.2,0)}] (.5,0.15) -- (1.2,0.2);
\draw[shift={(1*1.2,0)}] (1.2,0.2) -- (.5,0.2);
\draw[shift={(2*1.2,0)}] (0,0) -- (0,0.1);
\draw[shift={(2*1.2,0)}] (0,0.2) -- (-.5,5);
\draw[shift={(2*1.2,0)}] (0,0) -- (1,0);
\draw[shift={(2*1.2,0)}] (1,0) -- (1,0.1);
\draw[shift={(2*1.2,0)}] (1,0.2) -- (1,5);
\draw[shift={(2*1.2,0)}] (1,0.1) -- (1.2,0.1);
\draw[shift={(2*1.2,0)}] (1.2,0.1) -- (1,0.15);
\draw[shift={(2*1.2,0)}] (1,0.15) -- (1.2,0.2);
\draw[shift={(2*1.2,0)}] (1.2,0.2) -- (1,0.2);
\draw[shift={(3*1.2,0)}] (0,0) -- (0,0.1);
\draw[shift={(3*1.2,0)}] (0,0.2) -- (0,5);
\draw[shift={(3*1.2,0)}] (0,0) -- (1,0);
\draw[shift={(3*1.2,0)}] (1,0) -- (1,0.1);
\draw[shift={(3*1.2,0)}] plot[variable=\t,samples=100,domain=1:5.8] ({.5+.5/\t,0.2+\t-1});
\draw[shift={(3*1.2,0)}] (1,0.1) -- (1.2,0.1);
\draw[shift={(3*1.2,0)}] (1.2,0.1) -- (1,0.15);
\draw[shift={(3*1.2,0)}] (1,0.15) -- (1.2,0.2);
\draw[shift={(3*1.2,0)}] (1.2,0.2) -- (1,0.2);
\draw[shift={(4*1.2,0)}] (0,0) -- (0,0.1);
\draw[shift={(4*1.2,0)}] (0,0.2) -- (0,5);
\draw[shift={(4*1.2,0)}] (0,0) -- (1,0);
\draw[shift={(4*1.2,0)}] (1,0) -- (1,2.25);
\draw[shift={(4*1.2,0)}] (1,2.25) -- (1.5,1.5);
\draw[shift={(4*1.2,0)}] (1.5,1.5) -- (1.25,1.5);
\draw[shift={(4*1.2,0)}] (1.25,1.5) -- (1.25,0);
\draw[shift={(4*1.2,0)}] (1.25,0) -- (4,0);
\draw[shift={(4*1.2,0)}] (4,0) -- (4,1.5);
\draw[shift={(4*1.2,0)}] (4,1.5) -- ({1+2.5/3},1.5);
\draw[shift={(4*1.2,0)}] ({1+2.5/3},1.5) -- (1,2.75);
\draw[shift={(4*1.2,0)}] (1,2.75) -- (1,5);
\end{tikzpicture}%
\caption{%
A system of $5$ one-way communicating vessels, corresponding to $5$ heterogeneous ISPs (see the introduction) with the following characteristics, from left to right (i.e.\ from lowest latency/best QoS to highest latency/worst QoS):
A ``normal'' ISP, an ISP with half the total bandwidth of a ``normal'' one, an ISP whose total bandwidth somewhat increases with its number of subscribers, an ISP whose total bandwidth
somewhat decreases with its number of subscribers, and a ``normal''  ISP who buys additional bandwidth if needed, so that the bandwidth for a single subscriber never
drops below some threshold. (After the surface of the liquid in the fifth vessel reaches the tube connecting this vessel to the container on its right,
which we consider as part of the fifth vessel, any additional liquid poured into this vessel
accumulates in the container on the right; assume that this container is large enough so as to never fill up.)
We emphasize that the technical modifications to \cref{add-water} to accommodate any collection of increasing continuous functions $(f_j)_{j\in\prods}$
are straightforward and do not require defining any shapes for any vessels --- this is done purely to convey intuition. (We require that the functions be strictly increasing for simplicity,
however these results still hold if one of them is merely nondecreasing, e.g.\ as in the scenario depicted in the \lcnamecref{vessels-odd-shapes}; however, if more than one of these functions is not strictly increasing, e.g.\ if a sixth vessel identical to the fifth one is added in this \lcnamecref{vessels-odd-shapes}, then \cref{indifference-producers} may no longer hold.)
The producer-equilibrium loads $\tilde{\ell}_0,\ldots,\tilde{\ell}_{n-1}$ can be found by pouring
the entire $\mu(\timeset)$ of liquid into the rightmost vessel (i.e.\ computing the loads when each producer~$j$'s strategy is the $\tilde{\ell}_j$-guaranteeing strategy $0\in\timeset$), or, equivalently, by simply removing the
one-way valves (i.e.\ permitting liquid flow in both directions) and pouring $\mu(\timeset)$ liquid into the system
(observe that either way, if all vessels are of the same shape, then we indeed obtain $\tilde{\ell}_j=\frac{\mu(\timeset)}{n}$ for every $j\in\prods$, as in \cref{producers});
a similar ``two-way'' calculation among subsets of vessels
generalizes \cref{compute-ell} to this scenario.
}%
\label{vessels-odd-shapes}%
\end{figure}%

\section{Multiple Products and The Formation of Main Street}\label{main-street}

We conclude the body of this paper with an aesthetically appealing corollary,
obtained by extending our model to allow for multiple good types. (While we restrict ourselves in this \lcnamecref{main-street} to the case of two good types, the results below readily generalize also to the case of more than two good types.)
Consider the following alternative (non-ISP) interpretation of our model.

\begin{example}[Wine Market; QoS=Centrality of Location]\label{intro-wine}
Consider the downtown area of the fictional city of Metropolis, the wine capital of the world.
At its heart lies Metropolis Central Station. Every morning, shoppers (consumers) from throughout the Metropolis metropolitan area (and beyond) disembark the train at Metropolis Central,
at the vicinity of which many wine shops (producers) are located, and go about their wine-shopping errands. Each shopper is interested in purchasing a single bottle of wine, and is willing to walk at most $d$ minutes (a shopper-dependant real value) in each direction in order to get it.
All other wine characteristics being the same, each shopper would like the bottle of wine that she buys to be as exclusive as possible, i.e.\ she prefers to get her wine at the shop that sells the fewest bottles of wine throughout the day (so that it can be considered a ``boutique wine''), as long as it is no more than $d$ minutes away from Metropolis Central, of course.
As some wines may be known to be of superior types, are more extravagantly packaged, or have some other attractive quality, shoppers may be willing to compromise on ``exclusivity'' in favour of superior quality. Therefore, each shopper would like to minimize a wine-seller-dependent increasing function of the wine's circulation, e.g.\ shoppers may wish to maximize the quotient of quality and circulation.

Obviously, each wine seller would like to locate her store in a way that would maximize its sales volume. That being said, as real-estate prices rise the closer (in walking time) a shop is to Metropolis Central (we think of sales as indicative of daily income, and of real-estate cost as a one-time expense), each wine seller would like to place her store the
farthest possible from the station, as long as this does not hurt sales.
\end{example}

Our results from the previous \lcnamecrefs{heterogeneous} imply that in the scenario described in \cref{intro-wine}, the unique possible noncooperative outcome is once again for the market to be split between the various wine sellers based on the shoppers' types, i.e.\ each
shopper shopping at the store closest to Metropolis Central has a smaller walking-time limit than any of those shopping at the store second-closest to Metropolis Central, each of whom
in turn having a smaller walking-time limit than all of those shopping at the third-closest store, etc., and each wine seller chooses the farthest location accessible by the entirety of its
slice of the market, seemingly making no attempt to attract any other shoppers.
While this characterizes the distance of each wine shop from Metropolis Central, the direction from Metropolis Central to each such shop can be arbitrary. Not for long, though.

Suppose now that merchants from the nearby town of Smallville, the extra-extra-extra-virgin-olive-oil capital of the world, wishing to widen the visibility of their product, have started moving their stores to Downtown Metropolis as well. Now that Metropolis has become both the wine- and the extra-extra-extra-virgin-olive-oil capital of the world, each shopper arriving at Metropolis Central would like to purchase not only a bottle of wine, but also a bottle of olive oil. 
Nonetheless, the walking-time limit of each shopper does not change --- each shopper is still willing to walk at most $2d$ minutes in order to obtain both products. (This indeed introduces no change, as each shopper was previously willing to walk at most $d$ minutes \emph{in each direction}.)
As with wine, each shopper prefers to minimize a seller-dependent function of the circulation of the type of olive oil that she purchases, as long as her walking-time constraint is met.
(One may again consider e.g.\ the case in which one would like to maximize the quotient of quality to circulation, optimizing some form of tradeoff between quality and ``boutiqueness''.)
Olive-oil merchants have preferences similar to those of wine sellers.

Formally, we have $n_1\in\mathbb{N}$ producers of the first good (e.g.\ wine) and $n_2\in\mathbb{N}$
producers of the second good (e.g.\ olive oil). The strategy of each producer is a point on the plane; a pure-consumption strategy of a consumer with type $d\in\timeset$ is
a pair $(j,k)\in\prodso\times\prodst$, denoting consumption of the first good from producer $j$ of this good, and of the second good --- from producer $k$ of that good; each consumer would like to minimize
\mbox{$f_j^1(\ell_j)+f_k^2(\ell_k)$}\footnote{This sum may be replaced with any increasing continuous function of $f_j^1,f_k^2$,
e.g.\ their weighted average.} (e.g.\ the sum of the quotients of the quality and circulation for each good), subject to the constraint the that circumference of the triangle,
whose vertices are the origin (Metropolis Central Station) and the locations (strategies) of producer $j$ of good $1$ and of producer $k$ of good $2$, does not exceed $2d$ (the density of consumer types,
as given by $\mu$, remains unchanged). Each producer would like to first and foremost maximize its number of consumers, and only as a tie-breaker, maximize the norm of its strategy (i.e.\ its distance
from the origin).

Under these conditions, roughly speaking, each producer would like to be located so that visiting it would never be too much of a detour on the way from the origin
to a producer of the other good.
Indeed, we now show that the unique stablest outcome, in a precise sense, is for all shops to be placed \textbf{on the same ray} originating at Metropolis Central (with the distance of each store from Metropolis Central set as before, as if its good type were the only one on the market). 
(See \cref{main-street-proofs} for a proof, as well as a discussion regarding the necessity of the conditions below.)

\begin{theorem}[The Unique Super-Strong Equilibrium is a Main Street originating from the Origin]\label{main-street-super-strong}
Let $\tilde{\ell}^1_0,\ldots,\tilde{\ell}^1_{n_1-1}$ be the producer-equilibrium loads when only the first good is on the market (i.e.\ as defined in \cref{heterogeneous} when the only producers are the $n_1$ producers of good $1$) and let $\tilde{\ell}^2_0,\ldots,\tilde{\ell}^2_{n_2-1}$ be the producer-equilibrium loads when only the second good is on the market.
If no nonempty proper subset of the former loads and no nonempty proper subset of the latter loads have the same sum, and if $\tilde{\ell}^g_j>0$ for all $g$ and $j$, then a producer strategy profile is a super-strong equilibrium iff
the strategies of all producers of both products are \emph{on the same ray} from the origin, with distances from the origin as in \cref{heterogeneous-fine} (when computed separately for each good).
\end{theorem}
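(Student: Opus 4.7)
The plan is to handle the two directions of the biconditional separately. For sufficiency, I would start from the key geometric observation that for any two points $p^1_j$ and $p^2_k$ on a common ray from the origin, the triangle $(0, p^1_j, p^2_k)$ is degenerate and its circumference equals $2\max(|p^1_j|,|p^2_k|)$, whereas for any two points not on a common ray from the origin the circumference is strictly greater. Thus in a main-street configuration the consumer's circumference constraint reduces to $\max(t^1_j, t^2_k) \le d$, which holds iff $j$ is individually feasible for good~$1$ and $k$ is individually feasible for good~$2$. Combined with the additive separability of the consumer utility $f^1_j(\ell^1_j) + f^2_k(\ell^2_k)$, this decouples the two-good consumer game into two independent single-good games, so by \cref{heterogeneous-fine} each producer of good~$g$ receives load $\tilde{\ell}^g_j$ at distance $t^g_j$. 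To verify super-strongness I would split any coalitional deviation into its off-ray and on-ray components: any deviator moving off the ray strictly tightens its triangle constraint with every remaining on-ray producer of the other good, so it strictly loses load; the remaining on-ray deviations reduce to two independent single-good coalitional deviations, for which the super-strong guarantee of \cref{heterogeneous-fine} forbids any beneficial rearrangement.

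For necessity, let $\bar{p}$ be a super-strong equilibrium. I would establish, in sequence, that: (i)~the good-$g$ load multiset of $\bar{p}$ coincides with $\{\tilde{\ell}^g_0,\ldots,\tilde{\ell}^g_{n_g-1}\}$ for each $g$; (ii)~all producers lie on a common ray from the origin; and (iii)~the distances match the prescription. For (i), the grand coalition may collectively deviate to a main-street configuration, where by the sufficiency direction the loads are exactly $\tilde{\ell}^g_j$ and where, because the triangle inequality is least restrictive on a common ray, no alternative configuration can yield a strictly higher total load for either good. Super-strongness of $\bar{p}$ forbids this deviation from strictly benefiting any producer without hurting another, which---combined with the hypothesis $\tilde{\ell}^g_j>0$---forces the two load multisets to coincide. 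The subset-sum hypothesis then uniquely pairs each producer with its target load, ordered according to the sorted distances from the origin.

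For (ii), I would assume for contradiction the existence of producers $p^1_j$ and $p^2_k$ (of different goods) not on a common ray, and construct a coalitional deviation that moves a carefully chosen set of off-ray producers radially onto a common direction while preserving their distances from the origin. By the strict triangle-inequality gap noted above, every consumer pair internal to the moved coalition becomes strictly more feasible, while pairs among non-deviators are unaffected. Using~(i) to pin down each producer's target load and the separability of consumer utilities, this should yield a strict load gain for at least one deviator without any load loss for the others, contradicting super-strongness. Step~(iii) then follows from the uniqueness part of \cref{heterogeneous-fine} applied on the common ray. The main obstacle is the coalition-construction in step~(ii): a naive two-producer deviation can alter the feasibility of consumer pairs involving exactly one deviator and one non-deviator, potentially in an unfavorable direction. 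A workable resolution likely requires moving a larger coalition---possibly all producers of one good---onto a common direction chosen to match the most-peripheral producer of the other good, so that the consumer pools of non-deviators are left unperturbed; the subset-sum hypothesis should then be what lets one distinguish the intended matching of producers to target loads from spurious rearrangements that would otherwise leave the loads numerically indistinguishable.
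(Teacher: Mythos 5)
Your sufficiency direction is essentially the paper's: on a common ray the circumference constraint degenerates to $\max(t^1_j,t^2_k)\le d$, the two-good consumer game decouples into two independent single-good games, and the claim follows immediately from \cref{heterogeneous-fine}. The necessity direction, however, has two genuine gaps.

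Your step~(i) does not go through as stated. Super-strongness rules out a deviation only if \emph{every} deviator is weakly better off with someone strictly better off. If in $\bar p$ one producer has load strictly above its $\tilde{\ell}^g_j$ and another strictly below, the grand-coalition deviation to a main street strictly hurts the first, so super-strongness of $\bar p$ is consistent with the load multiset differing from $\{\tilde{\ell}^g_j\}$; the positivity hypothesis on $\tilde{\ell}^g_j$ does not rescue this. The paper instead uses a \emph{unilateral} observation: a producer located at the origin forms a degenerate triangle with every other-good producer and hence is acceptable to every consumer, so each producer $j$ of good $g$ can guarantee load at least $\tilde{\ell}^g_j$ by moving to the origin. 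Combined with $\sum_j\ell^g_j\le\mu(\timeset)=\sum_j\tilde{\ell}^g_j$ this gives $\ell^g_j=\tilde{\ell}^g_j$ \emph{componentwise}, immediately and without the subset-sum hypothesis, which is stronger than the multiset equality you target.

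Your step~(ii) coalitional construction is, as you yourself flag, problematic, and the patch you sketch does not resolve it: moving all producers of one good onto a fixed ray does not leave the consumer pools of the other good's non-deviators unperturbed, since feasibility binds each pair $(j,k)$ jointly, so off-ray non-deviators can lose load and block the deviation. The paper sidesteps this entirely by reversing your steps~(ii) and~(iii). Having pinned the loads, it next pins the \emph{radial distances} by a coalitional deviation to a distance-order-preserving main-street configuration (loads are preserved, so no one is harmed; if some producer's distance were strictly below the prescribed value it would strictly gain via the tie-breaking preference for larger distance). With the distances --- and hence the market split --- established, the common-ray claim follows by a \emph{direct} contradiction, not a coalitional one: take producers $j$, $k$ of different goods that are not on a common ray, with $t^1_j\le t^2_k$ and $j$ the last good-$1$ producer at distance $\le t^2_k$ (the subset-sum hypothesis and $\tilde{\ell}>0$ make the relevant distances strictly separated, so a positive-measure band of consumer types just above $t^2_k$ must, by the market split, consume from both $j$ and $k$); the strict triangle inequality gives circumference $>2t^2_k$, so types sufficiently close to $t^2_k$ cannot reach both. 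Finally, your step~(iii) as ``uniqueness from \cref{heterogeneous-fine} on the ray'' presupposes you already know where the ray is; the paper's distance-pinning step is what actually does this work, and it must come before the common-ray argument rather than after.
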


While most readers are likely to consider the formation of a main street as a fairly natural phenomenon due to its abundance in many cities,
some readers may find it somewhat less natural for this main street, as deduced in \cref{main-street-super-strong}, to originate from the city centre (e.g.\ Metropolis Central Station), rather than having the city centre in its middle.
Such readers may compare this with the structure of many old European towns, at the heart of which lies the old stone-cobbled main street, on one end of which (as opposed to at the middle of which) lies the main town church.

\section{Discussion}\label{discussion}
This paper shows, under quite general setting, that the appearance of collusion need not imply the actuality.
While we believe a main strength of our model to lie in its theoretic generality and aesthetics (its novel combination of congestion and location games, its clean results despite complex nontrivial multistage game analysis, and its (surprising) qualitative lesson), the question of the applicability of our model to a real-life market is a valid one.
Although our work is motivated mainly by internet monetization, and while we believe that its predictions will be confirmed with time, it is hard to validate its predictions on today's home internet market for several reasons, the main of which being that in many countries, many customers are not yet educated enough regarding latency, which leads ISPs to differentiate themselves from their competitors using other traits. In this \lcnamecref{discussion}, we offer real-world evidence supporting the applicability of our model to the food market in Israel.

The vast majority of groceries sold in Israel are Kosher. In fact, a nonnegligible part of the Jewish population in Israel, and in particular ultra-orthodox Jews, are only willing to buy food which is not only Kosher, but even more strictly monitored and restricted; we henceforth refer to such food as \emph{extra-monitored}. As extra-monitoring can be certified only by a handful of third-party monitors, manufacturing the same food product from the same ingredients costs
more when it is to be labelled as extra-monitored than when it is to be labelled as (``regular'') Kosher. Due to heavy lobbying on behalf of ultra-orthodox groups, though, even though producing an extra-monitored version of the same product costs more than producing a Kosher version of that product, both versions are sold by retailers for identical prices. (This holds in particular for products whose prices are regulated; no retailer would ever charge extra, beyond the regulated price, as compensation for extra-monitoring.) This property of the prices, together with the fact that a considerable amount of the population in Israel is primarily concerned with the monitoring level of their groceries, makes the food market in Israel~\textbf{fit squarely} in our model, with retailers as providers, shoppers for
``a week's worth of groceries'' as consumers, and the monitoring level of groceries as their QoS (there are in fact quite a few monitoring levels). Indeed, each shopper has a minimum required level of monitoring, beyond which she or he is indifferent (as it is physically the same product, at the exact same price), and it is quite reasonable that shoppers in a certain neighbourhood would therefore choose the least-crowded grocery store in the neighbourhood (no one likes to wait in line\ldots) out of those stores that meet their minimum required level of monitoring. From the retailers' point of view, they would like to first and foremost maximize their number of shoppers, and as long as this number is not hurt, minimize the monitoring level of each of their products (the price difference for monitoring, while nonzero, is negligible relative to capturing more market share).

Our results from the previous \lcnamecrefs{heterogeneous} predict that under these conditions, the unique possible noncooperative outcome is for the market to be split between the various retailers based on the shoppers' minimum required monitoring level, i.e.\ each shopper shopping at the ``minimum monitoring'' retailer has a lower minimum required monitoring level than any of those shopping at the ``second-lowest monitoring'' retailer, each of whom in turn having a lower minimum required monitoring level than all of those shopping at the ``third-lowest monitoring'' retailer, etc., and each retailer chooses the minimum monitoring level that satisfies the entirety of its slice of the market, seemingly making no attempt to attract any other (stricter) shoppers. Indeed, in neighbourhoods with both nonnegligible ultra-orthodox population and nonnegligible orthodox populations, one notices that grocery stores label themselves by their specific monitoring level, which is applied to all products in the store.
Our early study, to be further explored in a companion work, suggests
that the number of stores of each monitoring level (when weighted by store size) roughly corresponds to the demand for this monitoring level as a minimum required level.

\begin{acks}
The first author was supported in part by ISF grant 230/10, by
the European Research Council under the European Community's Seventh Framework
Programme (FP7/2007-2013) / ERC grant agreement no.\ [249159], and by an Adams Fellowship of the Israeli Academy of Sciences and Humanities.
We would like to thank Sergiu Hart, the first's author Ph.D.\ advisor, for pointing out an intuitive resemblance between our communicating-vessels analogy and that of \citeN{Kaminsky}.
\end{acks}

\bibliography{noncoop-market-alloc}

\widowpenalty150
\clubpenalty150

\appendix
\section*{APPENDIX}

\section{Proofs and Auxiliary Results}\label{proofs}

\subsection{Proofs and Auxiliary Results for Section~\refintitle{consumers}}\label{consumers-proofs}

We commence with a few lemmas used in the proofs of \cref{consumer-pure-nash-exists,consumer-symmetric-nash-exists}.

\begin{lemma}[Load is Nonincreasing in Strategy]\label{decreasing-load}
Under the definitions of \cref{consumers}, if $t_0\le t_1\le\cdots\le t_{n-1}$, then for every mixed-consumption Nash equilibrium $s$ in the $n$-producers consumer game $(\mu;\bar{t})$, we have
$\load{0}\ge\load{1}\ge\cdots\ge\load{n-1}$.
\end{lemma}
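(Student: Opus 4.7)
The statement is intuitively clear: a producer with strictly lower QoS value is acceptable to a strict superset of consumer types, so if it had a smaller load than some higher-QoS producer, consumers of the latter would want to deviate to it. My plan is to turn this intuition into a short contradiction argument using only condition (b) of a mixed-consumption Nash equilibrium.

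First I would reduce the claim to showing, for any two indices $j,k \in \prods$ with $t_j \le t_k$, that $\load{j} \ge \load{k}$; since $t_0 \le \cdots \le t_{n-1}$, this yields the full chain. I would then suppose for contradiction that $\load{j} < \load{k}$. This forces $\load{k} > 0$, so from the definition $\load{k} = \int_{\timeset} s_k \, d\mu$, the measurable set
\[
A \eqdef \bigl\{d \in \timeset \mid s_k(d) > 0\bigr\}
\]
has $\mu(A) > 0$, and in particular $A$ is nonempty.

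Next I would pick any $d \in A$. By the definition of a mixed-consumption profile, $s(d) \in \Delta^{S_d}$, so $s_k(d)>0$ implies $k \in S_d$, i.e.\ $t_k \le d$. Since $t_j \le t_k \le d$, we also get $j \in S_d\setminus\{\noconsumption\}$. Moreover, $s_k(d)>0$ gives $k \in \supp\bigl(s(d)\bigr)$. The Nash-equilibrium condition (b) then forces $\load{k} \le \load{j}$, contradicting the assumed strict inequality.

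There is no real obstacle here; the only things to be careful about are (i) handling the degenerate case $\load{k}=0$, which is subsumed trivially since $\load{j} \ge 0 = \load{k}$, and (ii) explicitly invoking measurability so that $\load{k}>0$ really produces a point $d$ at which $s_k(d)>0$ and the equilibrium inequality can be applied pointwise. Both of these are immediate from the definitions already in place.
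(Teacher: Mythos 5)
Your proof is correct and follows essentially the same approach as the paper's: extract, from $\load{k}>0$, a type $d$ with $s_k(d)>0$ (hence $t_k\le d$, so $j\in S_d$), then apply Nash-equilibrium condition (b) to get $\load{k}\le\load{j}$. The paper phrases it directly for adjacent indices rather than by contradiction, but the content is the same.
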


\begin{proof}
Let $j \in \{0,\ldots,n-2\}$. If $\load{j+1}=0$, then $\load{j} \ge \load{j+1}$. Assume, therefore, that $\load{j+1}>0$.
Hence, there exists
$d \ge t_{j+1}$ s.t.\ $s_{j+1}(d)>0$. By definition of $s$, and as $d \ge t_{j+1} \ge t_j$, we thus have $\load{j+1} \le \load{j}$,
as required.
\end{proof}

As mentioned above, the construction in the proof of \cref{consumer-pure-nash-exists,consumer-symmetric-nash-exists} is illustrated in \cref{vessels}. In the context of that \lcnamecref{vessels}, the
following \lcnamecref{add-water} can be thought of as answering the following question: if the amount of liquid in each vessel $j\in\prods$ is $\ell_j$, by how much would the liquid in each vessel rise if we pour an additional amount $m$ of liquid into vessel $n-1$? (The rise in the amount of liquid in vessel $j$ is given by $p_j$.)

\begin{lemma}\label{add-water}
Let $\ell_0 \ge \ell_1 \ge \cdots \ge \ell_{n-1}$ be a finite nonincreasing sequence in $\Rge$. For every $m \in \Rge$, there exists $p \in [0,m]^{\prods}$,
which may be computed in $O(n)$ time, s.t.\ all of the following hold.
\begin{parts}
\item\label{add-water-sum}
$\sum_{j=0}^{n-1}p_j=m$.
\item\label{add-water-decreasing}
$\ell_0+p_0 \ge \ell_1+p_1 \ge \cdots \ge \ell_{n-1}+p_{n-1}$.
\item\label{add-water-min-height}
$\ell_k+p_k = \min_{j\in\prods}\{\ell_j+p_j\}$ for every $k\in\prods$ s.t.\ $p_k>0$.
\end{parts}
\end{lemma}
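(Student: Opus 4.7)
\medskip

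The plan is to give a direct hydraulic-style construction, finding a suffix $\{k^*,k^*{+}1,\ldots,n-1\}$ of indices whose heights $\ell_j+p_j$ are raised to a common value $h$, while leaving $p_j=0$ for $j<k^*$. For each $k\in\prods$, define the candidate common height
\[
h_k \eqdef \frac{m + \sum_{j=k}^{n-1}\ell_j}{n-k}.
\]
I would then locate the correct $k^*$ via the following sweep: start with $k=n-1$, and decrement $k$ as long as $k>0$ and $h_k>\ell_{k-1}$. This terminates with some $k^*\in\prods$ at which either $k^*=0$ (and we treat $\ell_{-1}=+\infty$, so the inequality $h_{k^*}\le\ell_{k^*-1}$ is vacuous) or $h_{k^*}\le\ell_{k^*-1}$. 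Finally, set $p_j\eqdef h_{k^*}-\ell_j$ for $j\ge k^*$ and $p_j\eqdef 0$ for $j<k^*$.

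The main thing to verify is that, at the stopping index, we also have $h_{k^*}\ge\ell_{k^*}$, which is needed for $p_{k^*}\ge 0$ (the remaining nonnegativities then follow from $\ell_j\le\ell_{k^*}$ for $j\ge k^*$). I would argue this by induction on the loop using the identity
\[
h_k \;=\; \frac{(n-k-1)\,h_{k+1} + \ell_k}{n-k},
\]
which exhibits $h_k$ as a convex combination of $h_{k+1}$ and $\ell_k$. Base case: $h_{n-1}=m+\ell_{n-1}\ge\ell_{n-1}$. Inductive step: each time the sweep decrements from $k+1$ to $k$, it does so precisely because $h_{k+1}>\ell_k\ge\ell_{k+1}$; combining $h_{k+1}>\ell_k$ with the convex combination above yields $h_k>\ell_k$. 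Thus at termination $h_{k^*}\ge\ell_{k^*}$ as required.

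With $k^*$ and the $p_j$ in hand, I would then verify the three listed properties by direct computation. Property~\ref{add-water-sum} is immediate from the definition of $h_{k^*}$: $\sum_j p_j = (n-k^*)h_{k^*}-\sum_{j\ge k^*}\ell_j = m$. Property~\ref{add-water-decreasing} holds because $\ell_j+p_j$ equals $\ell_j$ on $\{0,\ldots,k^*-1\}$ (nonincreasing, by hypothesis on $\ell$) and equals the constant $h_{k^*}$ on $\{k^*,\ldots,n-1\}$, with the ``seam'' satisfied by $\ell_{k^*-1}\ge h_{k^*}$, which is exactly the stopping condition of the sweep (and is vacuous when $k^*=0$). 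For property~\ref{add-water-min-height}, observe that $\ell_j+p_j=h_{k^*}$ whenever $j\ge k^*$ and $\ell_j+p_j=\ell_j\ge \ell_{k^*-1}\ge h_{k^*}$ whenever $j<k^*$, so the global minimum of the $\ell_j+p_j$ is exactly $h_{k^*}$, achieved on all indices with $p_j>0$. Membership $p_j\in[0,m]$ follows from nonnegativity together with $p_j\le \sum_i p_i = m$.

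For the complexity claim, I would note that the partial sums $\sum_{j=k}^{n-1}\ell_j$ can be maintained incrementally as $k$ is decremented, so each iteration of the sweep costs $O(1)$; the construction of $p$ afterwards is a single pass. The total running time is therefore $O(n)$. The main subtlety throughout, and the only place that requires any care, is the above induction showing that the sweep's stopping condition $h_{k^*}\le\ell_{k^*-1}$ automatically comes paired with $h_{k^*}\ge\ell_{k^*}$, which is what makes $k^*$ well-defined and the vector $p$ admissible in one shot.
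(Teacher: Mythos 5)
Your proof is correct and rests on the same hydraulic ``find the plateau suffix'' idea as the paper's proof, but the execution is noticeably more direct. The paper iteratively constructs an auxiliary family $p^n \le p^{n-1} \le \cdots \le p^0$ of intermediate pouring stages, tracks three invariants about this family by induction, and then extracts the answer as $p^{\tilde{\imath}}$ for a post-hoc chosen index $\tilde{\imath}$; only at the very end does it remark that, for the $O(n)$ bound, one should really just compute the height sequence and the cutoff directly. You bypass the intermediate family entirely: you write the would-be plateau height in closed form as $h_k = \bigl(m + \sum_{j\ge k}\ell_j\bigr)/(n-k)$, locate the cutoff $k^*$ by a monotone sweep from the right, and use the convex-combination identity $h_k = \bigl((n-k-1)h_{k+1}+\ell_k\bigr)/(n-k)$ to establish the single invariant $h_{k^*}\ge\ell_{k^*}$ that makes the one-shot definition of $p$ admissible, after which the three required properties drop out by direct computation. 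Your route is shorter and self-contained; the paper's heavier bookkeeping mirrors the physical pouring picture in its figure but buys nothing additional for this lemma, since later results invoke only the lemma's conclusion and not the internal structure of the $p^i$ family.
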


\begin{proof}
We iteratively define a sequence
$p^n \le p^{n-1} \le\ldots\le p^0 \in [0,m]^{\prods}$ s.t.\ the following hold for every $i \in \{0,\ldots,n\}$.
\begin{properties}
\item\label{consumer-symmetric-nash-exists-zero-less}
$p^i_j=0$ for every $j < i$.
\item\label{consumer-symmetric-nash-exists-small-sum}
$\sum_{j=0}^{n-1}p^i_j\le m$, with equality when $i=0$.
\item\label{consumer-symmetric-nash-exists-equal-height}
There exists $h_i \in \Rge$  s.t.\ all of the following hold.
\begin{itemize}
\item
If $\sum_{j=0}^{n-1}p^{i+1}_j < m$, then $\ell_j+ p^i_j=h_i$ for every $j \ge i$, 
\item
If $\sum_{j=0}^{n-1}p^{i}_j < m$, then $\ell_{i-1}+ p^i_{i-1}=h_i$ as well.
\item
$\ell_j \ge h_i$ for every $j < i$.
\end{itemize}
\end{properties}
In the setting of \cref{vessels}, $p^n$ describes the rise of liquid before we begin pouring the additional amount $m$, while for every $i\in\prods$, $p^i$ describes the rise of liquid
at the last instant during the pouring process, in which no water has risen except in vessels $i,i+1,\ldots,n-1$. (This can be either the final rise in liquid if $i=0$ or if the final rise does
not involve a change in the amount of liquid in vessels $j<i$, or alternatively the rise in liquid just before the liquid in vessel $i-1$ begins to rise.)

For the base case, we define $p^n\equiv0$, and all parts trivially hold (with $h_n\eqdef\ell_{n-1}$). For the construction step, let $i\in\{0,\ldots,n-1\}$ and assume that $p^{i+1}$ has been defined. Let $c\eqdef\sum_{j =0}^{n-1}p^{i+1}_j$. By \cref{consumer-symmetric-nash-exists-small-sum} for $i+1$, $c \le m$.
If $i=0$, then we define $r\eqdef m-c\ge0$; otherwise, we define $r\eqdef\min\bigl\{(n-i)\cdot(\ell_{i-1} - h_{i+1}), m-c\bigr\}$, and by \cref{consumer-symmetric-nash-exists-equal-height}, $r \ge 0$ in this case as well. We define $p^i_j \eqdef 0$ for every $j < i$ (and so \cref{consumer-symmetric-nash-exists-zero-less} holds for $i$), and $p^i_j \eqdef p^{i+1}_j + \frac{r}{n-i} \ge p^{i+1}_j$ for every $j \ge i$. \cref{consumer-symmetric-nash-exists-small-sum} holds for $i$ as $\sum_{j=0}^{n-1}p^i_j = \sum_{j=0}^{n-1}p^{i+1}_j + r = c + r \le m$, with equality when $i=0$. Finally, we show that \cref{consumer-symmetric-nash-exists-equal-height} holds for $i$, with $h_i\eqdef h_{i+1}+ \frac{r}{n-i}$.
If $\sum_{j=0}^{n-1}p^{i+1}_j < m$, then as $p^{i+2}  \le p^{i+1}$, we have that ($i+1=n$ or) $\sum_{j=0}^{n-1}p^{i+2}_j < m$ as well. Therefore, by \cref{consumer-symmetric-nash-exists-equal-height} for $i+1$, we have for every $j \ge i$ that $\ell_j+p^i_j=\ell_j+p^{i+1}_j + \frac{r}{n-i} = h_{i+1} + \frac{r}{n-i} = h_i$.
If $\sum_{j=0}^{n-1}p^{i}_j < m$, then $r < m-c$ and so by definition, $r=(n-i)\cdot(\ell_{i-1} - h_{i+1})$. Therefore, $h_i=h_{i+1}+\frac{r}{n-i}=\ell_{i-1}=\ell_{i-1}+p^i_{i-1}$.
Finally, for every $j < i$, by $\ell$ nonincreasing and by definition of $r$ we have $\ell_j \ge \ell_{i-1} \ge h_{i+1}+\frac{r}{n-i}=h_i$, and the proof of the construction is complete.

Let $\tilde{\imath} \in \{0,\ldots,n\}$ be largest s.t.\ $\sum_{j=0}^{n-1}p^{\tilde{\imath}}_j = m$; $\tilde{\imath}$ is well defined by \cref{consumer-symmetric-nash-exists-small-sum} for $i=0$.
We now show that $p\eqdef p^{\tilde{\imath}}$ meets the conditions of the \lcnamecref{add-water}. By definition, $\sum_{j=0}^{n-1}p_j=m$.

Let $j\in\prods\setminus\{n-1\}$. If $j < \tilde{\imath}-1$, then by \cref{consumer-symmetric-nash-exists-zero-less} for $i=\tilde{\imath}$, we have $\ell_j+p_j=\ell_j \ge \ell_{j+1} = \ell_{j+1}+p_{j+1}$.
If $j=\tilde{\imath}-1$, then by \cref{consumer-symmetric-nash-exists-zero-less,consumer-symmetric-nash-exists-equal-height} for $i=\tilde{\imath}$ and by definition of $\tilde{\imath}$, we have $\ell_j+p_j=\ell_j=\ell_{\tilde{\imath}-1} \ge h_{\tilde{\imath}} = \ell_{\tilde{\imath}}+p_{\tilde{\imath}}=\ell_{j+1}+p_{j+1}$.
Otherwise, i.e.\ if $j > \tilde{\imath}-1$, then by \cref{consumer-symmetric-nash-exists-equal-height} for $i=\tilde{\imath}$ and by definition of $\tilde{\imath}$, we have $\ell_j+p_j = h_{\tilde{\imath}} = \ell_{j+1}+p_{j+1}$.

We conclude that $\min_{j\in\prods}\{\ell_j+p_j\}=\ell_{n-1}+p_{n-1}$.
For every $k\in\prods$ s.t.\ $p_k > 0$, by \cref{consumer-symmetric-nash-exists-zero-less} for $i=\tilde{\imath}$ we have $k \ge \tilde{\imath}$. Therefore, by \cref{consumer-symmetric-nash-exists-equal-height} for $i=\tilde{\imath}$ and by definition of $\tilde{\imath}$, we have $\ell_k+p_k=h_{\tilde{\imath}}=\ell_{n-1}+p_{n-1}=\min_{j\in\prods}\{\ell_j+p_j\}$, as required.

Finally, although it may seem in first glance that $O(n^2)$ time may be required to compute $p$, we note that the sequence $(h_i)_{i=0}^n$ can be computed in
$O(n)$ time, that from this sequence $\tilde{\imath}$ can be deduced in $O(n)$ time as the largest s.t.\ $h_{\tilde{\imath}}=h_0$, and that from both, $p$ can be calculated
in $O(n)$ time: $p_j=0$ for $j<\tilde{\imath}$ by \cref{consumer-symmetric-nash-exists-zero-less} for $i=\tilde{\imath}$, while $p_j=h_{\tilde{\imath}}-\ell_j$ for $j\ge\tilde{\imath}$ by \cref{consumer-symmetric-nash-exists-equal-height} for $i=\tilde{\imath}$.
\end{proof}

We now prove \cref{consumer-symmetric-nash-exists}, and then deduce \cref{consumer-pure-nash-exists} therefrom. Alternatively, \cref{consumer-pure-nash-exists} can also be
proven directly from \cref{add-water}, similarly to the proof of \cref{consumer-symmetric-nash-exists}.

\begin{definition}
For a finite measure $\mu$ on $\timeset$ and
a measurable set $E\subseteq\timeset$, we denote by $\mu|_{\cap E}$ the finite
measure on $\timeset$ given by $\mu|_{\cap E}(A)\eqdef\mu(A\cap E)$.
\end{definition}

\begin{proof}[of \cref{consumer-symmetric-nash-exists}]
We prove the claim by induction on $n$. (Recall that the construction underlying this proof is illustrated by \cref{vessels}; also recall the explanation preceding the statement of
\cref{add-water} regarding the meaning of that \lcnamecref{add-water} in the context of that \lcnamecref{vessels}.)

Base ($n=0$): In this case, $S=\{\noconsumption\}$, and so $s\equiv\mathds{1}_{\{\noconsumption\}}$ is a Nash equilibrium as required.

Step ($n>0$): Assume w.l.o.g.\ that $t_0\le t_1\le \cdots \le t_{n-1}$. By the induction hypothesis, there exists a symmetric mixed-consumption Nash equilibrium $s'$ in the $(n-1)$-producers consumer game $(\mu|_{\cap[0,t_{n-1})};t_0,\ldots,t_{n-2})$.
If $\mu\bigl([t_{n-1},1]\bigr)=0$, then we define a mixed-consumption profile $s$ in $(\mu;\bar{t})$ s.t.\ $s|_{[0,t_{n-1})}=s'_{[0,t_{n-1})}$, and
$s|_{[t_{n-1},1]}\equiv\mathds{1}_{\{n-1\}}$. As $s'$ is symmetric in $(\mu|_{\cap[0,t_{n-1})};t_0,\ldots,t_{n-2})$, so is $s$ in $(\mu;\bar{t})$.
As $\load{j}=\loadt{j}$ for every $j\in\prodsm$, by $s'$ being a Nash equilibrium in $(\mu|_{\cap[0,t_{n-1})};t_0,\ldots,t_{n-2})$ we have that no player of any type
$d \in [0,t_{n-1})$ has any incentive to unilaterally deviate from $s$. As $\mu\bigl([t_{n-1},1]\bigr)=0$, we have $\load{n-1}=0$, and so
players of types $d \in [t_{n-1},1]$ have no incentive to deviate from $s$ either. Therefore, $s$ is a symmetric Nash equilibrium
as required, and the proof for this case is complete. Assume therefore, henceforth, that $\mu\bigl([t_{n-1},1]\bigr)>0$.

Recall that $\loadt{0},\loadt{1},\ldots,\loadt{n-2}$ are the loads on producers in $s'$, and by slight abuse of notation, define $\loadt{n-1}\eqdef0\le\loadt{n-2}$; by \cref{decreasing-load},
$\loadt{0}\ge\loadt{1}\ge\cdots\ge\loadt{n-2}\ge\loadt{n-1}$. Let $p$ be as in \cref{add-water} for $\ell_j=\loadt{j}$ for every $j\in\prods$, and for $m=\mu\bigl([t_{n-1},1]\bigr)>0$.
We define a mixed-consumption profile $s$ in $(\mu;\bar{t})$ s.t.\ $s|_{[0,t_{n-1})}=s'|_{[0,t_{n-1})}$, and
$s|_{[t_{n-1},1]}\equiv \frac{p}{\mu([t_{n-1},1])}$ (by \crefpart{add-water}{sum}, indeed $\frac{p}{\mu([t_{n-1},1])} \in \Delta^{S_d}$ for all $d \ge t_{n-1}$). Once again, as $s'$ is symmetric, so is $s$. It remains to show that $s$ is indeed a Nash equilibrium as required.

By definition of $s'$ and of $s$, we have that $\load{j}=\loadt{j}+p_j$ for every $j\in\prods$.
Let $d \in [0,t_{n-1})$. As $\loadt{0},\ldots,\loadt{n-2}$ and $\load{0},\ldots,\load{n-2}$ are both nonincreasing (the former by \cref{decreasing-load}, and the latter --- by \crefpart{add-water}{decreasing}), and as $S_d$ is the same in both $(\mu|_{\cap[0,t_{n-1})};t_0,\ldots,t_{n-2})$ and $(\mu;\bar{t})$, we have that as no player of type $d$ has any incentive to unilaterally deviate from $s'$ in the former, neither does it from $s$ in the latter. Let now $d \in [t_{n-1},1]$. For every $k\in\supp\bigl(s(d)\bigr)$, we have by definition $p_k > 0$, and so, by \crefpart{add-water}{min-height}, $\load{k}=\min_{j\in\prods}\load{j}$, and the proof is complete.

The complexity claim follows as each inductive step requires $O(n)$ time --- the time required to calculate $p$, by \cref{add-water}.
\end{proof}

\begin{corollary}\label{high-untouched}
Let $h\in\prods$, let $s'$ be a mixed-consumption Nash equilibrium in the $h$-producers consumer game $(\mu|_{\cap[0,t_h)};t_0,\ldots,t_{h-1})$, and let $s$ be the mixed-consumption Nash equilibrium in the $n$-producers consumer game $(\mu;\bar{t})$ constructed iteratively
from $s'$ as in the proof of \cref{consumer-symmetric-nash-exists}. For every $0 \le j < h$, we have $\load{j}\ge\loadt{j}$, with equality
if $\load{h-1}>\load{h}$.
\end{corollary}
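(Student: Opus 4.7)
The plan is to trace the construction underlying the proof of \cref{consumer-symmetric-nash-exists} as applied iteratively from $s'$. Unfolding the recursion yields intermediate symmetric equilibria $s^{(h)}\eqdef s',s^{(h+1)},\ldots,s^{(n)}=s$, where each $s^{(k)}$ is an equilibrium for the $k$-producer game $(\mu|_{\cap[0,t_k)};t_0,\ldots,t_{k-1})$ (with the convention $t_n\ge1$, so $\mu|_{\cap[0,t_n)}=\mu$). The passage from $s^{(k)}$ to $s^{(k+1)}$ is a single invocation of \cref{add-water}: pouring $m_k\eqdef\mu\bigl([t_k,t_{k+1})\bigr)$ into a new vessel of initial content $0$ added to the right of $(\ell^{(k)}_0,\ldots,\ell^{(k)}_{k-1})$ yields nonnegative adjustments $p^{(k)}_0,\ldots,p^{(k)}_k$ with $\ell^{(k+1)}_j=\ell^{(k)}_j+p^{(k)}_j$ for $j<k$ and $\ell^{(k+1)}_k=p^{(k)}_k$. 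Telescoping over $k=h,\ldots,n-1$ then gives $\load{j}=\loadt{j}+\sum_{k=h}^{n-1}p^{(k)}_j\ge\loadt{j}$ for every $j<h$, establishing the first assertion.

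For the equality claim I would argue the contrapositive: supposing $p^{(k)}_j>0$ for some $k\ge h$ and $j<h$, derive $\load{h-1}=\load{h}$, contradicting the hypothesis. Let $k^*$ be the smallest such $k$, and pick any $j<h$ with $p^{(k^*)}_j>0$. By \crefpart{add-water}{min-height}, $\ell^{(k^*+1)}_j$ equals the minimum of $\{\ell^{(k^*+1)}_i:0\le i\le k^*\}$; by \crefpart{add-water}{decreasing}, that sequence is nonincreasing, so the minimum is $\ell^{(k^*+1)}_{k^*}$ and every entry from index $j$ through index $k^*$ must equal it. In particular, $\ell^{(k^*+1)}_{h-1}=\ell^{(k^*+1)}_h$.

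The heart of the argument is then showing that the equality $\ell^{(k)}_{h-1}=\ell^{(k)}_h$ persists through every subsequent extension step, i.e.\ that $\ell^{(k+1)}_{h-1}=\ell^{(k+1)}_h$ follows. For the single propagation step I would split on $p^{(k)}_{h-1}$. If $p^{(k)}_{h-1}=0$, then $\ell^{(k+1)}_{h-1}=\ell^{(k)}_{h-1}=\ell^{(k)}_h\le\ell^{(k)}_h+p^{(k)}_h=\ell^{(k+1)}_h$; combined with \crefpart{add-water}{decreasing} (which gives the reverse inequality), this forces equality. If instead $p^{(k)}_{h-1}>0$, then \crefpart{add-water}{min-height} pins $\ell^{(k+1)}_{h-1}$ to the minimum of the new levels, and \crefpart{add-water}{decreasing} then forces $\ell^{(k+1)}_h=\ell^{(k+1)}_{h-1}$. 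Iterating up to step $n-1$ delivers $\load{h-1}=\load{h}$, completing the contradiction.
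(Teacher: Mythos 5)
Your proof is correct and follows essentially the same route as the paper, which gives only the one-line sketch ``by following the construction in the proof of \cref{consumer-symmetric-nash-exists}, and by \crefpart{add-water}{min-height}.'' You have faithfully unfolded what that sketch asks for: the telescoping of nonnegative increments gives the inequality, and your propagation argument (once \crefpart{add-water}{min-height} forces $\ell_{h-1}=\ell_h$ at stage $k^*$, \crefpart{add-water}{decreasing} and \crefpart{add-water}{min-height} keep them equal thereafter) is exactly the detail the paper leaves implicit.
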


\begin{proof}
By following the construction in the proof of \cref{consumer-symmetric-nash-exists}, and by \crefpart{add-water}{min-height}.
\end{proof}

\cref{consumer-pure-nash-exists} follows from \cref{consumer-symmetric-nash-exists} and from the following \lcnamecref{mixed-to-pure}.

\begin{lemma}[\cref{consumer-symmetric-nash-exists} $\Rightarrow$ \cref{consumer-pure-nash-exists}]\label{mixed-to-pure}
If a mixed-consumption Nash equilibrium exists in the $n$-producers consumer game $(\mu;\bar{t})$, and if $\mu$ is atomless, then a pure-consumption Nash equilibrium exists in this game
as well.
\end{lemma}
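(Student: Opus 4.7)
The plan is to purify the assumed mixed-consumption equilibrium $s$ by using atomlessness to replace, within each effective-type class, the local mixing over producers with an honest measurable partition of that class into pieces of the corresponding masses. Since loads are unchanged under this operation, and since the Nash condition for $s$ forces the mixing to be supported on minimum-load producers within each effective type, the resulting pure profile will automatically satisfy the pure-consumption Nash condition.

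First I would slice $\timeset$ by effective type. Because $S_d=\{j\in\prods\mid t_j\le d\}\cup\{\noconsumption\}$ depends on $d$ only via its position relative to the sorted thresholds, the equivalence relation ``$S_{d_1}=S_{d_2}$'' has at most $n+1$ classes $C_A=\{d\in\timeset\mid S_d=A\}$, each of which is a finite union of intervals and hence measurable. For each class $C_A$ and each $j\in A$, set $m_{A,j}\eqdef\int_{C_A}s_j\,d\mu$; these are nonnegative and $\sum_{j\in A}m_{A,j}=\mu(C_A)$ because $s(d)\in\Delta^{S_d}$.

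Next, using that $\mu$ is atomless (hence so is its restriction to $C_A$), I would invoke the standard consequence of Lyapunov's convexity theorem: an atomless finite measure $\nu$ on a measurable space admits a measurable partition of its support into pieces of any prescribed nonnegative measures summing to $\nu$ of the whole. Applying this to $\mu|_{\cap C_A}$ with prescribed masses $(m_{A,j})_{j\in A}$ yields a measurable partition $C_A=\bigsqcup_{j\in A}C_{A,j}$ with $\mu(C_{A,j})=m_{A,j}$. Define $s^*:\timeset\to S$ by $s^*(d)\eqdef j$ whenever $d\in C_{A,j}$; by construction $s^*$ is measurable and $s^*(d)\in S_d$, so $s^*$ is a pure-consumption profile. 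For every $j\in S$ the load on $j$ under $s^*$ is $\mu\bigl((s^*)^{-1}(j)\bigr)=\sum_A m_{A,j}=\int_{\timeset}s_j\,d\mu=\load{j}$, so loads are preserved.

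It remains to check the Nash property of $s^*$, and this is where the Nash property of $s$ is used. Fix $d\in\timeset$ and write $A=S_d$. If $A=\{\noconsumption\}$ then $s(d)=\mathds{1}_{\{\noconsumption\}}$ forces $m_{A,j}=0$ for $j\ne\noconsumption$, so $s^*(d)=\noconsumption$ and both Nash conditions hold trivially. Otherwise, the Nash property for $s$ gives $\supp\bigl(s(d)\bigr)\subseteq A^*\eqdef\arg\min_{k\in A\setminus\{\noconsumption\}}\load{k}$ for every $d\in C_A$ (the set $A^*$ being determined by $A$ alone because loads are the same across all $d\in C_A$). Consequently $s_j\equiv0$ on $C_A$ for every $j\in A\setminus A^*$, so $m_{A,j}=0$ and we may take $C_{A,j}=\emptyset$ for each such $j$; hence $s^*(d)\in A^*$ for every $d\in C_A$, which is exactly the pure-consumption Nash condition (using that loads are preserved). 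The main obstacle, and essentially the only technical ingredient beyond bookkeeping, is the measurable partition into pieces of prescribed measures, which is precisely where atomlessness enters and which fails in general without it (as illustrated by the point-mass example preceding \cref{consumer-symmetric-nash-exists}).
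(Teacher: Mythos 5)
Your proof is correct and takes essentially the same approach as the paper: slice the type space by effective type, use atomlessness to partition each slice into measurable pieces of the prescribed masses $\int_{C_A}s_j\,d\mu$, observe that loads are unchanged, and derive the Nash condition for the purified profile from the Nash condition for $s$ (which forces the mixing on each slice to be supported on minimum-load producers). The paper spells out the zero-measure-but-nonempty slice case a bit more explicitly, and doesn't name Lyapunov, but these are presentational differences rather than a different route.
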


\begin{proof}
Assume w.l.o.g.\ that $t_0\le t_1\le \cdots \le t_{n-1}$.
Let $s$ be a mixed-consumption Nash equilibrium in the game $(\mu;\bar{t})$. For every $i \in \{0,\ldots,n-2\}$, set $C^i\eqdef[t_i,t_{i+1})$, and set $C^{\noconsumption}\eqdef[0,t_0)$ and $C^{n-1}\eqdef[t_{n-1},1]$; note that $S_d=\{\noconsumption\}$ for all $d \in C^{\noconsumption}$, and that $S_d=\{0,\ldots,i\}\cup\{\noconsumption\}$ for all $d \in C^i$, for every $i\in\prods$. For every $i,j\in S$, define $p^i_j\eqdef\int_{C^i}s_j\,d\mu$; note that if $p^i_j>0$, then $s_j(d)>0$ for some $d \in C^i$.
Let $i \in S$.
We first consider the case in which either $\mu(C^i)>0$ or $C^i=\emptyset$. In this case, as $\mu$ is atomless, there exists a partition of $C^i$ into $n$ pairwise-disjoint measurable sets $(C^i_j)_{j \in S}$, s.t.\ $\mu(C^i_j)=p^i_j$ for all $j \in S$, and s.t.\ $C^i_j=\emptyset$ whenever $p^i_j=0$. Otherwise, i.e.\ if $\mu(C^i)=0$ yet $C^i\ne\emptyset$, then let $k\in S$ s.t.\ $s_k(d)>0$ for some $d \in C^i$, and define $C^i_k\eqdef C_i$, and $C^i_j\eqdef\emptyset$ for every $j \in S\setminus\{k\}$. Note that in this case we also have that $(C^i_j)_{j\in S}$ is a partition of $C^i$ and $\mu(C^i_j)=0=\int_{C^i}s_j\,d\mu=p^i_j$ for all $j \in S$.

We define a measurable function $s':\timeset\rightarrow S$ by $s'|_{\cup_{i \in S}C^i_j}\equiv j$ for every $j \in S$. For every $j \in S$, we note that $\loadt{j}=\mu(\cup_{i \in S}C^i_j)=\sum_{i\in S}p^i_j=\sum_{i \in S}\int_{C^i}s_j\,d\mu=\int_{\timeset}s_j\,d\mu=\load{j}$.

We conclude by showing that $s'$ is indeed a pure-strategy profile, and moreover --- a Nash equilibrium. Let $d \in \timeset$; by definition there exists $i \in S$ s.t.\ $d \in C^i_{s'(d)}\subseteq C^i$. As $C^i_{s'(d)}\ne\emptyset$, by definition of $C^i_{s'(d)}$ we have that $s_{s'(d)}(d')>0$ for some $d' \in C^i$, and so $s'(d) \in S_{d'}$. As by definition of $C^i$ we have $S_d=S_{d'}$, we obtain $s'(d) \in S_d$, and so $s'$ is a pure-strategy profile. Furthermore, as $s_{s'(d)}(d')>0$, we obtain $\loadt{s'(d)}=\load{s'(d)}=\min_{j\in S_{d'}}\load{j}=\min_{j\in S_d}\load{j}=\min_{j\in S_d}\loadt{j}$, and the proof is complete.
\end{proof}

\begin{proof}[of \cref{indifference-producers}]
Assume w.l.o.g.\ that $t_0\le t_1\le \cdots \le t_{n-1}$.
Let $s,s'$ be two mixed-consumption Nash equilibria in the game $(\mu;\bar{t})$.
By definition of Nash equilibrium, we have $s'_{\noconsumption}=\mathds{1}_{[0,t_0)}=s_{\noconsumption}$, and so $\sum_{j=0}^{n-1}\loadt{j}=\mu\bigl([t_0,1]\bigr)=\sum_{j=0}^{n-1}\load{j}$.
Assume for contradiction that there exists $j\in\prods$ s.t.\ $\loadt{j}\ne\load{j}$; let $j$ be minimal with this property, and assume w.l.o.g.\ that $\loadt{j} > \load{j}$.

Let $j \le k < n$ be maximal s.t.\ $\loadt{k} = \loadt{j}$. By \cref{decreasing-load}, for every $j \le i \le k$, we have $\loadt{i}=\loadt{j}>\load{j}\ge\load{i}$. Therefore, and as $\loadt{i}=\load{i}$ for every $0\le i<j$, we have $\sum_{j=0}^{k}\loadt{j}>\sum_{j=0}^{k}\load{j}$. We thus obtain both that $k<n-1$, and that $\sum_{j=0}^{k}\loadt{j}>\sum_{j=0}^{k}\load{j}\ge \mu\bigl([t_0,t_{k+1})\bigr)$. Therefore, $\sum_{j=k+1}^{n-1}\loadt{j} < \mu\bigl([t_{k+1},1]\bigr)$, and hence there exists $d \in [t_{k+1},1]$ s.t.\ $s'_i(d)>0$ for some $0\le i\le k$. As by \cref{decreasing-load} we have $\loadt{i}\ge\loadt{k}>\loadt{k+1}$, we conclude that $s'$ is not a Nash equilibrium --- a contradiction.
\end{proof}

\begin{proof}[of \cref{indifference-consumers}]
Assume w.l.o.g.\ that $t_0\le t_1\le \cdots \le t_{n-1}$.
Let $s$ be a mixed-consumption Nash equilibrium in the game $(\mu;\bar{t})$, let $d\in\timeset$ and let $k\in\supp\bigl(s(d)\bigr)$.
If $d<t_j$ for all $j\in\prods$, then $k=\noconsumption$ and so $\load{k}=\mu\bigl([0,t_0))$. Otherwise, $k\ne\noconsumption$ and so $d\ge t_0$; let
$i\in\prods$ be largest s.t.\ $t_i\le d$. By definition of $s$ and by \cref{decreasing-load}, we have $\load{k}=\min\{\load{j}\mid t_j \le d\}=\load{i}$. Either way (and by \cref{indifference-producers} when $k\ne\noconsumption$), $\load{k}$ does not depend on the choice of $s$, as required.
\end{proof}

From \cref{compute-ell}, we obtain the following recursive identity for $\ell_k(\bar{t})$.

\begin{corollary}\label{compute-ell-no-algo}
If $t_0\le t_1\le\cdots\le t_{n-1}$, then defining $t_n\eqdef2$, we have
\[\ell_k(t_0,\ldots,t_{n-1})=\Max_{k<j\le n}\frac{\mu\bigl([t_0,t_j)\bigr)-\sum_{i=0}^{k-1}\ell_i(\bar{t})}{j-h}=\Max_{k<j\le n}\frac{\mu\bigl([0,t_j)\bigr)-\sum_{i\in\prodsk\cup\{\noconsumption\}}\ell_i(\bar{t})}{j-h}\]
(where by slight abuse of notation, $\mu$ is treated as a measure on $[0,2]$ with support $\timeset$)
for every $k\in\prods$.
\end{corollary}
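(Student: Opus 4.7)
I would prove the corollary by directly unpacking Algorithm \ref{compute-ell}. Let $0 = k_0 < k_1 < \cdots < k_r = n$ be the successive values taken by the loop variable $k$ in the while-loop, and set $\bar{\ell}_a \eqdef \frac{\mu([t_{k_a}, t_{k_{a+1}}))}{k_{a+1} - k_a}$ --- the load that the algorithm assigns to every $\ell_j(\bar{t})$ with $k_a \le j < k_{a+1}$. Summing over a single block gives $\sum_{j=k_a}^{k_{a+1}-1}\ell_j(\bar{t}) = \mu([t_{k_a}, t_{k_{a+1}}))$, and telescoping then yields $\sum_{i=0}^{k_a-1}\ell_i(\bar{t}) = \mu([t_0, t_{k_a}))$ for every $a$.

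Given the index $k$ in the statement, fix the unique $a$ with $k_a \le k < k_{a+1}$, so that $\ell_k(\bar{t}) = \bar{\ell}_a$. Writing $S_k \eqdef \sum_{i=0}^{k-1}\ell_i(\bar{t})$ and combining the telescoping identity with $\ell_i(\bar{t}) = \bar{\ell}_a$ for $k_a \le i < k$ gives the clean expression $S_k = \mu([t_0, t_{k_a})) + (k - k_a)\bar{\ell}_a$.

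Two computations then finish the proof. First, substituting $j = k_{a+1}$ into the claimed expression, the numerator equals $\mu([t_{k_a}, t_{k_{a+1}})) - (k - k_a)\bar{\ell}_a = (k_{a+1} - k)\bar{\ell}_a$, so the fraction equals $\bar{\ell}_a = \ell_k(\bar{t})$; hence the max is at least $\ell_k(\bar{t})$. Second, to show the max does not exceed $\ell_k(\bar{t})$: for any $k < j \le n$, the $\Max\arg\Max$ step of the algorithm at iteration $k_a$ was computed over the entire range $k_a < j' \le n$, so in particular $\frac{\mu([t_{k_a}, t_j))}{j - k_a} \le \bar{\ell}_a$. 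Rearranging and using the formula for $S_k$ yields $\mu([t_0, t_j)) - S_k \le (j - k)\bar{\ell}_a = (j - k)\ell_k(\bar{t})$, which proves the first displayed equality. The second equality then follows at once from $\ell_{\noconsumption}(\bar{t}) = \mu([0, t_0))$, which converts $\mu([0, t_j)) - \ell_{\noconsumption}(\bar{t}) - \sum_{i=0}^{k-1}\ell_i(\bar{t})$ into $\mu([t_0, t_j)) - \sum_{i=0}^{k-1}\ell_i(\bar{t})$.

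The main obstacle is not conceptual but notational: the one subtle ingredient is that the $\Max\arg\Max$ in the algorithm ranges over \emph{all} subsequent indices in $(k_a, n]$, and not merely over the current block --- this is exactly what delivers the required inequality for indices $j$ that lie beyond $k_{a+1}$. Everything else is routine bookkeeping with the block decomposition produced by the algorithm.
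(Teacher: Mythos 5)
Your proof is correct and takes essentially the same route as the paper: unpack Algorithm~\ref{compute-ell} and verify the formula against the block decomposition it produces, with the telescoping identity $\sum_{i=0}^{k_a-1}\ell_i(\bar{t})=\mu([t_0,t_{k_a}))$ doing the bookkeeping. The paper's own proof merely phrases this as a two-case split (whether $k$ heads a block or lies in its interior, the latter reduced to the former by observing the expression is invariant within a block), whereas you treat all positions in a block uniformly via $S_k=\mu([t_0,t_{k_a}))+(k-k_a)\bar{\ell}_a$; same idea, slightly different organization.
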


\begin{proof}
A direct corollary of \cref{compute-ell}, by considering two cases: in the first, either $k=0$ or $\ell_k(\bar{t})<\ell_{k-1}(\bar{t})$ (and so the given value $k$ 
is the value of the variable $k$ in some iteration of \cref{compute-ell}); in the second, $k>0$ and $\ell_k(\bar{t})=\ell_{k-1}(\bar{t})$ (and so \cref{compute-ell}
calculates both $\ell_k(\bar{t})$ and $\ell_{k-1}(\bar{t})$ in the same iteration of the while loop, and therefore they are identical; it is straightforward to verify that the expression in the statement evaluates to the same value for both $k-1$ and $k$ in this case).
\end{proof}

\subsection{Analysis of \texorpdfstring{$\ell$}{the Load Function l}}\label{ell-analysis}

Before moving on to prove the results presented in \cref{producers}, we now formalize three analytic properties of the function $\ell$ (defined in \cref{ell}),
which we later utilize when proving the results of \cref{producers}. The first property is that the load on a producer
does not decrease if the producer raises the offered quality of service (i.e.\ lowers its strategy).

\begin{lemma}[$\ell_j$ is Nonincreasing in $t_j$]\label{ell-decreasing}
For every $j\in\prods$ and for every $\bar{t}\in\timeset^{\prods}$ and $t'_j\in\timeset$, if $t_j<t'_j$, then $\ell_j(\bar{t}_{-j},t'_j)\le\ell_j(\bar{t})$.
\end{lemma}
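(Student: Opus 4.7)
Plan: Sort WLOG so $t_0\le\cdots\le t_{n-1}$, and write $F(i)\eqdef\mu\bigl([0,t_i)\bigr)$ for $i\in\prods$, $F(n)\eqdef\mu(\timeset)$. I will first treat the sort-preserving case $t_{j-1}\le t_j'\le t_{j+1}$ via a short slope-comparison, then extend to arbitrary $t_j'>t_j$ by a continuity argument at the points where $t_j$ crosses another $t_i$.

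The key observation, which follows from \cref{compute-ell-no-algo} by the telescoping substitution
\[
G(k) \;\eqdef\; \mu\bigl([0,t_0)\bigr) + \sum_{h=0}^{k-1}\ell_h(\bar t),
\]
is the identity
\[
\ell_k(\bar t) \;=\; \max_{k<i\le n}\frac{F(i)-G(k)}{i-k},
\]
together with the fact that $G$ is the (unique, piecewise-linear) least concave majorant of $F$ on $\{0,1,\ldots,n\}$: a routine induction on the outer loop of \cref{compute-ell} shows that each iteration is precisely the greedy construction of the majorant's successive vertices from left to right. Crucially, the least concave majorant is monotone in its input: any concave function dominating $F'\ge F$ also dominates $F$, so taking infima gives $G'\ge G$ pointwise whenever $F'\ge F$ pointwise.

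In the sort-preserving case, changing $t_j$ to $t_j'>t_j$ lifts $F(j)$ to $F'(j)\ge F(j)$ and leaves every other $F(i)$ unchanged; hence $F'\ge F$ pointwise, $F'(i)=F(i)$ for $i>j$, and $G'(j)\ge G(j)$. Plugging into the displayed identity for $\ell_j'$ yields
\[
\ell_j' \;=\; \max_{j<i\le n}\frac{F(i)-G'(j)}{i-j} \;\le\; \max_{j<i\le n}\frac{F(i)-G(j)}{i-j} \;=\; \ell_j.
\]

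For the general case $t_j'>t_{j+1}$, I will decompose the motion $t_j\to t_j'$ into finitely many sort-preserving sub-moves separated by ``crossings'' at which $t_j=t_i$ for some $i$. On each sub-move the preceding argument applies. At each crossing, producers $j$ and $i$ have identical strategies, so the Nash-equilibrium condition forces $\ell_j=\ell_i$ (any consumer who can use one can use the other at zero cost), giving a common load at the tie that is well-defined independently of tie-breaking in the sort. Combined with continuous dependence of the concave-majorant slopes on the (continuously-varying) sorted $F$-values, this shows that both one-sided limits of the load on producer $j$ as $t_j$ approaches the tie equal this common value, so chaining the sort-preserving inequalities across sub-moves with these matching values yields the lemma. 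The main obstacle I foresee is establishing the concave-majorant characterization of $G$ and its pointwise monotonicity in $F$, but both reduce to short inductions on \cref{compute-ell}.
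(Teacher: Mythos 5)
Your route is genuinely different from the paper's: it is precisely the ``alternative proof via \cref{compute-ell} / \cref{compute-ell-no-algo}'' that the paper mentions in passing but does not spell out. The paper's own proof works directly with Nash equilibria $s$ in $(\mu;\bar t)$ and $s'$ in $(\mu;\bar t_{-k},t'_k)$, assumes for contradiction that $\loadt{k}>\load{k}$, propagates this to a chain of inequalities on neighbouring loads (using \cref{decreasing-load} and \cref{high-untouched}), and derives a measure-capacity violation. Your reframing of $G(k)\eqdef\mu\bigl([0,t_0)\bigr)+\sum_{h<k}\ell_h(\bar t)$ as the least concave majorant of the cumulative sequence $F$ on $\{0,\ldots,n\}$ is a compact and accurate way to package the greedy ratio test of \cref{compute-ell} ($G=F$ at the breakpoints the algorithm selects, linear in between, $G\ge F$ by the $\arg\Max$, and $G$ concave because the $\ell_h$ are nonincreasing by \cref{decreasing-load}), and the pointwise monotonicity of the least-concave-majorant operator in its input then yields the sort-preserving case in one line. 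This buys a more structural, reusable argument than the paper's ad hoc contradiction, at the price of first establishing the majorant characterization.

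One claim in the crossing step is not correct as stated. You assert that the sorted $F$-values vary continuously and that ``both one-sided limits of the load on producer $j$ as $t_j$ approaches the tie equal this common value.'' When $\mu$ has an atom at the crossing point $s$, the map $t\mapsto\mu\bigl([0,t)\bigr)$ is left- but not right-continuous, so as $t_j$ leaves $s$ to the right, producer $j$'s sorted $F$-value jumps up by $\mu(\{s\})$ and the right limit of $j$'s load can be strictly \emph{below} the tie value; continuity fails. Your conclusion survives because the jump is in the favourable (downward) direction, but the justification should be rephrased: on the closed interval $[s_l,s_{l+1}]$ assign producer $j$ the sorted index $m_l$ it holds on $(s_l,s_{l+1})$ (legitimate at $t_j=s_l$ too, by your correct observation that tied producers carry equal loads --- itself a small fact that deserves a sentence of proof via \cref{compute-ell}); then $F(m_l)=\mu\bigl([0,t_j)\bigr)$ is merely nondecreasing on that closed interval while all other sorted $F$-values are fixed, and the majorant monotonicity applies directly on the closed interval. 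No continuity is needed. With that repair, and once the majorant characterization and the ``increasing one element of a multiset weakly increases every sorted entry'' fact are written out, the proof is complete and correct.
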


\begin{proof}
Let $\bar{t}\in\timeset^{\prods}$, $k\in\prods$ and $t'_k\in(t_k,1]$.
Assume w.l.o.g.\ that $t_0\le t_1\le\cdots\le t_{n-1} \in \timeset$. If $k\ne n-1$, then it is enough to consider the case $t_k<t'_k\le t_{k+1}$.
Let $s$ be a mixed-consumption Nash equilibrium in the induced consumer game $(\mu;\bar{t})$.
For every $j \in \prods\setminus\{k\}$, define $t'_j\eqdef t_j$.
Let $s'$ be a mixed-consumption Nash equilibrium in $(\mu;\bar{t}')$, and assume for contradiction that $\loadt{k} > \load{k}$.
Let $i\in\prods$ be maximal s.t.\ $\loadt{i}=\loadt{k}$; by definition, $i\ge k$.

We claim that $\loadt{j}\ge\load{j}$ for every $0 \le j \le i$. Let $0 \le h \le i$ be minimal s.t.\ $\load{h} \le \loadt{k}$  (such $h$ exists, and $h \le k$, as $\load{k} < \loadt{k}$); we will show that $\loadt{j}\ge\load{j}$ separately for every $0 \le j < h$ (if $h>0$) and for every $h \le j \le i$. For every $h \le j \le i$, by \cref{decreasing-load}, by definition of $i$ and by definition of $h$, we have $\loadt{j} \ge \loadt{i} = \loadt{k} \ge \load{h} \ge \load{j}$, as required.
We move on to show that $\loadt{j}\ge\load{j}$ for every $0 \le j < h$; assume that $h>0$ (otherwise, there is nothing to show).
Let $\tilde{\ell}_0,\ldots,\tilde{\ell}_{h-1}$ be the loads on producers $0,\ldots,h-1$ in a Nash equilibrium in the game $(\mu|_{\cap[0,t_h)};t_0,\ldots,t_{h-1})$;
similarly, let $\tilde{\ell}'_0,\ldots,\tilde{\ell}'_{h-1}$ be the loads on producers $0,\ldots,h-1$ in a Nash equilibrium in the game $(\mu|_{\cap[0,t'_h)};t'_0,\ldots,t'_{h-1})$.
As $h>0$, by definition of $h$ we have $\load{h-1} > \load{h}$;
therefore, by \cref{high-untouched} and \cref{indifference-producers}, we have that $\load{j}=\tilde{\ell}_j$ for every $0\le j<h$.
By \cref{high-untouched} and \cref{indifference-producers}, we obtain also that $\loadt{j}\ge\tilde{\ell}'_j$ for every $0\le j<h$.
As $k\ge h$, we have that $t'_j=t_j$ for every $0\le j<h$ and that $t'_{h-1}\ge t_{h-1}$; therefore,
$\tilde{\ell}'_j\ge\tilde{\ell}_j$ for every $0\le j<h$. (This follows by
by tracing the construction in the proof of \cref{consumer-symmetric-nash-exists}, as all inductive steps but the last are identical, while the last, examining \cref{add-water},
increases each load by no less when computing $\tilde{\ell}'$ than when computing $\tilde{\ell}$; in the context of \cref{vessels}, pouring a additional nonnegative amount of liquid
into the rightmost vessel does not cause the liquid level in any vessel to fall. Alternatively, this can also be seen by
tracing \cref{compute-ell}, as each iteration when computing $\tilde{\ell}'$ either computes the same load
values for the producers as the corresponding iteration when computing $\tilde{\ell}$, or is the last, thus computing loads that are not lower than those computed for $\tilde{\ell}$.)
Combining all of these, we have $\loadt{j}\ge\tilde{\ell}'_j\ge\tilde{\ell}_j=\load{j}$ for every $0\le j<h$, as required.

We conclude that $\sum_{j=0}^i\loadt{j}>\sum_{j=0}^i\load{j}$, as $\loadt{j}\ge\load{j}$ for every $0\le j\le i$, with a strict inequality for $j=k$.
If $i=n-1$, then $\sum_{j=0}^{n-1}\loadt{j}>\sum_{j=0}^{n-1}\load{j}=\mu\bigl([t_0,1]\bigr)\ge\mu\bigl([t'_0,1]\bigr)$ --- a contradiction; assume, therefore, that $i<n-1$.
Hence, $\sum_{j=0}^{i}\loadt{j}>\sum_{j=0}^{i}\load{j}\ge\mu\bigl([t_0,t_{i+1})\bigr)\ge\mu\bigl([t'_0,t_{i+1})\bigr)$. Therefore, there exists $d \ge t'_{i+1}$ s.t.\
$s'_j(d)>0$ for some $0\le j\le i$, but by \cref{decreasing-load} and by definition of $i$ we notice that $\loadt{j}\ge\loadt{i}>\loadt{i+1}$, so $s'$ is not a Nash equilibrium in $(\mu;\bar{t}')$ --- a contradiction as well.

We note that an alternative proof may also be given via \cref{compute-ell,compute-ell-no-algo}.
\end{proof}

The second property is that the load on producer $j$ cannot increase as a result of other producers moving closer to $j$'s quality of service.

\begin{lemma}[$\ell_j$ is Weakly Quasiconvex in $t_k$]\label{ell-single-valley}
For every $j\in\prods\cup\{\noconsumption\}$ and $j\ne k\in\prods$ and for every $\bar{t}\in\timeset^{\prods}$ and $t'_k\in\timeset$, if $j\ne\noconsumption$ and either $t_k<t'_k\le t_j$ or $t_j\le t'_k<t_k$, or if $j=\noconsumption$ and $t'_k<t_k$, then $\ell_j(\bar{t}_{-k},t'_k)\le\ell_j(\bar{t})$.
\end{lemma}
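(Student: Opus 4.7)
The plan is to dispatch the three hypotheses separately. When $j=\noconsumption$ and $t'_k<t_k$, \cref{compute-ell} yields $\ell_{\noconsumption}(\bar{t})=\mu\bigl([0,\min_{i\in\prods}t_i)\bigr)$, and lowering $t_k$ can only weakly decrease $\min_{i}t_i$; the inequality follows at once.

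For the two main cases (with $j\ne\noconsumption$), I first reduce to sorted profiles: since $\ell_j(\bar{t})$ is invariant under relabeling of the producers other than $j$ and $k$, I may assume that $(t_i)_{i\ne k,j}$ is arranged in nondecreasing order. By the Lipschitz continuity of $\ell_j$ in each coordinate (stated near the end of \cref{consumers}), it suffices to establish weak monotonicity of $t\mapsto\ell_j(\bar{t}_{-k},t)$ as $t$ moves from $t_k$ toward $t'_k$ on each of the finitely many open subintervals on which the full sort order of the producers is constant. On such a subinterval, let $T_0\le\cdots\le T_{n-1}$ be the sorted profile, let $p\in\prods$ be $k$'s sort position and $q\in\prods$ be $j$'s, and define $F(r)\eqdef\mu\bigl([T_r,1]\bigr)$ for $r\in\{0,1,\ldots,n\}$ with $F(n)\eqdef 0$. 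By \cref{compute-ell}, each producer's load equals $L(r)-L(r+1)$ at its sort position $r$, where $L$ is the lower convex envelope of $F$ (the greatest convex function with $L(n)=0$ and $L\le F$ pointwise). Varying $t$ changes only the single entry $F(p)$; the hypothesis $t_k<t'_k\le t_j$ places $p<q$ throughout (with rightward motion of $t$ decreasing $F(p)$), while $t_j\le t'_k<t_k$ places $p>q$ (with leftward motion of $t$ increasing $F(p)$).

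The key ingredient is the following monotonicity lemma for the lower convex envelope. Suppose $F'$ agrees with $F$ except at index $p$, where $F'(p)<F(p)$, and let $L'$ be the lower convex envelope of $F'$. Then $L'\le L$ pointwise (since $L'$ is a convex function $\le F'\le F$), with equality outside a maximal interval $[a,b]\ni p$. Within $[a,b]$, $L'$ is piecewise linear with vertices only at $a,p,b$: any other vertex $r\in(a,b)\setminus\{p\}$ would satisfy $L'(r)=F'(r)=F(r)\ge L(r)\ge L'(r)$, forcing $L'(r)=L(r)$ and contradicting $r\in(a,b)$. Hence on each of $[a,p]$ and $[p,b]$, $L'$ is linear and $L$ is convex, so $D\eqdef L-L'\ge 0$ is convex, with $D(a)=D(b)=0$. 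By the elementary fact that a nonnegative convex function on an interval vanishing at one endpoint is monotone — nondecreasing from a left-zero and nonincreasing to a right-zero — $D$ is nondecreasing on $[a,p]$ and nonincreasing on $[p,b]$, and $D\equiv 0$ outside $[a,b]$. Therefore $\bigl(L(r)-L(r+1)\bigr)-\bigl(L'(r)-L'(r+1)\bigr)=D(r)-D(r+1)$ is $\le 0$ for $r<p$ and $\ge 0$ for $r\ge p$. In words, decreasing $F(p)$ weakly increases the load at every sort position $<p$ and weakly decreases the load at every sort position $\ge p$.

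Applying the lemma closes both main cases: in $t_k<t'_k\le t_j$ ($p<q$), rightward motion of $t$ decreases $F(p)$, so the load at sort position $q$ — namely $\ell_j$ — weakly decreases; in $t_j\le t'_k<t_k$ ($p>q$), leftward motion of $t$ increases $F(p)$ and is thus the reverse of the lemma's perturbation, so by reading the lemma with $F$ and $F'$ swapped, $q<p$ places $j$ on the side whose load weakly decreases under the reverse perturbation. I expect the main obstacle to be the precise identification of $[a,b]$ and the verification that the only internal vertex of $L'$ is at $p$; both follow from the standard characterization of the lower convex envelope's vertices as its points of contact with the underlying function, combined with uniqueness of the LCE given its boundary condition at $n$.
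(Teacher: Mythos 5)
Your proposal takes a genuinely different route from the paper. The paper's proof argues directly about mixed-consumption Nash equilibria, tracking loads through a sequence of contradiction arguments (and the paper does remark that ``an alternative proof may also be given via \cref{compute-ell,compute-ell-no-algo}'' without spelling it out). You instead extract the full convex structure from \cref{compute-ell}: the sorted loads are the successive differences of the lower convex envelope of the tail function $F(r)=\mu\bigl([T_r,1]\bigr)$, and the lemma becomes a perturbation statement about lower convex envelopes. That reformulation is correct (the algorithm is exactly the greedy construction of that envelope), the reduction to piecewise-constant sort orders plus Lipschitz gluing is sound, the $j=\noconsumption$ case is immediate, and the translation of the two geometric hypotheses into ``$p<q$ with $F(p)$ decreasing'' and ``$p>q$ with $F(p)$ increasing'' is right.

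The one place where your sketch is not yet a proof is the envelope-perturbation lemma, and you have in fact flagged the right spot. You assert that $L'=L$ outside a single interval $[a,b]\ni p$ and that $L'<L$ strictly on $(a,b)$, and you derive the ``vertices only at $a,p,b$'' claim from the latter strict inequality — but the strict inequality is not a consequence of the vertex-equals-contact-point characterization alone, and it can even fail on a subsegment when $L$ happens to be affine along the tangent from $(p,F'(p))$. The argument is best restructured to avoid it: let $a$ be the largest vertex of $L'$ below $p$ and $b$ the smallest above $p$ (both exist via the endpoints); any vertex $r\ne p$ of $L'$ is a contact point with $F'(r)=F(r)$, which forces $L'(r)=L(r)=F(r)$; between two consecutive such vertices $L'$ is linear, $L$ is convex, and they agree at both ends, so the chord bound gives $L\le L'$ and hence $L=L'$ on that segment — so $L'=L$ on $[0,a]\cup[b,n]$. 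On $[a,p]$ and $[p,b]$ you do have $L'$ affine (no interior vertices), so $D=L-L'$ is convex, nonnegative, and vanishes at $a$ and at $b$; it is therefore nondecreasing on $[0,p]$ and nonincreasing on $[p,n]$, which is all the final differencing step needs. (One also checks that $p$ must itself be a vertex of $L'$ when $F'(p)<L(p)$: otherwise $L'$ would be affine across $[a,b]$, $D$ convex and nonnegative with $D(a)=D(b)=0$ would force $D\equiv 0$, contradicting $D(p)>0$.) With this repair the approach is complete; the edge case $p=0$ (the degenerate interval $[a,p]$) still works because $D$ then simply starts positive and decreases to $0$, and the case $F'(p)\ge L(p)$ gives $L'=L$ outright.
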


\begin{proof}
Assume w.l.o.g.\ that $t_0\le t_1\le\cdots\le t_{n-1}$. If $k\ne n-1$, then it is enough to consider the case $t_k<t'_k\le t_{k+1}$.
Let $s$ and $s'$ be mixed-consumption Nash equilibria in $(\mu;\bar{t})$ and $(\mu;\bar{t}_{-k},t'_k)$, respectively.

Assume for contradiction that $\loadt{i}>\load{i}$ for some $k<i<n$, and let $i$ be minimal with this property.
Therefore (and by \cref{ell-decreasing} if $i=k+1$), $\loadt{i-1}\le\load{i-1}$. By \cref{decreasing-load}, we obtain $\load{i-1}\ge\loadt{i-1}\ge\loadt{i}>\load{i}$.
Therefore, $s_j|_{[t_i,1]}\equiv0$ for every $0\le j<i$, and so $\sum_{j=i}^{n-1}\loadt{j}\le\mu\bigl([t_i,1])=\sum_{j=i}^{n-1}\load{j}$. Hence, and as $\loadt{i}>\load{i}$, there
exists $i<h<n$ s.t.\ $\loadt{h}<\load{h}$ --- let $h$ be minimal with this property. Therefore, $\loadt{h-1}\ge\load{h-1}$, and by \cref{decreasing-load},
we obtain $\loadt{h-1}\ge\load{h-1}\ge\load{h}>\loadt{h}$. By definition of $h$, we have that $\loadt{j}\ge\load{j}$ for every $i\le j<h$, with a strict inequality
for $j=i$ by definition of $i$, and so $\sum_{j=i}^{h-1}\loadt{j}>\sum_{j=i}^{h-1}\load{j}\ge\mu\bigl([t_i,t_h)\bigr)$, with the last inequality
since $s_j|_{[t_i,1]}\equiv0$ for every $0\le j<i$. Therefore, there exists $d \ge t_h$ s.t.\
$s'_j(d)>0$ for some $i\le j<h$, but by \cref{decreasing-load}, $\loadt{j}\ge\loadt{h-1}>\loadt{h}$, so $s'$ is not a Nash equilibrium in $(\mu;\bar{t}_{-k},t'_k)$ --- a contradiction.

Assume now for contradiction that $\loadt{i}<\load{i}$ for some $0\le i<k$, and let $i$ be minimal with this property.
As $k>0$, we have $\load{\noconsumption}=\mu\bigl([0,t_0)\bigr)=\loadt{\noconsumption}$. Therefore, 
$\sum_{j=0}^{n-1}\loadt{j}=\sum_{j=0}^{n-1}\load{j}$ and by definition of $i$ there exists $h\in\prods$ s.t.\ $\loadt{h}>\load{h}$
--- let $h$ be minimal with this property.
By \cref{ell-decreasing} and by the first part of this proof, $h<k$. We now consider two cases: $h<i$ and $i<h$. We start with
the case $h<i$.
In this case, by definition of $i$ and by \cref{decreasing-load}, $\loadt{i-1}\ge\load{i-1}\ge\load{i}>\loadt{i}$, and so, by \cref{high-untouched} and as $i<k$,
$\loadt{h}=\ell_h(\mu|_{\cap[0,t_i)};t_0,\ldots,t_{i-1})\le\load{h}$ --- a contradiction.
Similarly, if $i<h$, then by definition of $h$ and by \cref{decreasing-load}, $\load{h-1}\ge\loadt{h-1}\ge\loadt{h}>\load{h}$, and so, by \cref{high-untouched} and as $h<k$,
$\load{i}=\ell_i(\mu|_{\cap[0,t_h)};t_0,\ldots,t_{h-1})\le\loadt{i}$ --- a contradiction as well.

We conclude by examining the effect on $\ell_{\noconsumption}$.
If $k\ne0$, then let $t'_0\eqdef t_0$. Regardless of the value of $k$, we have $t'_0\ge t_0$.
By definition of $s,s'$, we have $\load{\noconsumption}=\mu\bigl([0,t_0)\bigr)\le\mu\bigl([0,t'_0)\bigr)=\loadt{\noconsumption}$.

We note that an alternative proof may also be given via \cref{compute-ell,compute-ell-no-algo}.
\end{proof}

Finally, the last property is that small perturbations in the producers' strategies result in quantifiably small changes in the loads on the various producers.

\begin{lemma}[$\ell$ is Lipschitz in Each Coordinate with Lipschitz constant $1$]\label{ell-lipschitz}
For every $j,k\in\prods$ and for every $\bar{t}\in\timeset^{\prods}$ and $t'_k\in\timeset$, if $t_k<t'_k$, then
$\bigl|\ell_j(\bar{t}_{-k},t'_k)-\ell_j(\bar{t})\bigr|\le\mu\bigl([t_k,t'_k)\bigr)$.
\end{lemma}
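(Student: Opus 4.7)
The plan is to recast the load profile as the slope sequence of a concave envelope, and then control how this envelope changes under a single-coordinate perturbation of $\bar t$, handling the non-sort-preserving case by telescoping.

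\textbf{Concave-envelope setup.} For a sorted profile $\tilde{\bar t}=(\tilde t_0\le\cdots\le\tilde t_{n-1})$, I would let $F(j)=\mu([0,\tilde t_j))$ for $0\le j<n$ and $F(n):=\mu(\timeset)$, and let $G$ be the smallest concave non-decreasing function on $\{0,\ldots,n\}$ with $G\ge F-F(0)$. Inspection of \cref{compute-ell} shows it is precisely the greedy left-to-right construction of $G$, so $\ell_j(\tilde{\bar t})=G(j+1)-G(j)$ for every $j\in\prods$, and the task reduces to controlling slope differences of two such envelopes.

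\textbf{Sort-preserving subcase.} I would first treat the case $t_0\le\cdots\le t_{n-1}$ with either $k=n-1$ or $t'_k\le t_{k+1}$; write $\Delta:=\mu([t_k,t'_k))$ and let $G,G'$ denote the envelopes of $\bar t$ and $\bar t':=(\bar t_{-k},t'_k)$. The argument splits on whether $F(0)$ moves. If $k\ge 1$, only $F(k)$ changes, increasing by exactly $\Delta$, so the envelope's data increases by $\Delta$ at index $k$ alone. Since $G'$ is the smallest concave non-decreasing majorant of the new (pointwise-larger) data, $G'\ge G$; since $G+\Delta$ is itself a concave non-decreasing majorant of the new data, $G'\le G+\Delta$. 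Hence $H:=G'-G$ takes values in $[0,\Delta]$. If $k=0$, then $F(0)$ itself increases by $\Delta$, so $F-F(0)$ is unchanged at $j=0$ but drops by $\Delta$ at every $j\ge 1$; a symmetric argument (using that $G'+\Delta$ is a concave non-decreasing majorant of the old data, noting $G'(0)+\Delta\ge 0$) gives $G'\le G$ and $G\le G'+\Delta$, so $H:=G-G'$ takes values in $[0,\Delta]$. In either case $H\in[0,\Delta]$ pointwise, so $\bigl|\ell_j(\bar t')-\ell_j(\bar t)\bigr|=\bigl|H(j+1)-H(j)\bigr|\le\Delta$.

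\textbf{General case.} Since simultaneously relabeling producers in $\bar t$ and $\bar t'$ preserves each $|\ell_j(\bar t')-\ell_j(\bar t)|$ up to a bijection on $j$, I can assume WLOG that $t_0\le\cdots\le t_{n-1}$. If $t'_k>t_{k+1}$, I would list the values of $\{t_i:i\ne k,\ t_k<t_i<t'_k\}\cup\{t_k,t'_k\}$ in increasing order as $t_k=a_0<a_1<\cdots<a_r=t'_k$ and interpolate via $\bar t^{(i)}:=(\bar t_{-k},a_i)$. Each step $\bar t^{(i-1)}\to\bar t^{(i)}$ meets the sort-preserving subcase's hypothesis after harmlessly swapping label $k$ with whichever producer is tied with $k$ at $a_{i-1}$ (tied-producer relabelings leave every load fixed), so the subcase yields $|\ell_j(\bar t^{(i)})-\ell_j(\bar t^{(i-1)})|\le\mu([a_{i-1},a_i))$. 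Because the $[a_{i-1},a_i)$ partition $[t_k,t'_k)$, the triangle inequality then gives $|\ell_j(\bar t')-\ell_j(\bar t)|\le\sum_{i=1}^r\mu([a_{i-1},a_i))=\mu([t_k,t'_k))$.

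\textbf{Main obstacle.} The crux — and essentially the only subtlety — is that a bare pointwise bound $|G'-G|\le\Delta$ would only yield the weaker $|\ell_j(\bar t')-\ell_j(\bar t)|\le 2\Delta$; what is really needed is that $G'-G$ be \emph{single-signed}. This is exactly what the case split on whether $F(0)$ shifts delivers, and it crucially relies on $G$ being the \emph{smallest} concave non-decreasing majorant (so that the one-sided bounds $G'\ge G$ or $G'\le G$ come ``for free'' from pointwise domination of the data). The remaining ingredients — telescoping across breakpoints and the tied-label housekeeping — are routine.
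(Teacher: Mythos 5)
Your proof is correct and fills in the alternative route that the paper itself gestures at in the closing line of its own proof of this lemma (``an alternative proof may also be given via \cref{compute-ell}\,/\,\cref{compute-ell-no-algo}''). The paper argues directly with mixed-consumption Nash equilibria: after the same reduction to $t_k<t'_k\le t_{k+1}$, a contradiction argument bounds the change in the partial sum of loads over $\{\noconsumption,0,\ldots,k-1\}$ by $\mu\bigl([t_k,t'_k)\bigr)$, and one-sided monotonicity from \cref{ell-decreasing,ell-single-valley} then converts that sum bound into the pointwise bound. You instead observe that \cref{compute-ell} is precisely the greedy construction of the least concave majorant $G$ of $F-F(0)$ (automatically nondecreasing, since $F$ is), so that $\ell_j(\bar{t})=G(j+1)-G(j)$, after which the Lipschitz bound reduces to a two-line sandwich $0\le\pm(G'-G)\le\Delta$. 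One phrasing worth tightening: ``$G'$ is the smallest concave non-decreasing majorant of the new (pointwise-larger) data, hence $G'\ge G$'' should really be ``$G'$ majorizes $D'\ge D$, hence $G'$ is a concave nondecreasing majorant of $D$, hence $G'\ge G$ by minimality of $G$'' --- pointwise dominance does not follow merely from being the least majorant of something larger. The tied-label telescoping at the end is exactly the reduction the paper leaves implicit in its ``it is enough to consider the case $t_k<t'_k\le t_{k+1}$.'' Comparatively, your route buys conceptual economy: the single-signedness of $G'-G$, which the paper imports from \cref{ell-single-valley}, drops out of minimality for free, and the same envelope picture reproves \cref{ell-decreasing,ell-single-valley} along the way. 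The cost is that one must first certify that \cref{compute-ell} is the concave-envelope construction, whereas the paper's proof stays within the consumer-game toolkit it uses uniformly throughout \cref{ell-analysis}.
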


\begin{proof}
Assume w.l.o.g.\ that $t_0\le t_1\le\cdots\le t_{n-1}$. If $k\ne n-1$, then it is enough to consider the case $t_k<t'_k\le t_{k+1}$.
Let $s$ and $s'$ be mixed-consumption Nash equilibria in $(\mu;\bar{t})$ and $(\mu;\bar{t}_{-k},t'_k)$, respectively.
Define $S_k\eqdef\prodsk\cup\{\noconsumption\}$. We start by showing that $\sum_{j\in S_k}\loadt{j}-\sum_{j\in\ S_k}\load{j}\le\mu\bigl([t_k,t'_k)\bigr)$.

If $k=0$, then this claim holds, as in this case $S_k=\{\noconsumption\}$, and by definition of $s$ and $s'$ we have $\loadt{\noconsumption}-\load{\noconsumption}=\mu\bigl([0,t'_k)\bigr)-\mu\bigl([0,t_k)\bigr)=\mu\bigl([t_k,t'_k)\bigr)$. Assume therefore that $k>0$ and
assume for contradiction that $\sum_{j\in S_k}\loadt{j}-\sum_{j\in S_k}\load{j}>\mu\bigl([t_k,t'_k)\bigr)$. As $k>0$, we have $\loadt{\noconsumption}=\mu\bigl([0,t_0)\bigr)=\load{\noconsumption}$, and so $\sum_{j=0}^{k-1}\loadt{j}-\sum_{j=0}^{k-1}\load{j}>\mu\bigl([t_k,t'_k)\bigr)$ as well.
Let $i\in\prods$ be maximal s.t.\ $\loadt{i}=\loadt{k-1}$.
By definition, $i \ge k-1$. For every $k\le j\le i$, by \cref{decreasing-load,ell-single-valley}, we have $\loadt{j}\ge\loadt{i}=\loadt{k-1}\ge\load{k-1}\ge\load{j}$.
Therefore, we have $\sum_{j=0}^i\loadt{j}-\sum_{j=0}^i\load{j}>\mu\bigl([t_k,t'_k)\bigr)$.
If $i=n-1$, then we obtain $\sum_{j=0}^{n-1}\loadt{j}>\sum_{j=0}^{n-1}\load{j}=\mu\bigl([t_0,1]\bigr)$ --- a contradiction; assume, therefore, that $i<n-1$.
If $i+1\ne k$, then let $t'_{i+1}\eqdef t_{i+1}$. Hence,
\begin{align*}
&\sum_{j=0}^i\loadt{j}>\sum_{j=0}^i\load{j}+\mu\bigl([t_k,t'_k)\bigr)\ge\mu\bigl([0,t_{i+1})\bigr)+\mu\bigl([t_k,t'_k)\bigr)=\\
&\qquad\qquad\qquad\qquad\qquad\qquad\qquad=\begin{cases}
\mu\bigl([0,t_k)\bigr)+\mu\bigl([t_k,t'_k)\bigr)=\mu\bigl([0,t'_{i+1})\bigr) & i+1=k \\
\mu\bigl([0,t'_{i+1})\bigr)+\mu\bigl([t_k,t'_k)\bigr)\ge\mu\bigl([0,t'_{i+1})\bigr) & i+1\ne k
\end{cases}.
\end{align*}
Thus, there exists $d\ge t'_{i+1}$ s.t.\ $s'_j(d)>0$ for some $0\le j\le i$, but by \cref{decreasing-load} and by definition of $i$ we notice that $\loadt{j}\ge\loadt{i}>\loadt{i+1}$, so $s'$ is not a Nash equilibrium in $(\mu;\bar{t}_{-k},t'_k)$ --- a contradiction as well.

As $\sum_{j\in S_k}\loadt{j}-\sum_{j\in\ S_k}\load{j}\le\mu\bigl([t_k,t'_k)\bigr)$ and as by \cref{ell-single-valley} $\loadt{j}\ge\load{j}$ for every $j\in S_k$,
we obtain $0\le\loadt{j}-\load{j}\le\mu\bigl([t_k,t'_k)\bigr)$ for every $j\in S_k$.
Similarly, As $\sum_{j\in S_k}\loadt{j}-\sum_{j\in\ S_k}\load{j}\le\mu\bigl([t_k,t'_k)\bigr)$
and as $\sum_{j\in S}\loadt{j}=\mu(\timeset)=\sum_{j\in S}\load{j}$, we have $\sum_{j=k}^{n-1}\load{j}-\sum_{j=k}^{n-1}\loadt{j}\le\mu\bigl([t_k,t'_k)\bigr)$,
and as by \cref{ell-decreasing,ell-single-valley} $\loadt{j}\le\load{j}$ for every $k\le j<n$, we obtain $0\le\load{j}-\loadt{j}\le\mu\bigl([t_k,t'_k)\bigr)$ for every $k\le j<n$, and the proof is complete.

We note that an alternative proof may also be given via \cref{compute-ell} / \cref{compute-ell-no-algo}.
\end{proof}

\subsection{Proofs and Auxiliary Results for Section~\refintitle{producers-coarse}}\label{producers-coarse-proofs}

\subsubsection{Proofs and Auxiliary Results for Section~\refintitle{producers-coarse-statics}}

\begin{lemma}\label{min-max-strategy-load}
Let $j\in\prods$ and $t_j\in\timeset$. For every $\bar{t}_{-j}\in\timeset^{\prods\setminus\{j\}}$,
we have $\ell_j(\bar{t})\ge\frac{\mu([t_j,1])}{n}$, which constitutes a tight bound.
\end{lemma}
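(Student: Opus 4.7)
The plan is to derive the inequality directly from the Nash-equilibrium conditions on consumers, and then exhibit a profile achieving equality. First I would fix $\bar{t}$ (with producer $j$'s strategy $t_j$) and a mixed-consumption Nash equilibrium $s$ in the induced consumer game $(\mu;\bar{t})$ --- which exists by \cref{consumer-symmetric-nash-exists} --- and define $K\eqdef\{k\in\prods\mid\load{k}\le\load{j}\}$; note that $j\in K$ and $|K|\le n$.

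The crux of the argument is that every consumer of type $d\ge t_j$ has $j\in S_d$, so $S_d\ne\{\noconsumption\}$, and the Nash conditions force her to put her entire strategy mass on producers in $S_d$ whose load equals the minimum over $S_d$. Since $j\in S_d$, this minimum is at most $\load{j}$, so her mass lies entirely on producers in $K$. Integrating over $d\in[t_j,1]$ yields $\sum_{k\in K}\int_{[t_j,1]}s_k\,d\mu=\mu\bigl([t_j,1]\bigr)$, and since $\load{k}\ge\int_{[t_j,1]}s_k\,d\mu$ for each $k\in K$, summing over $K$ gives $\sum_{k\in K}\load{k}\ge\mu\bigl([t_j,1]\bigr)$. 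Combining with $\load{k}\le\load{j}$ for all $k\in K$ produces $|K|\cdot\load{j}\ge\mu\bigl([t_j,1]\bigr)$, whence
\[
\ell_j(\bar{t})=\load{j}\ge\frac{\mu\bigl([t_j,1]\bigr)}{|K|}\ge\frac{\mu\bigl([t_j,1]\bigr)}{n}.
\]

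For tightness, I would use the constant profile $\bar{t}=(t_j,t_j,\ldots,t_j)$. Running \cref{compute-ell} on this profile, every choice of $k'<n$ gives ratio $0$ (as $\mu\bigl([t_j,t_j)\bigr)=0$), while $k'=n$ gives ratio $\mu\bigl([t_j,1]\bigr)/n$; hence $\ell_j(\bar{t})=\mu\bigl([t_j,1]\bigr)/n$, achieving equality. (Intuitively: by symmetry all producers share the load equally, consumers with $d<t_j$ abstaining, so each producer absorbs exactly $\mu\bigl([t_j,1]\bigr)/n$.)

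I do not anticipate a serious obstacle; the only step requiring care is the mixed-strategy bookkeeping showing that the entire mass of consumers with $d\ge t_j$ is absorbed by producers in $K$ (as opposed to leaking to some $k\notin K$ or to $\noconsumption$). Once this is in hand, the inequality reduces to a one-line averaging argument, and tightness is immediate from \cref{compute-ell}.
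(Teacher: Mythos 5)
Your proof is correct, and it rests on the same core observation as the paper's: any consumer of type $d\ge t_j$ has $j\in S_d$, so she cannot abstain and can only put mass on producers whose equilibrium load is at most $\load{j}$. The difference is mechanical rather than conceptual. The paper's proof is a pigeonhole-then-witness argument: it singles out the producer $k\in\arg\Max_{i\in\prods}\int_{[t_j,1]}s_i\,d\mu$, notes that $k$ absorbs at least $\tfrac{1}{n}\mu\bigl([t_j,1]\bigr)$ of the mass above $t_j$ (since the abstention strategy absorbs none of it), finds some $d\ge t_j$ with $s_k(d)>0$, and chains $\load{j}\ge\load{k}\ge\int_{[t_j,1]}s_k\,d\mu\ge\tfrac{1}{n}\mu\bigl([t_j,1]\bigr)$. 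Your version instead collects the whole set $K=\{k:\load{k}\le\load{j}\}$, observes that all of $\mu\bigl([t_j,1]\bigr)$ is absorbed by $K$, and averages, yielding the intermediate bound $\mu\bigl([t_j,1]\bigr)/|K|$. Your route is marginally more global and sidesteps the separate treatment of the degenerate case $\mu\bigl([t_j,1]\bigr)=0$ that the paper's pigeonhole requires, but it buys no sharper final conclusion here. For tightness you both use the constant profile $\bar{t}\equiv t_j$; you evaluate it via \cref{compute-ell}, whereas the paper invokes anonymity together with \cref{indifference-producers}, and also notes (via \cref{ell-single-valley}) that this constant profile is in fact the pointwise minimizer of $\ell_j(\bar t_{-j},t_j)$ over $\bar t_{-j}$, which simultaneously reproves the inequality.
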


\begin{proof}
Let $s$ be mixed-consumption Nash equilibrium in $(\mu;\bar{t})$.
If $\mu\bigl([t_j,1]\bigr)=0$, then $\load{j}=0$ and the claim trivially holds. Assume therefore that $\mu\bigl([t_j,1]\bigr)>0$.
Let $k\in\arg\Max_{i\in\prods}\int_{[t_j,1]}s_i\,d\mu$. By definition of $s$, $\int_{[t_j,1]}s_{\noconsumption}\,d\mu=0$, and so $\int_{[t_j,1]}s_k\,d\mu\ge\frac{\mu([t_j,1])}{n}$ by definition of $k$. Since $\int_{[t_j,1]}s_k\,d\mu\ge\frac{\mu([t_j,1])}{n}>0$, there exists $d \ge t_j$ s.t.\ $s_k(d)>0$. As $j \in S_d$, by definition of Nash equilibrium we have $\int_{[t_j,1]}s_j\,d\mu=\load{j} \ge \load{k} \ge \int_{[t_j,1]}s_k\,d\mu \ge \frac{\mu([t_j,1])}{n}$.

Alternatively, by \cref{ell-single-valley}, $\ell_j(\bar{t})$ is minimal given $t_j$ when $t_i=t_j$ for every $i\in\prods\setminus\{j\}$.
By \cref{indifference-producers}, by anonymity, and by definition of Nash equilibrium,
the load on each producer in this case is exactly $\frac{\mu([t_j,1])}{n}$.
\end{proof}

\begin{corollary}\label{zero-strategy-load}
Let $\bar{t} \in \timeset^{\prods}$. For every $j\in\prods$, if $t_j=0$, then $\ell_j(\bar{t})\ge\frac{\mu(\timeset)}{n}$.
\end{corollary}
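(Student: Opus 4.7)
The plan is to simply invoke \cref{min-max-strategy-load} as an immediate specialization. The previous lemma establishes the bound $\ell_j(\bar{t}) \ge \frac{\mu([t_j,1])}{n}$ for any strategy profile and any producer $j$, and the corollary is precisely the instance $t_j = 0$ of this inequality.

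Concretely, fix $j \in \prods$ with $t_j = 0$. Since $\timeset = [0,1]$, we have $[t_j, 1] = [0,1] = \timeset$, and therefore $\mu\bigl([t_j,1]\bigr) = \mu(\timeset)$. Substituting into the bound provided by \cref{min-max-strategy-load} yields
\[\ell_j(\bar{t}) \ge \frac{\mu\bigl([t_j,1]\bigr)}{n} = \frac{\mu(\timeset)}{n},\]
as required. No additional argument is needed, and there is no obstacle to speak of --- the only subtlety is noticing that at $t_j = 0$ the interval $[t_j,1]$ coincides with the entire type space, which is what converts the pointwise guarantee of \cref{min-max-strategy-load} into a uniform lower bound depending only on the total mass of consumers.
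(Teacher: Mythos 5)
Your proof is correct and is exactly the paper's argument: the paper proves this corollary as ``a direct corollary of \cref{min-max-strategy-load},'' which is precisely the instantiation $t_j=0$ that you spell out.
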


\begin{proof}
A direct corollary of \cref{min-max-strategy-load}.
\end{proof}

\begin{definition}[Domination]
Let $t,t'$ be strategies in \coarsegame.
\begin{itemize}
\item
We say that $t$ \emph{weakly dominates} $t'$ if $t$ is a safe alternative to $t'$ and moreover, there exists some strategy profile for all but one of the producers,
s.t.\ playing $t$ gives the remaining producer strictly higher utility than playing $t'$.
\item
We say that $t$ \emph{strongly dominates} $t'$ if for every strategy profile for all but one of the producers,
playing $t$ gives the remaining producer strictly higher utility than playing~$t'$.
\end{itemize}
\end{definition}

\begin{lemma}[Domination]\label{producer-coarse-domination}
Let $t<t' \in \timeset$ be strategies in \coarsegame.
\begin{parts}
\item\label{producer-coarse-domination-safe}
$t$ is a safe alternative to $t'$.
\item\label{producer-coarse-domination-weak}
$t$ weakly dominates $t'$ iff $\mu\bigl([t,t')\bigr)>0$.
\item\label{producer-coarse-domination-strong}
$t$ strongly dominates $t'$ iff $\mu\bigl([t',1]\bigr)<\frac{\mu([t,1])}{n}$.
\end{parts}
\end{lemma}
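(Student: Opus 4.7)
The proof breaks into three independent (but accumulating) tasks; the auxiliary analytic results from \cref{ell-decreasing,ell-single-valley,ell-lipschitz} and the lower bound of \cref{min-max-strategy-load} do essentially all the heavy lifting, so I expect the proof to be quite short.

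For \cref{producer-coarse-domination-safe}, I would simply invoke \cref{ell-decreasing}: since $t<t'$, we have $\ell_j(\bar{t}_{-j},t)\ge\ell_j(\bar{t}_{-j},t')$ for every $\bar{t}_{-j}\in\timeset^{\prods\setminus\{j\}}$, and because utility in \coarsegame\ is strictly increasing in $\ell_j(\bar{t})$, this is exactly the definition of safe alternative. For \cref{producer-coarse-domination-weak}, since by \cref{producer-coarse-domination-safe} $t$ is already a safe alternative to $t'$, weak domination is equivalent to the existence of \emph{some} profile $\bar{t}_{-j}$ for which the above inequality is strict. If $\mu\bigl([t,t')\bigr)=0$, I would apply \cref{ell-lipschitz} to conclude $\bigl|\ell_j(\bar{t}_{-j},t)-\ell_j(\bar{t}_{-j},t')\bigr|\le\mu\bigl([t,t')\bigr)=0$, ruling out any strict improvement. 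Conversely, if $\mu\bigl([t,t')\bigr)>0$, I would exhibit the profile in which all other producers play $t'$: by symmetry and \cref{indifference-producers}, playing $t'$ then yields $j$ a load of $\mu\bigl([t',1]\bigr)/n$, while playing $t$ captures in addition the entire mass $\mu\bigl([t,t')\bigr)$ (those consumers having no other acceptable producer), producing a strictly larger load.

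For \cref{producer-coarse-domination-strong}, the easy direction is the ``if'': for any $\bar{t}_{-j}$, by \cref{min-max-strategy-load} we have $\ell_j(\bar{t}_{-j},t)\ge\mu\bigl([t,1]\bigr)/n$, while trivially $\ell_j(\bar{t}_{-j},t')\le\mu\bigl([t',1]\bigr)$ (only consumers with types in $[t',1]$ can consume from $j$ at all); assuming $\mu\bigl([t',1]\bigr)<\mu\bigl([t,1]\bigr)/n$ chains these into a strict inequality. For the converse, I would exhibit a single profile witnessing equality when $\mu\bigl([t',1]\bigr)\ge\mu\bigl([t,1]\bigr)/n$, namely the profile $\bar{t}_{-j}\equiv t$. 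Playing $t$ against it gives load exactly $\mu\bigl([t,1]\bigr)/n$ by symmetry. Playing $t'$ against it is the case I expect to be the main obstacle: I would trace the communicating-vessels construction (or equivalently invoke \cref{add-water} / \cref{compute-ell}) with $n-1$ vessels pre-filled to height $\mu\bigl([t,t')\bigr)/(n-1)$ each and then $\mu\bigl([t',1]\bigr)$ liquid poured into the vessel of $j$. Under the assumption $\mu\bigl([t',1]\bigr)\ge\mu\bigl([t,1]\bigr)/n$, there is enough liquid to spill back and equalize all $n$ levels, yielding $\ell_j(\bar{t}_{-j},t')=\mu\bigl([t,1]\bigr)/n=\ell_j(\bar{t}_{-j},t)$, so $t$ does not strongly dominate $t'$.
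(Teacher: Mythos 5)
Your proof is correct and follows essentially the paper's decomposition: \cref{ell-decreasing} gives part~(1), a case split on whether $\mu\bigl([t,t')\bigr)$ vanishes gives part~(2), and \cref{min-max-strategy-load} together with the all-play-$t$ witness profile gives part~(3), where the non-domination is read off \cref{compute-ell}. Two local variations worth noting: for the $\mu\bigl([t,t')\bigr)=0$ subcase of part~(2) you invoke \cref{ell-lipschitz}, a cleaner shortcut than the paper's explicit re-routing of a consumer Nash equilibrium (the paper does flag the \cref{compute-ell} route as an alternative), and for the $\mu\bigl([t,t')\bigr)>0$ subcase you take the all-play-$t'$ witness profile where the paper takes all-play-$1$ --- either works. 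One phrasing to tighten: when all other producers play $t'$ and producer $j$ plays $t$, the load on $j$ is $\Max\bigl\{\mu\bigl([t,t')\bigr),\,\frac{\mu([t,1])}{n}\bigr\}$, not $\frac{\mu([t',1])}{n}+\mu\bigl([t,t')\bigr)$, so ``captures in addition the entire mass'' overstates the computation; the strict inequality $\Max\bigl\{\mu\bigl([t,t')\bigr),\,\frac{\mu([t,1])}{n}\bigr\}>\frac{\mu([t',1])}{n}$ nonetheless holds since $\mu\bigl([t,t')\bigr)>0$, so the conclusion stands.
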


\begin{proof}
\cref{producer-coarse-domination-safe} follows from \cref{ell-decreasing}.

We move on to proving \cref{producer-coarse-domination-weak}.
$\Rightarrow$:
Assume that $\mu\bigl([t,t')\bigr)=0$; by \cref{compute-ell} / \cref{compute-ell-no-algo}, $t$ and $t'$ are equivalent, and \emph{a fortiori} $t'$ is a safe alternative to $t$.
Nonetheless, we now also directly show that $t$ and $t'$ are equivalent.
Let $\bar{t}_{-0}\in\timeset^{\prods\setminus\{0\}}$, and let $s$ be a mixed-consumption Nash equilibrium in $(\mu;\bar{t}_{-0},t)$.
Let $s'$ be the mixed-consumption profile in $(\mu;\bar{t}_{-0},t')$ s.t.\ $s'|_{\timeset\setminus[t,t')}=s|_{\timeset\setminus[t,t')}$ and s.t.\ for every $d \in [t,t')$,
if $S_d=\{\noconsumption\}$ w.r.t.\ $(\mu;\bar{t}_{-0},t')$ then $s'(d)=\mathds{1}_{\{\noconsumption\}}$, and otherwise $s'(d)=\mathds{1}_{\{j\}}$ for
some $j\in\arg\Max_{j\in S_d}t_j$. As $s=s'$ almost
everywhere w.r.t.\ $\mu$, we have that $\loadt{j}=\load{j}$ for all $j\in\prods$. By definition of $s$, therefore no type $d \in \timeset\setminus[t,t')$ has any incentive to deviate from $s'$ in the latter game. By \cref{decreasing-load} (for $s$) and by definition of $s'$, neither does any type $d \in [t,t')$ have any incentive to deviate from $s'$ in the latter game.
Therefore, $s'$ is a Nash equilibrium in $(\mu;\bar{t}_{-0},t')$. As in particular $\loadt{0}=\load{0}$, the proof of this direction is complete.

$\Leftarrow$:
Assume that $\mu\bigl([t,t')\bigr)>0$; we will show that $t'$ is not a safe alternative to $t$.
Define $a\eqdef\mu\bigl([t,t')\bigr)>0$, $b\eqdef\mu\bigl([t',1)\bigr)$ and $c\eqdef\mu\bigl(\{1\}\bigr)$.
By \cref{compute-ell},
we have $\ell_0(t',1,1,\ldots,1) = \Max\{b,\frac{b+c}{n}\}<\Max\{a+b,\frac{a+b+c}{n}\}=\ell_0(t,1,1,\ldots,1)$, and the proof of this direction is complete as well.

We conclude by proving \cref{producer-coarse-domination-strong}.
$\Rightarrow$:
Assume that $\mu\bigl([t',1]\bigr)\ge\frac{\mu([t,1])}{n}$. Therefore, $\mu\bigl([t,t')\bigr)\le\frac{n-1}{n}\cdot\mu\bigl([t,1]\bigr)$, and hence $\frac{\mu([t,t'))}{n-1}\le\frac{\mu([t,1])}{n}$.
By \cref{indifference-producers}, by anonymity, and by definition of Nash equilibrium, the load on every producer, and in particular on producer $0$, in a Nash equilibrium in the $n$-producer game $(\mu;t,t,\ldots,t)$ is $\frac{\mu([t,1])}{n}$.
As $\Max\{\frac{\mu([t,t'))}{n-1},\frac{\mu([t,1])}{n}\}=\frac{\mu([t,1])}{n}$, by \cref{compute-ell} the load on every producer, and in particular on producer $0$, in a Nash equilibrium in the game $(\mu|_{\cap[0,t')};t',t,t,\ldots,t)$ is $\frac{\mu([t,1])}{n}$ as well, as required.

$\Leftarrow$:
Assume that $\mu\bigl([t',1]\bigr)<\frac{\mu([t,1])}{n}$.
Let $\bar{t}_{-0}\in\timeset^{\prods\setminus\{0\}}$.
By \cref{min-max-strategy-load} and by definition of legal strategies in the consumer game, we obtain $\ell_0(\bar{t}_{-0},t)\ge\frac{\mu([t,1])}{n}>\mu\bigl([t',1]\bigr)\ge\ell_0(\bar{t}_{-0},t')$.
\end{proof}

\begin{proof}[of \cref{producer-coarse-dominant}]
The first statement is a direct corollary of \cref{producer-coarse-domination}, and the second --- of \cref{zero-strategy-load}.
\end{proof}

\begin{lemma}\label{equal-loads-consumer-nash}
Let $\bar{t} \in \timeset^{\prods}$ and
let $s$ be a mixed-consumption profile in the consumer game $(\mu;\bar{t})$.
If $\load{j}=\frac{\mu(\timeset)}{n}$ for every $j\in\prods$, and if $s_{\noconsumption}(d)=0$ whenever $S_d\ne\{\noconsumption\}$, then $s$ constitutes a Nash equilibrium in this game.
\end{lemma}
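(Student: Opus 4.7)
The plan is to verify directly, for each consumer type $d\in\timeset$, the two defining conditions of a mixed-consumption Nash equilibrium. Since $s$ is already assumed to be a mixed-consumption profile (so $s(d)\in\Delta^{S_d}$ for every $d$), and since the hypotheses already hand us one of the two Nash-equilibrium conditions for free, the argument amounts to almost nothing more than a short bookkeeping of the definitions.

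First I would note that the condition
\[
\noconsumption\in\supp\bigl(s(d)\bigr)\Longrightarrow S_d=\{\noconsumption\}
\]
is immediate from the hypothesis: if $S_d\ne\{\noconsumption\}$, then by assumption $s_{\noconsumption}(d)=0$, and hence $\noconsumption\notin\supp\bigl(s(d)\bigr)$. This dispenses with condition (a) in the definition of mixed-consumption Nash equilibrium.

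For condition (b), I would fix an arbitrary $d\in\timeset$, an arbitrary $k\in\supp\bigl(s(d)\bigr)$ and an arbitrary $j\in S_d\setminus\{\noconsumption\}$, and show $\load{k}\le\load{j}$. Split into two cases. If $S_d=\{\noconsumption\}$, then $S_d\setminus\{\noconsumption\}=\emptyset$, so there is no such $j$ and the condition is vacuous. Otherwise $S_d\ne\{\noconsumption\}$: by the hypothesis, $s_{\noconsumption}(d)=0$, so $k\ne\noconsumption$, i.e.\ $k\in\prods$; and of course $j\in\prods$. By assumption $\load{k}=\frac{\mu(\timeset)}{n}=\load{j}$, which in particular gives $\load{k}\le\load{j}$, as required.

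There is no genuine obstacle here: the lemma is essentially a one-line observation once the definitions are unfolded, and the only thing to be careful about is the separate (trivial) handling of the $\noconsumption$-action in the support. The work in the surrounding theorems (e.g.\ \cref{producer-coarse-nash-loads}) lies in showing that such a profile exists and in characterizing the producer strategies that admit one, not in this converse direction, which is essentially a consistency check.
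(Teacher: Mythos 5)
Your proof is correct and is simply a careful unfolding of the paper's own one-line argument, which states that the claim follows directly from the definition of mixed-consumption Nash equilibrium; you have made the same definitional check explicit, with no substantive difference in approach.
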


\begin{proof}
Directly from definition of mixed-consumption Nash equilibrium, no consumer has any incentive to unilaterally deviate from $s$.
\end{proof}

\begin{proof}[of \cref{producer-coarse-nash-loads}]
The first part ($\Rightarrow$) follows directly from \cref{zero-strategy-load}.
For the second part ($\Leftarrow$), let 
$\bar{t}$ be a pure-strategy profile in \coarsegame\ s.t.\ $\ell_j(\bar{t})=\frac{\mu(\timeset)}{n}$ for
every $j\in\prods$. Assume w.l.o.g.\ that $t_0\le\cdots\le t_{n-1}$, and
let $s$ be a mixed-consumption Nash equilibrium in the induced consumer game $(\mu;\bar{t})$; therefore, $\load{j}=\ell_j(\bar{t})=\frac{\mu(\timeset)}{n}$ for every $j\in\prods$. Let $k\in\prods$ and let $t'_k \in \timeset$; we will show that producer $k$ has no incentive
to deviate to $t'_k$ from $t_k$. If $t_k < t'_k$, then this follows directly from \crefpart{producer-coarse-domination}{safe}. We therefore consider
the case in which $t'_k<t_k$.
Let $N\eqdef[t'_k,t_0)$ (if $t_0 \le t'_k$, then $N=\emptyset$). By definition of $s$, we have that $\mu(N) \le \mu\bigl([0,t_0)\bigr) = \load{\noconsumption} = 0$. We define a mixed-consumption profile $s'$ in the consumer game $(\mu;\bar{t}_{-k},t'_k)$ 
by $s'|_N\equiv\mathds{1}_{\{k\}}$ and $s'|_{\timeset\setminus N}=s|_{\timeset\setminus N}$ (if $N=\emptyset$, then $s'=s$).
(This indeed is a mixed-consumption profile since $t'_k<t_k$ and by definition of $N$.)
As $\mu(N)=0$, we have $\loadt{j}=\load{j}=\frac{\mu(\timeset)}{n}$ for every $j\in\prods$. Thus, by \cref{equal-loads-consumer-nash} and by definition of $s'$ via $N$, we conclude that $s$ is a Nash equilibrium in $(\mu;\bar{t}_{-k},t'_k)$, and so $k$ has no incentive to deviate to $t'_k$ from $t_k$ in this case either.
\end{proof}

\begin{corollary}[Least/Most Nash Equilibrium Load]\label{producer-coarse-nash-least-most-loads}
Let $t_0\le\cdots\le t_{n-1} \in \timeset$. The pure-strategy profile $\bar{t}$ constitutes a Nash equilibrium in \coarsegame\ iff
either of the following equivalent conditions hold.
\begin{itemize}
\item
$\ell_{n-1}(\bar{t})=\frac{\mu(\timeset)}{n}$.
\item
$\mu\bigl([0,t_0)\bigr)=0$ and $\ell_0(\bar{t})=\frac{\mu(\timeset)}{n}$.
\end{itemize}
\end{corollary}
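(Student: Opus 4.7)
The plan is to reduce the claim to \cref{producer-coarse-nash-loads}, which already characterizes pure-strategy Nash equilibria in \coarsegame\ as exactly those profiles for which $\ell_j(\bar{t}) = \frac{\mu(\timeset)}{n}$ for every $j \in \prods$. Thus I will show that, given the assumption $t_0 \le \cdots \le t_{n-1}$, each of the two bullet conditions in the statement is equivalent to this ``all loads equal'' condition.

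The two tools I need in addition to \cref{producer-coarse-nash-loads} are the monotonicity of loads along the sorted strategy order, namely \cref{decreasing-load} applied to any mixed-consumption Nash equilibrium in $(\mu;\bar{t})$, which yields $\ell_0(\bar{t}) \ge \ell_1(\bar{t}) \ge \cdots \ge \ell_{n-1}(\bar{t})$, and the conservation of mass, namely $\ell_{\noconsumption}(\bar{t}) + \sum_{j\in\prods} \ell_j(\bar{t}) = \mu(\timeset)$, where by definition of any mixed-consumption Nash equilibrium we have $\ell_{\noconsumption}(\bar{t}) = \mu\bigl([0,t_0)\bigr)$ (consumers of types $d < t_0$ are forced to play $\noconsumption$, while those of types $d \ge t_0$ have some acceptable producer and hence do not play $\noconsumption$).

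For the first bullet, the forward direction is immediate from \cref{producer-coarse-nash-loads}. Conversely, if $\ell_{n-1}(\bar{t}) = \frac{\mu(\timeset)}{n}$, then monotonicity forces $\ell_j(\bar{t}) \ge \frac{\mu(\timeset)}{n}$ for every $j$, so $\sum_j \ell_j(\bar{t}) \ge \mu(\timeset)$; combined with mass conservation and nonnegativity of $\ell_{\noconsumption}(\bar{t})$ this forces equality throughout, and \cref{producer-coarse-nash-loads} concludes. For the second bullet, the forward direction follows since ``all loads equal'' gives $\sum_j \ell_j(\bar{t}) = \mu(\timeset)$, hence $\mu\bigl([0,t_0)\bigr) = \ell_{\noconsumption}(\bar{t}) = 0$, and a fortiori $\ell_0(\bar{t}) = \frac{\mu(\timeset)}{n}$. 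Conversely, under the two assumptions of the second bullet, mass conservation gives $\sum_j \ell_j(\bar{t}) = \mu(\timeset)$, while monotonicity together with $\ell_0(\bar{t}) = \frac{\mu(\timeset)}{n}$ bounds each summand above by $\frac{\mu(\timeset)}{n}$; these together force each $\ell_j(\bar{t}) = \frac{\mu(\timeset)}{n}$, and again \cref{producer-coarse-nash-loads} applies.

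There is no serious obstacle: the entire argument is a short pigeonhole-style deduction pairing monotonicity of loads with total-mass conservation, and the only subtlety is remembering that the second bullet genuinely needs the assumption $\mu\bigl([0,t_0)\bigr) = 0$, since without it one could have $\ell_0(\bar{t}) = \frac{\mu(\timeset)}{n}$ while strictly less mass is actually distributed among the producers and some of the later loads drop below $\frac{\mu(\timeset)}{n}$ --- which is exactly why a symmetric ``$\ell_0(\bar{t}) = \frac{\mu(\timeset)}{n}$ alone'' condition does not suffice and an extra hypothesis on $\ell_{\noconsumption}$ must be imposed on that side.
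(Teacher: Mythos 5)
Your proposal is correct and takes essentially the same route the paper intends: the paper declares the result "a direct corollary of \cref{producer-coarse-nash-loads,decreasing-load}," and your argument is exactly the spelled-out version of that, pairing the monotonicity of loads with mass conservation ($\load{\noconsumption}+\sum_j\load{j}=\mu(\timeset)$ and $\load{\noconsumption}=\mu\bigl([0,t_0)\bigr)$) to squeeze all loads to $\frac{\mu(\timeset)}{n}$ and then invoking the equal-loads characterization. Your closing remark about why the extra hypothesis $\mu\bigl([0,t_0)\bigr)=0$ is genuinely needed in the second bullet is also on the mark.
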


\begin{proof}
A direct corollary of \cref{producer-coarse-nash-loads,decreasing-load}.
\end{proof}

\begin{proof}[of \cref{producer-coarse-nash-char}]
$\Rightarrow$:
Assume that $\bar{t}$ constitutes a pure-strategy Nash equilibrium in \coarsegame.
Let $s$ be a mixed-consumption Nash equilibrium in the induced consumer game $(\mu;\bar{t})$.
For every $j\in\prods$, by \cref{producer-coarse-nash-loads}, we obtain $\mu\bigl([t_j,1]\bigr)\ge\sum_{k=j}^{n-1}\load{k}=\frac{n-j}{n}\cdot\mu(\timeset)$, and so $\mu\bigl([0,t_j)\bigr) \le \frac{j}{n}\cdot\mu(\timeset)$.

$\Leftarrow$:
Assume that $\mu\bigl([0,t_j)\bigr) \le \frac{j}{n}\cdot\mu(\timeset)$ for every $j\in\prods$;
in particular, $\mu\bigl([0,t_0)\bigl)=0$.
Let $s$ be a mixed-consumption profile in the induced consumer game $(\mu;\bar{t})$.
By \cref{producer-coarse-nash-loads}, it is enough to show that $\load{j}=\frac{\mu(\timeset)}{n}$ for every $j\in\prods$.
Assume for contradiction that this is not the case. Therefore, as $\load{\noconsumption}=\mu\bigl([0,t_0)\bigl)=0$, there exists $k\in\prods$
s.t.\ $\load{k}>\frac{\mu(\timeset)}{n}$; let $k$ be maximal with this property. By \cref{decreasing-load}, we have $\sum_{j=0}^k\load{k}\ge(k+1)\cdot\load{k}>\frac{(k+1)\cdot\mu(\timeset)}{n}\ge\mu\bigl([0,t_{k+1})\bigr)$. Therefore, there exists $d \ge t_{k+1}$ s.t.\ $s'_j(d)>0$ for some $0\le j\le k$, but by definition of $k$ we notice that $\loadt{k}>\frac{\mu(\timeset)}{n}\ge\load{k+1}$, so $s'$ is not a Nash equilibrium --- a contradiction.

We note that an alternative proof of the second direction ($\Leftarrow$) may also be given via \cref{compute-ell} / \cref{compute-ell-no-algo}.
\end{proof}

The second direction ($\Leftarrow$) of \cref{producer-coarse-nash-char} can also be proven constructively. Such a proof is quite tedious in the general case, but simplifies greatly when $\mu$ is atomless. E.g.\ if $\mu=U(\timeset)$ the uniform measure, then whenever $t_0\le\cdots\le t_{n-1}$ meet the conditions of \cref{producer-coarse-nash-char}, then a Nash equilibrium can be formed by splitting the market as follows: every $d \in [0,\frac{1}{n})$ plays the pure strategy $0\in\prods$ (the conditions of \cref{producer-coarse-nash-char} guarantee that this is a legal strategy for all such $d$), every $d \in [\frac{1}{n},\frac{2}{n})$ plays the pure strategy $1\in\prods$ (once again, the conditions of \cref{producer-coarse-nash-char} guarantee that this is a legal strategy for all such $d$), and so on, until every $d \in [\frac{n-1}{n},1]$, playing the pure strategy $n-1\in\prods$. (We remark that if $t_j=\frac{j}{n}$ for every $j\in\prods$, then this is in fact the unique Nash equilibrium among consumers, up to modifications of measure zero. More about this split and Nash equilibrium --- in
\cref{producer-fine-market-allocation,producer-fine-strategy-by-allocation} in \cref{producers-fine-proofs}.)
The load on each producer in this case is precisely $\frac{1}{n}$, and by \cref{equal-loads-consumer-nash,producer-coarse-nash-loads}, the proof is complete.

\begin{proof}[of \cref{producer-coarse-super-strong}]
Assume w.l.o.g.\ that $t_0\le t_1\le\cdots\le t_{n-1}$ and
assume for contradiction that $P\subseteq\prods$ and $\bar{t}'\in\timeset^P$ as in the statement exist.
Let $s$ be a mixed-consumption Nash equilibrium in $(\mu;\bar{t}_{-P},\bar{t}')$.
As there exists $j\in P$ s.t.\ $\load{j}>\ell_j(\bar{t})=\frac{\mu(\timeset)}{n}$ (with the last equality by \cref{producer-coarse-nash-loads}), and as $\sum_{j=0}^{n-1}\load{j}\le\mu(\timeset)$, there thus exists a producer $k\in\prods$ s.t.\ $\load{k}<\frac{\mu(\timeset)}{n}$ --- let $\emptyset\ne K\subseteq\prods$ be the set of all such producers; by definition of $P$ and by \cref{producer-coarse-nash-loads}, we have $K \subseteq \prods\setminus P$.
By \cref{decreasing-load}, $K=\{n-1,n-2,\ldots,n-|K|\}$. Therefore, and by \cref{producer-coarse-nash-char},
$\sum_{k=n-|K|}^{n-1}\load{k}<\frac{|K|\cdot\mu(\timeset)}{n}\le\mu\bigl([t_{n-|K|},1]\bigr)$, and so there exists $d\in[t_{n-|K|},1]$ s.t.\ $s_j(d)>0$ for some $0\le j<n-|K|$, but by definition of $K$ we notice that $\loadt{j}\ge\frac{\mu(\timeset)}{n}>\loadt{n-|K|}$, and so (as $n-|K|\notin P$) we have that $s$ is not a Nash equilibrium in $(\mu;\bar{t}_{-P},\bar{t}')$ --- a contradiction.
\end{proof}

\begin{proof}[of \cref{producer-coarse-mixed,producer-coarse-mixed-nash-no-regret}]
\crefpart{producer-coarse-mixed}{dominant} follows directly from \cref{producer-coarse-dominant}.
We move on to proving \cref{producer-coarse-mixed-nash-no-regret} and \crefpart{producer-coarse-mixed}{nash-loads}.
Let $\bar{p}$ be a mixed-strategy Nash equilibrium in \coarsegame.
By \cref{zero-strategy-load}, $\expect{\ell_j(\bar{p})}\ge\frac{\mu(\timeset)}{n}$ for every $j\in\prods$; since $\sum_{j=0}^{n-1}\ell_j(\bar{p})\le\mu(\timeset)$, and by linearity of expectation, we obtain $\expect{\ell_j(\bar{p})}=\frac{\mu(\timeset)}{n}$ for every $j\in\prods$.
Let $j \in \prods$.
By \crefpart{producer-coarse-domination}{safe},
we have $\ell_j(\bar{p}_{-j},\mathds{1}_{\{0\}})\ge\ell_j(\bar{p})$. As $j$ has no incentive to deviate from $p_j$ to playing $0\in\timeset$, we thus
have that $\ell_j(\bar{p}_{-j},\mathds{1}_{\{0\}})=\ell_j(\bar{p})$ with probability $1$. By \cref{zero-strategy-load}, we have that $\ell_j(\bar{p}_{-j},\mathds{1}_{\{0\}})\ge\frac{\mu(\timeset)}{n}$, and so $\ell_j(\bar{p})\ge\frac{\mu(\timeset)}{n}$ with probability $1$. As $\expect{\ell_j(\bar{p})}=\frac{\mu(\timeset)}{n}$,
we have that  $\ell_j(\bar{p})=\frac{\mu(\timeset)}{n}$ with probability $1$.

Let now $\bar{p}$ be a mixed-strategy profile in \coarsegame, s.t.\ $\ell_j(\bar{p})=\frac{\mu(\timeset)}{n}$ with probability~$1$ for every $j \in \prods$. By \cref{producer-coarse-nash-loads},
the resulting realization is a pure-strategy Nash equilibrium with probability $1$, and so with probability $1$ no ex-post regret exists
and \emph{a fortiori} $\bar{p}$ is a Nash equilibrium.

We move on to proving \crefpart{producer-coarse-mixed}{nash-char}.
Let $\bar{p}$ be a mixed-strategy Nash equilibrium in \coarsegame.
We iteratively build a permutation $\pi\in\prods!$ s.t.\ $\mu\bigl([0,\Max\supp(p_{\pi(j)}))\bigr)\le\frac{j}{n}$ for every $j \in \prods$.
Let $k \in \prods$ and assume that $\pi(j)$ has been defined for all $0 \le j < k$. Define $U\eqdef\prods\setminus\{\pi(0),\ldots,\pi(k-1)\}$ and
let $t_{\min}$ be a random variable denoting the numerically smallest strategy realization of $U$, i.e.\
$t_{\min}\eqdef\min_{j\in U}p_j$.
By \crefpart{producer-coarse-mixed}{nash-loads}, with probability $1$ we have 
$\sum_{j\in U}\ell_j(\bar{p})=\frac{n-k}{n}\cdot\mu(\timeset)$; therefore, with probability $1$ we have $\mu\bigl([0,t_{\min})\bigr)\le\frac{k}{n}\cdot\mu(\timeset)$.
Hence, by independence, it is not possible that $\prob{\mu\bigr([0,p_j)\bigl)>\frac{k}{n}\cdot\mu(\timeset)}>0$ for every $j\in U$. Therefore, there exists
$\pi(k)\in U$ s.t.\ $\prob{\mu\bigr([0,p_{\pi(k)})\bigl)\le\frac{k}{n}\cdot\mu(\timeset)}=1$, and so $\mu\bigl([0,\Max\supp(p_{\pi(k)}))\bigr)\le\frac{k}{n}\cdot\mu(\timeset)$
and the construction is complete.

Let now $\bar{p}$ be a mixed-strategy profile in \coarsegame, s.t.\ $\mu\bigl([0,\Max\supp(p_{\pi(j)}))\bigr)\le\frac{j}{n}\cdot\mu(\timeset)$ for every $j \in \prods$. Therefore, $\mu\bigl([0,p_{\pi(j)})\bigr)\le\mbox{$\frac{j}{n}\cdot\mu(\timeset)$}$ for every $j \in \prods$ with probability $1$. By \cref{producer-coarse-nash-char},
the resulting realization is a pure-strategy Nash equilibrium with probability $1$, and so with probability $1$ no ex-post regret exists
and \emph{a fortiori} $\bar{p}$ is a Nash equilibrium.

We conclude by proving \crefpart{producer-coarse-mixed}{super-strong}.
Let $\bar{p}$ be a mixed-strategy Nash equilibrium in \coarsegame.
By \cref{producer-coarse-mixed-nash-no-regret,producer-coarse-super-strong}, a realization of $\bar{p}$ is with probability $1$ a~super-strong
equilibrium, and so \emph{a fortiori} $\bar{p}$ is a super-strong Nash equilibrium.
\end{proof}

\subsubsection{Proofs and Auxiliary Results for Section~\refintitle{producers-coarse-dynamics}}

\begin{lemma}\label{coarse-best-response-far-from-least-load}
Let $\bar{t}\in\timeset^{\prods}$ be a pure-strategy profile, let $h\in\arg\Max_{j\in\prods} t_j$ and let $k\in\prods$.
If $\bar{t}$ is not a Nash equilibrium in \coarsegame, but nonetheless $t_k$ is a best response to $\bar{t}_{-k}$, then both $\ell_h(\bar{t})<\ell_k(\bar{t})$ and $\mu\bigl([t_k,t_h)\bigr)\ge\frac{\mu(\timeset)}{n}$.
\end{lemma}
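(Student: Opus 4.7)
The plan proceeds in two steps matching the two conclusions. First, since $t_k$ is a best response, deviating to $0$ cannot improve $k$'s load, so $\ell_k(\bar{t}) \ge \ell_k(\bar{t}_{-k}, 0) \ge \frac{\mu(\timeset)}{n}$ by \cref{zero-strategy-load}. Now assume without loss of generality that $t_0 \le \cdots \le t_{n-1}$, so that $h = n-1$; by \cref{decreasing-load}, $\ell_h$ is then the smallest of the loads. Were $\ell_h \ge \frac{\mu(\timeset)}{n}$, all loads would be at least $\frac{\mu(\timeset)}{n}$, and since $\sum_j \ell_j(\bar{t}) \le \mu(\timeset)$ each would have to equal $\frac{\mu(\timeset)}{n}$ exactly --- whence $\bar{t}$ would be a Nash equilibrium by \cref{producer-coarse-nash-loads}, contradicting the hypothesis. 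Thus $\ell_h(\bar{t}) < \frac{\mu(\timeset)}{n} \le \ell_k(\bar{t})$.

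Continuing under the sorted-WLOG assumption, I would analyse the block decomposition produced by \cref{compute-ell}: let $0 = i_0 < i_1 < \cdots < i_m = n$ be the successive values of the variable $k$ in that algorithm, with corresponding constant loads $\ell^{(b)} \eqdef \mu([t_{i_b}, t_{i_{b+1}}))/(i_{b+1} - i_b)$ on block $[i_b, i_{b+1})$. These block loads are in fact strictly decreasing: were $\ell^{(b+1)} \ge \ell^{(b)}$, the weighted-average value at $k' = i_{b+2}$ would be at least $\ell^{(b)}$, so the ``Max argmax'' rule of \cref{compute-ell} would have selected $i_{b+2}$ over $i_{b+1}$ at the iteration starting at $i_b$. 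Let $b$ be the block containing $k$, so $\ell_k(\bar{t}) = \ell^{(b)}$ and $\ell_h(\bar{t}) = \ell^{(m-1)}$; since $\ell^{(b)} \ge \frac{\mu(\timeset)}{n} > \ell^{(m-1)}$, strict monotonicity forces $b < m-1$, so $i_{b+1} \le i_{m-1} \le n-1 = h$ and hence $t_h \ge t_{i_{b+1}}$.

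The desired bound then falls out directly from the Max-argmax choice at the iteration starting at $i_b$: for every $k' \in (i_b, n]$, $\mu([t_{i_b}, t_{k'}))/(k' - i_b) \le \ell^{(b)}$. Taking $k' = k$ (the case $k = i_b$ being trivial) yields $\mu([t_{i_b}, t_k)) \le (k - i_b)\,\ell^{(b)}$, so
\[
\mu\bigl([t_k, t_{i_{b+1}})\bigr) = \mu\bigl([t_{i_b}, t_{i_{b+1}})\bigr) - \mu\bigl([t_{i_b}, t_k)\bigr) \ge (i_{b+1} - k)\,\ell^{(b)} \ge \ell^{(b)} \ge \frac{\mu(\timeset)}{n}.
\]
Combined with $t_h \ge t_{i_{b+1}}$, this gives $\mu([t_k, t_h)) \ge \mu([t_k, t_{i_{b+1}})) \ge \frac{\mu(\timeset)}{n}$, as required. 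I expect the main obstacle to be the bookkeeping around \cref{compute-ell} --- in particular, rigorously establishing the strict monotonicity of the block loads via the ``Max argmax'' rule, and applying that rule to the position $k$ within its block --- rather than any conceptual leap.
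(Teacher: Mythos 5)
Your proof is correct, but for the second conclusion it takes a noticeably different and longer route than the paper. Your first part is essentially identical to the paper's (which cites \cref{producer-coarse-nash-least-most-loads}, a packaging of the same ``min load must fall below the average'' argument you spell out). For the second conclusion, where you invoke the block decomposition of \cref{compute-ell}, establish strict monotonicity of block loads from the $\Max\arg\Max$ tie-break, and then do arithmetic inside $k$'s block, the paper instead argues directly from the structure of a consumer Nash equilibrium: once $\ell_h(\bar{t})<\ell_k(\bar{t})$ is known, no consumer with type $d\ge t_h$ can support $k$ in equilibrium (since $h\in S_d$ is strictly less loaded), and since no type $d<t_k$ can use $k$ at all, the entire mass $\ell_k(\bar{t})$ must come from types in $[t_k,t_h)$, giving $\mu\bigl([t_k,t_h)\bigr)\ge\ell_k(\bar{t})\ge\frac{\mu(\timeset)}{n}$ in one line. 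Your combinatorial route is sound --- the strict decrease of block loads does follow from the weighted-average observation, the $\Max\arg\Max$ rule gives the needed per-position bound inside the block, and the sorted-index WLOG is legitimate by anonymity --- but it proves more than it needs (the strict block monotonicity is a nontrivial fact not recorded elsewhere in the paper) and is correspondingly more delicate. The payoff of your approach is that it stays entirely within the explicit algorithmic characterization of $\ell$; the payoff of the paper's is brevity and a direct use of the consumer-equilibrium condition, which is the tool the surrounding proofs lean on repeatedly.
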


\begin{proof}
As $t_k$ is a best response to $\bar{t}_{-k}$ and by \cref{zero-strategy-load}, $\ell_k(\bar{t})\ge\frac{\mu(\timeset)}{n}$. As $\bar{t}$ is not a Nash equilibrium, by \cref{producer-coarse-nash-least-most-loads} we have
$\ell_h(\bar{t})<\frac{\mu(\timeset)}{n}$, and so $\ell_h(\bar{t})<\frac{\mu(\timeset)}{n}\le\ell_k(\bar{t})$.
Therefore, in any mixed-consumption Nash equilibrium in $(\mu;\bar{t})$, no consumer with type $d\ge t_h$ consumes a positive amount from producer $k$, and so $\mu\bigl([t_k,t_h)\bigr)\ge\ell_k(\bar{t})\ge\frac{\mu(\timeset)}{n}$, as required.
\end{proof}

\begin{definition}
Let $\bar{t}\in\timeset^{\prods}$ be a pure-strategy profile in \coarsegame. 
\begin{itemize}
\item
For every $q\in\{0,\ldots,n\}$, we define \[Q_q(\bar{t})\eqdef\Bigl\{j\in\prods\mid \mu\bigl([0,t_j)\bigr) \in \bigl(\tfrac{q-1}{n}\cdot\mu(\timeset),\tfrac{q}{n}\cdot\mu(\timeset)\bigr]\Bigr\}\subseteq\prods.\]
\item
We define $M(\bar{t})\eqdef\Max\bigl\{q\in\{0,\ldots,n\}~\big|~ Q_q(\bar{t})\ne\emptyset\bigr\}$.
\end{itemize}
\end{definition}

\begin{remark}\leavevmode
\begin{itemize}
\item
 $\bigl(Q_q(\bar{t})\bigr)_{0\le q\le n}$ is a partition of $\prods$.
\item
When the CDF of $\mu$ is continuous and strictly increasing, then $Q_0(\bar{t})$ is the set of producers
with strategy $0$, while $Q_q(\bar{t})$ for $0<q\le n$ is the set of producers whose strategies lie in the $q$\tth\ $\nicefrac{1}{n}$ of $\timeset$ (as measured
by $\mu$), i.e.\ above the \mbox{$(q-1)$}\tth\ $n$-tile yet not above the $q$\tth\ $n$-tile; for such a CDF, $M(\bar{t})$ is the index of the $\nicefrac{1}{n}$
of $\timeset$ containing the numerically largest strategy (or $0$ is all strategies are $0$), i.e.\ it is the index of the lowest $n$-tile above which no strategies lie.
\end{itemize}
\end{remark}

\begin{lemma}\label{coarse-jump-out-of-q}
Let $\bar{t}\in\timeset^{\prods}$ and let $t'_k$ be a best response to $\bar{t}_{-k}$.
If $\bar{t}$ is not a Nash equilibrium, then $\mu\bigl([0,t'_k)\bigr)\le\frac{M(\bar{t})-1}{n}\cdot\mu(\timeset)$.
\end{lemma}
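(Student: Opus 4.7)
The plan is to case-split on whether the deviated profile $\bar{t}^\ast\eqdef(\bar{t}_{-k},t'_k)$ is itself a Nash equilibrium in \coarsegame.

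In the case where $\bar{t}^\ast$ is not a Nash equilibrium, I would apply \cref{coarse-best-response-far-from-least-load} with $\bar{t}^\ast$ in the role of the (non-equilibrium) profile, noting that $t'_k$ remains a best response to $\bar{t}^\ast_{-k}=\bar{t}_{-k}$. That lemma delivers some $h'\in\arg\Max_{j\in\prods}t^\ast_j$ with $\mu\bigl([t'_k,t^\ast_{h'})\bigr)\ge\tfrac{\mu(\timeset)}{n}$. In particular $t^\ast_{h'}>t'_k$, so $h'\ne k$, forcing $t^\ast_{h'}=t_{h'}$ to be one of the original, unchanged strategies. By the definition of $M(\bar{t})$ every original strategy lies in some bucket $Q_q(\bar{t})$ with $q\le M(\bar{t})$, so $\mu\bigl([0,t_{h'})\bigr)\le\tfrac{M(\bar{t})}{n}\cdot\mu(\timeset)$, and subtracting the bound on $\mu\bigl([t'_k,t_{h'})\bigr)$ gives the desired $\mu\bigl([0,t'_k)\bigr)\le\tfrac{M(\bar{t})-1}{n}\cdot\mu(\timeset)$.

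In the case where $\bar{t}^\ast$ is a Nash equilibrium, I would restate \cref{producer-coarse-nash-char} in its equivalent bucket-count form: a pure profile $\bar{s}$ is Nash iff for every $q\in\{0,\ldots,n-1\}$ the number of producers $i$ with $\mu\bigl([0,s_i)\bigr)\le\tfrac{q}{n}\cdot\mu(\timeset)$ is at least $q+1$. Since $\bar{t}$ is not Nash this criterion fails at some index $q_0$, and since all $n$ producers of $\bar{t}$ satisfy $\mu\bigl([0,t_i)\bigr)\le\tfrac{M(\bar{t})}{n}\cdot\mu(\timeset)$ by the definition of $M(\bar{t})$, the failing index must satisfy $q_0\le M(\bar{t})-1$. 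The deviated profile $\bar{t}^\ast$ differs from $\bar{t}$ only in coordinate $k$, so the count at index $q_0$ can change by at most one between $\bar{t}$ and $\bar{t}^\ast$; combined with the jump from ``$\le q_0$'' in $\bar{t}$ to ``$\ge q_0+1$'' in $\bar{t}^\ast$, the change must be an increase by exactly one realised by $t'_k$ itself, forcing $\mu\bigl([0,t'_k)\bigr)\le\tfrac{q_0}{n}\cdot\mu(\timeset)\le\tfrac{M(\bar{t})-1}{n}\cdot\mu(\timeset)$.

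The main subtlety lies in the second case: one must reinterpret \cref{producer-coarse-nash-char} as a monotone bucket-count condition, bound the failing index $q_0$ by $M(\bar{t})-1$ via the very definition of $M$, and then observe that a single-coordinate deviation can only ``repair'' a violated count by shifting the deviator itself into one of the appropriately low buckets. The first case, in contrast, is essentially a one-line consequence of \cref{coarse-best-response-far-from-least-load} combined with the definition of $M(\bar{t})$.
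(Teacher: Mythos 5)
Your proof is correct and mirrors the paper's own argument step for step: both case-split on whether the post-deviation profile $(\bar{t}_{-k},t'_k)$ is a coarse Nash equilibrium, both handle the non-equilibrium case by applying \cref{coarse-best-response-far-from-least-load} to $(\bar{t}_{-k},t'_k)$ and subtracting $\mu(\timeset)/n$ from the bound $\mu\bigl([0,t_h)\bigr)\le\frac{M(\bar{t})}{n}\mu(\timeset)$, and both handle the equilibrium case via the observation that a single-coordinate deviation which turns a non-equilibrium into an equilibrium must (by \cref{producer-coarse-nash-char}) move $\mu\bigl([0,t'_k)\bigr)$ into some bucket $\frac{\tilde{\jmath}}{n}\mu(\timeset)$ with $\tilde{\jmath}<M(\bar{t})$. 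Your write-up merely spells out the "bucket-count" reformulation of \cref{producer-coarse-nash-char} that the paper leaves implicit in its second case; the underlying logic is identical.
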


\begin{proof}
Let $h\in\arg\Max_{j\in\prods} t'_j$, where $\bar{t}'_{-k}\eqdef\bar{t}_{-k}$.
We consider two cases. If $(\bar{t}_{-k},t'_k)$ is not a Nash equilibrium, then by \cref{coarse-best-response-far-from-least-load} we have $\mu\bigl([t'_k,t_h)\bigr)\ge\frac{\mu(\timeset)}{n}$, and so $\mu\bigl([0,t'_k)\bigr)\le\mu\bigl([0,t_h)\bigr)-\frac{\mu(\timeset)}{n}\le\frac{M(\bar{t})-1}{n}\cdot\mu(\timeset)$, as required. Otherwise, $(\bar{t}_{-k},t'_k)$ is a Nash equilibrium
while $\bar{t}$ is not. Therefore, by \cref{producer-coarse-nash-char}, there exists $\tilde{\jmath}\in\{0,\ldots,n-1\}$ s.t.\ $\mu\bigl([0,t'_k)\bigr)\le\frac{\tilde{\jmath}}{n}\cdot\mu(\timeset)<\mu\bigl([0,t_k)\bigr)$. As $\mu\bigl([0,t_k)\bigr)\le\frac{M(\bar{t})}{n}\cdot\mu(\timeset)$, we thus have $\tilde{\jmath}<M(\bar{t})$,
and so $\mu\bigl([0,t'_k)\bigr)\le\frac{M(\bar{t})-1}{n}\cdot\mu(\timeset)$, as required.
\end{proof}

\begin{lemma}\label{coarse-road-to-nash}
Let $\delta>0$, let $(\bar{t}^i,P_i)_{i=0}^{\infty}$ be a $\delta$-better-response dynamic in \coarsegame\ and let $i\in\mathbb{N}$. If $\bar{t}^i$ is not a Nash
equilibrium, then all of the following hold.
\begin{parts}
\item\label{coarse-road-to-nash-m-decreasing}
$M(\bar{t}^{i+1})\le M(\bar{t}^i)$.
\item\label{coarse-road-to-nash-q-decreasing}
$Q_{M(\bar{t}^i)}(\bar{t}^{i+1})\subseteq Q_{M(\bar{t}^i)}(\bar{t}^i)$.
\item\label{coarse-road-to-nash-bad-moving}
$\mu\bigl([t^{i+1}_j,t^i_j)\bigr)\ge\delta$, for every $j\in P_i\cap Q_{M(\bar{t}^i)}(\bar{t}^{i+1})$.
\end{parts}
\end{lemma}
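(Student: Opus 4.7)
The plan is to analyze, for each producer $j\in\prods$, how $t^{i+1}_j$ relates to $t^i_j$ via three exhaustive cases: (i) $j\notin P_i$, so $t^{i+1}_j=t^i_j$; (ii) $j\in P_i$ and $t^{i+1}_j$ is a best response to $\bar{t}^i_{-j}$; (iii) $j\in P_i$ with $\ell_j(\bar{t}^i_{-j},t^{i+1}_j)\ge\ell_j(\bar{t}^i)+\delta$. The two main tools are \cref{coarse-jump-out-of-q}, which in case~(ii) and under the hypothesis that $\bar{t}^i$ is not a Nash equilibrium yields $\mu\bigl([0,t^{i+1}_j)\bigr)\le\frac{M(\bar{t}^i)-1}{n}\cdot\mu(\timeset)$, and the combination of \cref{ell-decreasing,ell-lipschitz}, which in case~(iii) forces $t^{i+1}_j<t^i_j$ (monotonicity of the load in $t_j$, together with the strict load increase) and then $\mu\bigl([t^{i+1}_j,t^i_j)\bigr)\ge\delta$ (Lipschitz estimate applied to the strict load gain of at least $\delta$).

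I would first dispatch \crefpart{coarse-road-to-nash}{bad-moving}. Fix $j\in P_i\cap Q_{M(\bar{t}^i)}(\bar{t}^{i+1})$. Case~(ii) is ruled out: it would give $\mu\bigl([0,t^{i+1}_j)\bigr)\le\frac{M(\bar{t}^i)-1}{n}\cdot\mu(\timeset)$, contradicting $j\in Q_{M(\bar{t}^i)}(\bar{t}^{i+1})$. Hence $j$ falls in case~(iii), and the paragraph above supplies $\mu\bigl([t^{i+1}_j,t^i_j)\bigr)\ge\delta$.

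For \cref{coarse-road-to-nash-m-decreasing,coarse-road-to-nash-q-decreasing}, I would establish the uniform bound $\mu\bigl([0,t^{i+1}_j)\bigr)\le\frac{M(\bar{t}^i)}{n}\cdot\mu(\timeset)$ for every $j\in\prods$ --- which immediately gives \crefpart{coarse-road-to-nash}{m-decreasing} --- and simultaneously verify that whenever this bound is attained together with $\mu\bigl([0,t^{i+1}_j)\bigr)>\frac{M(\bar{t}^i)-1}{n}\cdot\mu(\timeset)$ (i.e.\ $j\in Q_{M(\bar{t}^i)}(\bar{t}^{i+1})$), the same bracket also sandwiches $\mu\bigl([0,t^i_j)\bigr)$, yielding \crefpart{coarse-road-to-nash}{q-decreasing}. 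In case~(i) nothing changes; in case~(ii), $\mu\bigl([0,t^{i+1}_j)\bigr)$ already sits strictly below $\frac{M(\bar{t}^i)-1}{n}\cdot\mu(\timeset)$ by \cref{coarse-jump-out-of-q}, so neither hypothesis is triggered; in case~(iii), $t^{i+1}_j<t^i_j$ forces $\mu\bigl([0,t^{i+1}_j)\bigr)\le\mu\bigl([0,t^i_j)\bigr)\le\frac{M(\bar{t}^i)}{n}\cdot\mu(\timeset)$ via the definition of $M(\bar{t}^i)$, and if moreover $\mu\bigl([0,t^{i+1}_j)\bigr)>\frac{M(\bar{t}^i)-1}{n}\cdot\mu(\timeset)$, then the same monotonicity places $\mu\bigl([0,t^i_j)\bigr)$ in the identical bracket, giving $j\in Q_{M(\bar{t}^i)}(\bar{t}^i)$.

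The main obstacle is largely organizational: cleanly separating the two flavours of improvement permitted by a $\delta$-better response (exact best response versus a strict $\delta$-gain), and then tracking $\mu\bigl([0,\cdot)\bigr)$ across the step with the correct strict/non-strict inequalities. Once those cases are laid out, all three parts fall out directly from \cref{coarse-jump-out-of-q,ell-decreasing,ell-lipschitz}.
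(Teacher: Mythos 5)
Your proof is correct, and it takes a genuinely different --- and arguably cleaner --- route from the paper's for \crefpart{coarse-road-to-nash}{m-decreasing}. The paper splits cases on whether the \emph{old} strategy $t^i_k$ is a best response to $\bar{t}^i_{-k}$; in the case where it is, establishing Part~(\labelcref{coarse-road-to-nash-m-decreasing}) requires the chain \cref{coarse-best-response-far-from-least-load}, anonymity, and \cref{ell-single-valley} to show $t^{i+1}_k<\Max_j t^i_j$. You instead split on whether the \emph{new} strategy $t^{i+1}_j$ is a best response, which lets \cref{coarse-jump-out-of-q} do all the work in that branch, and in the remaining branch \cref{ell-decreasing} immediately forces $t^{i+1}_j<t^i_j$ so that $\mu\bigl([0,\cdot)\bigr)$ can only fall. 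This bypasses \cref{coarse-best-response-far-from-least-load} and \cref{ell-single-valley} entirely, and also unifies Parts~(\labelcref{coarse-road-to-nash-m-decreasing}) and~(\labelcref{coarse-road-to-nash-q-decreasing}) under one case sweep, whereas the paper treats them as separate arguments (though Part~(\labelcref{coarse-road-to-nash-q-decreasing}) in the paper already resembles yours). Part~(\labelcref{coarse-road-to-nash-bad-moving}) in both treatments is the same application of \cref{coarse-jump-out-of-q} followed by \cref{ell-decreasing,ell-lipschitz}. One cosmetic nit: \cref{coarse-jump-out-of-q} gives $\mu\bigl([0,t^{i+1}_j)\bigr)\le\frac{M(\bar{t}^i)-1}{n}\cdot\mu(\timeset)$, not strict inequality as you wrote, but since the lower endpoint of the $Q_{M(\bar{t}^i)}$ interval is open, the non-strict bound already excludes $j$ from $Q_{M(\bar{t}^i)}(\bar{t}^{i+1})$, so nothing in your argument breaks.
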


\begin{remark}\label{coarse-road-to-nash-slightly-faster}
Finer analysis of similar nature may be used to show that $\delta$ may be replaced with $\bigl(n+1-M(\bar{t}^i)\bigr)\cdot\delta$ in \crefpart{coarse-road-to-nash}{bad-moving}.
\end{remark}

\begin{proof}[of \cref{coarse-road-to-nash}]
We commence by proving \cref{coarse-road-to-nash-m-decreasing}. Let $h\in\arg\Max_{j\in\prods} t^i_j$.
It is enough to show that $t^{i+1}_k<t^i_h$ for every $k\in P_i$. We consider two cases.
If $t^i_k$ is not a best response to $\bar{t}^i_{-k}$, then by definition of $\delta$-better-response dynamics and by \crefpart{producer-coarse-domination}{safe}, $t^{i+1}_k<t^i_k\le t^i_h$.
Otherwise, $t^i_k$, and hence also $t^{i+1}_k$, are best responses to $\bar{t}^i_{-k}$. Therefore, by \cref{coarse-best-response-far-from-least-load},
$\ell_h(\bar{t}^i)<\ell_k(\bar{t}^i)=\ell_k(\bar{t}^i_{-k},t^{i+1}_k)$; in particular, $k\ne h$.
By anonymity and by \cref{ell-single-valley} (for $j=h$), we have $\ell_k(\bar{t}^i_{-k},t^i_h)=\ell_h(\bar{t}^i_{-k},t^i_h)\le\ell_h(\bar{t}^i)<\ell_k(\bar{t}^i_{-k},t^{i+1}_k)$. Therefore, by \cref{ell-decreasing},
$t^{i+1}_k<t^i_h$ in this case as well, as required.

We now proceed to prove \cref{coarse-road-to-nash-q-decreasing}. Let  $k\in \prods\setminus Q_{M(\bar{t}^i)}(\bar{t}^i)$; we must show that $k\notin Q_{M(\bar{t}^i)}(\bar{t}^{i+1})$.
It is enough to consider the scenario in which $k\in P_i$, and to show that under this condition, $\mu\bigl([0,t^{i+1}_k)\bigr)\le\frac{M(\bar{t}^i)-1}{n}\cdot\mu(\timeset)$.
Once again, we consider two cases.
If $t^i_k$ is not a best response to $\bar{t}^i_{-k}$, then by definition of $\delta$-better-response dynamics and by \crefpart{producer-coarse-domination}{safe},
$t^{i+1}_k<t^i_k$ and so $\mu\bigl([0,t^{i+1}_k)\bigr)\le\mu\bigl([0,t^i_k)\bigr)\le\frac{M(\bar{t}^i)-1}{n}\cdot\mu(\timeset)$, as required.
Otherwise, $t^i_k$, and hence also $t^{i+1}_k$, are best responses to $\bar{t}^i_{-k}$. By \cref{coarse-jump-out-of-q}, in this case we have  $\mu\bigl([0,t^{i+1}_k)\bigr)\le\frac{M(\bar{t}^i)-1}{n}\cdot\mu(\timeset)$ as well, as required.

We conclude by proving \cref{coarse-road-to-nash-bad-moving}. Let $k\in P_i\cap Q_{M(\bar{t}^i)}(\bar{t}^{i+1})$.
As $\mu\bigl([0,t^{i+1}_k)\bigr)>\mbox{$\frac{M(\bar{t}^i)-1}{n}\cdot\mu(\timeset)$}$, by \cref{coarse-jump-out-of-q} we have that $t^{i+1}_k$ is not a best response to $\bar{t}^i_{-k}$. Therefore,
by definition of $\delta$-better-response dynamics, we have that $\ell_k(\bar{t}^i_{-k},t^{i+1}_k)\ge\ell_k(\bar{t}^i)+\delta$, and so by \cref{ell-decreasing,ell-lipschitz}, $\mu\bigl([t^{i+1}_k,t^i_k)\bigr)\ge\delta$ as required.
\end{proof}

\begin{proof}[of \cref{coarse-response-equilibrium}]
Let $\delta>0$ and let $(\bar{t}^i,P_i)_{i=0}^{\infty}$ be a $\delta$-better-response dynamic in \coarsegame.
By definition, $M(\bar{t}^0)\le n$.
By \cref{coarse-road-to-nash}, $M=M(\bar{t}^i)$ decreases by at least $1$ in every $\bigl\lceil\frac{\mu(\timeset)}{\delta n}\bigr\rceil$ rounds within which a Nash equilibrium is not reached.
Therefore, if a Nash equilibrium is not reached in at most $n\cdot\bigl\lceil\frac{\mu(\timeset)}{\delta n}\bigr\rceil$ rounds from $0$, then $M(\bar{t}^i)=0$ and so $\bar{t}^i\equiv 0$, which by
\cref{producer-coarse-nash-char} is a Nash equilibrium. If $\bar{t}^i$ is a Nash equilibrium, then by \cref{producer-coarse-nash-char}, $M(\bar{t}^{i+1})\le n-1$, and so if a Nash equilibrium is not reached in at most $(n-1)\cdot\bigl\lceil\frac{\mu(\timeset)}{\delta n}\bigr\rceil$ rounds from $i+1$, then we have $M=0$ once more, and so a Nash equilibrium is reached again.

The tighter bounds described in \cref{coarse-response-equilibrium-slightly-faster} may be shown in a similar manner, due to \cref{coarse-road-to-nash-slightly-faster}.
\end{proof}

\begin{definition}[$k$-Canonical Form]
Let $k\in\prods$ and let $\bar{t}\in\timeset^{\prods}$ be a pure-strategy Nash equilibrium in \coarsegame. We say that $\bar{t}$ is in \emph{$k$-canonical form} if all of the following hold.
\begin{conditions}
\item
$t_0\le t_1\le\cdots\le t_{k-1}\le t_{k+1}\le\cdots\le t_{n-1}$.
\item
$\mu\bigl([0,t_j)\bigr)\le \frac{j}{n}$ for every $j<k$.
\item
Either $k=n-1$, or $\mu\bigl([0,t_{k+1})\bigr)>\frac{k}{n}\cdot\mu(\timeset)$.
\end{conditions}
\end{definition}

\begin{lemma}\label{permute-to-canonical-form}
Let $k\in\prods$ and let $\bar{t}\in\timeset^{\prods}$ be a pure-strategy Nash equilibrium in \coarsegame.
There exists a permutation $\pi \in\prods!$ s.t.\ $(t_{\pi(0)},t_{\pi(1)},\ldots,t_{\pi(n-1)})$ is in \mbox{$\pi^{-1}(k)$-canonical} form.
\end{lemma}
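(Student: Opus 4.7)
The plan is to insert $t_k$ at a suitably chosen position in the sorted order of the remaining strategies. Let $s_0\le\cdots\le s_{n-2}$ denote the values of $\{t_j\}_{j\ne k}$ arranged in non-decreasing order, realized by some bijection $\sigma\colon\prodsm\to\prods\setminus\{k\}$ with $s_j=t_{\sigma(j)}$. For any choice of $m\in\prods$, define $\pi\in\prods!$ by $\pi(j)\eqdef\sigma(j)$ for $j<m$, $\pi(m)\eqdef k$, and $\pi(j)\eqdef\sigma(j-1)$ for $j>m$; then $\pi^{-1}(k)=m$, and deleting position~$m$ from $(t_{\pi(0)},\ldots,t_{\pi(n-1)})$ leaves precisely $(s_0,\ldots,s_{n-2})$. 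Condition~1 of $m$-canonical form therefore holds automatically, regardless of the value of $m$, and the task reduces to choosing $m$ so that conditions~2 and~3 are also satisfied.

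To this end, I would set $m\eqdef\min\bigl\{j\in\prods\mid j=n-1\ \text{or}\ \mu\bigl([0,s_j)\bigr)>\tfrac{j}{n}\cdot\mu(\timeset)\bigr\}$, which is well defined since $j=n-1$ always lies in this set. By minimality, every $j<m$ satisfies $\mu\bigl([0,s_j)\bigr)\le\tfrac{j}{n}\cdot\mu(\timeset)$, and since $t_{\pi(j)}=s_j$ for such $j$, this is exactly condition~2. For condition~3, either $m=n-1$ (in which case the condition is vacuous), or $m<n-1$, in which case $t_{\pi(m+1)}=s_m$ and the defining inequality for $m$ supplies the required strict lower bound.

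The only remaining obligation is that the permuted profile must still be a Nash equilibrium, as implicitly required by the definition of canonical form; this follows directly from \cref{producer-coarse-nash-char}, whose characterization depends only on the sorted multiset of strategies and is therefore invariant under permutation. I do not expect a substantial obstacle here: the main observation that unlocks the proof is that the canonical form imposes no constraint at all on the value of $t_k$ itself (only on the surrounding entries), so one has complete freedom in where to place $t_k$, and the minimum defined above is the natural choice that simultaneously handles the ``$\le$'' bounds to the left of position~$m$ and the ``$>$'' bound at position~$m+1$.
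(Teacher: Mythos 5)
Your proof is correct and takes essentially the same approach as the paper's: both place $t_k$ at the smallest index $m$ at which condition~(3) of $m$-canonical form holds, with condition~(1) automatic from sorting and condition~(2) forced by the choice of $m$. You define $m$ directly via a minimum, whereas the paper reaches the same $m$ iteratively by starting from the fully sorted profile and swapping $t_k$ rightward, using the Nash-equilibrium characterization of \cref{producer-coarse-nash-char} and the failure of condition~(3) at each step to propagate condition~(2).
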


\begin{proof}
We start by defining $\pi$ such that $t_{\pi(0)}\le t_{\pi(1)}\le\cdots\le t_{\pi(n-1)}$.
In particular, we have
\begin{equation}\label{permute-to-canonical-form-sort}
t_{\pi(0)}\le t_{\pi(1)}\le\cdots\le t_{\pi^{-1}(k)-1}\le t_{\pi^{-1}(k)+1}\le\cdots\le t_{\pi(n-1)}.
\end{equation}
By \cref{producer-coarse-nash-char}, we have that
\begin{equation}\label{permute-to-canonical-form-m}
\mu\bigl([0,t_{\pi(j)}\bigr)\le\tfrac{j}{n}\cdot\mu(\timeset)
\end{equation}
for every $j\in\prods$, and in particular for every $j<\pi^{-1}(k)$.
If $\pi^{-1}(k)=n-1$ or $\mu\bigl([0,t_{\pi(\pi^{-1}(k)+1)})\bigr)>\tfrac{\pi^{-1}(k)}{n}\cdot\mu(\timeset)$, then $\pi$ is a permutation as required.
Otherwise, we have
\begin{equation}\label{permute-to-canonical-form-p}
\mu\bigl([0,t_{\pi(\pi^{-1}(k)+1)})\bigr)\le\tfrac{\pi^{-1}(k)}{n}\cdot\mu(\timeset).
\end{equation}
In this case, we modify $\pi$ to create a new permutation $\pi'\in\prods!$ by incrementing $\pi^{-1}(k)$, or more formally --- by swapping the values of coordinates $\pi^{-1}(k)$ and $\pi^{-1}(k)+1$ of $\pi$.
We note that \cref{permute-to-canonical-form-sort} still holds w.r.t.\ $\pi'$ (i.e.\ by substituting $\pi'$ for $\pi$).
By \cref{permute-to-canonical-form-m} w.r.t.~$\pi$ for all $j<\pi^{-1}(k)$, we have that \cref{permute-to-canonical-form-m} holds w.r.t.\ $\pi'$
for all $j<\pi'^{-1}(k)-1$;
by \cref{permute-to-canonical-form-p} w.r.t.\ $\pi$, we have that \cref{permute-to-canonical-form-m} holds w.r.t.\ $\pi'$ for $j=\pi'^{-1}(k)-1$ as well.
Once again, if $\pi'^{-1}(k)=n-1$ or $\mu\bigl([0,t_{\pi'(\pi'^{-1}(k)+1)})\bigr)>\tfrac{\pi'^{-1}(k)}{n}\cdot\mu(\timeset)$, then $\pi'$ is a permutation as required. Otherwise, \cref{permute-to-canonical-form-p} holds w.r.t.\ $\pi'$, and we repeat the modification step. As $\pi^{-1}(k)$ is incremented in each modification step, this process concludes in at most $n-1$ steps, as it concludes if $\pi^{-1}(k)$ reaches $n-1$.
\end{proof}

\begin{lemma}\label{canonical-form-deviation}
Let $k\in\prods$ and let $\bar{t}\in\timeset^{\prods}$ be a pure-strategy Nash equilibrium in \coarsegame\ in $k$-canonical form.
Both of the following hold.
\begin{parts}
\item\label{canonical-form-deviation-all-good}
$\mu\bigl([0,t_j)\bigr)\le \frac{j}{n}$ for every $j\in\prods$.
\item\label{canonical-form-deviation-tk-ne}
Either $k=n-1$ or $t_k<t_{k+1}$.
\item\label{canonical-form-deviation-nash-char}
For every $t\in\timeset$, $(\bar{t}_{-k},t)$ is a Nash equilibrium in \coarsegame\ iff $\mu\bigl([0,t)\bigr)\le \frac{k}{n}$.
\end{parts}
\end{lemma}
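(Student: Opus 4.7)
The plan is to prove the three parts in the order \crefpart{canonical-form-deviation}{tk-ne}, \crefpart{canonical-form-deviation}{all-good}, \crefpart{canonical-form-deviation}{nash-char}, in each case invoking \cref{producer-coarse-nash-char} on a carefully chosen nondecreasing arrangement of the relevant profile. A useful observation throughout is that \cref{producer-coarse-nash-char} applies to \emph{every} valid sort of a Nash-equilibrium profile, with ties broken arbitrarily, so it suffices to exhibit \emph{some} sort placing a given index at a desired position to obtain the corresponding bound on $\mu\bigl([0,\cdot)\bigr)$.

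For \crefpart{canonical-form-deviation}{tk-ne}, we may assume $k<n-1$. Suppose for contradiction that $t_{k+1}\le t_k$. Combined with the canonical ordering $t_0\le\cdots\le t_{k-1}\le t_{k+1}$, the sequence $t_0,\ldots,t_{k-1},t_{k+1}$ is nondecreasing and may be extended to a valid sort of $\bar{t}$ in which $t_{k+1}$ occupies position $k$; \cref{producer-coarse-nash-char} then yields $\mu\bigl([0,t_{k+1})\bigr)\le\frac{k}{n}\cdot\mu(\timeset)$, contradicting condition 3 of the $k$-canonical form. For \crefpart{canonical-form-deviation}{all-good}, condition 2 of the canonical form handles $j<k$ directly; for $j=k$, \crefpart{canonical-form-deviation}{tk-ne} gives $t_k<t_{k+1}\le\cdots\le t_{n-1}$, so in any sort of $\bar{t}$ the index $k$ sits at position at most $k$, yielding the bound via \cref{producer-coarse-nash-char}; for $j>k$, insert $t_k$ at its natural position $i\in\{0,\ldots,k\}$ within the sorted sequence $t_0\le\cdots\le t_{k-1}$ (possible since $t_k<t_{k+1}$), obtaining a nondecreasing arrangement of $\bar{t}$ in which $t_j$ sits at position $j$, yielding the bound via \cref{producer-coarse-nash-char}.

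For \crefpart{canonical-form-deviation}{nash-char}, the ``only if'' direction follows immediately by noting that the three conditions defining $k$-canonical form depend only on $\bar{t}_{-k}$, and are therefore inherited by $(\bar{t}_{-k},t)$ whenever it is a Nash equilibrium; applying the already-proven \crefpart{canonical-form-deviation}{all-good} to $(\bar{t}_{-k},t)$ at index $k$ then yields $\mu\bigl([0,t)\bigr)\le\frac{k}{n}\cdot\mu(\timeset)$. For the ``if'' direction, the hypothesis combined with condition 3 of the canonical form forces $t<t_{k+1}$ when $k<n-1$; in either case we form a sort of $(\bar{t}_{-k},t)$ by inserting $t$ at its natural position $i\in\{0,\ldots,k\}$ within $t_0\le\cdots\le t_{k-1}$ and then appending $t_{k+1},\ldots,t_{n-1}$ (an empty tail when $k=n-1$). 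The \cref{producer-coarse-nash-char} condition at each position is then verified by canonical form condition 2 (at positions holding some $t_{j'}$ with $j'<k$), by \crefpart{canonical-form-deviation}{all-good} (at positions $\ge k+1$), and at position $i$ holding $t$ either via the chain $\mu\bigl([0,t)\bigr)\le\mu\bigl([0,t_i)\bigr)\le\frac{i}{n}\cdot\mu(\timeset)$ when $i<k$ (using $t\le t_i$ by construction of $i$) or by direct assumption when $i=k$. Hence $(\bar{t}_{-k},t)$ is a Nash equilibrium. The principal technical obstacle is the bookkeeping in each reordering argument, i.e.\ tracking how $t_k$ or $t$ fits relative to the canonically sorted $\bar{t}_{-k}$, and harnessing canonical form condition 3 both to force $t_k<t_{k+1}$ in \crefpart{canonical-form-deviation}{tk-ne} and to force $t<t_{k+1}$ in the ``if'' direction of \crefpart{canonical-form-deviation}{nash-char}.
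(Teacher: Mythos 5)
Your proof is correct, and it rests on the same key tool as the paper's (repeated application of \cref{producer-coarse-nash-char} to carefully chosen sorts), but the decomposition is reorganized in a way worth noting. You prove \crefpart{canonical-form-deviation}{tk-ne} first and then use it to construct explicit sorts for the $j\ge k$ cases of \crefpart{canonical-form-deviation}{all-good}; the paper instead proves \crefpart{canonical-form-deviation}{all-good} first, deriving $\mu\bigl([0,t_j)\bigr)>\frac{k}{n}\cdot\mu(\timeset)$ for all $j>k$ directly from canonical-form condition~3 and using a counting argument (those $n-k-1$ strategies must fill positions $k+1,\ldots,n-1$ in any sort, so $t_k$ sits at position $\le k$), and only then obtains \crefpart{canonical-form-deviation}{tk-ne} as a byproduct. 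The most genuine improvement in your version is the ``only if'' direction of \crefpart{canonical-form-deviation}{nash-char}: your observation that the three conditions defining $k$-canonical form depend only on $\bar{t}_{-k}$ --- so that $(\bar{t}_{-k},t)$, whenever it is a Nash equilibrium, is automatically in $k$-canonical form and \crefpart{canonical-form-deviation}{all-good} applies to it --- is a clean shortcut that the paper does not use; the paper instead redoes the $\mu\bigl([0,t_j)\bigr)>\frac{k}{n}\cdot\mu(\timeset)$ bound. Both approaches are essentially equivalent in difficulty, but yours makes the logical dependence among the three parts more explicit and avoids repetition in \crefpart{canonical-form-deviation}{nash-char}.

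One small point of care: in the $j=k$ case of \crefpart{canonical-form-deviation}{all-good}, your argument via \crefpart{canonical-form-deviation}{tk-ne} implicitly assumes $k<n-1$; when $k=n-1$ the claim that $t_k$ sits at position at most $k=n-1$ is vacuous (it always does), so the conclusion still follows from \cref{producer-coarse-nash-char}, but it would be worth saying so explicitly. Similarly, for positions $i+1,\ldots,k$ in the ``if'' direction of \crefpart{canonical-form-deviation}{nash-char}, you are implicitly using that the bound $\mu\bigl([0,t_{p-1})\bigr)\le\frac{p-1}{n}\cdot\mu(\timeset)$ from condition~2 is stronger than what is needed at position $p$; this is correct but worth a word.
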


\begin{proof}
By definition of $k$-canonical form, $\mu\bigl([0,t_j)\bigr)\le \frac{j}{n}$ for every $j<k$.
By definition of $k$-canonical form, we also have for every $n>j>k$ that $\mu\bigl([0,t_j)\bigr)\ge\mu\bigl([0,t_{k+1})\bigr)>\mbox{$\frac{k}{n}\cdot\mu(\timeset)$}$. Therefore, by
\cref{producer-coarse-nash-char}, $\mu\bigl([0,t_j)\bigr)\le\frac{k}{n}\cdot\mu(\timeset)$ for every $j\le k$, and in particular
$\mu\bigl([0,t_k)\bigr)\le\frac{k}{n}\cdot\mu(\timeset)$. Furthermore, we obtain that $(t_{k+1}, t_{k+2},\ldots,t_{n-1})$ are 
the $n-k-1$ numerically largest strategies in $\bar{t}$, and as they are sorted, by \cref{producer-coarse-nash-char} we have that $\mu\bigl([0,t_j)\bigr)\le\frac{j}{n}\cdot\mu(\timeset)$ for every $j>k$ as well and the proof of \cref{canonical-form-deviation-all-good} is complete.

Assume that $k<n-1$. As we have shown that $\mu\bigl([0,t_k)\bigr)\le\frac{k}{n}\cdot\mu(\timeset)$, but by definition of $k$-canonical form $\mu\bigl([0,t_{k+1})\bigr)>\frac{k}{n}\cdot\mu(\timeset)$, we have that $t_k<t_{k+1}$ and \cref{canonical-form-deviation-tk-ne} holds.

We conclude by proving \cref{canonical-form-deviation-nash-char};
let $t\in\timeset$. If $\mu\bigl([0,t)\bigr)\le \frac{k}{n}$, then by \cref{canonical-form-deviation-all-good,producer-coarse-nash-char}, $(\bar{t}_{-k},t)$
is a Nash equilibrium in \coarsegame. Recall that $\mu\bigl([0,t_j)\bigr)>\frac{k}{n}\cdot\mu(\timeset)$ for every $j>k$; therefore,
if $\mu\bigl([0,t)\bigr)>\frac{k}{n}\cdot\mu(\timeset)$ as well, then by \cref{producer-coarse-nash-char}, $(\bar{t}_{-k},t)$
is a not a Nash equilibrium in \coarsegame.
\end{proof}

\begin{proof}[of \cref{coarse-sequential}]
Let $\bar{t}\in\timeset^{\prods}$ be a Nash equilibrium in \coarsegame\ and let $k\in\prods$.
By \cref{permute-to-canonical-form}, assume w.l.o.g.\ that $\bar{t}$ is in $k$-canonical form.
By \crefpart{canonical-form-deviation}{nash-char}, it is enough to show that each $t_k<t'_k\le1$ s.t.\ $\mu\bigl([0,t'_k)\bigr)>\frac{k}{n}\cdot\mu(\timeset)$ is \emph{not}
a best response to $\bar{t}_{-k}$ in \coarsegame. If $k<n+1$, then by \crefpart{canonical-form-deviation}{tk-ne}, $t_k<t_{k+1}$ and so, by \cref{producer-coarse-domination}
and since $\mu\bigl([0,t_{k+1})\bigr)>\frac{k}{n}\cdot\mu(\timeset)$, it is enough to consider $t_k < t'_k \le t_{k+1}$ in this case.
By definition, for every $j\le k$, we have $\mu\bigl([0,t'_k)\bigr)>\frac{k}{n}\cdot\mu(\timeset)\ge\mu\bigl([0,t_j)\bigr)$
and so $t'_k>t_j$.

Let $s'$ and $s''$ be mixed-consumption Nash equilibria in the $k$-producer consumer game $(\mu|_{\cap[0,t'_k)};t_0,\ldots,t_{k-1})$
and in the $(n-k)$-producer game $(\mu|_{\cap[t'_k,1]};t'_k,t_{k+1},\ldots,t_{n-1})$, respectively; by abuse of notation,
we think of $s''$ as $s''=(s''_{\noconsumption},s''_k,s''_{k+1},\ldots,s''_{n-1})$ ($s''_{\noconsumption}\equiv0$ by definition of $s''$)
and for each $k\le j<n$ define $\loadtt{j}\eqdef\int_{\timeset} s''_j\,d(\mu|_{\cap[t'_k,1]})$.
For every $0\le j<k$, we have by definition of $k$-canonical form that
\[
\mu\bigl([0,t_j)\bigr)\le\frac{j}{n}\cdot\mu(\timeset)=\frac{j}{k}\cdot\frac{k}{n}\cdot\mu(\timeset)<\frac{j}{k}\cdot\mu\bigl([0,t'_k)\bigr)=\frac{j}{k}\cdot\mu|_{\cap[0,t'_k)}(\timeset).
\]
By \cref{producer-coarse-nash-char}, $(t_0,\ldots,t_{k-1})$ is therefore a Nash equilibrium in $(k,\mu|_{\cap[0,t'_k)},\succeq_C)$, and so by \cref{producer-coarse-nash-loads}
we have $\loadt{j}=\frac{\mu|_{\cap[0,t'_k)}(\timeset)}{k}=\frac{\mu([0,t'_k))}{k}>\frac{\mu(\timeset)}{n}$ for every $0\le j<k$.

For every $k < j < n$, we have by \crefpart{canonical-form-deviation}{all-good}  that
\[
\mu\bigl([t_j,1]\bigr)\ge\frac{n-j}{n}\cdot\mu(\timeset) > \frac{n-j}{n}\cdot\frac{n}{n-k}\cdot\mu\bigl([t'_k,1]\bigr)=\frac{n-j}{n-k}\cdot\mu\bigl([t'_k,1]\bigr),
\]
and therefore
\[
\mu\bigl([t'_k,t_j)\bigr)=\mu\bigl([t'_k,1]\bigr)-\mu\bigl([t_j,1]\bigr)<\mu\bigl([t'_k,1]\bigr)-\frac{n-j}{n-k}\cdot\mu\bigl([t'_k,1]\bigr)=\frac{j-k}{n-k}\cdot\mu|_{\cap[t'_k,1]}(\timeset).
\]
Note that $\mu\bigl([t'_k,t'_k)\bigr)=0=\frac{k-k}{n-k}\cdot\mu|_{\cap[t'_k,1]}(\timeset)$ trivially holds as well.
By \cref{producer-coarse-nash-char}, $(t'_k,t_{k+1},t_{k+2},\ldots,t_{n-1})$ is therefore a Nash equilibrium in $(n-k,\mu|_{\cap[t'_k,1]},\succeq_C)$, and so by \cref{producer-coarse-nash-loads}
we have that $\loadtt{j}=\frac{\mu|_{\cap[t'_k,1]}(\timeset)}{n-k}=\frac{\mu([t'_k,1])}{n-k}<\frac{\mu(\timeset)}{n}$ for every $k\le j<n$.

Let $s$ be the mixed-consumption profile defined by $s_j|_{[0,t'_k)}=s'_j$ for every $j\in\{\noconsumption,\linebreak0,1,\ldots,k-1\}$ and $s_j|_{[0,t'_k)}\equiv0$ for every $k\le j<n$, and by $s_j|_{[t'_k,1]}=s''_j$ for every $k\le j<n$ and $s_j|_{[t'_k,1]}\equiv0$ for every $j\in\{\noconsumption,0,1,\ldots,k-1\}$. By definition of $s'$ and of $s''$, we have that $s$ is a legal mixed-consumption
profile in $(\mu;\bar{t}_{-k},t'_k)$, and furthermore, that $\load{j}=\loadt{j}=\frac{\mu([0,t'_k))}{k}$ for every $j\in\{\noconsumption,0,1,\ldots,k-1\}$, and that $\load{j}=\loadtt{j}=\frac{\mu([t'_k,1])}{n-k}$ for every $k\le j<n$. By the former, and as $s'$ is a Nash equilibrium, no type $d\in[0,t'_k)$ has any incentive to deviate from $s$
in $(\mu;\bar{t}_{-k},t'_k)$, and by the latter, as $s''$ is a Nash equilibrium and as $\frac{\mu([t'_k,1])}{n-k}<\frac{\mu(\timeset)}{n}<\frac{\mu([0,t'_k))}{k}$, we have that neither does any type $d\in[t'_k,1]$. Therefore,
$s$ is a Nash equilibrium in $(\mu;\bar{t}_{-k},t'_k)$. As $\ell_k(\bar{t}_{-k},t'_k)=\load{k}=\frac{\mu([t'_k,1])}{n-k}<\frac{\mu(\timeset)}{n}=\ell_k(\bar{t})$ (with the last equality by \cref{producer-coarse-nash-loads}, since $\bar{t}$ is a Nash equilibrium in \coarsegame), we have that producer $k$
strictly prefers $t_k$ over $t'_k$ given $\bar{t}_{-k}$, and so $t'_k$ is not a best response to $\bar{t}_{-k}$ in \coarsegame, as required.

We note that an alternative proof may also be given via \cref{compute-ell} / \cref{compute-ell-no-algo}.
\end{proof}

\begin{proof}[of \cref{coarse-sequential-lazy-cor}]
A direct corollary of \cref{coarse-response-equilibrium,coarse-lazy,coarse-sequential}.
\end{proof}

\begin{proof}[of \cref{coarse-nonsequential}]
It is enough to show that some nonequilibrium can be reached in a finite number of steps from any Nash equilibrium.
Let $t\in\timeset$ s.t.\ $0<\mu\bigl([0,t)\bigr)\le\frac{n-1}{n}\cdot\mu(\timeset)$ (there must exist such $t$ by definition of $\mu$) and
let $j\in\{1,\ldots,n-1\}$ be minimal s.t.\ $\mu\bigl([0,t)\bigr)\le\frac{j}{n}\cdot\mu(\timeset)$.

By \cref{producer-coarse-dominant}, $0\in\timeset$ is a best-response by any producer to any Nash equilibrium, and so a (nonlazy)
best-response dynamic can reach $(0,0,\ldots,0)$ from any Nash equilibrium in one round.
Let $i\in\mathbb{N}$ be more than one round into the future after
reaching $(0,0,\ldots,0)$, s.t.\ $|P_i|>1$, and let $k,h\in P_i$ s.t.\ $k\ne h$.
By definition of $t$ and $j$ and by \cref{producer-coarse-nash-char}, any strategy profile
in which at most $n-j$ producers play $t$ and the rest play $0$ is a Nash equilibrium.
Therefore, within one round after reaching $(0,0,\ldots,0)$, a Nash equilibrium in which $j+1$ producers, including $k$ and $h$, play $0$ and the rest play $t$, can be reached,
and can be lazily maintained until the step $i$. In step $i$, all triggered producers may play $t\in\timeset$, which is a best response for each of them since at least $j$ producers playing~$0$ and the rest playing $t$ is a Nash equilibrium. Therefore, and as $k$ and $h$ both switch from playing $0$ to playing $t$ at $i$, at least $n-j+1$ producers play $t$ at $i+1$, which, by definition of $j$ and by \cref{producer-coarse-nash-char}, is a nonequilibrium.
\end{proof}

\begin{lemma}\label{must-jump-to-zero-coarse}
Let $k\in\prods$, let $\bar{t}_{-k}\in\timeset^{\prods\setminus\{k\}}$. If $\mu\bigl([0,t_j)\bigr)>0$ for all $j\in\prods\setminus\{k\}$, then 
$t'_k\in\timeset$ is a best response (by $k$) to $\bar{t}_{-k}$ in \coarsegame\ iff $\mu\bigl([0,t'_k)\bigr)=0$.
\end{lemma}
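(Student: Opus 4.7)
The plan is to prove both directions of the equivalence separately.

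For the easy direction ($\Leftarrow$), if $\mu\bigl([0,t'_k)\bigr)=0$, then the first assertion of \cref{producer-coarse-dominant} gives that $t'_k$ is a dominant strategy in \coarsegame\ (no assumption on $\bar{t}_{-k}$ is needed here), and hence in particular a best response to $\bar{t}_{-k}$.

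For the harder direction ($\Rightarrow$), I argue the contrapositive. Assume $\mu\bigl([0,t'_k)\bigr)>0$; I show that $t'_k$ is not a best response to $\bar{t}_{-k}$ by exhibiting $0\in\timeset$ as a strictly better alternative, i.e.\ $\ell_k(\bar{t}_{-k},0)>\ell_k(\bar{t}_{-k},t'_k)$. The weak inequality $\ell_k(\bar{t}_{-k},0)\ge\ell_k(\bar{t}_{-k},t'_k)$ is immediate from \crefpart{producer-coarse-domination}{safe}; the content of the proof is upgrading this to a strict inequality. To that end, sort the strategies of $\bar{t}_{-k}$ as $a_0\le\cdots\le a_{n-2}$; by hypothesis each $\mu\bigl([0,a_j)\bigr)>0$, and in particular $a_0>0$. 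Let $b_j$ denote the sorted strategies of $(\bar{t}_{-k},t'_k)$ and $c_j$ those of $(\bar{t}_{-k},0)$, so $b_0=\min(a_0,t'_k)>0$, $c_0=0$, $c_j=a_{j-1}$ for $1\le j\le n-1$ and $c_n=b_n=2$. By \cref{compute-ell} (equivalently \cref{compute-ell-no-algo}), producer $k$ sits at sorted position $0$ in $(\bar{t}_{-k},0)$, so $\ell_k(\bar{t}_{-k},0)=\Max_{1\le j\le n}\mu\bigl([0,c_j)\bigr)/j$; and it sits at some sorted position $q\ge0$ in $(\bar{t}_{-k},t'_k)$, so by \cref{decreasing-load}, $\ell_k(\bar{t}_{-k},t'_k)\le\Max_{1\le j\le n}\mu\bigl([b_0,b_j)\bigr)/j$.

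The strict inequality then follows from the positive ``no-consumption offset'' $\mu\bigl([0,b_0)\bigr)>0$. In the easy sub-case $t'_k\le a_0$, one has $q=0$ and $b_j=c_j$ for $j\ge1$, so each right-hand term equals the corresponding left-hand term minus $\mu\bigl([0,t'_k)\bigr)/j$; evaluating at the maximiser gives $\ell_k(\bar{t}_{-k},t'_k)\le\ell_k(\bar{t}_{-k},0)-\mu\bigl([0,t'_k)\bigr)/n<\ell_k(\bar{t}_{-k},0)$. The main obstacle is the sub-case $t'_k>a_0$ (so $q\ge1$, $b_0=a_0$, and $b_j\ne c_j$ for $j\le q$): the clean term-by-term identification breaks, and one must split the max into its $j>q$ and $j\le q$ portions. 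For $j>q$ one has $b_j=c_j$ and the subtraction of $\mu\bigl([0,a_0)\bigr)/j\ge\mu\bigl([0,a_0)\bigr)/n$ is transparent; for $j\le q$ one compares the $j$-th right-hand term $\mu\bigl([a_0,b_j)\bigr)/j$ with the $(j+1)$-th $\bar{t}^*$-term $\mu\bigl([0,c_{j+1})\bigr)/(j+1)$ (whose numerator equals $\mu\bigl([0,b_j)\bigr)$ for $j\le q-1$ and dominates $\mu\bigl([0,b_q)\bigr)$ for $j=q$), and uses the greedy selection in \cref{compute-ell} to argue that the maximiser of the right-hand side is effectively attained in the tail $j>q$ where the $\mu\bigl([0,a_0)\bigr)/n$ gap is realised. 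This chunk-level comparison, drawing on the structure provided by \cref{compute-ell}, is the technical heart of the proof.
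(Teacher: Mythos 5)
Your $\Leftarrow$ direction and your handling of the easy sub-case $t'_k\le a_0$ match the paper's approach, but there is a genuine gap in your hard sub-case $t'_k>a_0$, and that gap hides a missed simplification. You claim that ``the maximiser of the right-hand side is effectively attained in the tail $j>q$.'' This is false in general. Consider $n=3$, $\mu=U(\timeset)$, with the two fixed strategies at $a_0=0.5$, $a_1=0.9$, and $t'_k=0.95$, so $q=2$, $b=(0.5,0.9,0.95)$, $c=(0,0.5,0.9)$. Then the terms $\mu\bigl([b_0,b_j)\bigr)/j$ for $j=1,2,3$ are $0.4,\ 0.225,\ 1/6$, so the max of the right-hand side is attained at $j=1\le q$, not in the tail. (Your overall conclusion $\Max L=0.5>0.4=\Max R'$ still happens to hold, but not for the reason you give, and the term-by-term comparison you set up for $j\le q$ requires $\mu\bigl([0,b_j)\bigr)<(j+1)\cdot\mu\bigl([0,a_0)\bigr)$, which also fails in general.) The phrase ``uses the greedy selection in \cref{compute-ell} to argue'' does not substitute for a proof of this step.

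The missed simplification: you applied \crefpart{producer-coarse-domination}{safe} only to get the trivial weak inequality $\ell_k(\bar{t}_{-k},0)\ge\ell_k(\bar{t}_{-k},t'_k)$, but the right use of it here is as a \emph{reduction}. If $t'_k>\min_{j\ne k}t_j$, then $t''_k\eqdef\min_{j\ne k}t_j$ is a safe alternative to $t'_k$, so $\ell_k(\bar{t}_{-k},t''_k)\ge\ell_k(\bar{t}_{-k},t'_k)$, and by hypothesis $\mu\bigl([0,t''_k)\bigr)>0$ still. Hence it suffices to treat the case $t'_k\le t_j$ for all $j\ne k$, which forces $q=0$ and makes your ``easy sub-case'' the only one. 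This is exactly how the paper proceeds: after this reduction, $\ell_k(\bar{t}_{-k},0)=\Max_{0<j\le n}\mu\bigl([0,t_j)\bigr)/j$ and $\ell_k(\bar{t}_{-k},t'_k)=\Max_{0<j\le n}\mu\bigl([t'_k,t_j)\bigr)/j$ by \cref{compute-ell}, and evaluating the former at the maximiser of the latter gives the strict gap (each numerator drops by $\mu\bigl([0,t'_k)\bigr)>0$). Absorb your hard sub-case into the reduction, and the argument closes.
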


\begin{proof}
By \cref{producer-coarse-domination}, all strategies $t\in\timeset$ for which $\mu(\bigl([0,t)\bigr)=0$ are equivalent. As in particular, $t=0\in\timeset$ is such a strategy, it is therefore enough to show that $k$ strictly prefers to play
$0\in\timeset$ over any $t'_k$ s.t.\ $\mu\bigl([0,t'_k)\bigr)>0$. By
\crefpart{producer-coarse-domination}{safe}, it is enough to consider the case in which $t'_k\le t_j$ for all $j\in\prods\setminus\{k\}$.
By \cref{compute-ell} (for $t_n$ as defined there), $\ell_k(\bar{t}_{-k},0)=\Max_{0<j\le n}\frac{\mu([0,t_j))}{j}>\Max_{0<j\le n}\frac{\mu([t'_k,t_j))}{j}=\ell_k(\bar{t}_{-k},t'_k)$, as required.
\end{proof}

\begin{proof}[of \cref{coarse-best-response-fast}]
As in the proof of \cref{coarse-response-equilibrium}, and as a best-response dynamic is also a $\mu(\timeset)$-better-response dynamic,
we have that  $M(\bar{t}^0)\le n$ (and $M\le n-1$ on every step immediately following a Nash equilibrium), and that $M$ decreases by at
least $1$ every round if a Nash equilibrium is not reached.
Let $i\in\mathbb{N}$ s.t.\ $M(\bar{t}^i)=1$, $M(\bar{t}^{i-1})=2$, and $\bar{t}^{i-1}$ is not a Nash equilibrium;
it is enough to show that $\bar{t}^i$ is a Nash equilibrium.

Since $M(\bar{t}^i)=1$, by \cref{producer-coarse-nash-char} it is enough to show that there exists
$j\in\prods$ s.t.\ $\mu\bigl([0,t^i_j)\bigr)=0$. As $M(\bar{t}^i)\ne M(\bar{t}^{i-1})$, we have $P_{i-1}\ne\emptyset$.
By \cref{producer-coarse-nash-char}, as $M(\bar{t}^{i-1})=2$ but $\bar{t}^{i-1}$ is not a Nash equilibrium, there exists
at most one producer $k\in\prods$ s.t.\ $\mu\bigl([0,t^{i-1}_k)\bigr)=0$. If there exists no such producer, then by \cref{must-jump-to-zero-coarse}, we have $\mu\bigl([0,t^i_j)\bigr)=0$
for every $j\in P_i$, and as $P_i\ne\emptyset$, the proof is complete. Otherwise, there exists a unique $k\in\prods$ s.t.\ $\mu\bigl([0,t^{i-1}_k)\bigr)=0$. If $k\notin P_i$,
then $t^i_k=t^{i-1}_k$ and the proof is complete. Otherwise, $k\in P_i$ and by \cref{must-jump-to-zero-coarse}, we have $\mu\bigl([0,t^i_k)\bigr)=0$, as required.
\end{proof}

\begin{proof}[of \cref{coarse-best-response-fast-cor}]
A direct corollary of \cref{coarse-best-response-fast,coarse-lazy,coarse-sequential,coarse-best-response-fast-tight}.
\end{proof}

\subsection{Proofs and Auxiliary Results for Section~\refintitle{producers-fine}}\label{producers-fine-proofs}

\subsubsection{Proofs and Auxiliary Results for Section~\refintitle{producers-fine-statics}}

\begin{lemma}[Domination]\label{producer-fine-domination}
Let $t\ne t' \in \timeset$ be strategies in \finegame. $t$ is a safe alternative to $t'$ iff either
of the following hold. In either case, $t$ strongly dominates $t'$
\begin{conditions}
\item\label{producer-fine-domination-slight-right}
$t> t'$ and $\mu\bigr([t',t)\bigl)=0$.
\item\label{producer-fine-domination-of-end}
$\mu\bigl([t',1]\bigr)<\frac{\mu([t,1])}{n}$.
\end{conditions}
\end{lemma}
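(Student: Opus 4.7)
The plan is to handle both directions by reducing to \cref{producer-coarse-domination} and then carefully accounting for the tie-breaking in favour of numerically larger strategies that distinguishes fine preferences from coarse ones. Throughout, we use the fact (from \cref{ell-decreasing}) that $\ell_0(\bar{t}_{-0},\cdot)$ is nonincreasing, together with the lexicographic nature of fine utility: strictly increasing in load and, among strategies producing the same load, strictly increasing in the strategy value.

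For the ($\Leftarrow$) direction, I would argue separately under each condition. Under \cref{producer-fine-domination-slight-right}, the proof of the ($\Rightarrow$) direction of \crefpart{producer-coarse-domination}{weak} already establishes that $t$ and $t'$ are coarse-equivalent (equal coarse load for every $\bar{t}_{-0}$); combined with $t>t'$ and fine tie-breaking, this yields that $t$ strictly beats $t'$ for every $\bar{t}_{-0}$, i.e., $t$ strongly dominates $t'$ in \finegame. Under \cref{producer-fine-domination-of-end}, first observe that the strict inequality $\mu\bigl([t',1]\bigr)<\frac{\mu([t,1])}{n}$ forces $\mu\bigl([t,1]\bigr)>0$ and hence $t<t'$; then \crefpart{producer-coarse-domination}{strong} gives $\ell_0(\bar{t}_{-0},t)>\ell_0(\bar{t}_{-0},t')$ for every $\bar{t}_{-0}$, which in particular implies strict fine-preference dominance.

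For the ($\Rightarrow$) direction, assume $t$ is a safe alternative to $t'$ with $t\ne t'$, and split on whether $t>t'$ or $t<t'$. If $t>t'$, then \cref{ell-decreasing} forces $\ell_0(\bar{t}_{-0},t)\le\ell_0(\bar{t}_{-0},t')$ for every $\bar{t}_{-0}$; any $\bar{t}_{-0}$ giving a strict inequality here would yield strictly larger load (and hence strictly larger fine utility) under $t'$, contradicting safety. Therefore loads must coincide for every $\bar{t}_{-0}$, and by the ($\Leftarrow$) direction of \crefpart{producer-coarse-domination}{weak} this is equivalent to $\mu\bigl([t',t)\bigr)=0$, giving \cref{producer-fine-domination-slight-right}. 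If instead $t<t'$, then \cref{ell-decreasing} gives $\ell_0(\bar{t}_{-0},t)\ge\ell_0(\bar{t}_{-0},t')$ always; whenever equality holds at some $\bar{t}_{-0}$, the tie-breaking rule selects $t'$ over the strictly smaller $t$, contradicting safety. Hence strict inequality must hold for every $\bar{t}_{-0}$, i.e., $t$ strongly dominates $t'$ in \coarsegame, and \crefpart{producer-coarse-domination}{strong} translates this into \cref{producer-fine-domination-of-end}.

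The combined argument also yields the final claim that either condition implies strong domination, since each sub-case of the $\Leftarrow$ analysis produces a strict fine inequality for \emph{every} $\bar{t}_{-0}$. The main obstacle is purely bookkeeping: correctly pairing the two coarse-dominance characterizations of \cref{producer-coarse-domination} with the two sides of fine-tie-breaking (strategy-equal-load vs.\ strict load gap), and noting that condition \cref{producer-fine-domination-of-end} implicitly pins down the order $t<t'$ so that it and \cref{producer-fine-domination-slight-right} are mutually exclusive rather than overlapping.
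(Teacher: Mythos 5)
Your proof is correct and follows essentially the same route as the paper's (very terse) proof: split on whether $t>t'$ or $t<t'$, use \cref{ell-decreasing} to determine the direction of the load inequality, observe that safety in \finegame\ then forces either always-equal loads (plus $t>t'$, which becomes coarse-equivalence via \crefpart{producer-coarse-domination}{weak}) or always-strict load advantage (which is \crefpart{producer-coarse-domination}{strong}), and note that in both cases the fine preference is strict at every $\bar{t}_{-0}$, so safe alternative coincides with strong domination. The only small imprecision is directional labelling when you invoke \crefpart{producer-coarse-domination}{weak} --- for the ($\Rightarrow$) direction with $t>t'$, the step ``loads always coincide $\Rightarrow\mu\bigl([t',t)\bigr)=0$'' is the contrapositive of the ($\Leftarrow$) half of that iff, not the ($\Leftarrow$) half itself --- but this is a bookkeeping slip, not a gap.
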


\begin{proof}
$t$ is a safe alternative to (alternatively, strongly dominates) $t'$ iff either $t$ always produces greater load than $t'$, or $t$ always produces at least as much load as $t'$ and
in addition $t>t'$. By \crefpart{producer-coarse-domination}{strong}, the former occurs iff $\mu\bigl([t',1]\bigr)<\frac{\mu([t,1])}{n}$;
by \cref{producer-coarse-domination}(\labelcref{producer-coarse-domination-safe},\labelcref{producer-coarse-domination-weak}), if $t>t'$, then the latter occurs iff $\mu\bigr([t',t)\bigl)=0$
\end{proof}

\begin{lemma}\label{union-up}
$\mu\Bigl(\bigcup\Bigl\{[t,t') ~\Big|~ 0\le t<t' \And \mu\bigl([t,t')\bigr)\le m\Bigr\}\Bigr) \le m$, for every $t' \in \timeset$ and $m \in \Rge$.
\end{lemma}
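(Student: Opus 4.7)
The plan is to fix $t' \in \timeset$ and $m \in \Rge$ and to analyze the indexing set $T \eqdef \{t \in [0, t') \mid \mu([t, t')) \le m\}$. The key observation is that the map $t \mapsto \mu([t, t'))$ is nonincreasing in $t$, since enlarging the left endpoint shrinks the interval. Consequently $T$ is upward-closed in $[0, t')$: whenever $s \in T$ and $s \le t < t'$, the inclusion $[t, t') \subseteq [s, t')$ yields $\mu([t, t')) \le \mu([s, t')) \le m$ and hence $t \in T$. Therefore $T$ is either empty, or has the form $[t^*, t')$ or $(t^*, t')$, where $t^* \eqdef \inf T$.

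Denote by $U$ the union appearing in the lemma. I claim $U = T$ as subsets of $\timeset$. Indeed, $U = \bigcup_{t \in T}[t, t')$, and since all of these intervals share the right endpoint $t'$, the union is $[\inf T, t')$ if $\inf T \in T$ and $(\inf T, t')$ otherwise, matching the two nonempty shapes listed above. (In the boundary case $T = \emptyset$, $U = \emptyset$ as well, so $\mu(U) = 0 \le m$ and the lemma holds trivially.)

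Only two cases then remain to check. If $T = [t^*, t')$, then $t^* \in T$, so $\mu(U) = \mu([t^*, t')) \le m$ directly from the definition of $T$. If instead $T = (t^*, t')$, I pick a sequence $t_n \downarrow t^*$ with each $t_n \in T$ (which exists by definition of the infimum), observe that $[t_n, t') \uparrow (t^*, t')$ as $n \to \infty$, and invoke upward continuity of the finite measure $\mu$ to conclude $\mu(U) = \mu((t^*, t')) = \lim_n \mu([t_n, t')) \le m$.

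The only subtlety is that $t \mapsto \mu([t, t'))$ is merely \emph{left}-continuous in $t$ (with a possible right-jump of size $\mu(\{t^*\})$ at $t^*$), so $t^*$ itself may fail to belong to $T$ even when it is the infimum of $T$; this is precisely what forces the two-case split. Beyond this point, the argument rests only on monotonicity of $t \mapsto \mu([t, t'))$ and upward continuity of the finite measure $\mu$, so I do not anticipate any further obstacle.
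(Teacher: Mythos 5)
Your proof is correct and follows essentially the same route as the paper's: both identify the union as an interval of the form $[u,t')$ or $(u,t')$, handle the closed case directly, and invoke continuity of $\mu$ from below for the open case (the paper takes the limit along a countable $\mathbb{Q}$-indexed subfamily, you along a decreasing sequence $t_n\downarrow t^*$ — the same device). Your phrasing is arguably a bit cleaner in working with the index set $T$ of left endpoints rather than the union itself, since $t^*\in T$ then gives $\mu([t^*,t'))\le m$ immediately by definition, whereas the paper has to unwind $u\in U$ one extra step; but this is a presentational difference, not a different argument.
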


\begin{proof}
Define $U\eqdef\bigcup\Bigl\{[t,t') ~\Big|~ 0\le t<t' \And \mu\bigl([t,t')\bigr)\le m\Bigr\}$. If $U=\emptyset$, then $\mu(U)=0\le m$ and the proof
is complete; assume, therefore, that $U\ne\emptyset$ and let $u\eqdef\inf U\ge0$. By definition, $U$ is connected,
and therefore by definition of $U$ and $u$, we have that either $U=[u,t')$ or $U=(u,t')$. If $U=[u,t')$, then $u \in U$, and by definition of $U$, we obtain $\mu(U)=\mu\bigl([u,t')\bigr)\le m$, as required; assume therefore, that $U=(u,t')$. In this case, $U=\bigcup\Bigl\{[t,t') ~\Big|~ t \in [0,t')\cap\mathbb{Q} \And \mu\bigl([t,t')\bigr)\le m\Bigr\}$, and by continuity of $\mu$ from below, we obtain $\mu(U)\le m$, as required.
\end{proof}

\begin{proof}[of \cref{producer-fine-dominant}]
Let $t \in \timeset$ be a dominant strategy in this game. By \cref{producer-fine-domination}, both $\mu\bigl([0,t)\bigr)=0$ (as $t$ is a safe alternative to $0$) and
$\mu\bigl([t',1]\bigr)<\frac{\mu([t,1])}{n}$ for every $t'>t$ (as $t$ is a safe alternative to every such $t'$). (Alternatively, the former holds as by definition, any strategy dominant w.r.t.\ fine preferences is also dominant w.r.t.\ coarse preferences, and by \cref{producer-coarse-dominant}.) By the former, $\mu\bigl([t,1]\bigr)=\mu(\timeset)$, and therefore and by the latter,
$\mu\bigl([t',1]\bigr)<\frac{\mu(\timeset)}{n}$ for every $t'\in(t,1)\cap\mathbb{Q}$. Therefore, by continuity of $\mu$ from below, $\mu\bigl((t,1]\bigr)\le\frac{\mu(\timeset)}{n}$. Hence,
$\mu\bigl(\{t\}\bigr)=\mu\bigl([0,t]\bigr)\ge\frac{n-1}{n}\cdot\mu(\timeset)$.
(Conversely, by \cref{producer-fine-domination}, it is easy to see that if there indeed exists $t\in\timeset$  s.t.\ $ \mu\bigl([0,t)\bigr)=0$ and
$\mu\bigl([t',1]\bigr)<\frac{\mu(\timeset)}{n}$ for every $t'>t$, then it constitutes the unique dominant strategy.)

For the second statement, we note that
by \cref{producer-fine-domination}, the set of dominated strategies is
\begin{align*}
&\;\Bigl\{t \in \timeset \:\Big|\: \mu\bigl([t,1]\bigr)<\tfrac{\mu(\timeset)}{n}\Bigr\}\cup\Bigl\{t \in \timeset \:\Big|\: \exists t<t'\le1:\mu\bigl([t,t')\bigr)=0\Bigr\}= \\
=&\;\bigcup\Bigl\{[t,1] \:\Big|\: t\!\in\!\timeset \And \mu\bigl([t,1]\bigr)\!<\!\tfrac{\mu(\timeset)}{n}\Bigr\} \cup \bigcup\Bigl\{[t,t') \:\Big|\: t'\!\in\!\timeset\!\cap\!\mathbb{Q} \And 0\!\le\!t\!<\!t' \And \mu\bigl([t,t')\bigr)\!=\!0\Bigr\}.
\end{align*}
By  $\sigma$-additivity of $\mu$ and by \cref{union-up} (applied twice),
\begin{align*}
&\;\mu\Bigl(\Bigl\{t \in \timeset \:\Big|\: \mu\bigl([t,1]\bigr)<\tfrac{\mu(\timeset)}{n}\Bigr\}\cup\Bigl\{t \in \timeset \:\Big|\: \exists t<t'\le1:\mu\bigl([t,t')\bigr)=0\Bigr\}\Bigr)\le \\
\le&\;\mu\Bigl(\bigcup\Bigl\{[t,1] \:\Big|\: t \in \timeset \And \mu\bigl([t,1]\bigr)<\tfrac{\mu(\timeset)}{n}\Bigr\}\Bigr)+\\
&\qquad\qquad\qquad\qquad\qquad\qquad\qquad\qquad+\smashoperator[lr]{\sum_{t'\in\timeset\cap\mathbb{Q}}}\mu\Bigl(\bigcup\Bigl\{[t,t') \:\Big|\: 0\le t<t' \And \mu\bigl([t,t')\bigr)=0\Bigr\}\Bigr)\le\\
\le&\;\frac{\mu(\timeset)}{n}+0=\frac{\mu(\timeset)}{n}.
\end{align*}
We note that this bound is attained if $\mu$ is atomless, as in this case it is straightforward to verify that $\mu\bigl(\bigcup\bigl\{[t,1] \mid t \in \timeset \And \mu\bigl([t,1]\bigr)<\frac{\mu(\timeset)}{n}\bigr\}\bigr)=\frac{\mu(\timeset)}{n}$.

The second statement leads to an extremely concise, yet somewhat more obscure, proof for the first one.
If a dominant strategy exists, then it is a safe alternative to all other strategies; in particular, all other strategies
have safe alternatives (other than themselves).
By the second statement, at least $\frac{n-1}{n}$ of $\mu$ is therefore concentrated on this dominated strategy, and the proof is complete.
\end{proof}

\begin{lemma}\label{max-bounded-measure-strategy-attained}
$\bigl\{t \in \timeset \mid \mu\bigl([t',t)\bigr)\le m\bigr\}$ attains a maximum value for every $t'\in\timeset$ and $m \in \Rge$.
\end{lemma}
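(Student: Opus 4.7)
The plan is to set $A\eqdef\bigl\{t \in \timeset \mid \mu\bigl([t',t)\bigr)\le m\bigr\}$ and $s\eqdef\sup A$, and argue that $s\in A$. The set $A$ is nonempty since every $t\le t'$ satisfies $[t',t)=\emptyset$ and so $\mu\bigl([t',t)\bigr)=0\le m$; and $A$ is bounded above by $1$, so the supremum $s\in\timeset$ is well defined.

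First I would dispense with the degenerate case $s\le t'$, in which $[t',s)=\emptyset$ and so trivially $\mu\bigl([t',s)\bigr)=0\le m$, so $s\in A$. Assume henceforth $s>t'$. By definition of supremum, pick a nondecreasing sequence $(t_k)_{k=1}^{\infty}\subseteq A$ with $t_k\to s$ as $k\to\infty$; by discarding finitely many terms we may assume $t_k>t'$ for every $k$. Observe that $[t',t_k)\subseteq[t',s)$ for every $k$ (since $t_k\le s$), and conversely any $x\in[t',s)$ satisfies $x<s=\lim_k t_k$, hence $x<t_k$ for some $k$, so $x\in[t',t_k)$. Thus $[t',s)=\bigcup_{k=1}^{\infty}[t',t_k)$, and this is an increasing union of measurable sets.

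By continuity of the finite measure $\mu$ from below applied to this increasing union, together with $\mu\bigl([t',t_k)\bigr)\le m$ for every $k$, we conclude
\[
\mu\bigl([t',s)\bigr)=\lim_{k\to\infty}\mu\bigl([t',t_k)\bigr)\le m,
\]
so $s\in A$ and $s=\Max A$, as required. There is no real obstacle here: the only substantive point is upper semicontinuity of $t\mapsto\mu\bigl([t',t)\bigr)$ from the left, which is precisely the content of continuity of $\mu$ from below along the increasing chain of half-open intervals $[t',t_k)$; no atomlessness or other hypothesis on $\mu$ is needed.
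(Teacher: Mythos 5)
Your proof is correct and takes essentially the same approach as the paper's: both identify the supremum $s$ (the paper calls it $u$) as the candidate maximum and use continuity of $\mu$ from below to conclude $\mu\bigl([t',s)\bigr)\le m$. The only cosmetic difference is that the paper obtains a countable increasing family by restricting to rational right endpoints of the intervals, whereas you extract a nondecreasing sequence $t_k\nearrow s$ directly from the definition of the supremum; both yield the same increasing union $[t',s)=\bigcup_k[t',t_k)$ to which continuity from below applies.
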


\begin{proof}
Denote $S\eqdef \bigl\{t \in \timeset \mid \mu\bigl([t',t)\bigr)\le m\bigr\}$. We note that $t'\in S$. If $t'=\sup S$, then $t'=\Max S$ and the proof is complete.
Assume, therefore, that $t'<\sup S$.
Let $U\eqdef\bigcup\bigl\{[t',t)\mid t \in \timeset \And \mu\bigl([t',t)\bigr)\le m\bigr\}$.  As $t'<\sup S$, we have that $U\ne\emptyset$.
Let $u\eqdef\sup U\le1$. By definition of $U$ and of $u$, we have $U=[t',u)$.
By definition of $U$ and of $S$, we have that $u=\sup U=\sup S$, and so it is enough to show that $u \in S$,
i.e.\ that $\mu\bigl([t',u)\bigr)\le m$. As $U=[t',u)$, this is equivalent to showing that $\mu(U)\le m$.
Observe that
\[
U=\bigcup\Bigl\{[t',t) \:\Big|\: t \in \timeset\cap\mathbb{Q} \And \mu\bigl([t',t)\bigr)\le m\Bigr\}.\]
By continuity of $\mu$ from below, we thus obtain $\mu(U)\le m$, as required.
\end{proof}

\begin{proof}[of \cref{producer-fine-nash-char}]
For every $j\in\prods$, let $t_j\eqdef\Max\bigl\{t \in \timeset \mid \mu\bigl([0,t)\bigr)\le\mbox{$\frac{j}{n}\cdot\mu(\timeset)$}\bigr\}$. ($t_j$ is well-defined by \cref{max-bounded-measure-strategy-attained}.)
By \cref{producer-coarse-nash-char}, $\bar{t}$ is a Nash equilibrium in \coarsegame, and so by \cref{producer-coarse-nash-loads}, $\ell_j(\bar{t})=\frac{\mu(\timeset)}{n}$ for every $j\in\prods$.
We now first show that no Nash equilibrium other than $\bar{t}$ (up to permutations) exists in \finegame, and then show that $\bar{t}$ (and hence all permutations thereof)
is a super-strong Nash equilibrium in \finegame.

Let $t'_0\le\cdots\le t'_{n-1}\in\timeset$ s.t.\ $\bar{t}'$ is a Nash equilibrium in \finegame. We will show that $t'_j=t_j$ for every $j\in\prods$.
By definition of coarse and fine preferences, $\bar{t}'$ is also a Nash equilibrium in \coarsegame. Therefore, by \cref{producer-coarse-nash-char} we have that $\mu\bigl([0,t'_j)\bigr)\le\frac{j}{n}\cdot\mu(\timeset)$ for every $j\in\prods$. Hence we have for every $j\in\prods$ both that $t'_j\le t_j$ and (by \cref{producer-coarse-nash-char} again) that $(\bar{t}'_{-j},t_j)$ is a Nash equilibrium in \coarsegame\ as well. Therefore, by \cref{producer-coarse-nash-loads}, $\ell_j(\bar{t}')=\frac{\mu(\timeset)}{n}=\ell_j(\bar{t}'_{-j},t_j)$. As $\bar{t}'$ is a Nash equilibrium in \finegame, we therefore have that $t'_j\ge t_j$, and so $t'_j=t_j$, as required.

We now show that $\bar{t}$ is a super-strong Nash equilibrium in \finegame.
Assume for contradiction that there exists a coalition $P\subseteq\prods$ and strategies $\bar{t}'=(t'_j)_{j\in P}\in\timeset^P$ s.t.\ $j$ weakly prefers $(\bar{t}_{-P},\bar{t}')$ over $\bar{t}$ w.r.t.\ fine preferences for every $j\in P$, with a strict preference for at least one producer $j\in P$.
For every $j\in P$, as $j$ weakly prefers $(\bar{t}_{-P},\bar{t}')$ over $\bar{t}$, we have that $\ell_j(\bar{t}_{-P},\bar{t}')\ge\ell_j(\bar{t})$.
As $\bar{t}$ is a Nash equilibrium in \coarsegame, by \cref{producer-coarse-super-strong}, we therefore have $\ell_j(\bar{t}_{-P},\bar{t}')=\ell_j(\bar{t})$ for every $j\in P$. Therefore, by definition of $P$ and $\bar{t}'$,
we have $t'_j\ge t_j$ for every $j\in P$, with a strict inequality for at least one producer $j\in P$ --- let $j$ be such a producer for which $t'_j$ is greatest. Assume w.l.o.g.\
that either $j=n-1$ or $t_j<t_{j+1}$; therefore, $\bar{t}$ is in $j$-canonical form.
As $\bar{t}$ is also a Nash equilibrium in \coarsegame, by \cref{coarse-sequential}, by \crefpart{canonical-form-deviation}{nash-char}, by definition of $t_j$, and as $t'_j>t_j$, we have $\ell_j(\bar{t})>\ell_j(\bar{t}_{-j},t'_j)$. By \cref{ell-single-valley} and by definition of $j$, we have $\ell_j(\bar{t}_{-j},t'_j)\ge\ell_j(\bar{t}_{-P},\bar{t}')$. Therefore, $\ell_j(\bar{t})>\ell_j(\bar{t}_{-j},t'_j)\ge\ell_j(\bar{t}_{-P},\bar{t}')$ --- a contradiction.
\end{proof}

\begin{proof}[of \cref{producer-fine-nash-char-special}]
Since the CDF of $\mu$ is continuous and strictly increasing, for every $j\in\prods$ there exists a unique strategy
$t_j \in \timeset$ s.t.\ $\mu\bigl([0,t_j)\bigr)=\frac{j}{n}\cdot\mu(\timeset)$; hence,
$t_j=\Max\bigl\{t \in \timeset \mid \mu\bigl([0,t)\bigr)\le\frac{j}{n}\cdot\mu(\timeset)\bigr\}$ and by \cref{producer-fine-nash-char}
the proof is complete.
\end{proof}

\begin{proof}[of \cref{producer-fine-nash-pure}]
Direct from definition of \finegame, as no player is ever indifferent between any two strategies,
regardless of the information such player possesses regarding the strategies of the other players.
\end{proof}

\begin{proof}[of \cref{producer-fine-market-allocation}]
By \cref{producer-fine-nash-char}, for every $j\in\prods$ we have
\[\bigl\{d\in\timeset \mid \mu\bigl([0,d)\bigr)\in\bigl(\tfrac{j}{n}\cdot\mu(\timeset),\tfrac{j+1}{n}\cdot\mu(\timeset)\bigr]\bigr\}\subseteq[t_j,t_{j+1}).\]
Therefore, it is enough to show that $s_j(d)=1$ for almost all $d\in[t_j,t_{j+1})$.
By definition of $s$, this is equivalent to showing that $\int_{[t_j,t_{j+1})}s_j d\mu=\mu\bigl([t_j,t_{j+1})\bigr)$ for every $j\in\prods$, where $t_n\eqdef1$.
As $\mu$ is atomless, by \cref{producer-fine-nash-char} and by definition of $t_n$
we have $\mu\bigl([t_j,t_{j+1})\bigr)=\mu\bigl([0,t_{j+1})\bigr)-\mu\bigl([0,t_j)\bigr)=\frac{j+1}{n}\cdot\mu(\timeset)-\frac{j}{n}\cdot\mu(\timeset)=\frac{\mu(\timeset)}{n}$ for every $j\in\prods$.
We prove the \lcnamecref{producer-fine-market-allocation} by induction on $j$. Let $k\in\prods$ and assume that the \lcnamecref{producer-fine-market-allocation} holds for every $0\le j<k$; we now show that it holds for $j=k$ as well.

For every $0\le j<k$, by the induction hypothesis, $\int_{[t_j,t_{j+1})}s_j d\mu=\mu\bigl([t_j,t_{j+1})\bigr)=\frac{\mu(\timeset)}{n}$.
By \cref{producer-fine-nash-char} and by definition of $\ell_j(\bar{t})$ and of $\load{j}$,
\[\frac{\mu(\timeset)}{n}=\ell_j(\bar{t})=\load{j}=\int_{\timeset}s_j d\mu\ge\int_{[t_j,t_{j+1})}s_j d\mu+\int_{[t_k,t_{k+1})}s_j d\mu=\frac{\mu(\timeset)}{n}+\int_{[t_k,t_{k+1})}s_j d\mu,\]
and so $\int_{[t_k,t_{k+1})}s_j d\mu=0$. By definition of Nash equilibrium, $s_{\noconsumption}(d)=0$ for every $d\ge t_k$, and therefore $\int_{[t_k,t_{k+1})}s_{\noconsumption} d\mu=0$
as well. Let $S_k\eqdef\{\noconsumption,0,1,2,\ldots,k-1\}$; by definition of $s$, we have that $s(d)\in\Delta^{S_k}$ for every $d\in[t_k,t_{k+1})$.
Therefore,
\[\int_{[t_k,t_{k+1})}s_k d\mu=\mu\bigl([t_k,t_{k+1})\bigr)-\sum_{j\in S_k}\int_{[t_k,t_{k+1})}s_j d\mu=\frac{\mu(\timeset)}{n}-0=\mu\bigl([t_k,t_{k+1})\bigr),\]
and the proof by induction is complete.
\end{proof}

\subsubsection{Proofs for Section~\refintitle{producers-fine-dynamics}}

\begin{proof}[of \cref{fine-unique-best-response} (Nonconstructive)]
Let $\tilde{\ell}\eqdef\ell_j(\bar{t}_{-j},0)$; by \cref{producer-coarse-dominant}, $\tilde{\ell}$ is the maximum load attainable by $j$
given $\bar{t}_{-j}$. Define $S\eqdef\bigl\{\mu\bigl([0,t)\bigr) ~\big|~ t\in\timeset\And\ell_j(\bar{t}_{-j},t)=\tilde{\ell}\bigr\}$. Observe that $S\ne\emptyset$ as $0\in S$ (given by $t=0$);
let $m\eqdef\sup S$. By \crefpart{producer-coarse-domination}{safe},
every $t\in\timeset$ s.t.\ $\mu\bigl([0,t)\bigr)<m$ maximizes $\ell_j(\bar{t}_{-j},t)$,
while every $t\in\timeset$ s.t.\ $\mu\bigl([0,t)\bigr)>m$ does not. Assume for contradiction that there exists $t\in\timeset$
s.t.\ $\mu\bigl([0,t)\bigr)=m$ and $\ell_j(\bar{t}_{-j},t)<\tilde{\ell}$. Let $\varepsilon\eqdef\tilde{\ell}-\ell_j(\bar{t}_{-j},t)>0$.
By definition of $m$, there exists $t'$ s.t.\ $m\ge\mu\bigl([0,t')\bigr)>m-\varepsilon$ and $\ell_j(\bar{t}_{-j},t')=\tilde{\ell}=\ell_j(\bar{t}_{-j},t)+\varepsilon$; by \cref{ell-lipschitz}, this is a contradiction. (We note that we have not shown (yet) that there exists $t\in\timeset$
s.t.\ $\mu\bigl([0,t)\bigr)=m$, but rather that every such $t$ maximizes the load on $j$.)
Therefore, we have that the set of load-maximizing strategies for $j$ is precisely $\bigl\{t\in\timeset~\big|~\mu\bigl([0,t)\bigr)\le m\bigr\}$.
By \cref{max-bounded-measure-strategy-attained}, this set attains a maximum value.
As by definition we have that a best response in \finegame\ is a numerically largest load-maximizing response, we obtain that this
maximum value is a best response as required. Uniqueness follows directly from definition of fine preferences.
\end{proof}

Before constructively proving \cref{fine-unique-best-response}, we first constructively prove it for two special cases.

\begin{corollary}[\cref{fine-unique-best-response} --- Special Case: Large Strategies]\label{must-jump-to-zero-fine}
Let $k\in\prods$, and let $\bar{t}_{-k}\in\timeset^{\prods\setminus\{k\}}$.
If $\mu\bigl([0,t_j)\bigr)>0$ for all $j\in\prods\setminus\{k\}$, then the unique best response (by $k$) to $\bar{t}_{-k}$ in \finegame\ is
$\Max\bigl\{t \in \timeset \mid \mu\bigl([0,t)\bigr)=0\bigr\}$.
\end{corollary}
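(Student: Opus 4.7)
The plan is to reduce the fine-preferences statement to its coarse-preferences counterpart \cref{must-jump-to-zero-coarse}, which has already been established earlier in the appendix, and then apply the tie-breaking rule that distinguishes \finegame\ from \coarsegame.

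First, I would invoke \cref{must-jump-to-zero-coarse}: under the hypothesis $\mu\bigl([0,t_j)\bigr)>0$ for every $j\in\prods\setminus\{k\}$, the set of best responses (by $k$) to $\bar{t}_{-k}$ in \coarsegame\ is \emph{exactly} $B\eqdef\bigl\{t\in\timeset\mid\mu\bigl([0,t)\bigr)=0\bigr\}$. Since by definition a best response in \finegame\ is a load-maximizing response (i.e.\ a best response in \coarsegame) that is numerically largest under the tie-breaking rule, the set of best responses (by $k$) to $\bar{t}_{-k}$ in \finegame\ is precisely $\arg\Max_{t\in B}t$.

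Next, I would apply \cref{max-bounded-measure-strategy-attained} with $t'=0$ and $m=0$ to conclude that $B=\bigl\{t\in\timeset\mid\mu\bigl([0,t)\bigr)\le 0\bigr\}$ attains a maximum value. Hence $\arg\Max_{t\in B}t=\{\Max B\}$ is a singleton, and the unique best response in \finegame\ is $\Max B=\Max\bigl\{t\in\timeset\mid\mu\bigl([0,t)\bigr)=0\bigr\}$, as claimed.

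There is no real obstacle here; the content of the corollary is essentially bookkeeping, combining the coarse-preferences characterization of load-maximizing responses with the fine-preferences tie-breaker. The only subtlety worth flagging is that one must verify that the supremum of $B$ is actually attained (so that ``$\Max$'' is well defined and uniqueness in \finegame\ holds), which is exactly what \cref{max-bounded-measure-strategy-attained} supplies.
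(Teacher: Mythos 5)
Your proposal is correct and matches the paper's proof exactly: the paper also states this as a direct corollary of \cref{must-jump-to-zero-coarse}, using the observation that a best response in \finegame\ is the numerically largest load-maximizing response, with well-definedness of the maximum supplied by \cref{max-bounded-measure-strategy-attained}. You have simply spelled out the same argument in slightly more detail.
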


\begin{proof}
A direct corollary of \cref{must-jump-to-zero-coarse}, as a best response in \finegame\ is a numerically largest load-maximizing response; the specified strategy is well defined
by \cref{max-bounded-measure-strategy-attained}.
\end{proof}

\begin{lemma}[\cref{fine-unique-best-response} --- Special Case: Coarse Equilibrium]\label{fine-best-response-to-coarse-nash}
Let $\bar{t}\in\timeset^{\prods}$ be a Nash equilibrium in \coarsegame. For every $j\in\prods$, a best response (by $j$) to $\bar{t}_{-j}$ exists in \finegame.
\end{lemma}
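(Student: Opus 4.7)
The plan is to leverage the canonical-form machinery (\cref{permute-to-canonical-form,canonical-form-deviation}) to identify the load-maximizing set for producer $j$ explicitly as an ``upper'' $\mu$-interval, and then conclude via \cref{max-bounded-measure-strategy-attained}.

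First, by \cref{permute-to-canonical-form} I would assume, without loss of generality, that $\bar{t}$ is in $j$-canonical form. Since $\bar{t}$ is a Nash equilibrium in \coarsegame, \cref{producer-coarse-nash-loads} gives $\ell_j(\bar{t}) = \frac{\mu(\timeset)}{n}$, and by \cref{producer-coarse-dominant} this already equals the largest load attainable by $j$ given $\bar{t}_{-j}$. Thus the task reduces to showing that $L \eqdef \bigl\{t \in \timeset \mid \ell_j(\bar{t}_{-j}, t) = \frac{\mu(\timeset)}{n}\bigr\}$ attains a maximum element; by the fine tie-breaking rule, any such maximum is the sought best response in \finegame.

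Next, I would identify $L = \bigl\{t \in \timeset \mid \mu([0, t)) \leq \frac{j}{n}\cdot\mu(\timeset)\bigr\}$. For the ``$\supseteq$'' inclusion, any $t$ satisfying the right-hand-side condition makes $(\bar{t}_{-j}, t)$ a Nash equilibrium in \coarsegame\ by \crefpart{canonical-form-deviation}{nash-char}, and then \cref{producer-coarse-nash-loads} pins $\ell_j(\bar{t}_{-j}, t)$ at $\frac{\mu(\timeset)}{n}$. For ``$\subseteq$'', suppose $\mu([0, t)) > \frac{j}{n}\cdot\mu(\timeset)$. By \crefpart{canonical-form-deviation}{all-good} we have $t > t_j$, and by \crefpart{canonical-form-deviation}{nash-char} the profile $(\bar{t}_{-j}, t)$ is not a Nash equilibrium in \coarsegame. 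Since $\bar{t}$ is itself a coarse Nash equilibrium, \cref{coarse-sequential} rules out that $t$ is a weakly better response for $j$ than $t_j$; hence $\ell_j(\bar{t}_{-j}, t) < \ell_j(\bar{t}) = \frac{\mu(\timeset)}{n}$, so $t \notin L$. Having identified $L$, \cref{max-bounded-measure-strategy-attained} (applied with $t' = 0$ and $m = \frac{j}{n}\cdot\mu(\timeset)$) produces $\Max L$, completing the existence argument.

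The main obstacle is the ``$\subseteq$'' direction: excluding every strategy that places too much $\mu$-mass below it requires knowing not merely that such a strategy destroys the coarse Nash equilibrium (which is immediate from the canonical-form calculus), but that it \emph{strictly} lowers $j$'s own load. \cref{coarse-sequential} is what supplies this missing strict inequality as a black box; once this is routed through, the remainder is just bookkeeping and an appeal to \cref{max-bounded-measure-strategy-attained} to locate the numerically largest load-maximizer.
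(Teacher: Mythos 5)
Your proof is correct and follows essentially the same route as the paper's constructive proof of this lemma: reduce to $j$-canonical form via \cref{permute-to-canonical-form}, characterize the load-maximizing set as $\bigl\{t \in \timeset \mid \mu\bigl([0,t)\bigr)\le\frac{j}{n}\cdot\mu(\timeset)\bigr\}$ using \crefpart{canonical-form-deviation}{nash-char} together with \cref{coarse-sequential} (for the strict-loss direction) and \cref{producer-coarse-nash-loads} (for the equal-load direction), and then extract the maximum via \cref{max-bounded-measure-strategy-attained}. The only difference is presentational --- you spell out the two inclusions that the paper compresses into a single ``iff'' sentence.
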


\begin{proof}[(Constructive)]
By \cref{permute-to-canonical-form}, assume w.l.o.g.\ that $\bar{t}$ is in $j$-canonical form.
We will show that a best response as required is given by $t'_j\eqdef\Max\bigl\{t \in \timeset \mid \mu\bigl([0,t)\bigr)\le\frac{j}{n}\cdot\mu(\timeset)\bigr\}$. ($t'_j$ is well defined by \cref{max-bounded-measure-strategy-attained}.)
By \crefpart{canonical-form-deviation}{nash-char} and by \cref{coarse-sequential}, a strategy $t\in\timeset$ maximizes $\ell_j(\bar{t}_{-j},t)$ iff $\mu\bigl([0,t)\bigr)\le\frac{j}{n}\cdot\mu(\timeset)$.
As by definition we have that a best response in \finegame\ is a numerically largest load-maximizing response, we obtain that $t'_j$ is a best response
as required.
\end{proof}

\begin{proof}[of \cref{fine-unique-best-response} (Constructive)]
W.l.o.g.\ we prove the result for $j=0$. Assume w.l.o.g.\ that $t_1\le t_2\le\cdots\le t_{n-1}$.
Uniqueness follows directly from definition of fine preferences; it is therefore enough to show that a best response exists.
If $\mu\bigl([0,t_1)\bigr)>0$, then by \cref{must-jump-to-zero-fine}
a best response exists as required.
Assume therefore henceforth that $\mu\bigl([0,t_1)\bigr)=0$.
If $(\bar{t}_{-0},0)$ is a Nash equilibrium in \coarsegame, then by \cref{fine-best-response-to-coarse-nash} a best response exists as required. Assume therefore henceforth that $(\bar{t}_{-0},0)$ is not a Nash equilibrium in \coarsegame.

Let $k\in\prods$ be minimal s.t.\ $\ell_k(\bar{t}_{-0},0)<\ell_0(\bar{t}_{-0},0)$. (Such $k$ exists by \cref{producer-coarse-nash-least-most-loads},
since $(\bar{t}_{-0},0)$ is not a Nash equilibrium in \coarsegame; by definition, $k>0$.)
By definition of $k$ and by \cref{high-untouched,equal-loads-consumer-nash}, $(0,t_1,t_2,\ldots,t_{k-1})$ is a Nash equilibrium
in $(k,\mu|_{\cap[0,t_k)},\succeq_C)$. By \cref{fine-best-response-to-coarse-nash}, there exists a best response $t_0\in\timeset$
to $(t_1,t_2,\ldots,t_{k-1})$ in $(k,\mu|_{\cap[0,t_k)},\succeq_F)$. We claim that $t_0$ is a best response to $\bar{t}_{-0}$ in
\finegame\ as well.

As $\ell_k(\bar{t}_{-0},0)<\ell_0(\bar{t}_{-0},0)$, we have that $\ell_0(\bar{t}_{-0},0)\le\mu\bigl([0,t_k)\bigr)$, and in particular $\mu\bigl([0,t_k)\bigr)>0$. Therefore, 
by \crefpart{producer-coarse-domination}{strong} and by definition of $t_0$, we obtain $t_0<t_k$.
By \cref{coarse-sequential}, $(t_0,t_1,\ldots,t_{k-1})$ is also a Nash equilibrium in $(k,\mu|_{\cap[0,t_k)},\succeq_C)$. Therefore, by \cref{producer-coarse-nash-loads},
$\ell_j(\mu|_{\cap[0,t_k)};0,t_1,t_2,\ldots,t_{k-1})=\frac{\mu([0,t_k))}{k}=\ell_j(\mu|_{\cap[0,t_k)};t_0,t_1,\ldots,t_{k-1})$ for every $j\in\prodsk$. By
the construction in the proof of \cref{consumer-symmetric-nash-exists} and as $t_0<t_k$, therefore $\ell_j(\bar{t})=\ell_j(\bar{t}_{-0},0)$
for every $j\in\prods$. By \cref{high-untouched} and by definition of $k$, we have $\ell_j(\bar{t})=\ell_j(\bar{t}_{-0},0)=\ell_j(\mu|_{\cap[0,t_k)};0,t_1,t_2,\ldots,t_{k-1})=\frac{\mu([0,t_k))}{k}$ for every $j\in\prodsk$. As $\ell_0(\bar{t})=\ell_0(\bar{t}_{-0},0)$,
by \cref{producer-coarse-dominant} we have that $t_0$ maximizes the load on producer $0$ in \finegame.

Let $h\in\prodsk$ s.t.\ $t_h\le t_0<t_{h+1}$. Such $h>0$ exists as $\mu\bigl([0,t_1)\bigr)=0$ and since
$t_0\ge\Max\bigl\{t \in \timeset \mid \mu\bigl([0,t)\bigr)=0\bigr\}\ge t_1$ by \cref{producer-coarse-domination} (this maximum value
is attained by \cref{max-bounded-measure-strategy-attained}), and $h<k$ since $t_0<t_k$.
It remains to show that every $t'_0\in(t_0,1])$
does not maximize the load on producer $0$ in \finegame. By \crefpart{producer-coarse-domination}{safe}, it is enough to consider
the case $t_0<t'_0<t_{h+1}$. Note that $t_k>t_0 \ge t_1$ and so $k>1$.

By definition of $t'_0$ and $t_0$, we have that $t'_0$ does not maximize
the load on producer $0$ in $(k,\mu|_{\cap[0,t_k)},\succeq_F)$.
By \cref{high-untouched} and by \cref{producer-coarse-nash-least-most-loads} (since $\mu\bigl([0,t_1)\bigr)=0$), $\ell_1(\bar{t}_{-0},t'_0)\ge\ell_1(\mu|_{\cap[0,t_k)};t'_0,t_1,\ldots,t_{k-1})>\frac{\mu([0,t_k))}{k}=\ell_1(\bar{t})$,
and so by \cref{compute-ell}, $\ell_1(\bar{t}_{-0},t'_0)=\frac{\mu([0,t'_0))}{h}$ and so $\ell_j(\bar{t}_{-0},t'_0)=\frac{\mu([0,t'_0))}{h}$ for every $0<j\le h$, and in particular for $j=h$.
Therefore, and by \crefpart{producer-coarse-domination}{safe}, $\ell_h(\bar{t}_{-0},t'_0)=\ell_1(\bar{t}_{-0},t'_0)>\ell_1(\bar{t})=\ell_0(\bar{t})\ge\ell_0(\bar{t}_{-0},t'_0)$.
As $\frac{\mu([0,t'_0))}{h}=\ell_1(\bar{t}_{-0},t'_0)>\frac{\mu([0,t_k))}{k}$,
we have $\mu\bigl([t'_0,t_j)\bigr)=\mu\bigl([0,t_j)\bigr)-\mu\bigl([0,t'_0)\bigr)<\mu\bigl([0,t_j)\bigr)-\frac{h}{k}\cdot\mu\bigl([0,t_k)\bigr)$ for every $j>h$.
As $\ell_h(\bar{t}_{-0},t'_0)>\ell_0(\bar{t}_{-0},t'_0)$, by definition of $h$ and by \cref{compute-ell,compute-ell-no-algo}, we obtain (for $t_n$ as defined there) that
$\ell_0(\bar{t}_{-0},t'_0)=\Max_{h<j\le n}\frac{\mu([t'_0,t_j))}{j-h}<\Max_{h<j\le n}\Bigl(\mu\bigl([0,t_j)\bigr)-\frac{h}{k}\cdot\mu\bigl([0,t_k)\bigr)\Bigr)\cdot\frac{1}{j-h}=
\Max_{h<j\le n}\Bigl(\mu\bigl([0,t_j)\bigr)-\sum_{i=1}^{h}\ell_i(\bar{t})\Bigr)\cdot\frac{1}{j-h}=
\ell_0(\bar{t})$,
and the proof is complete.
\end{proof}

\begin{proof}[of \cref{fine-response-equilibrium}]
A direct corollary of \cref{coarse-response-equilibrium,coarse-best-response-fast}, as any weakly-/$\delta$-better-/best-response dynamic w.r.t.\ fine preferences is also a \mbox{weakly-/$\delta$}-\linebreak better-/best-response dynamic w.r.t.\ coarse preferences.
\end{proof}

\begin{proof}[of \cref{fine-sequential}]
Let $(\bar{t}^0,P_i)_{i=0}^{\infty}$ be a sequential $\delta$-better-response dynamic in \finegame\ s.t.\ $\bar{t}^0$ is a
Nash equilibrium w.r.t.\ \coarsegame. Let $k\in\prods$ and let $\tilde{\imath}$ be minimal s.t.\ $k\in P_{\tilde{\imath}}$. It is enough
to show that $t^i_k$ is constant for $i>\tilde{\imath}$, as this implies that after one round $\bar{t}^i$ is constant regardless
of $P_i$, and is thus a Nash equilibrium in \finegame.

By \cref{coarse-sequential}, $\bar{t}^i$ is a Nash equilibrium w.r.t.\ \coarsegame\ for every $i\in\mathbb{N}$;
therefore, by \cref{producer-coarse-nash-loads}, the loads on all producers are constant throughout this dynamic.
Therefore, by definition of $\delta$-better-response dynamics, we have both that 
$(\bar{t}^0,P_i)_{i=0}^{\infty}$ is a best-response dynamic in \finegame, and that
$(t^i_j)_{i=0}^{\infty}$ is monotone-nondecreasing for every $j\in\prods$.

Let $i>\tilde{\imath}$ s.t.\ $k\in P_i$. As $t^{i+1}_k\ge t^{\tilde{\imath}+1}_k$, it is enough to show that $t^{i+1}_k\le t^{\tilde{\imath}+1}_k$.
As $\bar{t}^{i+1} \ge (\bar{t}^{\tilde{\imath}+1}_{-k},t^{i+1}_k)$ in every coordinate, by \cref{producer-coarse-nash-char}
and since $\bar{t}^{i+1}$ is a Nash equilibrium w.r.t.\ \coarsegame, so is $(\bar{t}^{\tilde{\imath}+1}_{-k},t^{i+1}_k)$,
and so $t^{i+1}_k$ maximizes the load on $k$ given $\bar{t}^{\tilde{\imath}+1}_{-k}$; therefore, and as $t^{\tilde{\imath}+1}_k$
is a best response to $\bar{t}^{\tilde{\imath}+1}_{-k}$ w.r.t.\ \finegame, we have that $t^{i+1}_k\le t^{\tilde{\imath}+1}_k$,
and the proof is complete.
\end{proof}

\begin{proof}[of \cref{fine-sequential-cor}]
A direct corollary of \cref{fine-response-equilibrium} (\cref{coarse-response-equilibrium}) and \cref{fine-sequential}.
\end{proof}

\begin{proof}[of \cref{fine-best-response-fast-cor}]
A direct corollary of \cref{fine-response-equilibrium} (\cref{coarse-best-response-fast}) and \cref{fine-sequential}.
\end{proof}

\subsection{Proof of Theorem~\refintitle{main-street-super-strong}}\label{main-street-proofs}

\begin{proof}[of \cref{main-street-super-strong}]
The fact that each such strategy profile is a super-strong equilibrium with load $\tilde{\ell}^g_j$ on each producer $j\in\prodsg$ of good $g\in\{1,2\}$
(and with the market split between the producers of each good as in the one-good scenario of \cref{heterogeneous})
is an immediate consequence of \cref{heterogeneous-fine}; we therefore show that
no other super-strong equilibrium exists.
We give a proof for atomless $\mu$; the proof for general $\mu$ is similar and is left to the reader.
Let $\bigl((t^g_j,\theta^g_j)\bigr)_{j\in\prodsg}^{g\in\{1,2\}}$ be a super-strong equilibrium among producers, given in polar coordinates. For
every $g\in\{1,2\}$ and $j\in\prodsg$, let $\ell^g_j$ be the load on producer $j$ of good $g$ in $\bigl((t^g_j,\theta^g_j)\bigr)_{j\in\prodsg}^{g\in\{1,2\}}$.

We begin by noting that a producer $j\in\prodsg$ of good $g\in\{1,2\}$ may still secure a load of at least $\tilde{\ell}^g_j$ by choosing the origin as its location,
and so $\ell^g_j\ge\tilde{\ell}^g_j$ for every $g\in\{1,2\}$ and $j\in\prodsg$. As for every $g\in\{1,2\}$, we have $\sum_{j\in\prodsg}\ell^g_j\le\mu(\timeset)=\sum_{j\in\prodsg} \tilde{\ell}^g_j$, we have that $\ell^g_j=\tilde{\ell}^g_j$ for every $j\in\prodsg$, i.e.\ the load on every producer is as in the super-strong equilibria described in the statement
of the \lcnamecref{main-street-super-strong}.

For every $g\in\{1,2\}$, let $\pi_g\in \prodsg!$ be a permutation s.t.\ $t^g_{\pi_g(0)}\le t^g_{\pi_g(1)}\le\cdots\le t^g_{\pi_g(n-1)}$.
For every $g\in\{1,2\}$ and $j\in\prods$, define $m^g_j\eqdef\sum_{k=0}^{j-1}\tilde{\ell}^g_{\pi_g(k)}$; by the genericity assumption on partial sums of producer-equilibrium loads,
and by positivity of equilibrium loads, we have $m^1_j\ne m^2_k$ for every $j\in\prodso$ and $k\in\prodst$ s.t.\ either $j>0$ or $k>0$.

As for every $j\in\prodsg$, the distance $t^g_{\pi_g(j)}$ is accessible by at least all consumer types consuming a positive amount from any of the producers $\pi_g(j),\pi_g(j+1),\ldots,\pi_g(n-1)$ of good $g$,
we have that $\mu\bigl([0,t^g_{\pi(j)})\bigr)\le\sum_{k=0}^{j-1}\tilde{\ell}^g_{\pi_g(k)}=m^g_j$ for every $j\in\prodsg$. Therefore, deviating to
a super-strong equilibrium as in the statement of the \lcnamecref{main-street-super-strong}, while maintaining the order of distances from the origin among producers of the same good, harms no producer.
If $t^g_{\pi_g(j)}<\Max\bigl\{t \in \timeset \mid \mu\bigl([0,t)\bigr)\le m^g_j\bigr\}$ for some $g\in\{1,2\}$ and $j\in\prodsg$, then
producer $\pi_g(j)$ of load $g$ strictly benefits from such a deviation;
therefore, $t^g_{\pi_g(j)}=\Max\bigl\{t \in \timeset \mid \mu\bigl([0,t)\bigr)\le m^g_j\bigr\}$ for every $g\in\{1,2\}$ and $j\in\prodsg$.
Therefore, as $\mu$ is atomless, the market is split between the producers of each good as in the one-good scenario of \cref{heterogeneous}.

Assume for contradiction that not all producer strategies in $\bigl((t^g_j,\theta^g_j)\bigr)_{j\in\prodsg}^{g\in\{1,2\}}$ lie on the same ray from the origin. 
Therefore, w.l.o.g.\ there exist producers $j\in\prodso$ and $k\in\prodst$ whose strategies do not lie on the same ray, s.t.\ $t^1_j \le t^2_k$ and either $j=\pi_1(n_1-1)$ or $t^2_k \le t^1_{\pi_1({\pi_1}^{-1}(j)+1)}$. (The w.l.o.g.\ assumption refers to the part played by each good.)
If $j<\pi_1(n_1-1)$, then as $\mu$ is atomless, we have $\mu\bigl([0,t^2_k)\bigr)=m^2_{\pi_2^{-1}(k)}\ne m^1_{\pi_1^{-1}(j)+1}=\mu\bigl([0,t^1_{\pi_1({\pi_1}^{-1}(j)+1)})\bigr)$ and
so $t^2_k < t^1_{\pi_1({\pi_1}^{-1}(j)+1)}$; otherwise, since by assumption $\tilde{\ell}^2_k>0$ and as $\mu$ is atomless, we have $t^2_k<1$. Either way, by market split
there exists $\varepsilon>0$ s.t.\ almost all (w.r.t.\ $\mu$) consumer types $d\in[t^2_k,t^2_k+\varepsilon)$ consume a positive amount both from producer $j$ of good $1$ and from producer $k$ of good~$2$. Let~$c$ be the circumference of the triangle whose vertices are the origin, $(t^1_j,\theta^1_j)$ and $(t^2_k,\theta^2_k)$; as the latter two do not lie on the same ray from the origin, $(t^1_j,\theta^1_j)$ is not a convex combination of the origin and $(t^2_k,\theta^2_k)$, and so by the triangle
inequality we have $c>2t^2_k$. By definition of $c$, no consumer with type $d\in[t^2_k,\frac{c}{2})$ can consume from both producer $j$ of good $1$ and producer $k$ of good $2$ without violating the consumer's QoS limit. By combining these two, we have that for $\delta\eqdef\min\bigl\{\frac{c}{2}-t^2_k,\varepsilon\}>0$,
almost all consumer types $d\in[t^2_k,t^2_k+\delta)$ consume from both these producers, while no such consumer consumes from both of them --- a contradiction, since by definition of $t^2_k$ we have that $\delta>0$ implies $\mu\bigl([t^2_k,t^2_k+\delta)\bigr)>0$.
\end{proof}

We note that the requirements in \cref{main-street-super-strong}, both for every producer-equilibrium load to be positive and for the genericity of partial sums of producer-equilibrium loads, are required.
Indeed, any producer with zero producer-equilibrium load can be moved to any ray without destabilizing the equilibrium. Furthermore, if there exist permutations $\pi_1\in \prodso!$ and
$\pi_2\in \prodst!$ and producers $j\in\prodso\setminus\{0\}$ and $k\in\prodst\setminus\{0\}$ s.t.\ $\sum_{i=0}^{j-1}\tilde{\ell}^1_{\pi_1(i)}=\sum_{i=0}^{k-1}\tilde{\ell}^2_{\pi_2(i)}$,
then moving all producers $j'$ of good $1$ s.t.\ $\pi_1(j')\ge\pi_1(j)$ and all producers $k'$ of good $2$ s.t.\ $\pi_2(k')\ge\pi_2(k)$ together to any ray does not destabilize the equilibrium either.

\end{document}